\documentclass[11pt]{article}
\usepackage{preamble}
\usetikzlibrary{positioning}
\usepackage{xcolor}
\usepackage{placeins}
\definecolor{gacEdge}{RGB}{0,150,170}
\definecolor{ghostEdge}{RGB}{200,200,200}
\definecolor{treeEdge}{RGB}{33,113,181}
\definecolor{pathEdge}{RGB}{228,26,28}
\definecolor{vertexA}{RGB}{235,250,252}
\definecolor{vertexB}{RGB}{230,244,255}
\definecolor{vertexC}{RGB}{255,235,235}  
\definecolor{diaggray}{RGB}{110,110,110}
\definecolor{xblue}{RGB}{38,92,155}
\definecolor{zzorange}{RGB}{210,120,40}

\definecolor{datablue}{HTML}{2166AC}
\definecolor{bandblue}{HTML}{9ECAE1}
\definecolor{refcoral}{HTML}{D6604D}
\definecolor{gridgray}{HTML}{CCCCCC}
\usepackage{pgfplots}
\pgfplotsset{compat=1.18}
\pgfplotsset{plot coordinates/math parser=false}
\usepackage{subcaption}

\usepackage[margin=1in]{geometry}
\usepgfplotslibrary{groupplots}
\usepackage{algorithm}
\algtext*{EndIf}
\algtext*{EndFor}
\algrenewcommand{\algorithmicrequire}{\textbf{Input:}}
\algrenewcommand{\algorithmicensure}{\textbf{Output:}}
\algrenewcommand{\algorithmicrequire}{\textbf{Input:}}
\algrenewcommand{\algorithmicensure}{\textbf{Output:}}
\usepackage{amsmath, amssymb}
\usepackage{enumitem}
\usepackage{placeins}
\usepackage[T1]{fontenc}

\numberwithin{equation}{section}

\pgfplotsset{
  paperplot/.style={
    width              = 8cm,
    height             = 6cm,
    axis lines         = left,
    axis line style    = {semithick, black},
    tick align         = outside,
    major tick length  = 4pt,
    minor tick num     = 3,
    xtick style        = {black, semithick},
    ytick style        = {black, semithick},
    tick label style   = {font=\small},
    label style        = {font=\small},
    title style        = {font=\small\bfseries},
    legend style       = {
      font        = \small,
      draw        = gray!50,
      fill        = white,
      fill opacity= 0.9,
      text opacity= 1,
      inner sep   = 5pt,
      row sep     = 1pt,
    },
    grid               = both,
    grid style         = {gridgray, very thin},
    minor grid style   = {gridgray!40, ultra thin},
    clip               = false,
  },
}

\allowdisplaybreaks

\usepackage{defs}
\usepackage{float}

\usepackage{textgreek}
\usepackage{geometry}

\vspace{1.5cm}

\newcommand{\opt}{\mathsf{OPT}}
\newcommand{\mps}{\mathsf{MPS}}
\newcommand{\mpo}{\mathsf{MPO}}
\newcommand{\sr}{\mathrm{SR}}
\newcommand{\ttn}{\mathsf{TTN}}

\usepackage[titles]{tocloft}

\usepackage{stackengine}
\stackMath
\newcommand\tsup[2][2]{%
 \def\useanchorwidth{T}%
  \ifnum#1>1%
    \stackon[-.5pt]{\tsup[\numexpr#1-1\relax]{#2}}{\scriptscriptstyle\sim}%
  \else%
    \stackon[.5pt]{#2}{\scriptscriptstyle\sim}%
  \fi%
}
\title{Proper Agnostic Learning of Matrix Product States and Tree Tensor Networks
}

\author{
Constantin Cedillo Vayson de Pradenne \\
Caltech, Harvard University \\
\href{mailto:ccedillo@caltech.edu}{\texttt{ccedillo@caltech.edu}}
\and
Jordan Cotler \\
Harvard University \\
\href{mailto:jcotler@fas.harvard.edu}{\texttt{jcotler@fas.harvard.edu}}
}

\title{Proper Agnostic Learning of Matrix Product States \\ and Tree Tensor Networks}

\date{September 24, 2026}
\begin{document}
%\captionsetup[algorithm]{hypcap=false}
\emergencystretch=3em
\hbadness=10000
\hfuzz=60pt

\pagestyle{empty}
{
  \renewcommand{\thispagestyle}[1]{}
  \maketitle
\begin{abstract}
We establish proper agnostic learning of matrix product states and tree tensor networks. Given copies of an arbitrary quantum state $\rho$, our algorithms return a state $|\psi\rangle$ of chosen bond dimension such that $\langle\psi| \rho |\psi\rangle$ is within $\varepsilon$ of the optimum over the model class, without assuming that $\rho$ itself belongs to or is well approximated by that class. The main idea is to use improper learning to compress the mixed-state objective, reducing proper agnostic learning to optimization against a finite collection of explicitly specified pure states. We then introduce a comparator-dual compression procedure that reduces the bond dimension of these targets while uniformly preserving their overlaps with all bounded-bond comparators, with an error independent of system size. For matrix product states, this gives polynomial copy complexity in the system size, local dimension, bond dimension, and $1/\varepsilon$, together with polynomial runtime in the system size for fixed local dimension, bond dimension, and accuracy. The same framework yields proper agnostic learning of tree tensor networks on bounded-degree trees, with both copy complexity and runtime polynomial in the system size when the remaining parameters are fixed.
\end{abstract}

}

\clearpage
\pagestyle{plain}
\pagenumbering{arabic}

\tableofcontents

\newpage

\section{Introduction}

Generic many-body quantum states require exponentially many parameters to describe. However, many states of physical interest admit accurate descriptions with only polynomially many parameters. Matrix product states (MPSs) provide an important example, since ground states of gapped one-dimensional local Hamiltonians can be efficiently approximated by MPSs. This raises a natural learning problem: given copies of an unknown many-body state, can one efficiently find a good approximation to it within a chosen low-complexity class, without first reconstructing the full state? This important problem has been studied extensively for tensor network states, including efficient tomography and reconstruction algorithms for MPSs and matrix product operators (MPOs) under the assumption that the unknown state itself belongs to the corresponding tensor network class~\cite{CramerEtAl2010,BaumgratzEtAl2013}.

The more physical problem is to remove the assumption that the unknown state itself belongs to the tensor network class, while still requiring the learned state to lie in the chosen class. One then asks for a state $|\psi\rangle$ in the class whose score $\langle\psi|\rho|\psi\rangle$ is nearly optimal over all states in the class.  We refer to the states in the model class as ``comparators''.  This is the proper agnostic tomography problem: it is ``agnostic'' because no structural assumption is made on $\rho$, and ``proper'' because the output is required to belong to the chosen model class. The latter requirement matters if the goal is to learn a genuinely low-complexity description. By contrast, an ``improper'' learner need only return some state $|b\rangle$, possibly an MPS of much larger bond dimension, whose score is nearly as large as the maximum score achievable by a bond-$D$ MPS. Such a guarantee does not imply that $|b\rangle$ is close to any bond-$D$ MPS, so truncating $|b\rangle$ to bond dimension $D$ need not preserve its score. Proper agnostic learning of product states and improper agnostic learning of MPSs were established in~\cite{BakshiEtAl2025}, while improper learning of tree tensor networks (TTNs) and more general tensor networks was obtained in~\cite{CaroMcHughStrelchuk2026}. The open problem was whether MPSs themselves can be learned properly in the agnostic setting. We solve this problem here, and also obtain proper agnostic learning of bounded-degree TTNs.

More precisely, given copies of an arbitrary density operator $\rho$ and a chosen bond dimension $D$, our algorithm returns a normalized state $\widehat\psi\in\mps_D$ such that, with probability at least $1-\delta$,
\begin{equation}
\langle\widehat\psi|\rho|\widehat\psi\rangle \geq \opt_D(\rho)-\varepsilon,
\end{equation}
where $\opt_D(\rho)$ is the maximum score over bond-$D$ MPSs. The copy complexity is polynomial in the system size, local dimension, bond dimension, and inverse accuracy, while the runtime is polynomial in the system size for fixed local dimension, bond dimension, and accuracy. We prove an analogous proper agnostic learning result for bond-$D$ TTNs on bounded-degree trees.

The basic difficulty is that an improper learner may achieve nearly optimal score while outputting an MPS whose bond dimension is much larger than the desired one. Simply truncating such an output to low bond dimension can substantially reduce its score. Our first step therefore does not compress the improper hypothesis itself. Instead, we use repeated calls to the improper learner to identify a low-dimensional subspace such that compressing the objective $\rho$ onto this subspace changes the score of every low-bond-dimension comparator by only a small amount. Equivalently, the construction prunes directions that are irrelevant for optimization over the model class, while retaining all directions that can carry appreciable score for some comparator.

For a target accuracy $\varepsilon$, we show that repeated calls to the improper learner construct a subspace of dimension $O(\varepsilon^{-2})$ such that compressing $\rho$ onto this subspace changes the score of every comparator by at most $O(\varepsilon)$. The $O(\varepsilon^{-2})$ dimension bound follows because each new orthogonal direction found by the learner carries $\Omega(\varepsilon^2)$ weight under $\rho$ while the total weight is at most $\tr(\rho)=1$.

The resulting low-dimensional mixed-state problem can then be reduced to a finite collection of optimization problems against explicit pure-state targets. These targets may still have bond dimension much larger than $D$, so we compress them while preserving their overlaps with every bond-$D$ MPS. To formalize this, define
\begin{equation}
\|x\|_{\mathrm{MPS}(D)} := \sup_{\psi\in\mps_D}
|\langle \psi|x\rangle|.
\end{equation}
This norm measures the largest overlap of $x$ with a bond-$D$ MPS. We show that every vector $u$ can be approximated by an MPS $\widetilde u_K$ of bond dimension at most $K$ satisfying
\begin{equation}
\|u-\widetilde u_K\|_{\mathrm{MPS}(D)}
\leq
\|u\|_2\sqrt{\frac{D}{K+1}}.
\end{equation}
For $\|u\|_2\leq1$, taking $K=O(D/\eta^2)$ therefore suffices to preserve the overlap of the target with every bond-$D$ MPS to accuracy $\eta$, independently of the length of the chain. After this compression, a classical optimization over bond-$D$ MPSs returns a proper hypothesis.

The construction is a modification of the usual left-to-right singular-value truncation. At each cut, in addition to discarding the singular-value tail, we uniformly decrease the retained squared singular values by the square of the first discarded singular value. The discarded component then has uniformly bounded singular values, and the corresponding decrease in the squared Euclidean norm of the retained state telescopes over the chain. Since a bond-$D$ MPS has Schmidt rank at most $D$ across each cut, this yields the bound above without a factor proportional to the number of sites. The same construction also gives an approximation guarantee in Euclidean norm: increasing the bond dimension from $D$ to $D+O(D/\alpha)$ suffices to obtain squared error within a factor $1+\alpha$ of the best bond-$D$ approximation, independently of the chain length.

After this compression, we optimize against the target by dynamic programming, building a bond-$D$ MPS from left to right. At each cut, the dependence of the final overlap on the prefix constructed so far is captured by a bounded-dimensional matrix obtained by contracting the candidate and target over the processed sites. The update of this matrix is contractive, so we can discretize both the local tensors and these matrices and retain only one partial MPS for each discretized matrix value. The output is proper by construction since every state considered in the search has bond dimension at most $D$. Combining this procedure with the compression result gives a proper optimizer for arbitrary explicit MPS targets and, together with the reduction above, yields the agnostic learning theorem.

For TTNs, we first choose an ordering of the vertices of a bounded-degree tree such that every rooted subtree forms a contiguous interval and, for any initial segment of this ordering, only $O(\Delta_T\log n)$ tree edges connect that segment to the remaining vertices. In this ordering, a bond-$D$ TTN can be represented as an MPS of bond dimension $D^{O(\Delta_T\log n)}$, which is polynomial in the system size for fixed $D$ and tree degree. This allows us to use the improper MPS learner in the first stage. The compression procedure extends by processing the tree from the leaves toward the root. The dynamic program is modified similarly: instead of keeping track of the part of the chain already processed, it keeps track of the information from each processed subtree that can affect the final overlap with the target.

The same measurements on $\rho$ can also be reused across bond dimensions. If the relevant subspace is constructed for a maximum bond dimension $D_{\max}$, then no additional copies of $\rho$ are needed to find nearly optimal MPSs or estimate the corresponding optimal scores for any $D\leq D_{\max}$. Thus a single set of measurements determines how the best achievable score varies with the allowed bond dimension. By vectorization, the MPS results also apply to Hilbert--Schmidt-normalized MPOs.

A quantum state may also admit a useful description as a coherent superposition of several simple states, sometimes called branches. Such branch structure can arise naturally in many-body systems through decoherence and the formation of redundant or effectively orthogonal records~\cite{Riedel2017,riedel2025wavefunction, taylor2025non, taylor2025wavefunction, pilatowsky2026emergent}. For example, a GHZ state is a superposition of two product branches whose coherence is invisible to observables supported on fewer than all sites, while an MPS representation of the full state need not reveal this decomposition. We therefore consider states $|\Phi\rangle=\sum_{a=1}^r c_a|\psi_a\rangle$ with normalized bond-$D$ MPS branches $|\psi_a\rangle$, subject to a local noninterference condition suppressing off-diagonal matrix elements between distinct branches for observables supported on at most $k$ sites. At zero interference, all such observables have the same expectation values in the superposition and the corresponding incoherent mixture. Our algorithm returns the branches $|\psi_a\rangle$ and coefficients $c_a$ with nearly optimal score, preserving the bond bounds while allowing the interference to increase by at most a prescribed amount.

Several ingredients of our approach are related to earlier work. The improper learner used here builds on agnostic learning algorithms for MPSs, with subsequent extensions and improvements for tensor networks on more general graphs~\cite{BakshiEtAl2025,LinChiaHung2025,CaroMcHughStrelchuk2026}. Our improper-to-proper reduction is also related to quantum agnostic boosting and learning through structured decompositions~\cite{ArunachalamEtAl2026Boosting,ArunachalamDutt2026Structure,ArunachalamDutt2026Extent}; here, however, we use improper learning to construct a low-rank operator that approximately preserves the score of every proper comparator. The pure-target optimizer is related to earlier dynamic-programming algorithms for optimization over fixed-bond-dimension MPSs~\cite{SchuchCirac2010,AharonovAradIrani2010}, while the singular-value shrinkage underlying comparator-dual compression is reminiscent of tensor-rounding methods and Frequent Directions~\cite{Oseledets2011,Grasedyck2010,Liberty2013,GhashamiEtAl2016}. The key difference is that our compression controls overlaps uniformly over all bounded-bond comparators, with an error independent of the system size. We discuss these connections further in Appendix~\ref{sec:related-work}.

We now turn to the precise statements of our results.

\section{Results}
\subsection{Problem formulation and main result}
\label{sec:results-setup}

Throughout this section, an MPS means an open-boundary matrix product state. Consider a chain of qudits with $\mathcal H_i\cong\mathbb C^d$ and $\mathcal H:=\bigotimes_{i=1}^n\mathcal H_i$, and let $\mps_D$ denote the normalized MPSs of bond dimension at most $D$. Equivalently, by the cut-rank characterization, $\psi\in\mps_D$ if and only if
\begin{equation*}
\operatorname{rank}_{\mathcal H_{1:i}\,|\,\mathcal H_{i+1:n}}(\psi)\leq D \qquad \text{for every }1\leq i\leq n-1.
\end{equation*}
For a positive semidefinite operator $X\succeq0$, write
\begin{equation*}
\opt_D(X):=\max_{\psi\in\mps_D}\langle\psi|X|\psi\rangle.
\end{equation*}

Given copies of an arbitrary density operator $\rho$, the proper agnostic learning problem is to output a state $\widehat\psi\in\mps_D$ whose score is nearly optimal within the class,
\begin{equation}
\langle\widehat\psi|\rho|\widehat\psi\rangle \geq \opt_D(\rho)-\varepsilon.
\label{eq:results-agnostic-objective}
\end{equation}
No assumption is made that $\rho$ is an MPS, or even that it is close to one.  The class $\mps_D$ serves only as a class of comparators.

An improper learner is allowed to return an efficiently represented state of much larger bond dimension, provided that its score is close to $\opt_D(\rho)$.  A proper learner must return a member of $\mps_D$.  This distinction cannot in general be removed by truncating the improper output: a near-optimal value of $\langle b|\rho|b\rangle$ does not imply that $b$ is close to any bond-$D$ MPS.

Our main result is that properness can nevertheless be enforced with polynomial overhead in the system size.

\begin{theorem}[Proper agnostic learning of MPSs]
\label{thm:proper-agnostic-mps-results}
For every $n,d,D\in\mathbb N$ and every $\varepsilon,\delta\in(0,1/4)$, there is a randomized quantum algorithm which, given copies of an arbitrary density operator $\rho$ on $(\mathbb C^d)^{\otimes n}$, outputs a classical description of a normalized state $\widehat\psi\in\mps_D$ such that, with probability at least $1-\delta$,
\begin{equation}
\langle\widehat\psi|\rho|\widehat\psi\rangle \geq \opt_D(\rho)-\varepsilon.
\label{eq:mps-proper-agnostic-guarantee-results}
\end{equation}
The copy complexity is bounded by $\poly\!\left(n,d,D,1/\varepsilon,\log(1/\delta)\right).$ The runtime is bounded by $\poly\!\left(n,d,D,1/\varepsilon,\log(1/\delta)\right)\left(\frac{Cn\sqrt D}{\varepsilon}\right)^{O(D^2/\varepsilon^2+dD^2)}.$
In particular, the runtime is polynomial in $n$ for fixed $d,D,\varepsilon$.
\end{theorem}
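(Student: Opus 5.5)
The plan has three stages: compress $\rho$ using the improper learner, reduce the compressed problem to pure-state targets, then shrink the bond dimension of those targets and optimize them classically.

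\textbf{Stage 1: subspace construction.} We will use the improper agnostic MPS learner of~\cite{BakshiEtAl2025} as a black box. Given copies of a state, it returns with high probability an MPS $|b\rangle$ of polynomial bond dimension whose score is within $\eta$ of $\opt_D$. We will run it in a greedy loop to build an orthonormal set $v_1,\dots,v_m$ spanning a subspace $V$ with projector $P$. At each round we want the learner to act on the ``residual'' $(I-P)\rho(I-P)$, but we cannot prepare that state directly. Instead we will run the learner on $\rho$ and orthogonalize its output against $V$. The point to verify is that if some comparator $\psi\in\mps_D$ has $\langle\psi|(I-P)\rho(I-P)|\psi\rangle\ge\varepsilon^2$, the learner (or a postselected variant of it) produces a new direction $v_{m+1}\perp V$ with $\langle v_{m+1}|\rho|v_{m+1}\rangle=\Omega(\varepsilon^2)$. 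Since these weights sum to at most $\tr\rho=1$, the loop stops after $m=O(\varepsilon^{-2})$ rounds.

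At termination every comparator has small residual weight. By Cauchy--Schwarz, $|\langle\psi|\rho|\psi\rangle-\langle\psi|P\rho P|\psi\rangle|=O(\varepsilon)$ uniformly over $\psi\in\mps_D$. The $v_j$ are explicit MPSs, so we can estimate the $m\times m$ matrix $\langle v_i|\rho|v_j\rangle$ to entrywise accuracy $\varepsilon^{3}$ by swap and Hadamard-type tests on copies of $\rho$. This uses polynomially many copies by a union bound over $O(\varepsilon^{-4})$ entries.

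\textbf{Stage 2: reduction to pure targets.} Write the estimated compressed operator as $\widetilde\rho=\sum_{k}\lambda_k|u_k\rangle\langle u_k|$, where each $u_k=\sum_j c_{kj}v_j$ is an explicit MPS. Its bond dimension is at most $m$ times the improper bond dimension. The objective becomes $\sum_k\lambda_k|\langle\psi|u_k\rangle|^2$. We then apply the comparator-dual compression stated in the introduction. For each $u_k$ it gives an MPS $\widetilde u_k$ of bond dimension $K=O(D/\eta^2)$ with $\|u_k-\widetilde u_k\|_{\mathrm{MPS}(D)}\le\eta$, uniformly in $n$. Taking $\eta=\varepsilon/m$ (or weighting by $\lambda_k$), the objective changes by $O(\varepsilon)$ for every comparator simultaneously.

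\textbf{Stage 3: proper optimization.} We maximize $\sum_k\lambda_k|\langle\psi|\widetilde u_k\rangle|^2$ over $\psi\in\mps_D$ by the left-to-right dynamic program sketched in the introduction. The state at each cut is the tuple of $m$ partial contraction matrices of size $D\times K$. The updates are contractive, so we discretize both the local tensors and these matrices on a grid of mesh $\mathrm{poly}(\varepsilon/(nD))$, keeping one partial MPS per grid cell. Errors then accumulate additively, at most $n$ times the mesh.

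The number of grid cells is roughly $(Cn\sqrt D/\varepsilon)^{\text{dim}}$. To match the stated exponent $O(D^2/\varepsilon^2+dD^2)$, we should track only the $m=O(\varepsilon^{-2})$ contractions against the original target vectors, not the full $K$-dimensional data. Concretely, we will project the contraction matrices onto an $O(D^2 m)$-dimensional representative. The $dD^2$ term accounts for enumerating the local tensors. The output lies in $\mps_D$ by construction, and its score is within $O(\varepsilon)$ of $\opt_D(\rho)$ after rescaling $\varepsilon$.

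\textbf{Main obstacle.} The hardest step is Stage 1: guaranteeing progress on the residual $(I-P)\rho(I-P)$ using only copies of $\rho$. I would first try projective postselection onto $V^\perp$. Since $V$ is spanned by explicit MPSs, the projection can be implemented as a measurement, which yields copies of the normalized residual state. Low success probability then means small residual weight, and in that case termination is already justified.
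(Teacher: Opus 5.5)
Your Stage 1 is essentially the paper's relevant-subspace construction, provided you commit to postselection. Running the learner on $\rho$ itself and orthogonalizing would not work, since it can keep returning directions inside $V$. Postselection works because the learner run on $Q\rho Q/\mu$ still competes against the original normalized comparator $\psi\in\mps_D$, whose score there is $\langle Q\psi|\rho|Q\psi\rangle/\mu$. You never need $Q\psi$ to be a bond-$D$ MPS. Your bound ``$m$ times the improper bond dimension'' also holds only if every direction is stored as a linear combination of the original learner outputs. Expanding recursively through Gram--Schmidt would instead double the bond at each round.

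The genuine gap is in Stages 2--3, specifically the runtime exponent. You keep the objective as $\sum_k\lambda_k|\langle\psi|\widetilde u_k\rangle|^2$, so your dynamic program must carry $m=O(\varepsilon^{-2})$ cross environments simultaneously. Each is a $K\times D$ matrix with $K=O(D/\varepsilon^2)$, and that is with the $\lambda_k$-weighting; the choice $\eta=\varepsilon/m$ makes $K$ far larger. The environment net then has exponent $O(mKD)=O(D^2/\varepsilon^4)$, not $O(D^2/\varepsilon^2)$.

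Your proposed repair, projecting onto an $O(D^2m)$-dimensional representative or tracking contractions against the original targets, does not work. The future target tensors act on the full $K$-dimensional bond index, so all $KD$ entries of each environment can influence the final overlaps. Contractions against the uncompressed targets involve bond $\poly(n)$, which would put $n$ into the exponent. As written, your plan gives the copy bound and runtime polynomial in $n$ for fixed $d,D,\varepsilon$, but not the stated runtime bound.

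The missing idea is to turn the sum into a maximum over a single pure target. With $v_c=\sum_j\sqrt{\lambda_j}\,c_ju_j$, one has $\langle\phi|\widehat A|\phi\rangle=\max_{\|c\|_2=1}|\langle v_c|\phi\rangle|^2$. After a spectral cutoff at $\tau=\varepsilon/8$, only $\ell\leq 32/(3\varepsilon)$ eigenvalues survive. A net over $c\in\mathbb C^\ell$ then yields $(C/\varepsilon)^{O(1/\varepsilon)}$ explicit targets. Each is compressed to bond $K=O(D/\varepsilon^2)$ and optimized with a single $K\times D$ environment, giving exponent $O(KD+dD^2)$. The number of targets only multiplies the runtime.

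An alternative fix within your own framing also works. Compress once the stacked vector $W=\sum_k\sqrt{\lambda_k}\,u_k\otimes|k\rangle$ on the chain extended by one site of dimension $m$. There $\|(\langle\psi|\otimes I)W\|_2^2$ equals your objective, and every $\psi\otimes c$ is an admissible bond-$D$ comparator. So a single compression and a single $K\times D$ environment again suffice.
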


The proof has two parts. First, a general reduction uses an improper learner to identify a low-dimensional subspace such that compressing the objective to this subspace changes the score of every proper comparator by only a small amount. This reduces the mixed-state optimization problem to finitely many pure-target problems. Second, for MPSs, we solve these pure-target problems by compressing the target in a norm adapted to bond-$D$ comparators and then running a dynamic program that retains only a bounded-dimensional cross environment at each cut.

\subsection{From improper learning to a finite relevant subspace}
\label{sec:results-finite-subspace}

We first isolate the part of the argument that does not depend on MPSs. Let $\mathcal C$ be a nonempty compact class of normalized comparator states in a finite-dimensional Hilbert space $\mathcal H$.\footnote{Since $\mathcal H$ is finite-dimensional and we take states to be normalized, compactness here is equivalent to closedness. We assume compactness so that the relevant optimization problems attain their maxima.} Let $(\mathcal V_R)_{R\geq1}$ be a family of explicit vector classes indexed by a complexity parameter $R$, with
\begin{equation*}
\mathcal V_1\subseteq\mathcal V_2\subseteq\cdots\subseteq\mathcal H,
\end{equation*}
and define the corresponding normalized classes
\begin{equation*}
\mathcal A_R:=\{u\in\mathcal V_R:\|u\|_2=1\}.
\end{equation*}
Thus increasing $R$ enlarges the class of allowed explicit vectors. We assume that $\mathcal C\subseteq\mathcal A_{R'}$ for some fixed $R'$, and that finite linear combinations of vectors in the classes $\mathcal V_R$ remain in a class $\mathcal V_{\widetilde R}$, where $\widetilde R$ can be bounded computably in terms of the original complexity parameters. The two main algorithmic primitives are an improper learner competing with the fixed class $\mathcal A_{R'}$ and a proper optimizer against explicit pure targets. We further assume that the explicit vectors produced by these procedures can be manipulated efficiently: in particular, that one can compute overlaps, prepare the corresponding states, project onto the span of known orthonormal vectors, and estimate matrix elements of $\rho$. Definition~\ref{def:properization-data} formalizes these
assumptions.

The improper learner is used to discover directions that
matter to the proper class. Suppose that mutually orthonormal explicit states $s_1,\ldots,s_j$ have already been found, and set
\begin{equation*}
P_j:=\sum_{\ell=1}^j|s_\ell\rangle\langle s_\ell|, \qquad Q_j:=I-P_j.
\end{equation*}
When $\mu_j:=\tr(Q_j\rho)>0$, postselection prepares the
normalized residual state
$\overline\rho_j:=Q_j\rho Q_j/\mu_j$.
For every original comparator $\phi\in\mathcal C$,
\begin{equation}
\langle\phi|\overline\rho_j|\phi\rangle =\frac{\langle Q_j\phi|\rho|Q_j\phi\rangle}{\mu_j}.
\label{eq:results-fixed-comparator-identity}
\end{equation}
Thus a comparator with substantial residual score is detectable by the learner using the original comparator parameter $R'$. 

If the improper learner finds a state $b_j$ with sufficiently large residual score, we add the normalized direction
\begin{equation*}
s_{j+1}:=\frac{Q_jb_j}{\|Q_jb_j\|_2}
\end{equation*}
to the list. Every added direction is orthogonal to the previous ones and has a definite amount of $\rho$-weight. Since $\tr(\rho)=1$, at most $2/\theta$ such directions can be added. The resulting statement is the following.

\begin{theorem}[Finite relevant subspace]
\label{thm:general-finite-relevant-subspace-results}
Let $(\mathcal C,(\mathcal V_R)_{R\geq1})$ satisfy the assumptions above, let $\rho$ be an arbitrary density operator, and let $\theta\in(0,1)$.  On the event that all estimation and improper-learning calls succeed, there is an algorithm which constructs mutually orthonormal explicit states $s_1,\ldots,s_m$, with $m\leq2/\theta$, and the projectors
\begin{equation*}
P:=\sum_{a=1}^m|s_a\rangle\langle s_a|, \qquad Q:=I-P,
\end{equation*}
such that
\begin{equation}
\langle Q\phi|\rho|Q\phi\rangle \leq\theta\qquad \text{for every }\phi\in\mathcal C.
\label{eq:general-residual-smallness-results}
\end{equation}
Consequently, for $A:=P\rho P$,
\begin{equation}
\sup_{\phi\in\mathcal C}\left|\langle\phi|\rho|\phi\rangle-\langle\phi|A|\phi\rangle\right|\leq 2\sqrt\theta.
\label{eq:general-core-approximation-results}
\end{equation}
\end{theorem}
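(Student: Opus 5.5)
The plan is to run the greedy loop described just before the statement, with explicit thresholds, and to prove three things in order: each added direction carries $\rho$-weight at least $\theta/2$; this caps the number of iterations; and the two displayed inequalities hold at termination.

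\textbf{The loop.} At step $j$ we first estimate $\mu_j=\tr(Q_j\rho)$.
\begin{itemize}
\item If $\mu_j\le\theta$, we stop. Then for every $\phi\in\mathcal C$ we have $\langle Q_j\phi|\rho|Q_j\phi\rangle=\mu_j\langle\phi|\overline\rho_j|\phi\rangle\le\mu_j\le\theta$ by \eqref{eq:results-fixed-comparator-identity}, since $\overline\rho_j$ is a density operator and $\|\phi\|_2=1$.
\item Otherwise we prepare $\overline\rho_j$ by postselection. We call the improper learner against $\mathcal A_{R'}\supseteq\mathcal C$ with accuracy $\theta/(4\mu_j)$, or more simply $\theta/4$ since $\mu_j\le1$. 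It returns an explicit unit vector $b_j$ with $\langle b_j|\overline\rho_j|b_j\rangle\ge\max_{\phi\in\mathcal C}\langle\phi|\overline\rho_j|\phi\rangle-\theta/4$. We then estimate the residual score $\langle b_j|\overline\rho_j|b_j\rangle\mu_j=\langle Q_jb_j|\rho|Q_jb_j\rangle$ to additive accuracy $\theta/8$.
\item If the estimate is below $5\theta/8$, we stop. On the success event the true maximum then satisfies $\max_\phi\langle Q_j\phi|\rho|Q_j\phi\rangle\le 5\theta/8+\theta/8+\theta/4=\theta$, which is \eqref{eq:general-residual-smallness-results}.
\item Otherwise we set $s_{j+1}=Q_jb_j/\|Q_jb_j\|_2$ and continue. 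This is well defined because the residual score is positive, so $Q_jb_j\neq0$. The new vector is explicit by the closure assumption on $\mathcal V_R$, and it is orthonormal to $s_1,\dots,s_j$ by construction.
\end{itemize}

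\textbf{Counting the iterations.} In the continue case, $\langle s_{j+1}|\rho|s_{j+1}\rangle=\langle Q_jb_j|\rho|Q_jb_j\rangle/\|Q_jb_j\|^2\ge\langle Q_jb_j|\rho|Q_jb_j\rangle\ge\theta/2$, using $\|Q_jb_j\|\le1$. Since the $s_a$ are orthonormal, $\sum_a\langle s_a|\rho|s_a\rangle=\tr(P\rho)\le1$. Hence $m\theta/2\le1$, i.e. $m\le2/\theta$, so the loop halts.

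\textbf{The approximation bound.} Fix $\phi\in\mathcal C$ and write $\phi=P\phi+Q\phi$. Then
\[\langle\phi|\rho|\phi\rangle-\langle\phi|A|\phi\rangle=2\,\mathrm{Re}\langle P\phi|\rho|Q\phi\rangle+\langle Q\phi|\rho|Q\phi\rangle.\]
Cauchy--Schwarz for the PSD form gives $|\langle P\phi|\rho|Q\phi\rangle|\le\sqrt{\langle P\phi|\rho|P\phi\rangle}\sqrt{\theta}\le\sqrt\theta$. The difference is therefore at most $2\sqrt\theta+\theta$ in absolute value. To get the stated $2\sqrt\theta$ exactly, I would compare with $\|\rho^{1/2}\phi\|^2-\|\rho^{1/2}P\phi\|^2$. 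With $x=\|\rho^{1/2}P\phi\|\le1$ and $y=\|\rho^{1/2}Q\phi\|\le\sqrt\theta$, we have $\|\rho^{1/2}\phi\|\in[x-y,x+y]$. The difference of squares is then at most $(x+y)^2-x^2=2xy+y^2$, and we need this to be at most $2\sqrt\theta$.
\begin{itemize}
\item The upper side uses $\|\rho^{1/2}\phi\|\le1$, which gives $(x+y)^2-x^2\le 1-x^2\le (1-x)(1+x)$. One then splits into the cases $y\le 1-x$ and $y>1-x$.
\item The lower side is $x^2-(x-y)^2\le2xy\le2\sqrt\theta$.
\end{itemize}

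The step I expect to need the most care is this last constant, $2\sqrt\theta$ rather than $2\sqrt\theta+\theta$. If the case analysis fails, the fallback is to run the loop with $\theta/4$ in place of $\theta$. The rest is bookkeeping: the thresholds must absorb the estimation errors uniformly, including the possibility that $\mu_j$ is only slightly above $\theta$, which makes postselection expensive but still yields $\Omega(\theta)$ success probability.
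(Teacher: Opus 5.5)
Your proposal is correct and follows the paper's proof essentially step for step. It uses the same two-test greedy loop: a residual-mass check, then a residual-score check of the learner output run against the fixed class $\mathcal A_{R'}$ via $\langle\phi|\overline\rho_j|\phi\rangle=\langle Q_j\phi|\rho|Q_j\phi\rangle/\mu_j$. It also has the same $\theta/2$ weight bound giving $m\le 2/\theta$ and the same $\rho$-seminorm comparison, differing only in threshold constants. One implementation point: the first test has to be applied to the estimate $\widehat\mu_j$ with a margin, as the paper does by stopping when $\widehat\mu_j\le\theta/8$ at accuracy $\theta/16$. Your case split for the constant $2\sqrt\theta$ does go through, but it is unnecessary: with $a=\|\rho^{1/2}\phi\|$ and $x=\|\rho^{1/2}P\phi\|$, the paper simply writes $|a^2-x^2|=|a-x|(a+x)\le 2\|\rho^{1/2}Q\phi\|\le 2\sqrt\theta$, using $a,x\le 1$.
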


The theorem compresses the objective rather than the improper hypotheses.  For $\theta=\Theta(\varepsilon^2)$, the relevant subspace has dimension $m=O(\varepsilon^{-2})$, independent of the dimension of the original Hilbert space.  Equation~\eqref{eq:general-core-approximation-results} says that, as far as optimization over $\mathcal C$ is concerned, $\rho$ may be replaced by its compression to this subspace.

It remains to optimize over the compressed operator.  Let $S:\mathbb C^m\to\mathcal H$ be the isometry $Se_a=s_a$, and let $G:=S^\dagger\rho S.$ The entries $G_{ab}=\langle s_a|\rho|s_b\rangle$ can be estimated from copies of $\rho$.  From these estimates we construct a positive semidefinite matrix $\widehat G$ such that, for $\widehat A:=S\widehat G S^\dagger,$ one has $\|\widehat A-A\|_\infty\leq\xi$.  Diagonalize $\widehat A=\sum_{j=1}^m\lambda_j|u_j\rangle\langle u_j|$ where $\lambda_j\geq0$, and for every unit vector $c=(c_1,\ldots,c_m)\in\mathbb C^m$, define
\begin{equation}
v_c:=\sum_{j=1}^m\sqrt{\lambda_j}\,c_ju_j.
\label{eq:results-pure-target-family}
\end{equation}
Then, for every normalized $\phi$,
\begin{equation}
\langle\phi|\widehat A|\phi\rangle=\max_{\|c\|_2=1}|\langle v_c|\phi\rangle|^2.
\label{eq:results-mixed-to-pure}
\end{equation}
Thus optimization of the mixed-state score is reduced to optimization of overlap with explicit pure targets.

The identity above still involves a continuous maximization over the coefficient vector $c$. To reduce the problem to finitely many pure targets, we discretize the unit sphere of coefficient vectors by a finite net. Before doing so, we discard the small eigenvalues of $\widehat A$ so that this sphere has low dimension. Let $\widehat A_\tau$ retain only the eigenvalues greater than $\tau>\xi$, and write
\begin{equation*}
\widehat A_\tau=\sum_{j=1}^{\ell}\lambda_j|u_j\rangle\langle u_j|,
\qquad
\ell:=\operatorname{rank}(\widehat A_\tau).
\end{equation*}
Since $\tr(A)\leq1$, Lemma~\ref{lem:general-spectral-cutoff} gives
\begin{equation*}
\|\widehat A_\tau-A\|_\infty\leq\xi+\tau,\qquad
\ell\leq\frac{1}{\tau-\xi}.
\end{equation*}
For each unit vector $c=(c_1,\ldots,c_\ell)\in\mathbb C^\ell$, define
\begin{equation*}
v_c:=\sum_{j=1}^{\ell}\sqrt{\lambda_j}\,c_j u_j.
\end{equation*}
Then, for every normalized $\phi$,
\begin{equation*}
\langle\phi|\widehat A_\tau|\phi\rangle
=\max_{\|c\|_2=1}|\langle v_c|\phi\rangle|^2.
\end{equation*}
Thus optimizing the truncated mixed-state objective is equivalent to optimizing the overlap with the family of pure targets $v_c$, indexed by the unit sphere in $\mathbb C^\ell$. We can therefore approximate this continuous family by a finite net of coefficient vectors.

A Euclidean $\zeta/16$-net of the unit sphere in $\mathbb C^\ell$ has size at most $(80/\zeta)^{2\ell}$.
The corresponding pure-target problems yield a proper optimizer for $\widehat A_\tau$ to additive error $\zeta$.
Transferring back to $A$ adds at most $2(\xi+\tau)$. Choose
\begin{equation*}
\theta:=\left(\frac{\varepsilon}{16}\right)^2,\qquad\xi:=\frac{\varepsilon}{32},
\qquad\tau:=\frac{\varepsilon}{8},\qquad\zeta:=\frac{\varepsilon}{4}.
\end{equation*}
The total error is
\begin{equation*}
4\sqrt\theta+2(\xi+\tau)+\zeta=\frac{13\varepsilon}{16}<\varepsilon.
\end{equation*}
Although the relevant subspace has dimension $O(\varepsilon^{-2})$, the spectral cutoff reduces the coefficient search to dimension $O(\varepsilon^{-1})$. A finite net of this coefficient space therefore produces only $(C/\varepsilon)^{O(1/\varepsilon)}$ explicit pure targets, and we run the proper pure-target optimizer once for each of them.

We obtain the following general reduction.

\begin{theorem}[Improper-to-proper reduction]
\label{thm:general-improper-to-proper-results}
Let $(\mathcal C,(\mathcal V_R)_{R\geq1})$ be a properization pair. For every density operator $\rho$ and every $\varepsilon,\delta\in(0,1/4)$, there is a randomized algorithm using $O(\varepsilon^{-2})$ improper-learning calls and at most $(C/\varepsilon)^{O(1/\varepsilon)}$ proper pure-target calls which outputs a normalized state $\widehat\phi\in\mathcal C$ such that, with probability at least $1-\delta$,
\begin{equation}
\langle\widehat\phi|\rho|\widehat\phi\rangle\geq\max_{\phi\in\mathcal C}\langle\phi|\rho|\phi\rangle-\varepsilon.
\label{eq:general-proper-learning-guarantee-results}
\end{equation}
\end{theorem}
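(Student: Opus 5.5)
The argument chains the results of this section, with the parameters $\theta=(\varepsilon/16)^2$, $\xi=\varepsilon/32$, $\tau=\varepsilon/8$, $\zeta=\varepsilon/4$ fixed above. Failure probabilities are handled by a union bound, giving each estimation or learning call confidence $\delta/N$, where $N=O(\varepsilon^{-2})+O(\varepsilon^{-4})$ is the total number of calls.

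\emph{Step 1 (relevant subspace).} Run the procedure of Theorem~\ref{thm:general-finite-relevant-subspace-results} with $\theta$. This produces orthonormal $s_1,\dots,s_m$ with $m\le 2/\theta=O(\varepsilon^{-2})$, using $O(\varepsilon^{-2})$ improper-learning calls. On the good event, for every $\phi\in\mathcal C$,
\[
|\langle\phi|\rho|\phi\rangle-\langle\phi|A|\phi\rangle|\le 2\sqrt\theta,\qquad A=P\rho P.
\]

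\emph{Step 2 (estimation and spectral cutoff).} Estimate the $m^2$ entries $G_{ab}=\langle s_a|\rho|s_b\rangle$, using the assumed matrix-element primitive, to entrywise accuracy $\xi/m$. Symmetrize the estimate and project it onto the PSD cone, which at most doubles the operator-norm error. This gives $\widehat G\succeq0$ and $\widehat A=S\widehat GS^\dagger$ with $\|\widehat A-A\|_\infty\le\xi$, after adjusting constants. Then truncate to $\widehat A_\tau$. By Lemma~\ref{lem:general-spectral-cutoff}, $\|\widehat A_\tau-A\|_\infty\le\xi+\tau$ and $\ell\le1/(\tau-\xi)=O(1/\varepsilon)$.

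\emph{Step 3 (finite net of pure targets).} Take a $\zeta/16$-net $\mathcal N$ of the unit sphere of $\mathbb C^\ell$, of size at most $(80/\zeta)^{2\ell}=(C/\varepsilon)^{O(1/\varepsilon)}$. For each $c\in\mathcal N$, form the explicit target $v_c$, which lies in some $\mathcal V_{\widetilde R}$ by closure under linear combinations. Call the proper pure-target optimizer to get $\phi_c\in\mathcal C$ with $|\langle v_c|\phi_c\rangle|^2\ge\max_{\phi\in\mathcal C}|\langle v_c|\phi\rangle|^2-\zeta/4$. Output the candidate maximizing $\langle\phi_c|\widehat A_\tau|\phi_c\rangle$; this quantity is computable classically since $\widehat A_\tau$ is explicit, so no further copies are needed.

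To check the error, let $\phi^\star$ maximize $\langle\phi|\widehat A_\tau|\phi\rangle$ over $\mathcal C$, and let $c^\star$ attain the maximum in \eqref{eq:results-mixed-to-pure}. Pick $c\in\mathcal N$ with $\|c-c^\star\|\le\zeta/16$. Since $\|v_c-v_{c^\star}\|\le\sqrt{\lambda_{\max}}\,\|c-c^\star\|\le\zeta/16$ and all overlaps are at most $1$,
\[
\bigl||\langle v_c|\phi\rangle|^2-|\langle v_{c^\star}|\phi\rangle|^2\bigr|\le\zeta/8 .
\]
Hence $\langle\phi_c|\widehat A_\tau|\phi_c\rangle\ge|\langle v_c|\phi_c\rangle|^2\ge\opt(\widehat A_\tau)-\zeta$, and the selected output does at least as well. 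Transferring back, first to $A$ at cost $2(\xi+\tau)$ and then to $\rho$ at cost $2\cdot2\sqrt\theta$, gives total loss
\[
4\sqrt\theta+2(\xi+\tau)+\zeta=\tfrac{13\varepsilon}{16}<\varepsilon .
\]

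The main obstacle is Step 1's interface with the improper learner. Theorem~\ref{thm:general-finite-relevant-subspace-results} is stated on the good event, so we must ensure that the learner's guarantee can be invoked on the postselected residual states $\overline\rho_j$. Postselection costs a factor $1/\mu_j$ in copies; if $\mu_j<\theta$, all comparators already have residual score at most $\theta$, so we stop. I would also make sure the threshold test "residual score of $b_j$ is at least about $\theta$" is carried out with estimated quantities, with slack of order $\theta$, so that the $2/\theta$ termination bound survives estimation error.
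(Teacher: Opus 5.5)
Your proposal is correct and follows the paper's proof essentially step for step: the same parameter choices, relevant-subspace construction, PSD-projected support-matrix estimate, spectral cutoff via Lemma~\ref{lem:general-spectral-cutoff}, $\zeta/16$ coefficient net, and transfer chain yielding $13\varepsilon/16$. Two details need tidying. First, the pure-target optimizer of Definition~\ref{def:properization-data} accepts only normalized inputs, so call it on $w_c=v_c/\|v_c\|_2$ (handling $v_c=0$ separately) and rescale using $\|v_c\|_2^2\leq\lambda_{\max}\leq1+\xi$ rather than $\lambda_{\max}\leq1$; this still fits the $\zeta$ budget. Second, as the paper does, cap the loop at $\lceil2/\theta\rceil+1$ iterations and cap the postselection attempts, returning $\phi_0$ on declared failure, so the call counts and the union bound hold unconditionally.
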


Algorithm~\ref{alg:proper-agnostic-learning} summarizes the reduction. Here \textsc{RelevantSubspace} denotes the relevant-subspace construction of Appendix~\ref{sec:improper-to-proper}, with the fixed comparator parameter $R'$, and \textsc{PureTargetOpt} denotes the squared-overlap optimizer in Definition~\ref{def:properization-data}.

\begin{algorithm}[htbp]
\caption{Proper agnostic learning from an improper learner}
\label{alg:proper-agnostic-learning}
\small
\begin{algorithmic}[1]
\Require Copies of $\rho$; properization data for $\mathcal C$; $\varepsilon,\delta\in(0,1/4)$.
\Ensure A normalized $\widehat\phi\in\mathcal C$ whose $\rho$-score
is within $\varepsilon$ of optimal, with probability at least $1-\delta$.

\Statex \textit{On any declared failure, return the fixed state $\phi_0\in\mathcal C$.}

\State $\theta\gets(\varepsilon/16)^2$,$\xi\gets\varepsilon/32$,$\tau\gets\varepsilon/8$, $\zeta\gets\varepsilon/4$.
\State $(s_1,\ldots,s_m)\gets
\Call{RelevantSubspace}{\rho,\theta,\delta/2}$; define $Se_a:=s_a$.
\State Estimate a Hermitian $G_0$ with $\|G_0-S^\dagger\rho S\|_\infty\leq\xi/2$, using failure probability $\delta/2$.
\State $\widehat G\gets(G_0)_+$.
\State Retain the eigenpairs $(\lambda_j,g_j)_{j=1}^{\ell}$ of $\widehat G$ with $\lambda_j>\tau$; set $u_j:=Sg_j$.
\If{$\ell=0$}
    \State \Return $\phi_0$.
\EndIf

\State Construct a Euclidean $\zeta/16$-net $\mathcal N$
       of the unit sphere in $\mathbb C^\ell$.

\ForAll{$c\in\mathcal N$}
    \State $v_c\gets \sum_{j=1}^{\ell}\sqrt{\lambda_j}\,c_j u_j$; $w_c\gets v_c/\|v_c\|_2$.
    \State $\phi_c\gets\Call{PureTargetOpt}{w_c,\zeta/4}$.
    \State $F_c\gets\sum_{j=1}^{\ell}\lambda_j |\langle u_j|\phi_c\rangle|^2$.
\EndFor

\State Choose $c_\star\in\operatorname*{arg\,max}_{c\in\mathcal N}F_c$.
\State \Return $\widehat\phi:=\phi_{c_\star}$.
\end{algorithmic}
\end{algorithm}

For the MPS application, the only remaining task is therefore to find a nearly optimal bond-$D$ MPS given an explicit target vector.  The next two subsections solve this problem.

\subsection{Comparator-dual compression}
\label{sec:results-compression}

The targets $v_c$ produced above are explicit linear combinations of the states found by the relevant-subspace procedure.  Their MPS bond dimensions may therefore be much larger than $D$.  A direct optimization algorithm with exponential dependence on the target bond dimension would therefore not give the desired complexity.

The target need not, however, be approximated well in Euclidean norm.  It is enough to preserve all overlaps with bond-$D$ MPSs.  This leads to the comparator-dual norm
\begin{equation}
\|x\|_{\mathrm{MPS}(D)}:=\sup_{\psi\in\mps_D}|\langle\psi|x\rangle|.
\label{eq:mps-comparator-dual-norm-results}
\end{equation}
Two targets that are close in this norm define nearly the same optimization problem over $\mps_D$, even if they are far apart in Euclidean norm.

\begin{theorem}[Comparator-dual compression for MPSs]
\label{thm:mps-comparator-dual-compression-results}
For every vector $u\in\mathcal H$, every $D\geq1$, and every $K\geq1$, one can construct an MPS $\widetilde u_K$ of bond dimension at most $K$ satisfying
\begin{equation}
\|u-\widetilde u_K\|_{\mathrm{MPS}(D)}\leq\|u\|_2\sqrt{\frac{D}{K+1}}.
\label{eq:mps-compression-guarantee-results}
\end{equation}
Moreover, $\|\widetilde u_K\|_2\leq\|u\|_2$.  If $u$ is an explicit MPS of bond dimension $B$, the construction uses $\poly(n,d,B,K)$ arithmetic operations.
\end{theorem}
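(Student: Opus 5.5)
Following the sketch in the introduction, I would run a left-to-right sweep of modified singular-value truncations and show that the discarded pieces telescope. Set $r_0:=u$. At cut $i$ the current residual state $u^{(i-1)}$ has already been brought into the form $A_{1:i-1}\otimes(\text{remainder})$, where $A_{1:i-1}$ is a left-orthonormal prefix of bond dimension at most $K$. Reshape $u^{(i-1)}$ across the cut $\mathcal H_{1:i}\,|\,\mathcal H_{i+1:n}$ and take an SVD with singular values $\sigma_1\ge\sigma_2\ge\cdots$. Define the shrunken singular values
\begin{equation*}
\widetilde\sigma_j:=\sqrt{\sigma_j^2-\sigma_{K+1}^2}\quad (j\le K),\qquad \widetilde\sigma_j:=0\quad (j>K),
\end{equation*}
keeping the same singular vectors, and let $u^{(i)}$ be the resulting matrix. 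Its rank is at most $K$, its left singular vectors extend the orthonormal prefix, and $\widetilde u_K:=u^{(n-1)}$. Each step only shrinks singular values, so $\|u^{(i)}\|_2\le\|u^{(i-1)}\|_2$, which gives $\|\widetilde u_K\|_2\le\|u\|_2$. Each step is a QR/SVD on matrices of size $\poly(d,B,K)$ once $u$ is in MPS form, since the remainder is carried as an MPS of bond at most $B$ times the prefix. This gives the arithmetic-cost claim.

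The error is $u-\widetilde u_K=\sum_i E_i$ with $E_i:=u^{(i-1)}-u^{(i)}$. Across cut $i$, $E_i$ has singular values $\sigma_j-\widetilde\sigma_j$ for $j\le K$ and $\sigma_j$ for $j>K$. All of these are at most $\sigma_{K+1}=:t_i$, using $\sigma-\sqrt{\sigma^2-t^2}\le t$. The key bookkeeping identity is
\begin{equation*}
\|u^{(i-1)}\|_2^2-\|u^{(i)}\|_2^2=\sum_{j\le K}t_i^2+\sum_{j>K}\sigma_j^2\ge (K+1)t_i^2,
\end{equation*}
where the last inequality uses $\sigma_{K+1}=t_i$. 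Summing over cuts telescopes to $\sum_i t_i^2\le \|u\|_2^2/(K+1)$.

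It remains to bound $|\langle\psi|E_i\rangle|$ for $\psi\in\mps_D$. The naive estimate uses only that $\psi$ has Schmidt rank at most $D$ across cut $i$: then $|\langle\psi|E_i\rangle|\le \|E_i\|_{\infty}\cdot\|\psi\|_{S_1}\le t_i\sqrt D$. Summing this over $i$ would give $\sqrt D\sum_i t_i$, and Cauchy--Schwarz would cost a factor $\sqrt n$. \textbf{This is the main obstacle}: I need $\sqrt{D\sum_i t_i^2}$, not $\sqrt D\sum_i t_i$.

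To remove the $\sqrt n$ loss, I would use the left-orthonormal structure. Because $u^{(i)}$ factors through the prefix isometry $L_i$, the error $E_{i'}$ for $i'>i$ lies in $\operatorname{range}(L_i)\otimes\mathcal H_{i+1:n}$, while $E_i$ contains, across cut $i$, a component orthogonal to it (the discarded tail). I would decompose $\psi$ by projecting successively. Let $\Pi_i$ be the projector onto $\operatorname{range}(L_i)\otimes\mathcal H_{i+1:n}$; these are nested, $\Pi_1\ge\Pi_2\ge\cdots$. Write $\psi=\sum_i(\Pi_{i-1}-\Pi_i)\psi+\Pi_{n-1}\psi$, with $\Pi_0=I$. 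Pair the $i$-th piece $(\Pi_{i-1}-\Pi_i)\psi$ with the tail part of $E_i$, which has singular values at most $t_i$. The shrinkage part $\Pi_i E_i$ is handled by the same nested bookkeeping. Each pairing is bounded by $t_i\sqrt D\,\|(\Pi_{i-1}-\Pi_i)\psi\|_2$: the Schmidt rank across cut $i$ stays at most $D$ after projection, so the $S_1$ norm is at most $\sqrt D$ times the $\ell_2$ norm. Cauchy--Schwarz over $i$ then gives
\begin{equation*}
\sqrt{D\textstyle\sum_i t_i^2}\cdot\Bigl(\textstyle\sum_i\|(\Pi_{i-1}-\Pi_i)\psi\|_2^2\Bigr)^{1/2}\le \|u\|_2\sqrt{D/(K+1)},
\end{equation*}
because the pieces are orthogonal and $\|\psi\|_2=1$.

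The delicate point is making the shrinkage component $\Pi_iE_i$ fit this orthogonal bookkeeping. It does not live in $\operatorname{range}(\Pi_{i-1}-\Pi_i)$. I would try to absorb it by redefining the per-cut error as the difference of the rank-$K$ shrunken matrices and checking that the combined contribution at each cut still has operator norm at most $t_i$. I expect this to be the part requiring the most care.
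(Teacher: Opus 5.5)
Your sweep produces the same remainders $T_{i+1}=\operatorname{diag}(\widetilde\sigma_j)V_k^\dagger$ as the paper. Your energy-drop computation, the norm and cost claims, and your handling of the discarded tails through the nested projectors $\Pi_i$ are all correct. The gap is the step you flag yourself, and it is a structural obstruction, not a bookkeeping detail.

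In your reconstruction the retained left singular vectors are kept isometrically. The shrinkage piece $\Pi_iE_i$ therefore pairs with $\Pi_i\psi$, which is all of the comparator mass that every later cut will also use. These pairings are not square-summable in $i$.

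To see this recursively, write the SVD of the remainder as $M_i=U_k\Sigma_kV_k^\dagger+U_>\Sigma_>V_>^\dagger$ and set $e_i:=T_i-\widetilde T_i$. Then $e_i=U_ke_{i+1}+\Lambda_i$, where $\Lambda_i=U_k(\Sigma_k-\widetilde\Sigma)V_k^\dagger+U_>\Sigma_>V_>^\dagger$ has a component inside $\operatorname{ran}U_k$. Because $U_k$ is an isometry, the pulled-back comparator $U_k^\dagger Z$ sees both $e_{i+1}$ and that shrinkage term. The natural recursion then gives only $\|e_i\|_{\mathrm{MPS}(D)}\leq\|e_{i+1}\|_{\mathrm{MPS}(D)}+\sqrt D\,t_i$, which is exactly the $\sqrt n$ loss.

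Your proposed remedy, checking that the combined per-cut contribution has operator norm at most $t_i$, does not help. The per-cut operator-norm bound was never the problem; what fails is square-summability of the comparator mass charged to each cut. As written, nothing shows that your output $u^{(n-1)}$ satisfies the stated bound.

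The missing idea, which is what the paper does, is to change the reconstruction so that each retained direction is split between a keep channel and a loss channel.

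\begin{itemize}
\item Keep the same remainders, but use the contraction $C_i^{\mathrm{keep}}:=U_k\operatorname{diag}(\widetilde\sigma_j/\sigma_j)$ instead of $U_k$.
\item Route the shrinkage through $C_i^{\mathrm{loss}}:=[\,U_k\operatorname{diag}(t_i/\sigma_j)\ \ U_>\,]$, acting on the flat tensor $F_i$ obtained by stacking $t_iV_k^\dagger$ on top of $\Sigma_>V_>^\dagger$.
\end{itemize}

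Since $\widetilde\sigma_j^2+t_i^2=\sigma_j^2$, three facts follow. First, $M_i=C_i^{\mathrm{keep}}T_{i+1}+C_i^{\mathrm{loss}}F_i$. Second, $C_i:=[\,C_i^{\mathrm{keep}}\ \ C_i^{\mathrm{loss}}\,]$ satisfies $C_iC_i^\dagger=U_kU_k^\dagger+U_>U_>^\dagger\preceq I$. Third, every singular value of $F_i$ is at most $t_i$.

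Take the output to be $\widetilde u_K:=C_1^{\mathrm{keep}}\cdots C_{n-1}^{\mathrm{keep}}T_n$; it differs from your $u^{(n-1)}$. The error now satisfies $e_i=C_i(e_{i+1}\oplus F_i)$. Pulling a comparator back through $C_i^\dagger$ does not raise Schmidt ranks, and it splits the comparator into two sectors whose squared norms sum to at most one.

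Cauchy--Schwarz and von Neumann's trace inequality then give $\|e_i\|^2_{\mathrm{MPS}(D)}\leq\|e_{i+1}\|^2_{\mathrm{MPS}(D)}+Dt_i^2$. This telescopes against your bound $\sum_it_i^2\leq\|u\|_2^2/(K+1)$.

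The identity $(\widetilde\sigma_j/\sigma_j)^2+(t_i/\sigma_j)^2=1$ is what frees room inside $\operatorname{ran}U_k$ for the shrinkage loss. This is the orthogonal bookkeeping you were looking for, and it covers the shrinkage part as well as the tails. It becomes available only once the output itself is modified. The extra damping does not affect $\|\widetilde u_K\|_2\leq\|u\|_2$ or the polynomial cost.
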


The absence of any factor depending on $n$ is essential. For targets satisfying $\|u\|_2\leq1$, taking $K=O(D/\eta^2)$ gives comparator-dual error at most $\eta$, regardless of the length of the chain.

We describe the mechanism behind the bound.  The construction sweeps from left to right.  After sites $1,\ldots,i-1$ have been processed, their retained information is stored in an effective boundary space $\mathcal B_{i-1}$ of dimension at most $K$, and the unprocessed part is a vector
\begin{equation*}
T_i\in\mathcal B_{i-1}\otimes\mathcal H_{i:n}.
\end{equation*}
Regard $T_i$ as a matrix from $\mathcal H_{i+1:n}$ to $\mathcal B_{i-1}\otimes\mathcal H_i$, with singular values $\sigma_1\geq\cdots\geq\sigma_p>0.$
If $p\leq K$, the step is exact.  Otherwise, set $\Delta_i:=\sigma_{K+1}^2$
and propagate the modified singular weights
\begin{equation}
b_j:=\sqrt{\sigma_j^2-\Delta_i},\qquad 1\leq j\leq K,
\label{eq:results-propagated-singular-values}
\end{equation}
to the next remainder $T_{i+1}$.

This differs from ordinary rank-$K$ truncation.  Standard truncation would retain the leading singular values unchanged and discard only the tail.  Here we also remove the same amount $\Delta_i$ from the square of each retained singular weight.  This deliberately loses more Euclidean norm.  The advantage is that the one-step decomposition can be arranged in the form
\begin{equation}
T_i=C_i(T_{i+1}\oplus L_i),
\label{eq:results-one-step-compression}
\end{equation}
where $C_i$ is a contraction, every singular value of the loss tensor $L_i$ is at most $\sqrt{\Delta_i}$, and
\begin{equation}
(K+1)\Delta_i \leq\|T_i\|_2^2-\|T_{i+1}\|_2^2.
\label{eq:results-one-step-energy-drop}
\end{equation}
The first property controls the loss in the comparator-dual norm, and the second gives a telescoping bound on the total loss over all cuts.

Indeed, across the boundary cut a bond-$D$ comparator has Schmidt rank at most $D$.  If all singular values of $L_i$ are at most $\sqrt{\Delta_i}$, von Neumann's trace inequality gives
\begin{equation}
\|L_i\|_{\mathrm{MPS}(D)}^2\leq D\Delta_i.
\label{eq:results-flat-loss-bound}
\end{equation}
The comparator-dual norm is nonincreasing under the contraction $C_i$, and losses in orthogonal boundary sectors obey a squared-sum bound.  If $E_i$ denotes the accumulated error after cut $i$, this implies
\begin{equation}
\|E_i\|_{\mathrm{MPS}(D)}^2 \leq\|E_{i+1}\|_{\mathrm{MPS}(D)}^2+D\Delta_i.
\label{eq:results-compression-error-recursion}
\end{equation}
Consequently,
\begin{equation}
\|u-\widetilde u_K\|_{\mathrm{MPS}(D)}^2\leq D\sum_{i=1}^{n-1}\Delta_i.
\label{eq:results-compression-error-sum}
\end{equation}
On the other hand, summing \eqref{eq:results-one-step-energy-drop} telescopes:
\begin{equation}
(K+1)\sum_{i=1}^{n-1}\Delta_i\leq\|T_1\|_2^2-\|T_n\|_2^2\leq\|u\|_2^2.
\label{eq:results-compression-energy-budget}
\end{equation}
Combining \eqref{eq:results-compression-error-sum} and \eqref{eq:results-compression-energy-budget} proves Theorem~\ref{thm:mps-comparator-dual-compression-results}.

The extra shrinkage decreases the squared Euclidean norm by at least $(K+1)\Delta_i$ at each step, whereas the corresponding squared comparator-dual error is at most $D\Delta_i$.  Ordinary truncation gives the same pointwise bound on the largest discarded singular value, but it does not force the additional $K\Delta_i$ decrease needed to obtain the factor $1/(K+1)$ after summing over all cuts. 
The same construction also gives a relative-error approximation in Euclidean norm. If
\begin{equation*}
e_D(u):=\min_{\substack{a\in\mathbb C\\\psi\in\mps_D}}\|u-a\psi\|_2,
\end{equation*}
then, for $K\geq D$,
\begin{equation*}
\|u-\widetilde u_K\|_2^2\leq\frac{K+1}{K+1-D}\,e_D(u)^2.
\end{equation*}
Thus bond $K=O(D/\alpha)$ suffices for squared error within
a factor $1+\alpha$ of the best bond-$D$ approximation,
independently of the chain length. This uses the compression
sweep alone, without the optimization procedure below. Corollary~\ref{cor:mps-relative-error-compression} gives the
proof.

\subsection{Proper optimization against an explicit target}
\label{sec:results-explicit-target}

After comparator-dual compression, it remains to optimize against a target of bond dimension $K=O(D/\eta^2)$. As a candidate MPS is built from left to right, the effect of the processed prefix on the final overlap is completely captured by a bounded-dimensional cross environment.

Let $v$ be a target MPS of bond at most $K$ and norm at most one, and let $\psi\in\mps_D$ be a candidate.  Put both states in sequential isometric form.  Write $V_i^s$ and $A_i^s$ for their local tensors.  The contraction of the two prefixes through site $i$ is summarized by the cross environment $X_i$, initialized by $X_0=1$ and updated according to
\begin{equation}
X_i =\sum_{s=1}^dV_i^sX_{i-1}(A_i^s)^\dagger.
\label{eq:mps-cross-environment-update-results}
\end{equation}
At the final cut, $X_n$ is the scalar overlap $\langle\psi|v\rangle$.  Thus, once $X_i$ is known, the exponentially large prefix itself contains no further information relevant to the objective. For fixed local tensors $V$ and $A$, define
\begin{equation*}
\Phi^A(X):=\sum_{s=1}^dV^sX(A^s)^\dagger.
\end{equation*}
The isometry conditions imply the stability estimates
\begin{equation}
\|\Phi^A(X)-\Phi^A(Y)\|_1\leq\|X-Y\|_1
\label{eq:mps-environment-contraction-results}
\end{equation}
and
\begin{equation}
\|\Phi^A(X)-\Phi^{A'}(X)\|_1\leq\|A-A'\|_\infty\|X\|_1.
\label{eq:mps-environment-local-stability-results}
\end{equation}
This means that the errors in the stored environment are not amplified as the sweep proceeds.

The algorithm discretizes both the candidate isometries and the cross environments $X_i$ defined above. At cut $i$, each retained prefix is extended by every tensor in a sufficiently fine local net. The resulting cross environment is then assigned to a nearby point in a finite net of possible environments, and one exact prefix is retained for each occupied cell. If two prefixes are assigned to the same cell, their cross environments are close, so any common continuation produces final overlaps that differ by at most the propagated environment error.

Let $h$ be the operator-norm radius of the local-tensor net and $q$ the trace-norm radius of the environment net. These parameters control, respectively, the error from discretizing the next local tensor and from merging nearby cross environments. Following the discretized path of an optimal candidate through the dynamic program gives the recursion
\begin{equation*}
e_i\leq e_{i-1}+h+2q,
\end{equation*}
where $e_i$ is the trace-norm error of the surviving environment after cut $i$.  Taking $h:=\frac{\eta}{4n}$ and $q:=\frac{\eta}{8n}$ keeps the final overlap error below $\eta$.  The number of retained prefixes is bounded by the size of the environment net, rather than by the dimension of the many-body prefix.

For a target of bond at most $K$, this yields a deterministic algorithm which outputs a normalized state $\widehat\psi\in\mps_D$ satisfying
\begin{equation}
|\langle v|\widehat\psi\rangle|\geq\max_{\psi\in\mps_D}|\langle v|\psi\rangle| -\eta
\label{eq:results-bounded-target-guarantee}
\end{equation}
in time
\begin{equation}
n\left(\frac{Cn\sqrt D}{\eta}\right)^{O(KD+dD^2)}.
\label{eq:results-bounded-target-runtime}
\end{equation}
Every state considered by the search is assembled from isometric local tensors of bond dimension at most $D$.  The output is therefore normalized and proper by construction since there is no final rounding step.

Combining this dynamic program with Theorem~\ref{thm:mps-comparator-dual-compression-results} removes the target bond dimension from the exponential part of the runtime.

\begin{theorem}[Proper optimization against arbitrary explicit targets]
\label{thm:mps-arbitrary-target-optimizer-results}
Let $u$ be a normalized explicit MPS of arbitrary bond dimension $B$.  For every $D\geq1$ and every $\eta\in(0,1)$, there is a deterministic algorithm which outputs a normalized state $\widehat\psi\in\mps_D$ such that
\begin{equation}
|\langle u|\widehat\psi\rangle|^2\geq\max_{\psi\in\mps_D}|\langle u|\psi\rangle|^2-\eta.
\label{eq:mps-arbitrary-target-guarantee-results}
\end{equation}
Its runtime is bounded by
\begin{equation}
n\left(\frac{Cn\sqrt D}{\eta}\right)^{O(D^2/\eta^2+dD^2)}
\poly(d,B,D,1/\eta)
\label{eq:mps-arbitrary-target-runtime-results}
\end{equation}
for a universal constant $C$.
\end{theorem}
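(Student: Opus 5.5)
The plan is to compose Theorem~\ref{thm:mps-comparator-dual-compression-results} with the bounded-target dynamic program \eqref{eq:results-bounded-target-guarantee}--\eqref{eq:results-bounded-target-runtime}, and to convert additive errors on overlaps into additive errors on squared overlaps. Set $\eta_1:=\eta/4$ and choose $K:=\lceil D/\eta_1^2\rceil$.

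\emph{Step 1: compression.} Apply Theorem~\ref{thm:mps-comparator-dual-compression-results} to $u$, which has $\|u\|_2=1$. This produces an MPS $v:=\widetilde u_K$ of bond dimension at most $K$ with $\|v\|_2\leq1$ and
\begin{equation*}
\|u-v\|_{\mathrm{MPS}(D)}\leq\sqrt{D/(K+1)}\leq\eta_1 .
\end{equation*}
The cost is $\poly(n,d,B,K)=\poly(n,d,B,D,1/\eta)$ arithmetic operations. Consequently, for every $\psi\in\mps_D$ we have $\bigl||\langle u|\psi\rangle|-|\langle v|\psi\rangle|\bigr|\leq\eta_1$. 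Taking suprema gives $\bigl|\max_\psi|\langle u|\psi\rangle|-\max_\psi|\langle v|\psi\rangle|\bigr|\leq\eta_1$.

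\emph{Step 2: optimization.} Run the dynamic program on $v$ with accuracy $\eta_1$. Since $v$ has bond dimension at most $K$ and norm at most one, this is permitted. It returns a normalized $\widehat\psi\in\mps_D$ with $|\langle v|\widehat\psi\rangle|\geq\max_\psi|\langle v|\psi\rangle|-\eta_1$. The running time is $n(Cn\sqrt D/\eta_1)^{O(KD+dD^2)}$. Since $KD=O(D^2/\eta^2)$ and the constant factor $4$ inside $1/\eta_1$ can be absorbed into $C$, this is of the stated form $n(Cn\sqrt D/\eta)^{O(D^2/\eta^2+dD^2)}$.

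Adding the runtimes of the two steps gives the bound \eqref{eq:mps-arbitrary-target-runtime-results}. One practical point is that the dynamic program needs $v$ in sequential isometric form. This is obtained from the explicit bond-$K$ output in time $\poly(n,d,K)$ by standard QR/SVD gauge fixing, which is absorbed into the polynomial factor.

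\emph{Step 3: combining the errors.} Write $M:=\max_{\psi\in\mps_D}|\langle u|\psi\rangle|\leq1$. Chaining the three inequalities gives
\begin{equation*}
|\langle u|\widehat\psi\rangle|\geq|\langle v|\widehat\psi\rangle|-\eta_1\geq\max_\psi|\langle v|\psi\rangle|-2\eta_1\geq M-3\eta_1 .
\end{equation*}
If $M\leq3\eta_1$, the squared guarantee is trivial, because $M^2\leq 3\eta_1 M\leq 3\eta_1<\eta$. Otherwise,
\begin{equation*}
|\langle u|\widehat\psi\rangle|^2\geq (M-3\eta_1)^2\geq M^2-6\eta_1 M\geq M^2-6\eta_1 .
\end{equation*}
So it suffices to have $6\eta_1\leq\eta$. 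I will therefore take $\eta_1:=\eta/6$ rather than $\eta/4$. This changes only constants, and the runtime form is unaffected.

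The main point to verify carefully is that the dynamic program's guarantee really holds for targets of norm at most one and not only for normalized ones. It was stated that way, and the compressed vector $v$ satisfies $\|v\|_2\leq1$ by Theorem~\ref{thm:mps-comparator-dual-compression-results}. I also need to check that the exponent $O(KD)$ with $K=\Theta(D/\eta^2)$ indeed gives $O(D^2/\eta^2)$. Everything else is a direct composition of the two earlier results.
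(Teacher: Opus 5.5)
Your proposal is correct and follows the paper's proof essentially step for step. You compress to bond $K=O(D/\eta^2)$ via comparator-dual compression, run the bounded-bond dynamic program on the compressed target (norm at most one), chain the amplitude errors with the triangle inequality, and convert to squared overlaps. The only differences are constants (the paper uses $\eta/8$ for compression and $\eta/4$ for the dynamic program) and that the paper uses the difference-of-squares bound $a^2-b^2=(a-b)(a+b)\leq\eta$ in place of your case split.
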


To see the reduction, choose $K+1\geq64D/\eta^2$.  The compression theorem produces a bond-$K$ target $\widetilde u_K$ satisfying
\begin{equation*}
\sup_{\psi\in\mps_D}|\langle\psi|u-\widetilde u_K\rangle|\leq\frac{\eta}{8}.
\end{equation*}
Running the bounded-target optimizer on $\widetilde u_K$ and transferring the guarantee back to $u$ by the triangle inequality gives an amplitude error at most $\eta/2$.  A difference-of-squares estimate then yields \eqref{eq:mps-arbitrary-target-guarantee-results}.
\subsection{Proper agnostic learning of MPSs}
\label{sec:results-proper-mps}

We now verify the hypotheses of the general reduction for MPSs.  Take $\mathcal C:=\mps_D$ and, for $R\geq1$, define
\begin{equation*}
\mathcal V_R:=\left\{u\in\mathcal H:\operatorname{rank}_{\mathcal H_{1:i}\,|\,\mathcal H_{i+1:n}}(u)\leq R\text{ for every }1\leq i\leq n-1\right\}.
\end{equation*}
Thus $\mathcal A_R=\mps_R$ and $\mathcal C=\mathcal A_D$.  Finite linear combinations satisfy
\begin{equation*}
\sum_{a=1}^qc_au_a\in\mathcal V_{\sum_{a=1}^qR_a} \qquad\text{whenever }u_a\in\mathcal V_{R_a}.
\end{equation*}
For MPSs, we use the improper learner of Bakshi et al.~\cite[Theorem~B.2]{BakshiEtAl2025}. The remaining steps of the reduction can be carried out with polynomial overhead for explicit MPSs of polynomial bond dimension, and Theorem~\ref{thm:mps-arbitrary-target-optimizer-results} handles the proper optimization against explicit pure targets.

The general reduction returns a state in $\mps_D$ with score at least $\opt_D(\rho)-\varepsilon$. Since every improper learning call uses the same comparator bond $D$, the bond dimensions of the learner outputs remain polynomially controlled. Its output bond is therefore $\poly(n,d,D,1/\varepsilon)$. There are $O(\varepsilon^{-2})$ accepted learner outputs. Expressing each orthonormal direction, and each subsequent pure target, directly as a linear combination of these original outputs keeps every intermediate MPS bond jointly polynomial in $n,d,D,1/\varepsilon$. The explicit operations and support-matrix estimation consequently require only polynomially many copies. After the quantum data have been collected, spectral cutoff and proper pure-target optimization are entirely classical. Combining their costs proves
Theorem~\ref{thm:proper-agnostic-mps-results}.

The same quantum data also allow estimation of $\opt_D(\rho)$. More generally, a core constructed for $\mps_{D_{\max}}$ can be reused to fit every smaller bond dimension and estimate its optimal score, without further quantum measurements. This gives a direct way to compare the approximation quality available at different bond dimensions; see Corollary~\ref{cor:mps-optimal-scores}.

The same conclusions hold, under vectorization, for Hilbert--Schmidt-normalized MPOs. Vectorization identifies an MPO of bond dimension $D$ with an MPS of the same bond dimension and local dimension $d^2$. Comparator-dual compression, proper pure-target optimization, and proper agnostic learning therefore transfer directly under the substitution $d\mapsto d^2$. These statements concern Hilbert--Schmidt-normalized operators and do not impose positivity or trace normalization; the precise results are given in Appendix~\ref{sec:matrix-product-operators}.

\subsection{Extension to tree tensor networks}
\label{sec:results-ttn}

Let $T=(V,E)$ be a rooted tree with $|V|=n$ and maximum degree $\Delta_T$, and let $\mathcal H_T:=\bigotimes_{v\in V}\mathcal H_v$ where $\mathcal H_v\cong\mathbb C^d$. For an edge $e\in E$, deleting $e$ induces a bipartition of the vertices. Write $\sr_e(u)$ for the Schmidt rank of $u$ across this bipartition.  The normalized TTNs on $T$ of bond dimension at most $D$ are precisely the states satisfying $\sr_e(u)\leq D$ for every $e\in E$. We denote this class by $\ttn_{T,D}$.

For the learning subroutine, we embed bounded-degree TTNs into MPSs of polynomial bond dimension by choosing a suitable ordering of the vertices. For each vertex $v$, choose a child of maximum subtree size as the ``heavy child'', visit $v$ first, then visit all non-heavy child subtrees, and visit the heavy child subtree last. Denote the resulting vertex order by $\pi$. By construction, $\pi$ is a preorder traversal of the rooted tree, with the heavy child visited last. The preorder property implies that every rooted subtree is a contiguous interval in $\pi$. The heavy-child-last rule ensures that every prefix cut crosses at most
\begin{equation}
w_T:=\Delta_T\left(1+\lceil\log_2n\rceil\right)
\label{eq:results-tree-width}
\end{equation}
tree edges.

To see the second claim, fix a prefix of $\pi$. Only vertices on the unfinished root-to-current path can have edges from the prefix to unvisited vertices. A proper ancestor on this path can contribute such an edge only if the traversal descended from it through a non-heavy child. If $u$ is a non-heavy child of $v$, then the heavy child of $v$ has subtree size at least $|T_u|$, and therefore
\begin{equation*}
2|T_u|\leq |T_v|-1,
\end{equation*}
so that
\begin{equation*}
|T_u|<\frac{|T_v|}{2}.
\end{equation*}
Thus the subtree size decreases by at least a factor of two each time a proper ancestor can contribute a crossing edge. There are therefore at most $\lceil\log_2n\rceil$ such ancestors, and each vertex on this path contributes at most $\Delta_T$ crossing edges, giving the bound above.

If a prefix cuts $k$ tree edges, then the corresponding bipartition exposes at most $k$ virtual indices, each of dimension at most $D$, so a bond-$D$ TTN has Schmidt rank at most $D^k$ across that cut. By the MPS cut-rank characterization,
\begin{equation}
\ttn_{T,D}\subseteq\mps_{D^{w_T}}^\pi.
\label{eq:results-ttn-mps-embedding}
\end{equation}
For fixed $D$ and $\Delta_T$,
\begin{equation*}
D^{w_T}=n^{O(\Delta_T\log D)},
\end{equation*}
so the resulting MPS bond dimension is polynomial in $n$.

We will also use the reverse conversion, from MPSs in the ordering $\pi$ to TTNs on $T$. Each tree edge separates a rooted subtree from its complement, and every rooted subtree is a contiguous interval in $\pi$. Separating a contiguous interval from an open-boundary MPS cuts at most two virtual bonds, so an MPS of bond dimension $B$ has Schmidt rank at most $B^2$ across every tree edge. By the edge-rank characterization of TTNs, it therefore admits a TTN representation on $T$ of bond dimension at most $B^2$.

Comparator-dual compression extends to the tree without changing the approximation guarantee.  Define
\begin{equation*}
\|x\|_{T,D}:=\sup_{\psi\in\ttn_{T,D}}|\langle\psi|x\rangle|.
\end{equation*}
We process the tree from the leaves toward the root. Once a child subtree has been processed, we compress all information from that subtree into a single boundary space of dimension at most $K$, which becomes the effective virtual index connecting the subtree to its parent.  At the next vertex, group the physical space at that vertex with the boundary spaces of its processed children and apply the same one-step SVD shrinkage used in the MPS construction.  The matrix identities behind \eqref{eq:results-one-step-compression} and \eqref{eq:results-one-step-energy-drop} do not depend on the underlying graph.  The tree comparator norm obeys the same three properties needed in the chain proof: it is nonincreasing under contraction of a pendant part, errors in orthogonal boundary sectors obey a squared-sum bound, and a loss tensor with singular values at most $\sqrt\Delta$ has squared comparator norm at most $D\Delta$.  Consequently, for every $u\in\mathcal H_T$ one can construct a bond-$K$ TTN $\widetilde u_K$ satisfying
\begin{equation}
\|u-\widetilde u_K\|_{T,D}\leq\|u\|_2\sqrt{\frac{D}{K+1}},\qquad\|\widetilde u_K\|_2\leq\|u\|_2.
\label{eq:results-ttn-compression-guarantee}
\end{equation}
If $u$ is an explicit TTN of bond dimension $B$, the construction uses $\poly(n,d,B^{\Delta_T},K^{\Delta_T})$ arithmetic operations.

The proper pure-target optimizer also has a tree analogue.  Put the target and candidate TTNs in rooted isometric form.  Contracting the two subtrees rooted at a non-root vertex $x$, while leaving the parent virtual legs open, defines a cross environment $X_x$.  These environments satisfy
\begin{equation}
X_x=V_x^\dagger\left( I_{\mathcal H_x}\otimes\bigotimes_{w\in\operatorname{ch}(x)}X_w\right)A_x.
\label{eq:ttn-cross-environment-update-results}
\end{equation}
At the root, the same contraction is the scalar overlap between the target and the candidate.

The update is stable in operator norm. In particular, if the cross environments $X_w$ associated with the child subtrees are perturbed to $Y_w$, then
\begin{equation*}
\left\|\bigotimes_wX_w-\bigotimes_wY_w\right\|_\infty\leq \sum_w\|X_w-Y_w\|_\infty,
\end{equation*}
and perturbing the local candidate tensor contributes at most its operator-norm error.  A bottom-up dynamic program can therefore discretize the local isometries and subtree environments, retain one exact subtree in each occupied environment cell, and propagate the search toward the root.  For a target of bond dimension at most $K$ and norm at most one, the resulting runtime is bounded by
\begin{equation*}
n\left(\frac{Cn\sqrt{KD}}{\eta}\right)^{O(\Delta_TKD+dD^{\Delta_T+1})}\poly(K,D,d).
\end{equation*}
Thus the optimizer is polynomial in $n$ for fixed $d,D,\Delta_T,K,\eta$.  Combining it with \eqref{eq:results-ttn-compression-guarantee}, with $K=O(D/\eta^2)$, gives proper optimization against an arbitrary explicit TTN target in polynomial time for fixed $d,D,\Delta_T,\eta$.

Finally, apply the general reduction with $\mathcal C=\ttn_{T,D}$ and with the ambient classes given by MPSs in the ordering $\pi$.  The inclusion \eqref{eq:results-ttn-mps-embedding} verifies the embedding condition, the MPS improper learner supplies the ambient improper learner, and the MPS-to-TTN conversion followed by the proper TTN optimizer supplies the pure-target optimizer.  This yields the following result.

\begin{theorem}[Proper agnostic learning of bounded-degree TTNs]
\label{thm:proper-agnostic-ttn-results}
For every rooted tree $T$ with $n$ vertices and maximum degree at most $\Delta_T$, every $d,D\in\mathbb N$, and every $\varepsilon,\delta\in(0,1/4)$, there is a randomized quantum algorithm which, given copies of an arbitrary density operator $\rho$ on $\mathcal H_T$, outputs a classical description of a normalized state $\widehat\psi\in\ttn_{T,D}$ such that, with probability at least $1-\delta$,
\begin{equation}
\langle\widehat\psi|\rho|\widehat\psi\rangle\geq\max_{\psi\in\ttn_{T,D}}\langle\psi|\rho|\psi\rangle-\varepsilon.
\label{eq:ttn-proper-agnostic-guarantee-results}
\end{equation}
With $w_T:=\Delta_T(1+\lceil\log_2n\rceil)$, the copy complexity is
bounded by $\poly\!\left(n,d,D^{w_T},1/\varepsilon,\log(1/\delta)\right).$
For fixed $d,D,\Delta_T,\varepsilon$, the runtime is
bounded by $n^{F(d,D,\Delta_T,1/\varepsilon)}\poly(\log(1/\delta))$
for an explicit computable function $F$ independent of $n$.
\end{theorem}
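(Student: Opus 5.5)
The plan is to instantiate the improper-to-proper reduction, Theorem~\ref{thm:general-improper-to-proper-results}, with comparator class $\mathcal C:=\ttn_{T,D}$ and ambient classes $\mathcal V_R$ given by the vectors whose prefix-cut ranks in the heavy-child-last preorder $\pi$ are at most $R$, so that $\mathcal A_R=\mps^\pi_R$. Three properties must be checked to have a properization pair.

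\emph{Compactness.} $\ttn_{T,D}$ is compact: it is cut out of the unit sphere by the closed rank conditions $\sr_e(u)\le D$.

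\emph{Embedding.} The inclusion \eqref{eq:results-ttn-mps-embedding} gives $\mathcal C\subseteq\mathcal A_{R'}$ with $R':=D^{w_T}$.

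\emph{Closure under linear combinations.} Sums of vectors in $\mathcal V_{R_a}$ lie in $\mathcal V_{\sum_a R_a}$, exactly as in the MPS case.

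The improper learner is the MPS learner of Bakshi et al.\ applied in the ordering $\pi$ with comparator bond $R'$. Its copy complexity and output bond are $\poly(n,d,R',1/\varepsilon)=\poly(n,d,D^{w_T},1/\varepsilon)$. The remaining explicit operations are the same as in the MPS proof: overlaps, state preparation, projection onto spans of known orthonormal explicit states, and estimation of the entries $G_{ab}$. All of them act on MPSs whose bond is a sum of $O(\varepsilon^{-2})$ learner bonds, so the bound on total copy complexity carries over verbatim. This yields the stated $\poly(n,d,D^{w_T},1/\varepsilon,\log(1/\delta))$.

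The pure-target optimizer runs on each normalized target $w_c$, which is an explicit MPS in the ordering $\pi$ of bond $B=\poly(n,d,D^{w_T},1/\varepsilon)$, in three steps.
\begin{enumerate}
\item Convert $w_c$ into a TTN on $T$ of bond at most $B^2$. This uses the contiguity of rooted subtrees in $\pi$ together with the edge-rank characterization.
\item Apply the tree compression \eqref{eq:results-ttn-compression-guarantee} with $K+1\ge 64D/\eta^2$. This gives a bond-$K$ TTN whose $\|\cdot\|_{T,D}$-distance from $w_c$ is at most $\eta/8$, at cost $\poly(n,d,B^{2\Delta_T},K^{\Delta_T})$.
\item Run the bottom-up tree dynamic program on the compressed target with accuracy $\eta/8$, and transfer the guarantee back. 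As in Theorem~\ref{thm:mps-arbitrary-target-optimizer-results}, the triangle inequality and a difference of squares give a squared-overlap error at most $\eta$ against $w_c$.
\end{enumerate}
Every output lies in $\ttn_{T,D}$ by construction, which is exactly the properness required of \textsc{PureTargetOpt}.

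Correctness and the failure probability then follow directly from Theorem~\ref{thm:general-improper-to-proper-results}. The error bookkeeping is inherited: $\theta=(\varepsilon/16)^2$, $\zeta=\varepsilon/4$, and a pure-target accuracy of $\zeta/4$.

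The main obstacle is the runtime accounting. Here $B$ is only quasi-polynomial in $D$ (it is $D^{w_T}=n^{O(\Delta_T\log D)}$), so the target bond must never enter an exponent. The compression step contributes $B^{O(\Delta_T)}=n^{O(\Delta_T^2\log D)}$ times polynomial factors, which is polynomial for fixed parameters. The dynamic program contributes
\[
n\left(\frac{Cn\sqrt{KD}}{\eta}\right)^{O(\Delta_T KD+dD^{\Delta_T+1})}
\quad\text{with}\quad K=O(D/\varepsilon^2),\ \eta=\Theta(\varepsilon).
\]
This is called on $(C/\varepsilon)^{O(1/\varepsilon)}$ net points. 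The learner and the estimation steps run in $\poly(n,D^{w_T},\dots)$, which is also $n^{O(\Delta_T\log D)}$. Multiplying these factors gives a bound of the form $n^{F(d,D,\Delta_T,1/\varepsilon)}\poly(\log(1/\delta))$. I would check carefully that the $\log(1/\delta)$ dependence enters only through the repetitions of the learner and the estimator, and not through the classical optimizer.
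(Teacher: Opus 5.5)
Your proposal is correct and matches the paper's proof essentially step for step. You use the same properization pair ($\mathcal C=\ttn_{T,D}$, $\mathcal A_R=\mps^\pi_R$, $R'=D^{w_T}$) and the Bakshi et al.\ learner in the ordering $\pi$, with bonds kept at $mB_0$ by expressing all directions and targets in the original learner outputs. Your pure-target optimizer is also the paper's: MPS-to-TTN conversion at bond $B^2$, then tree comparator-dual compression, then the bottom-up dynamic program. Running the dynamic program at amplitude accuracy $\eta/8$ instead of the paper's $\eta/4$ is harmless, since the total amplitude loss of $3\eta/8$ still gives squared-overlap error at most $3\eta/4\le\eta$.
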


\subsection{Learning with branch structure}
\label{sec:results-branching}

We also learn descriptions as coherent superpositions of $r$ normalized bond-$D$ MPS branches. For a branch tuple $\boldsymbol\psi=(\psi_1,\ldots,\psi_r)$, define its total $k$-local interference by
\begin{equation}
\mathcal I_k(\boldsymbol\psi):=\sum_{a\neq b}\sum_{\substack{A\subseteq[n]\\|A|\leq k}}\left\|\tr_{\overline A}(|\psi_b\rangle\langle\psi_a|)\right\|_2^2,\qquad \overline A:=[n]\setminus A.
\label{eq:branching-total-interference-results}
\end{equation}
Here $\|\cdot\|_2$ is the Hilbert--Schmidt norm, and the sum includes $A=\varnothing$. Let $\mathcal L_{r,D,k}(\mu)$ be the class of normalized states $|\Phi\rangle=\sum_{a=1}^r c_a|\psi_a\rangle$ admitting such a branch tuple with $\mathcal I_k(\boldsymbol\psi)\leq\mu$. The normalization is imposed on the combined state so the coefficients need not have unit Euclidean norm.

\begin{theorem}[Agnostic learning with a relaxed interference bound]
\label{thm:agnostic-branching-results}
Let $n,d,r,D\in\mathbb N$, $k\in\{0,\ldots,n\}$, $0\leq\mu<1/64$, and $0<\sigma\leq1/64-\mu$, and assume $\mathcal L_{r,D,k}(\mu)\neq\varnothing$. Given copies of an arbitrary density operator $\rho$ on $(\mathbb C^d)^{\otimes n}$ and $\varepsilon,\delta\in(0,1/4)$, a randomized quantum algorithm returns $r$ normalized states $\widehat\psi_a\in\mps_D$ and coefficients $\widehat c$ such that
\begin{equation}
|\widehat\Phi\rangle:=\sum_{a=1}^r\widehat c_a|\widehat\psi_a\rangle,\qquad \|\widehat\Phi\|_2=1,\qquad \mathcal I_k(\widehat{\boldsymbol\psi})\leq\mu+\sigma,
\label{eq:branching-output-results}
\end{equation}
and, with probability at least $1-\delta$,
\begin{equation}
\langle\widehat\Phi|\rho|\widehat\Phi\rangle\geq\max_{\Phi\in\mathcal L_{r,D,k}(\mu)}\langle\Phi|\rho|\Phi\rangle-\varepsilon.
\label{eq:branching-learning-guarantee-results}
\end{equation}
The copy complexity is $\poly(n,d,rD,1/\varepsilon,\log(1/\delta))$. For fixed $d,r,D,k,\varepsilon,\sigma$, the runtime is polynomial in $n$ and $\log(1/\delta)$.
\end{theorem}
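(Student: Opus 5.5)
We will obtain this by running the general reduction of Theorem~\ref{thm:general-improper-to-proper-results} a second time, with a pure-target optimizer adapted to the branch class. The branch class is not of the form $\mathcal C$ assumed there, so the comparator side will be relaxed.

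\emph{Improper stage.} Every $\Phi\in\mathcal L_{r,D,k}(\mu)$ is a linear combination of $r$ bond-$D$ MPSs. It therefore lies in $\mps_{rD}$, since the cut ranks are subadditive. The comparator class $\mathcal L_{r,D,k}(\mu)$ embeds in $\mathcal A_{rD}$, and we can invoke Theorem~\ref{thm:general-finite-relevant-subspace-results} with the improper MPS learner at comparator bond $rD$. The proof only uses the embedding into $\mathcal A_{R'}$ together with the identity \eqref{eq:results-fixed-comparator-identity}, so it goes through for any comparator set. This gives a core $A=P\rho P$ with $|\langle\Phi|\rho|\Phi\rangle-\langle\Phi|A|\Phi\rangle|\le 2\sqrt\theta$ uniformly over the class, and with copy complexity $\poly(n,d,rD,1/\varepsilon,\log(1/\delta))$. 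We then apply the same spectral cutoff and $\zeta/16$-net as before, which reduces the problem to finitely many explicit pure targets $w_c$. For each target we must approximately maximize $|\langle w_c|\Phi\rangle|^2$ over a \emph{relaxed} class $\mathcal L_{r,D,k}(\mu+\sigma)$, and compare the result with the optimum over $\mathcal L_{r,D,k}(\mu)$. Because the output must only satisfy the relaxed interference bound, the error accounting of Theorem~\ref{thm:general-improper-to-proper-results} carries over unchanged. The comparison uses the true optimum on the smaller class, and the transfer back to $\rho$ only needs the uniform core bound on the output. That bound holds because the output also lies in $\mps_{rD}$.

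\emph{Pure-target stage.} First compress $w_c$ by Theorem~\ref{thm:mps-comparator-dual-compression-results} at comparator bond $rD$, which also controls overlaps with all branch superpositions. Then run a joint dynamic program over $r$ candidate branches at once, each in isometric form. At each cut we track three kinds of bounded-dimensional environments: the $r$ cross environments with the target as in \eqref{eq:mps-cross-environment-update-results}; the Gram environments $\sum_s A^{(a)s}_i X (A^{(b)s}_i)^\dagger$ between branches, which fix the norm of $\sum_a c_a\psi_a$ and, at the end, the $A=\varnothing$ term; and the data needed for the $k$-local interference. For the interference, $\mathcal I_k$ is a sum over subsets $A$ with $|A|\le k$ of $\|\tr_{\bar A}|\psi_b\rangle\langle\psi_a|\|_2^2$. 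Each such term is a contraction of four copies (two bras, two kets) that uses the identity on sites in $A$ and the swap structure on traced sites. We therefore carry, for each $j\le k$, the sum over $A\cap[1,i]$ with $|A\cap[1,i]|=j$ of the four-copy transfer-matrix environments. This is a vector of dimension $O(kD^8r^4)$ per cut, updated linearly and contractively in trace norm by the same isometry argument as \eqref{eq:mps-environment-contraction-results}. The coefficients $c\in\mathbb C^r$ range over a finite net and are chosen at the end, where normalization is imposed by dividing by the Gram norm.

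The main obstacle is this last step: the interference constraint is a hard constraint. Discretization errors of size $O(\eta)$ in the environments could push the output's $\mathcal I_k$ above $\mu$. This is exactly why the statement relaxes the bound to $\mu+\sigma$. We set the nets fine enough that the tracked interference is accurate to within $\sigma/2$, accept only final cells whose tracked value is at most $\mu+\sigma/2$, and note that the discretized path of a true optimizer from $\mathcal L_{r,D,k}(\mu)$ survives this filter. The conditions $\mu+\sigma\le1/64$ should be used to keep the Gram matrix well conditioned. Small interference, including the $A=\varnothing$ overlaps, makes the branches nearly orthonormal, so the coefficients satisfy $\|c\|_2\le 2$ and normalization is a stable Lipschitz operation. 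The resulting runtime is exponential only in $d,r,D,k,1/\varepsilon,1/\sigma$ and polynomial in $n$. The sum over subsets introduces $n^{k}$-type factors, which are polynomial for fixed $k$.
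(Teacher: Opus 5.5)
Your overall route matches the paper's. You build the relevant subspace with the MPS learner at comparator bond $rD$, so the uniform core bound holds on all of $\mps_{rD}$, and hence on both $\mathcal L_{r,D,k}(\mu)$ and the relaxed output class. You compress each pure target at comparator bond $rD$. You then run a joint dynamic program over the $r$ branches that tracks target cross environments, Gram environments and interference messages grouped by $|A|$, filtering at the end so that the discretized path of an optimal feasible tuple survives. Two differences are harmless. Choosing coefficients from a net instead of the paper's closed form $c=G^{-1}z/\sqrt{z^\dagger G^{-1}z}$ works because $\tfrac78 I\preceq G\preceq\tfrac98 I$. Your dimension count is an overestimate: each $\|\Gamma^A_{ab}\|_2^2$ involves only the pair $(a,b)$ and four virtual legs, giving $O((k+1)r^2D^4)$, which does not matter for fixed parameters.

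One step fails as stated. If you carry the \emph{sums} $S_{i,j}$ of four-copy environments over subsets with $|A\cap[1,i]|=j$, the update is $S_{i,j}=\mathcal F^{S_i}_{ab,i}(S_{i-1,j})+\mathcal F^{I}_{ab,i}(S_{i-1,j-1})$. That is a sum of two trace-norm contractions, and it is not itself a contraction. The messages reach trace norm $\binom{i}{j}$, so they are not covered by a unit-ball environment net. The additive error recursion $e_i\leq e_{i-1}+O(h+q)$ that you import from the MPS dynamic program therefore does not hold.

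The paper instead tracks the averages $H_{i,\ell}=S_{i,\ell}/\binom{i}{\ell}$. Their update $H_{i,\ell}=\frac{i-\ell}{i}\mathcal F^{S_i}_{ab,i}(H_{i-1,\ell})+\frac{\ell}{i}\mathcal F^{I}_{ab,i}(H_{i-1,\ell-1})$ is a convex combination of contractions. All messages then stay in the trace-norm unit ball and errors add as before. Restoring the factors $\binom{n}{\ell}$ at the final cut multiplies the interference error by $N_{n,k}=\sum_{\ell\leq k}\binom{n}{\ell}$. This is absorbed by taking $h,q$ of order $\sigma/(r^2nN_{n,k})$, which is still polynomial in $n$ for fixed $k$. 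You could instead keep sums and bound error propagation by path counting, which also gives only polynomial amplification, but that needs its own argument; contractivity does not supply it.

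Separately, the statement requires the output to satisfy the branch constraints even on the failure event and when $\ell=0$. So you need an explicit member of $\mathcal L_{r,D,k}(\mu+\sigma)$ to fall back on. Unlike $\mps_D$, this class has no obvious one. The paper obtains it by running the branch pure-target optimizer on a product target, which is where the hypothesis $\mathcal L_{r,D,k}(\mu)\neq\varnothing$ is used.
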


Every such superposition has MPS bond at most $rD$, so the relevant-subspace reduction and comparator-dual compression apply with comparator bond $rD$. To obtain the decomposition, the classical optimizer searches jointly over the branch tensors and optimizes their coefficients using the Gram matrix. For each branch pair, the local interference terms can be evaluated recursively as the branches are contracted. Grouping subsets by their cardinality gives $k+1$ bounded matrix environments that can be tracked alongside the overlap environments. Appendix~\ref{sec:learning-with-branching-structure} proves the guarantee. The algorithm finds a nearly optimal decomposition; recovery of a particular original decomposition requires an additional identifiability assumption.

\section{Discussion}

Our results establish proper agnostic learning of MPSs and bounded-degree TTNs by using improper learning to construct a reduced objective that approximately preserves all scores in the comparison class. A classical optimization step then finds a nearly optimal MPS or TTN of the chosen bond dimension. The guarantee is relative to the best attainable score in the model class, which need not be large; moreover, near-optimality within the class should not in general be interpreted as reconstruction of the unknown state. For MPSs, the ability to estimate the optimum at several bond dimensions from the same quantum data also allows the approximation quality of the model class to be assessed.

The main computational limitation is the classical search. For MPSs, the copy complexity is jointly polynomial in $n,d,D,1/\varepsilon$, and $\log(1/\delta)$, whereas the exponent of the runtime bound depends on $d,D$, and $1/\varepsilon$. Already for product states, corresponding to $D=1$, proper agnostic learning with runtime polynomial in both $n$ and $1/\varepsilon$ would imply $\textsf{NP}\subseteq\textsf{BQP}$~\cite{BakshiEtAl2025}. Thus one cannot in general expect polynomial dependence on all parameters. For TTNs, the copy bound also incurs the ambient bond $D^{w_T}$ introduced by the MPS embedding. Improving these parameter dependences remains an important direction. It would also be useful to replace the adaptive subspace projections and postselection in our reduction with local measurements on individual copies while retaining the proper agnostic guarantee.

Our compression algorithm also applies to explicitly given MPSs, without quantum data. It produces an MPS of bond $O(D/\alpha)$ whose squared Euclidean approximation error is at most $1+\alpha$ times that of the best bond-$D$ approximation, while also controlling overlap errors uniformly over all normalized bond-$D$ MPSs with a bound independent of the chain length. This comparator-dual viewpoint may be useful more broadly in structured approximation problems where only a restricted class of test states needs to be preserved. Extending the compression and proper optimization procedures to tensor networks with cycles is a natural further direction. For periodic-boundary MPSs, our corollary remains improper, returning an open-boundary MPS of bond $D^2$ when competing with bond-$D$ periodic MPSs; proper agnostic learning in this setting is not established here.

We have also considered coherent superpositions of several low-bond MPS branches subject to a local noninterference constraint. This shows that the same framework can learn structured decompositions rather than only a single tensor-network state. Our guarantee identifies a nearly optimal decomposition satisfying the prescribed branch complexity and an approximately preserved interference constraint, but recovery of a particular underlying decomposition requires an additional identifiability assumption. Developing broader and more intrinsic notions of learnable branch structure, and understanding their relation to other notions of wavefunction branching~\cite{Riedel2017,riedel2025wavefunction,taylor2025non, taylor2025wavefunction, pilatowsky2026emergent}, would be an interesting direction.

\subsection*{Acknowledgements}

We thank Luke Coffman and Ishaan Kannan for valuable discussions.  We acknowledge the use of ChatGPT 5.5 and 6 as research aids to help identify errors, work through intermediate steps of calculations and derivations, and suggest edits to the text. All AI-assisted calculations and derivations were independently checked by the authors, and all results and arguments were written up and verified by the authors. This work is supported by the U.S. Department of Energy, Office of Science, under Award Number DE-SC0021013 titled ‘Emergent Phenomena in Quantum Dynamics: From Chaos to Spacetime’. CCVP received support from Caltech SURF Fellowship during part of this work.

$$$$

\noindent {\LARGE \bfseries Appendices}
\appendix

\section{Related work}
\label{sec:related-work}

Quantum state tomography can exploit low-rank or tensor-network structure to reduce reconstruction costs. This includes compressed-sensing approaches to low-rank tomography~\cite{GrossEtAl2010}, reconstruction methods for MPSs and MPOs~\cite{CramerEtAl2010,LandonCardinalLiuPoulin2010,BaumgratzEtAl2013}, and, more recently, sketch tomography based on classical shadows and tensor-train reconstruction~\cite{TangEtAl2025Sketch}. These results concern reconstruction under structural assumptions on the unknown state. Our objective is different: we seek a nearly optimal state within a chosen model class even when the unknown state is poorly approximated by that class.

Agnostic learning has been studied for several structured classes, including stabilizer product states and discrete product-state families~\cite{GrewalEtAl2026,ChenEtAl2025}. Most directly relevant is Ref.~\cite{BakshiEtAl2025}, which proves proper agnostic learning of product states and improper agnostic learning of MPSs. Subsequent work improves the system-size dependence of improper MPS learning~\cite{LinChiaHung2025} and extends improper agnostic learning to trees and more general known graphs~\cite{CaroMcHughStrelchuk2026}. Our results differ in requiring the output itself to satisfy the chosen MPS or TTN bond-dimension constraint.

The reduction from improper to proper learning is related to quantum agnostic boosting and to learning through structured decompositions~\cite{ArunachalamEtAl2026Boosting,ArunachalamDutt2026Structure,ArunachalamDutt2026Extent}. In particular, residual weak learning has been used to construct pure-state decompositions whose overlaps are uniformly accurate over structured comparator classes~\cite{ArunachalamDutt2026Extent}. Here we instead consider arbitrary mixed inputs and construct a low-rank operator that approximately preserves the score of every comparator. Proper learning then reduces to a finite collection of pure-state approximation problems.

Our proper optimizer is also related to earlier algorithms for optimization over fixed-bond-dimension MPSs, which use discretization and dynamic programming to minimize one-dimensional local Hamiltonian energies~\cite{SchuchCirac2010,AharonovAradIrani2010}. We use a similar dynamic-programming structure, with cross environments recording the information needed to optimize overlap with an explicit target. The additional comparator-dual compression step bounds the target bond dimension independently of the chain length, preventing the search from depending exponentially on the bond dimension of the original target.

Finally, the compression procedure is related to classical tensor approximation and matrix sketching. Tensor-train SVD (TT-SVD), hierarchical SVD, and randomized tensor-rounding methods provide standard low-rank approximation procedures~\cite{Oseledets2011,Grasedyck2010,AlDaasEtAl2023}, while relative-error tensor approximation has been studied using rank enlargement and parameterized algorithms~\cite{SongWoodruffZhong2019,MahankaliWoodruffZhang2024}. Our singular-value shrinkage is reminiscent of Frequent Directions, which subtracts a common amount from squared singular values and charges the accumulated error to the resulting decrease in Frobenius norm~\cite{Liberty2013,GhashamiEtAl2016}; related ideas have also been applied to Tucker and tensor-train decompositions~\cite{CheWeiYan2026}. Here the shrinkage is combined with contractive retained maps to control overlap errors uniformly over all bond-$D$ comparators, with no dependence on the number of sites. The same construction also gives a bond-$O(D/\alpha)$ approximation whose squared Euclidean error is at most $(1+\alpha)$ times the best bond-$D$ error.

\section{Improper to Proper Learning}
\label{sec:improper-to-proper}

We now formalize the improper-to-proper reduction described in the main text. Let $\mathcal H$ be a finite-dimensional Hilbert space and let 
\begin{equation} 
\varnothing\neq\mathcal C\subseteq\{\phi\in\mathcal H:\|\phi\|_2=1\} 
\end{equation}
be a compact set of normalized comparator states. For every positive semidefinite operator $X\succeq0$, define 
\begin{equation} 
\opt_{\mathcal C}(X):=\max_{\phi\in\mathcal C}\bra{\phi}X\ket{\phi}. 
\end{equation}
Given copies of a density operator $\rho$, an improper learner may output any efficiently described normalized state $b\in\mathcal H$ satisfying 
\begin{equation} 
\bra{b}\rho\ket{b}\geq\opt_{\mathcal C}(\rho)-\varepsilon,
\end{equation}
whereas a proper learner must output a state $\widehat\phi\in\mathcal C$ satisfying the same guarantee. Let $(\mathcal V_R)_{R\geq1}$ be a nested family of explicit vector classes,
\begin{equation}
\mathcal V_1\subseteq\mathcal V_2\subseteq\cdots\subseteq\mathcal H,
\end{equation}
whose elements need not be normalized. Write 
\begin{equation}
\mathcal A_R:=\{u\in\mathcal V_R:\|u\|_2=1\}, \end{equation}
and
\begin{equation}
\opt_{\mathcal A_R}(X):=\sup_{u\in\mathcal A_R}\bra{u}X\ket{u}. 
\end{equation}

\begin{definition}\label{def:properization-data}
Call $(\mathcal{C},(\mathcal{V}_R)_{R\ge 1})$ a properization pair if the following hold.
\begin{itemize}
\item There is a fixed parameter $R'$ such that $\mathcal C\subseteq\mathcal A_{R'}$, together with a fixed explicit state $\phi_0\in\mathcal C$.

\item There is a computable monotone function $\Lambda$ such that, for every $u_a\in\mathcal V_{R_a}$ and $c_a\in\mathbb C$,
\begin{equation}
\sum_{a=1}^q c_a u_a\in\mathcal V_{\Lambda(R_1,\ldots,R_q)}.
\end{equation}
\item There is an ambient improper learner. Given copies of an arbitrary density operator $\omega$, a comparator parameter $R$, an error $\gamma\in(0,1)$, and a failure probability $\beta\in(0,1)$, it outputs an explicit normalized state $b\in\mathcal A_{\widetilde R}$, where $\widetilde R$ is bounded by a computable function of the input parameters, such that with probability at least $1-\beta$,
    \begin{equation}
        \langle b|\omega|b\rangle\geq\opt_{\mathcal A_R}(\omega)-\gamma.
        \label{eq:general-improper-learner-guarantee}
    \end{equation}
    \item There is a proper pure target optimizer. Given an explicit normalized vector
    $u\in\mathcal A_B$ and an error
    $\eta\in(0,1)$, it outputs a normalized state
    $\widehat\phi\in\mathcal C$ such that
    \begin{equation}
    |\langle u|\widehat\phi\rangle|^2\geq \max_{\phi\in\mathcal C}|\langle u|\phi\rangle|^2-\eta.
        \label{eq:general-pure-target-optimizer}
    \end{equation}
    \item 
    For the explicit vectors produced above, one can efficiently compute
    norms and overlaps, normalize nonzero vectors, prepare the corresponding states, measure the projector onto the span of finitely many explicitly known orthonormal states, and estimate matrix elements $\langle u|\rho|v\rangle$ to our desired additive accuracy.
\end{itemize}

\end{definition}

\begin{theorem}\label{thm:general-finite-relevant-subspace}
Let $(\mathcal C,(\mathcal V_R)_{R\geq1})$ be a properization pair, let $\rho$ be an arbitrary density operator, and let $\theta\in(0,1)$. There is an algorithm which, on the event that all estimates and improper-learning calls succeed, constructs mutually orthonormal explicit states $s_1,\ldots,s_m$, with $m\leq2/\theta$, and projectors
\begin{equation}
P=\sum_{a=1}^m\ket{s_a}\bra{s_a},\qquad Q=I-P,
\end{equation}
such that
\begin{equation}
\bra{Q\phi}\rho\ket{Q\phi}\le\theta , \qquad \forall \phi\in\mathcal{C}.
\label{eq:general-residual-smallness}
\end{equation}
Consequently, for $A=P\rho P$, 
\begin{equation}   \sup_{\phi\in\mathcal{C}}|\bra{\phi}\rho\ket{\phi}-\bra{\phi}A\ket{\phi}|\le 2 \sqrt{\theta}.
    \label{eq:general-core-approximation}
\end{equation}
\end{theorem}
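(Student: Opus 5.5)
The plan is to run the greedy residual-learning loop sketched before the statement, with acceptance thresholds calibrated so that every accepted direction carries $\rho$-weight at least $\theta/2$. Termination then gives the residual bound \eqref{eq:general-residual-smallness}. Afterwards, the core bound \eqref{eq:general-core-approximation} follows from an elementary operator decomposition.

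\textbf{The loop.} At stage $j$ we have orthonormal $s_1,\dots,s_j$, with $P_j,Q_j$ as before. First estimate $\mu_j=\tr(Q_j\rho)$ to accuracy $\theta/8$ by measuring the projector $P_j$ on copies of $\rho$. If the estimate is below $7\theta/8$, then $\mu_j<\theta$, and we stop: for every $\phi\in\mathcal C$,
\[
\langle Q_j\phi|\rho|Q_j\phi\rangle\le \tr(Q_j\rho)\|Q_j\phi\|^2\le\mu_j<\theta .
\]
Otherwise $\mu_j\ge 3\theta/4$. Prepare $\overline\rho_j$ by postselection and call the improper learner with comparator parameter $R'$ and error $\gamma:=\theta/(8\mu_j)$ (any $\gamma\le\theta/8$ would do). Since $\mathcal C\subseteq\mathcal A_{R'}$, identity \eqref{eq:results-fixed-comparator-identity} gives
\[
\opt_{\mathcal A_{R'}}(\overline\rho_j)\ge \max_{\phi\in\mathcal C}\frac{\langle Q_j\phi|\rho|Q_j\phi\rangle}{\mu_j}.
\]
The learner returns $b_j$. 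Estimate $\langle Q_jb_j|\rho|Q_jb_j\rangle=\mu_j\langle b_j|\overline\rho_j|b_j\rangle$ to accuracy $\theta/8$; this is possible because $Q_jb_j$ is an explicit combination lying in some $\mathcal V_{\Lambda(\cdot)}$.

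\textbf{Stopping and acceptance rules.} If the estimate is below $5\theta/8$, stop. In that case the true value is below $3\theta/4$, so for every $\phi\in\mathcal C$,
\[
\langle Q_j\phi|\rho|Q_j\phi\rangle\le \mu_j\langle b_j|\overline\rho_j|b_j\rangle+\mu_j\gamma<3\theta/4+\theta/8<\theta .
\]
Otherwise, accept $s_{j+1}:=Q_jb_j/\|Q_jb_j\|$. Then
\[
\langle s_{j+1}|\rho|s_{j+1}\rangle\ge\langle Q_jb_j|\rho|Q_jb_j\rangle\ge\theta/2,
\]
using $\|Q_jb_j\|\le1$. In particular $Q_jb_j\neq0$, so normalization is well defined.

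\textbf{Counting.} The accepted $s_a$ are orthonormal. Hence
\[
\sum_a\langle s_a|\rho|s_a\rangle=\tr(P\rho)\le1,
\]
so at most $2/\theta$ directions are accepted, and the loop halts with $m\le 2/\theta$.

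\textbf{Core approximation.} Write $\phi=P\phi+Q\phi$ and expand:
\[
\langle\phi|\rho|\phi\rangle-\langle\phi|A|\phi\rangle=2\,\mathrm{Re}\langle P\phi|\rho|Q\phi\rangle+\langle Q\phi|\rho|Q\phi\rangle .
\]
By Cauchy–Schwarz for the positive form $\rho$, the cross term is at most $2\sqrt{\langle P\phi|\rho|P\phi\rangle}\,\sqrt\theta\le2\sqrt\theta$. The last term lies in $[0,\theta]$. This gives a bound of $2\sqrt\theta+\theta$, not quite $2\sqrt\theta$.

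\textbf{Main obstacle: the constant $2\sqrt\theta$.} To close the gap, bound the difference directly as $|a+b|$ with $a=2\mathrm{Re}\langle P\phi|\rho|Q\phi\rangle$ and $b=\langle Q\phi|\rho|Q\phi\rangle\in[0,\theta]$. Set $x=\sqrt{\langle P\phi|\rho|P\phi\rangle}$ and $y=\sqrt b\le\sqrt\theta$, so $|a|\le2xy$. Using $\|P\phi\|^2+\|Q\phi\|^2=1$, the constraint $\|\rho\|\le1$ gives $x^2+y^2\le1$.
\begin{itemize}
\item Upper side: $a+b\le 2xy+y^2\le 2y\sqrt{1-y^2}+y^2$. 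This is at most $2y$ for small $y$, but may slightly exceed $2\sqrt\theta$ when $y$ is close to $\sqrt\theta$ and $\theta$ is moderate.
\item Lower side: $a+b\ge -2xy+y^2\ge-2y$.
\end{itemize}
If the upper side still fails, the fallback is to tighten the thresholds so that termination guarantees $\langle Q\phi|\rho|Q\phi\rangle\le\theta'$ with $2\sqrt{\theta'}+\theta'\le2\sqrt\theta$. For example, $\theta'=\theta/(1+\sqrt\theta)^2$ works, since $\theta'\le\theta/4$ is more than enough for $\theta<1$. The count then becomes $m\le 2/\theta'$, and I would adjust constants to keep $m\le2/\theta$, accepting only directions with weight at least $\theta/2$. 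Reconciling the two constants is the fiddly step I expect to cost the most care.
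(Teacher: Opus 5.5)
Your residual-learning loop is correct and is essentially the paper's own construction. The two stopping rules certify $\langle Q\phi|\rho|Q\phi\rangle<\theta$ for all $\phi\in\mathcal C$, every accepted direction carries $\rho$-weight at least $\theta/2$, and $\tr(P\rho)\le 1$ gives $m\le 2/\theta$. Only the constants differ: the paper uses accuracy $\theta/16$ and thresholds $\theta/8$ and $3\theta/4$. Since $\mu_j$ is not known exactly, you should simply call the learner with $\gamma=\theta/8$, as your parenthetical already allows.

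\textbf{The gap is the second conclusion, the bound $2\sqrt\theta$.} Your expansion yields only $2\sqrt\theta+\theta$, and the repair does not go through.
\begin{itemize}
\item Your claim that $2y\sqrt{1-y^2}+y^2\le 2y$ for small $y$ is false. For $y>0$ that inequality is equivalent to $y\ge 4/5$. So for every $0<y<4/5$ your upper bound strictly exceeds $2y$, and in particular it exceeds $2\sqrt\theta$ at $y=\sqrt\theta$ for small $\theta$, the typical regime.
\item The fallback is left unfinished. With the value $\theta'=\theta/(1+\sqrt\theta)^2$ you name, it conflicts with accepting only directions of weight at least $\theta/2$ as soon as $\theta'<\theta/2$, that is, for $\theta>3-2\sqrt2$. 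Comparator residuals between $\theta'$ and $\theta/2$ then permit neither a certified stop nor a certified acceptance. As written, the theorem is therefore not established for all $\theta\in(0,1)$.
\end{itemize}

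\textbf{The missing ingredient is the constraint $\langle\phi|\rho|\phi\rangle\le 1$.} Your variables encode only $x^2+y^2\le 1$, which does not capture it. The paper uses this constraint through the seminorm $\|z\|_\rho:=\langle z|\rho|z\rangle^{1/2}$:
\begin{itemize}
\item The triangle inequality gives $\bigl|\|\phi\|_\rho-\|P\phi\|_\rho\bigr|\le\|Q\phi\|_\rho\le\sqrt\theta$.
\item Since $\|\phi\|_\rho\le 1$ and $\|P\phi\|_\rho\le 1$, a difference of squares gives $\bigl|\|\phi\|_\rho^2-\|P\phi\|_\rho^2\bigr|\le 2\sqrt\theta$.
\end{itemize}
In your notation, $\langle\phi|\rho|\phi\rangle\le\min\{1,(x+y)^2\}$. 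Hence $a+b\le\min\{1,(x+y)^2\}-x^2\le 2y-y^2$, and together with your lower bound $a+b\ge -2y$ this gives $2\sqrt\theta$ directly. No change to the thresholds or to the count $m\le 2/\theta$ is needed.
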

\begin{proof}
Start with $P_0:=0$ and $Q_0:=I$. Suppose that mutually orthonormal
explicit states $s_1,\ldots,s_j$ have been constructed, and set $P_j:=\sum_{\ell=1}^j|s_\ell\rangle\langle s_\ell|$ and $Q_j:=I-P_j$. The improper learner will use the same comparator parameter $R'$ in every round. Define
\begin{equation}
\mu_j:=\tr(Q_j\rho)=\tr(Q_j\rho Q_j),
\label{eq:general-residual-mass}
\end{equation}
and estimate $\mu_j$ to additive accuracy $\theta/16$. If the estimate $\widehat\mu_j$ satisfies
$\widehat\mu_j\leq\theta/8$, then
\begin{equation*}
\|Q_j\rho Q_j\|_\infty\leq\mu_j \leq\frac{3\theta}{16}<\theta,
\end{equation*}
and the procedure terminates. Otherwise, $\mu_j>\theta/16$. Postselecting on the $Q_j$ outcome of the measurement $\{P_j,Q_j\}$ prepares copies of
\begin{equation}
\overline\rho_j:=\frac{Q_j\rho Q_j}{\mu_j}.
\label{eq:general-normalized-residual-state}
\end{equation}
The expected number of input copies per residual copy is at most $16/\theta$. Run the improper learner on $\overline\rho_j$ with the fixed comparator parameter $R'$ and error $\theta/16$, and denote its normalized output by $b_j$. Set
\begin{equation}
a_j:=\langle b_j|Q_j\rho Q_j|b_j\rangle.
\label{eq:general-residual-score}
\end{equation}
Since $\mathcal C\subseteq\mathcal A_{R'}$, the learning guarantee
implies
\begin{align}
a_j&\geq \opt_{\mathcal A_{R'}}(Q_j\rho Q_j)-\mu_j\frac{\theta}{16}\nonumber\\&\geq\max_{\phi\in\mathcal C}\langle Q_j\phi|\rho|Q_j\phi\rangle-\frac{\theta}{16}.
\label{eq:general-fixed-comparator-residual-bound}
\end{align}
Here we used $\langle\phi|Q_j\rho Q_j|\phi\rangle=\langle Q_j\phi|\rho|Q_j\phi\rangle$.
In particular, the normalized residual of a comparator need not belong to $\mathcal A_{R'}$. Estimate $a_j$ to additive accuracy $\theta/16$.
This can be done by first measuring $\{P_j,Q_j\}$ and, on the $Q_j$ outcome, measuring $|b_j\rangle\langle b_j|$ since the joint success probability is $a_j$.
If $\widehat a_j\leq3\theta/4$, then
\begin{equation*}
\max_{\phi\in\mathcal C}\langle Q_j\phi|\rho|Q_j\phi\rangle
\leq a_j+\frac{\theta}{16}\leq\frac{7\theta}{8}<\theta,
\end{equation*}
and the procedure terminates. Otherwise,
$a_j>11\theta/16>\theta/2$. Define
\begin{equation}
q_j:=Q_jb_j, \qquad s_{j+1}:=\frac{q_j}{\|q_j\|_2}.
\label{eq:general-new-direction}
\end{equation}
The vector $q_j$ is explicit by the linear combination assumption. It is nonzero because $\langle    
q_j|\rho|q_j\rangle=a_j>0$,
and it is orthogonal to the previously constructed states.
Moreover,
\begin{equation}
\langle s_{j+1}|\rho|s_{j+1}\rangle=\frac{a_j}{\|q_j\|_2^2}\geq a_j>\frac{\theta}{2}.
\label{eq:general-new-direction-weight}
\end{equation}
Thus, after $m$ additions,
\begin{equation*}
1=\tr(\rho)\geq\sum_{\ell=1}^m\langle s_\ell|\rho|s_\ell\rangle
>m\frac{\theta}{2}.
\end{equation*}
Consequently, $m<2/\theta$, and the procedure terminates. At termination, either the residual operator norm is less than $\theta$, or its score against every comparator in $\mathcal C$
is less than $\theta$. In either case, the final projector $Q:=I-P$ satisfies \eqref{eq:general-residual-smallness}.

For the objective approximation, write
$\|x\|_\rho:=\sqrt{\langle x|\rho|x\rangle}$.
For every $\phi\in\mathcal C$,
\begin{equation*}
\|Q\phi\|_\rho\leq\sqrt\theta,\qquad \left|\|\phi\|_\rho-\|P\phi\|_\rho\right|\leq\|Q\phi\|_\rho.
\end{equation*}
Since $\rho$ is a density operator and $\phi$ is normalized,
both $\|\phi\|_\rho$ and $\|P\phi\|_\rho$ are at most one.
Therefore
\begin{align*}
\left|\langle\phi|\rho|\phi\rangle-\langle\phi|P\rho P|\phi\rangle\right|&=\left|\|\phi\|_\rho^2-\|P\phi\|_\rho^2\right|\\&\leq\left|\|\phi\|_\rho-\|P\phi\|_\rho\right|\left(\|\phi\|_\rho+\|P\phi\|_\rho\right)\\&\leq2\sqrt\theta.
\end{align*}
Taking the supremum proves
\eqref{eq:general-core-approximation}.
\end{proof}

\subsection{Optimizing on the relevant subspace}

Let $s_1,\ldots,s_m$, $P$, and $A=P\rho P$ be as above. Define the isometry $S:\mathbb C^m \to \mathcal H$ by $Se_a:=s_a$, and the support matrix
\begin{equation} G:=S^\dagger\rho S,\qquad G_{ab}=\langle s_a|\rho|s_b\rangle. \label{eq:general-support-matrix} \end{equation}
Since $S^\dagger S=I_m$ and $SS^\dagger=P$, we have $A=SGS^\dagger$.
Estimate each matrix element $G_{ab}=\langle s_a|\rho|s_b\rangle$ to additive accuracy $O(\xi/m)$, and symmetrize the resulting matrix to obtain a Hermitian estimate $G_0$. Then
\begin{equation}
\|G_0-G\|_\infty\leq\frac{\xi}{2}.
\label{eq:general-raw-support-estimate}
\end{equation}
Let $\widehat G:=(G_0)_+$ be obtained by replacing the negative eigenvalues of $G_0$ by zero. Since $G\succeq 0$, every negative eigenvalue of $G_0$ has absolute value at most $\xi/2$, and hence
\begin{equation} \widehat G\succeq 0,\qquad \|\widehat G-G\|_\infty\leq \xi. \label{eq:general-psd-support-estimate} \end{equation}
Define
$\widehat A:=S\widehat G S^\dagger.$
Then
\begin{equation} \|\widehat A-A\|_\infty=\|\widehat G-G\|_\infty\leq\xi. \label{eq:general-core-operator-error} \end{equation}
\begin{theorem}
\label{thm:general-low-rank-properization}
For every $\zeta\in(0,1)$, one can output a normalized $\widehat\phi\in\mathcal C$ such that
\begin{equation} \langle\widehat\phi|A|\widehat\phi\rangle\geq\opt_{\mathcal C}(A)-2\xi-\zeta. \label{eq:general-low-rank-properization-guarantee} \end{equation}
The number of calls to the proper pure-target optimizer is at most $\left(\frac{80}{\zeta}\right)^{\! 2m}$.
\end{theorem}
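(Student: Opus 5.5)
The plan is to reduce the mixed objective $\widehat A$ to pure targets via the identity \eqref{eq:results-mixed-to-pure}, discretize the coefficient sphere by a net, and call the pure-target optimizer once per net point. This theorem has no spectral cutoff, so the coefficient sphere lives in $\mathbb C^m$. That explains the count $(80/\zeta)^{2m}$.

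\textbf{Step 1: reduce to pure targets.} Diagonalize $\widehat A=\sum_{j=1}^m\lambda_j|u_j\rangle\langle u_j|$ with $u_j=Sg_j$, and set $v_c=\sum_j\sqrt{\lambda_j}c_ju_j$. For unit $\phi$, $\langle v_c|\phi\rangle=\sum_j\sqrt{\lambda_j}\,\overline{c_j}\langle u_j|\phi\rangle$. By Cauchy--Schwarz, $|\langle v_c|\phi\rangle|^2\le\langle\phi|\widehat A|\phi\rangle$, with equality for $c_j\propto\sqrt{\lambda_j}\langle u_j|\phi\rangle$. This gives \eqref{eq:results-mixed-to-pure}. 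Each $v_c$ is an explicit linear combination of the $s_a$, so it lies in some $\mathcal V_{\widetilde R}$ by the linear-combination assumption. Also $\|v_c\|_2^2=\sum_j\lambda_j|c_j|^2\le\|\widehat A\|_\infty\le 1+\xi$.

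\textbf{Step 2: net and per-target optimization.} Take a Euclidean $\zeta/16$-net $\mathcal N$ of the unit sphere of $\mathbb C^m\cong\mathbb R^{2m}$. A standard volume bound gives $|\mathcal N|\le(1+32/\zeta)^{2m}\le(80/\zeta)^{2m}$. For each $c\in\mathcal N$ with $v_c\neq0$, normalize $w_c=v_c/\|v_c\|_2$ and call the pure-target optimizer with accuracy $\zeta/4$ to get $\phi_c\in\mathcal C$. Compute $F_c=\langle\phi_c|\widehat A|\phi_c\rangle$ classically from overlaps, and output the maximizer $\widehat\phi$.

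\textbf{Step 3: error analysis.} This is the main bookkeeping obstacle. Three losses must stay within $\zeta$: the net error, the normalization by $\|v_c\|_2$ (which rescales the optimizer's additive error), and the transfer between $A$ and $\widehat A$.

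Let $\phi^\star$ attain $\opt_{\mathcal C}(A)$. Then $\opt_{\mathcal C}(\widehat A)\ge\opt_{\mathcal C}(A)-\xi$. Pick $c^\star$ optimal for $\phi^\star$ in the identity for $\widehat A$, and $c\in\mathcal N$ with $\|c-c^\star\|\le\zeta/16$. Since the map $c\mapsto v_c$ has operator norm $\le\sqrt{\|\widehat A\|_\infty}\le 2$,
\[
|\langle v_c|\phi^\star\rangle|\ge|\langle v_{c^\star}|\phi^\star\rangle|-\zeta/8 .
\]
Squaring, with amplitudes at most $\sqrt2$, costs at most about $\zeta/2$.

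The optimizer guarantee is for $w_c$. Multiplying by $\|v_c\|_2^2\le 1+\xi\le2$ gives
\[
|\langle v_c|\phi_c\rangle|^2\ge\max_{\phi\in\mathcal C}|\langle v_c|\phi\rangle|^2-2\cdot\zeta/4 .
\]
By Step 1, $|\langle v_c|\phi_c\rangle|^2\le\langle\phi_c|\widehat A|\phi_c\rangle=F_c$. Chaining these,
\[
F_{c_\star}\ge F_c\ge\opt_{\mathcal C}(\widehat A)-\zeta/2-\zeta/2 .
\]
If this is slightly over budget, I would tighten the squaring step using $|a^2-b^2|\le|a-b|(a+b)$ with $a+b\le 2\sqrt{1+\xi}$. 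I would also rely on the fact that $\xi$ is small (in fact $\xi=\varepsilon/32<1/128$), so the accounting closes at $\zeta$ by slightly refining the constants.

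Finally, transfer back: $\langle\widehat\phi|A|\widehat\phi\rangle\ge F_{c_\star}-\xi$. Combining the steps yields \eqref{eq:general-low-rank-properization-guarantee}, with at most $|\mathcal N|\le(80/\zeta)^{2m}$ optimizer calls.
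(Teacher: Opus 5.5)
Your proposal is correct and follows essentially the same argument as the paper. The shared ingredients are the identity $\langle\phi|\widehat A|\phi\rangle=\max_c|\langle v_c|\phi\rangle|^2$, a $\zeta/16$-net of the unit sphere in $\mathbb C^m$, a loss of at most $\zeta/2$ from rescaling the optimizer's guarantee by $\|v_c\|_2^2\le 2$, a net loss, and two transfers between $A$ and $\widehat A$ costing $\xi$ each. You anchor at the maximizer for $A$ rather than for $\widehat A$, which makes no difference.

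Your hedge about going over budget is unnecessary. Even with your Lipschitz constant $2$, the net step costs at most $2\sqrt2\cdot\zeta/8=\sqrt2\,\zeta/4<\zeta/2$, so the total is already within $\zeta$. You also need only $\xi\le1$, not $\xi$ small, since the case $\xi>1$ makes the bound vacuous. Add the trivial convention $\phi_c:=\phi_0$ when $v_c=0$; the paper states both of these edge cases explicitly.
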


\begin{proof}
Since $A=P\rho P$, one has $\|A\|_\infty\leq1$. If $\xi>1$, return $\phi_0$, in that case the bound is immediate since $\opt_{\mathcal C}(A)\leq1$. Otherwise, $\xi\leq1$, and \eqref{eq:general-core-operator-error} gives $\|\widehat A\|_\infty\leq2$. If $\widehat A=0$, return the fixed explicit state $\phi_0\in\mathcal C$. In this case
\begin{equation} \opt_{\mathcal C}(A)\leq\|A-\widehat A\|_\infty\leq\xi, \label{eq:general-zero-estimated-core} \end{equation}
so \eqref{eq:general-low-rank-properization-guarantee} is immediate. We may therefore assume $\widehat A\neq 0$. Write $\widehat A=\sum_{j=1}^m \lambda_j |u_j\rangle\langle u_j|$, where $\lambda_j\geq 0$ and $u_1,\ldots,u_m$ form an orthonormal basis of $\operatorname{ran}P$. For $c\in\mathbb C^m$ with $\|c\|_2=1$, define
\begin{equation} v_c:=\sum_{j=1}^m\sqrt{\lambda_j}\,c_j u_j. \label{eq:general-vc-definition} \end{equation}
For every normalized $\phi\in\mathcal H$,
\begin{equation} \langle\phi|\widehat A|\phi\rangle=\max_{\|c\|_2=1}|\langle v_c|\phi\rangle|^2. \label{eq:general-mixed-to-pure-identity} \end{equation}
Consequently,
\begin{equation} \opt_{\mathcal C}(\widehat A)=\max_{\|c\|_2=1}\max_{\phi\in\mathcal C}|\langle v_c|\phi\rangle|^2. \label{eq:general-core-opt-as-pure-targets} \end{equation}

\noindent Let $\mathcal N$ be a Euclidean $\zeta/16$-net of the unit sphere in $\mathbb C^m$. It may be chosen with $|\mathcal N|\leq\left(\frac{80}{\zeta}\right)^{2m}. $ Since the vectors $u_j$ are orthonormal, 
\begin{equation}
 \|v_c\|_2^2=\sum_{j=1}^m \lambda_j |c_j|^2 \leq \|\widehat A\|_\infty \|c\|_2^2,
\end{equation}
so the map $c\mapsto v_c$ has operator norm at most
$\sqrt{\|\widehat A\|_\infty}\leq \sqrt{2}$.
 Hence, for unit vectors $c,c'$,
\begin{equation} \|v_c-v_{c'}\|_2\leq\sqrt 2\|c-c'\|_2. \label{eq:general-vc-lipschitz} \end{equation}
For every normalized $\phi$,
\begin{align} \left||\langle v_c|\phi\rangle|^2-|\langle v_{c'}|\phi\rangle|^2\right|&\leq\left(\|v_c\|_2+\|v_{c'}\|_2\right)\|v_c-v_{c'}\|_2 \nonumber\\ &\leq4\|c-c'\|_2. \label{eq:general-squared-overlap-lipschitz} \end{align}

\noindent For every $c\in\mathcal N$, set $\phi_c:=\phi_0$ if $v_c=0$. Otherwise, let
\begin{equation} w_c:=\frac{v_c}{\|v_c\|_2}. \label{eq:general-normalized-vc} \end{equation}
Apply the proper pure-target optimizer to $w_c$ with $\eta=\zeta/4$. The resulting $\phi_c\in\mathcal C$ satisfies
\begin{equation}
    |\langle w_c|\phi_c\rangle|^2\geq\max_{\phi\in\mathcal C}|\langle w_c|\phi\rangle|^2-\frac{\zeta}{4}
\end{equation}
and since $\|v_c\|_2^2\le 2$,
\begin{equation} |\langle v_c|\phi_c\rangle|^2\geq\max_{\phi\in\mathcal C}|\langle v_c|\phi\rangle|^2-\|v_c\|_2^2\,\frac{\zeta}{4}\ge \max_{\phi\in\mathcal C}|\langle v_c|\phi\rangle|^2-\frac{\zeta}{2}. \label{eq:general-rescaled-pure-target-guarantee} \end{equation}
The same rescaled inequality holds trivially when $v_c=0$.
Return the candidate $\widehat\phi$ with largest exactly computed $\widehat A$-score. Let $(c^\star,\phi^\star)$ attain the maximum in \eqref{eq:general-core-opt-as-pure-targets}, and choose $c\in\mathcal N$ with $\|c-c^\star\|_2\leq\frac{\zeta}{16}$. Then
\begin{align} \langle\phi_c|\widehat A|\phi_c\rangle&\geq|\langle v_c|\phi_c\rangle|^2\nonumber\\ &\geq|\langle v_c|\phi^\star\rangle|^2-\frac{\zeta}{2} \nonumber\\ &\geq|\langle v_{c^\star}|\phi^\star\rangle|^2-\frac{\zeta}{4}-\frac{\zeta}{2} \nonumber\\ &\geq\opt_{\mathcal C}(\widehat A)-\frac{3\zeta}{4}. \label{eq:general-net-candidate-score} \end{align}
Therefore
\begin{equation} \langle\widehat\phi|\widehat A|\widehat\phi\rangle\geq\opt_{\mathcal C}(\widehat A)-\zeta. \label{eq:general-estimated-core-optimization} \end{equation}
Using \eqref{eq:general-core-operator-error},
\begin{align} \langle\widehat\phi|A|\widehat\phi\rangle&\geq\langle\widehat\phi|\widehat A|\widehat\phi\rangle-\xi \nonumber\\&\geq\opt_{\mathcal C}(\widehat A)-\zeta-\xi \nonumber\\&\geq\opt_{\mathcal C}(A)-2\xi-\zeta. \label{eq:general-transfer-back-to-core} \end{align}
\end{proof}

\begin{lemma}
\label{lem:general-spectral-cutoff}
Let $A=P\rho P$ and let $\widehat A\succeq0$ satisfy
$\|\widehat A-A\|_\infty\leq\xi$, where $0<\xi<\tau<1$.
From the spectral decomposition
$\widehat A=\sum_j\lambda_j|u_j\rangle\langle u_j|$, define
\begin{equation}
\widehat A_\tau:=\sum_{\lambda_j>\tau}
\lambda_j|u_j\rangle\langle u_j|,\qquad\ell:=\operatorname{rank}(\widehat A_\tau).
\label{eq:general-truncated-core}
\end{equation}
Then
\begin{equation}
\|\widehat A_\tau-A\|_\infty\leq\xi+\tau,\qquad
\ell\leq\min\left\{m,\frac{1}{\tau-\xi}\right\}.
\label{eq:general-truncated-core-bounds}
\end{equation}
For every $\zeta\in(0,1)$, at most
$(80/\zeta)^{2\ell}$ calls to the proper pure-target optimizer
suffice to return $\widehat\phi\in\mathcal C$ satisfying
\begin{equation}
\langle\widehat\phi|A|\widehat\phi\rangle\geq\opt_{\mathcal C}(A)-2(\xi+\tau)-\zeta.
\label{eq:general-truncated-core-optimization}
\end{equation}
\end{lemma}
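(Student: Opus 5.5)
The lemma has three claims: an operator-norm bound, a rank bound, and an optimization guarantee. I will prove them in that order, with the third obtained by running the proof of Theorem~\ref{thm:general-low-rank-properization} verbatim with $\widehat A_\tau$ in place of $\widehat A$.

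\textbf{Operator-norm bound.} Use the triangle inequality
\[
\|\widehat A_\tau-A\|_\infty\le\|\widehat A_\tau-\widehat A\|_\infty+\|\widehat A-A\|_\infty .
\]
The difference $\widehat A-\widehat A_\tau=\sum_{\lambda_j\le\tau}\lambda_j|u_j\rangle\langle u_j|$ is positive semidefinite, and its eigenvalues lie in $[0,\tau]$. Its operator norm is therefore at most $\tau$, which together with $\|\widehat A-A\|_\infty\le\xi$ gives $\xi+\tau$.

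\textbf{Rank bound.} The bound $\ell\le m$ is immediate, since $\widehat A=S\widehat GS^\dagger$ is supported on $\operatorname{ran}P$, which has dimension $m$. For the other bound, use Weyl's inequality: $\lambda_j(A)\ge\lambda_j(\widehat A)-\xi$ for eigenvalues sorted in decreasing order. Each of the $\ell$ retained eigenvalues exceeds $\tau$, so $A$ has at least $\ell$ eigenvalues exceeding $\tau-\xi>0$. Since $A\succeq0$ and $\tr A=\tr(P\rho P)\le\tr\rho=1$, we get $\ell(\tau-\xi)<1$.

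\textbf{Optimization guarantee.} If $\ell=0$, return $\phi_0$. In that case $\widehat A_\tau=0$, so $\opt_{\mathcal C}(A)\le\xi+\tau$ and the bound holds trivially. Otherwise, the mixed-to-pure identity \eqref{eq:general-mixed-to-pure-identity} holds with coefficients $c\in\mathbb C^\ell$ ranging only over the retained eigenvectors. A $\zeta/16$-net of the unit sphere in $\mathbb C^\ell$ has size at most $(80/\zeta)^{2\ell}$. The Lipschitz estimate requires $\|\widehat A_\tau\|_\infty\le2$, and this holds because $\|\widehat A_\tau\|_\infty\le\|\widehat A\|_\infty\le\|A\|_\infty+\xi\le2$. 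With these inputs, the same argument outputs $\widehat\phi\in\mathcal C$ with $\langle\widehat\phi|\widehat A_\tau|\widehat\phi\rangle\ge\opt_{\mathcal C}(\widehat A_\tau)-\zeta$.

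To transfer back to $A$, apply the operator-norm bound twice. For any normalized $\phi$, $|\langle\phi|\widehat A_\tau|\phi\rangle-\langle\phi|A|\phi\rangle|\le\xi+\tau$. Using this once at $\widehat\phi$ and once at the maximizer of $\opt_{\mathcal C}(A)$ costs $2(\xi+\tau)$, which gives \eqref{eq:general-truncated-core-optimization}.

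\textbf{Main obstacle.} The only point needing care is the rank bound. It has to be argued through eigenvalues of $A$, not $\widehat A$, because only $A$ carries the trace bound. Weyl's inequality provides the needed comparison between the two spectra.
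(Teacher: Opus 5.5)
Your proof is correct and follows the paper's argument: the same triangle-inequality bound, the same reuse of the coefficient-net argument on the $\ell$-dimensional retained eigenspace using $\|\widehat A_\tau\|_\infty\le 1+\xi\le 2$, and the same two-sided transfer costing $2(\xi+\tau)$. The only variation is in the rank bound. The paper compresses to the retained eigenspace, $1\ge\tr(\Pi_\tau A)\ge\tr(\Pi_\tau\widehat A)-\ell\xi>\ell(\tau-\xi)$, whereas you use Weyl's inequality; a small bonus of your route is that it also gives $\ell\le\operatorname{rank}A\le m$ directly from the lemma's hypotheses, without invoking the contextual fact $\widehat A=S\widehat G S^\dagger$.
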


\begin{proof}
The first bound is straightforward since discarding eigenvalues at most $\tau$ changes $\widehat A$ by at most $\tau$ in operator norm.

Let $\Pi_\tau$ project onto the retained eigenspace.
If $\ell>0$, then
\begin{equation*}
1\geq\tr(A) \geq\tr(\Pi_\tau A) \geq\tr(\Pi_\tau\widehat A)-\ell\xi>\ell(\tau-\xi).
\end{equation*}
This proves the rank bound. Since $\|\widehat A_\tau\|_\infty\leq1+\xi\leq2$, the coefficient-net argument in the proof of Theorem~\ref{thm:general-low-rank-properization} applies in the $\ell$-dimensional retained eigenspace. It returns $\widehat\phi\in\mathcal C$ with
\begin{equation*}
\langle\widehat\phi|\widehat A_\tau|\widehat\phi\rangle\geq\opt_{\mathcal C}(\widehat A_\tau)-\zeta
\end{equation*}
using at most $(80/\zeta)^{2\ell}$ pure-target calls. When $\ell=0$, return $\phi_0$. Then,
\begin{align}
    \langle\widehat\phi| A|\widehat\phi\rangle&\geq \langle\widehat\phi|\widehat A_\tau|\widehat\phi\rangle-(\xi+\tau)\\
    &\ge \opt_{\mathcal{C}}(\widehat A_\tau)-\zeta-(\xi+\tau)\\
    &\ge \opt_{\mathcal{C}}(A)-\zeta -2(\xi+\tau)
\end{align}
\end{proof}

\begin{theorem}[Improper to proper theorem]
\label{thm:general-improper-to-proper}
Let $(\mathcal C,(\mathcal V_R)_{R\geq1})$ be a properization pair. For every density operator $\rho$ and every $\varepsilon,\delta\in(0,1/4)$, there is a randomized algorithm using $O(\varepsilon^{-2})$ improper-learning calls, all with the fixed comparator parameter $R'$, and at most
$\left(\frac{C}{\varepsilon}\right)^{O(1/\varepsilon)}$ proper pure-target calls, which outputs a normalized state $\widehat\phi\in\mathcal C$ such that, with probability at least $1-\delta$,
\begin{equation}
\langle\widehat\phi|\rho|\widehat\phi\rangle\geq\opt_{\mathcal C}(\rho)-\varepsilon.
\label{eq:general-proper-learning-guarantee}
\end{equation}
\end{theorem}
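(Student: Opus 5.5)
The proof assembles Theorem~\ref{thm:general-finite-relevant-subspace} and Lemma~\ref{lem:general-spectral-cutoff}. I will follow Algorithm~\ref{alg:proper-agnostic-learning} with the parameters
\begin{equation*}
\theta=(\varepsilon/16)^2,\qquad \xi=\varepsilon/32,\qquad \tau=\varepsilon/8,\qquad \zeta=\varepsilon/4.
\end{equation*}

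\textbf{Step 1: relevant subspace.} Run the relevant-subspace procedure of Theorem~\ref{thm:general-finite-relevant-subspace} with failure budget $\delta/2$. It performs at most $m<2/\theta=O(\varepsilon^{-2})$ rounds. Each round makes one improper-learning call with comparator parameter $R'$, plus two scalar estimates ($\widehat\mu_j$ and $\widehat a_j$). I split the budget so that each of these $O(\varepsilon^{-2})$ subroutines fails with probability at most $\delta/(6\lceil 2/\theta\rceil+6)$, and apply a union bound.

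The postselection overhead needs separate control. Each residual copy costs a geometric number of input copies with mean at most $16/\theta$. I will bound the total by Markov or a Chernoff bound, folding this failure into the same budget; if the copy count is exceeded, the algorithm declares failure and returns $\phi_0$.

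On the success event, Theorem~\ref{thm:general-finite-relevant-subspace} gives, for every $\phi\in\mathcal C$,
\begin{equation*}
|\langle\phi|\rho|\phi\rangle-\langle\phi|A|\phi\rangle|\le 2\sqrt\theta=\varepsilon/8,
\end{equation*}
and in particular $|\opt_{\mathcal C}(\rho)-\opt_{\mathcal C}(A)|\le\varepsilon/8$.

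\textbf{Step 2: support matrix.} Estimate $G_0$ with $\|G_0-G\|_\infty\le\xi/2$ with failure probability $\delta/2$. This takes $m^2$ entry estimates at accuracy $O(\xi/m)$; by Hoeffding, each needs $O((m/\xi)^2\log(m^2/\delta))$ copies, a union bound covers all of them, and the bound on the matrix follows from its Frobenius norm. Setting $\widehat G=(G_0)_+$ gives $\|\widehat A-A\|_\infty\le\xi$, as shown before Theorem~\ref{thm:general-low-rank-properization}.

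\textbf{Step 3: spectral cutoff and net search.} Apply Lemma~\ref{lem:general-spectral-cutoff} with $\xi<\tau$. It gives
\begin{equation*}
\ell\le\frac{1}{\tau-\xi}=\frac{32}{3\varepsilon}.
\end{equation*}
The number of pure-target calls is therefore at most $(320/\varepsilon)^{64/(3\varepsilon)}=(C/\varepsilon)^{O(1/\varepsilon)}$, and the output $\widehat\phi\in\mathcal C$ satisfies
\begin{equation*}
\langle\widehat\phi|A|\widehat\phi\rangle\ge\opt_{\mathcal C}(A)-2(\xi+\tau)-\zeta.
\end{equation*}
The $F_c$ values and the net are computed classically from $\widehat G$ and overlaps. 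These steps add no failure probability, since the optimizer is deterministic given its inputs.

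\textbf{Step 4: combine.} Chaining the inequalities:
\begin{align*}
\langle\widehat\phi|\rho|\widehat\phi\rangle
&\ge\langle\widehat\phi|A|\widehat\phi\rangle-2\sqrt\theta\\
&\ge\opt_{\mathcal C}(A)-2\sqrt\theta-2(\xi+\tau)-\zeta\\
&\ge\opt_{\mathcal C}(\rho)-4\sqrt\theta-2(\xi+\tau)-\zeta.
\end{align*}
The total loss is $4\sqrt\theta+2(\xi+\tau)+\zeta=\varepsilon/4+5\varepsilon/16+\varepsilon/4=13\varepsilon/16<\varepsilon$. The overall failure probability is at most $\delta/2+\delta/2=\delta$, and on failure the algorithm still returns a valid element of $\mathcal C$.

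The main obstacle is the probabilistic accounting in Step 1. The number of rounds is adaptive, and postselection makes the copy count random, so I must ensure the union bound covers every round that could occur. Two choices handle this. First, fix in advance the cap of $\lceil 2/\theta\rceil$ rounds, which is valid because termination is guaranteed on the success event. Second, use high-probability bounds on the geometric copy counts rather than expectations. Everything else is bookkeeping of constants already arranged in the text.
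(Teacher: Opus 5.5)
Your proposal is correct and follows the paper's proof essentially step for step. It uses the same parameters and the same chain through Theorem~\ref{thm:general-finite-relevant-subspace}, the estimated support matrix, and Lemma~\ref{lem:general-spectral-cutoff}, giving total error $13\varepsilon/16$ with $\ell\le 32/(3\varepsilon)$, and the same capped-iteration, union-bound accounting with fallback to $\phi_0$. As in the paper, cap each postselection step at $O(\theta^{-1}(N+\log(1/\beta)))$ attempts via a Chernoff-type bound rather than using Markov, so the dependence on $1/\delta$ stays logarithmic. This cap is valid because the success probability exceeds $\theta/16$ once the $\widehat\mu_j$ estimate succeeds.
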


\begin{proof}
Choose
\begin{equation}
\theta:=\left(\frac{\varepsilon}{16}\right)^2,\qquad\xi:=\frac{\varepsilon}{32},
\qquad\tau:=\frac{\varepsilon}{8},\qquad\zeta:=\frac{\varepsilon}{4}.
\label{eq:general-main-parameters}
\end{equation}
Theorem~\ref{thm:general-finite-relevant-subspace} constructs
$P$ and $A=P\rho P$ such that
\begin{equation*}
\sup_{\phi\in\mathcal C}\left|\langle\phi|\rho|\phi\rangle-\langle\phi|A|\phi\rangle\right|\leq2\sqrt\theta.
\end{equation*}
Estimate the support matrix to obtain $\widehat A\succeq0$
with $\|\widehat A-A\|_\infty\leq\xi$, and apply Lemma~\ref{lem:general-spectral-cutoff}. The returned state satisfies
\begin{align*}
\langle\widehat\phi|\rho|\widehat\phi\rangle&\geq\opt_{\mathcal C}(\rho)-4\sqrt\theta-2(\xi+\tau)-\zeta\\&=\opt_{\mathcal C}(\rho)-\frac{13\varepsilon}{16}.
\end{align*}
The retained rank obeys
\begin{equation*}
\ell\leq\frac{1}{\tau-\xi}=\frac{32}{3\varepsilon}.
\end{equation*}
Since at most $2/\theta$ directions can be added, cap the procedure at $\lceil2/\theta\rceil+1$ iterations and distribute the failure probability over the estimates, improper-learning calls, and postselection steps. A postselection step requesting $N$ residual copies can be capped at $O(\theta^{-1}(N+\log(1/\beta)))$ attempts with failure probability at most $\beta$ because its success probability is greater than $\theta/16$ whenever the preceding estimate succeeds. On a declared failure, return $\phi_0$. A union bound gives overall success probability at least $1-\delta$.
\end{proof}

\section{Matrix Product States}\label{sec:matrix-product-states}
Fix local Hilbert spaces $\mathcal H_i\cong \mathbb C^d$ and denote $\mathcal H_{a:b}:=\bigotimes_{i=a}^b\mathcal H_i$.  As a special case, $\mathcal H:=\mathcal H_{1:n}$. A vector $u\in\mathcal H$ is an open-boundary matrix product state of bond dimension at most $D$ if there are integers
\begin{equation} r_0=r_n=1, \qquad r_i\leq D\quad(1\leq i\leq n-1), \label{eq:mps-bond-dimensions} \end{equation}
and matrices
\begin{equation} A_i^s:\mathbb C^{r_{i-1}}\to \mathbb C^{r_i}, \qquad 1\leq s\leq d, \label{eq:mps-local-matrices} \end{equation}
such that
\begin{equation} u_{s_1,\ldots,s_n}=A_n^{s_n}\cdots A_1^{s_1}. \label{eq:mps-amplitudes} \end{equation}
Let $\mps_D$ denote the normalized open-boundary MPSs of bond dimension at most $D$. For a positive semidefinite operator $X\succeq 0$, write
\begin{equation} \opt_D(X):=\max_{\psi\in\mps_D}\langle\psi|X|\psi\rangle. \label{eq:mps-optimum} \end{equation}

\begin{lemma}
\label{lem:mps-cut-rank-characterization}
A vector $u\in\mathcal H$ has an open-boundary MPS representation of bond dimension at most $D$ if and only if
\begin{equation} \operatorname{rank}_{\mathcal H_{1:i}\,|\,\mathcal H_{i+1:n}}(u)\leq D \qquad \text{for every }1\leq i\leq n-1. \label{eq:mps-cut-rank-condition} \end{equation}
\end{lemma}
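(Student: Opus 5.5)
The plan is to prove the two directions separately. The forward direction is a direct factorization, and the converse is the standard sequential SVD (TT-SVD) construction.

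For the forward direction, suppose $u$ is given by \eqref{eq:mps-amplitudes} with bond dimensions $r_i\leq D$. Fix a cut $1\leq i\leq n-1$ and define
\begin{equation*}
L_{s_1\ldots s_i}:=A_i^{s_i}\cdots A_1^{s_1}\in\mathbb C^{r_i},\qquad R_{s_{i+1}\ldots s_n}:=A_n^{s_n}\cdots A_{i+1}^{s_{i+1}}\in(\mathbb C^{r_i})^*.
\end{equation*}
Then $u_{s_1\ldots s_n}=R_{s_{i+1}\ldots s_n}L_{s_1\ldots s_i}$. Viewed as a matrix $\mathcal H_{i+1:n}^*\to\mathcal H_{1:i}$, $u$ therefore factors through $\mathbb C^{r_i}$, so its rank is at most $r_i\leq D$.

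For the converse, assume every cut rank is at most $D$ and build the local tensors from left to right by induction. Set $r_0=1$ and $T_1:=u$, regarded as an element of $\mathbb C^{r_0}\otimes\mathcal H_{1:n}$.

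At step $i\leq n-1$, suppose we have an isometry $W_{i-1}:\mathbb C^{r_{i-1}}\to\mathcal H_{1:i-1}$, formed by the already chosen $A_1,\ldots,A_{i-1}$ (transposed as needed), and a remainder $T_i\in\mathbb C^{r_{i-1}}\otimes\mathcal H_{i:n}$ with $u=(W_{i-1}\otimes I)T_i$.
\begin{itemize}
\item Reshape $T_i$ as a matrix $(\mathbb C^{r_{i-1}}\otimes\mathcal H_i)\leftarrow\mathcal H_{i+1:n}^*$ and take its SVD. Let $r_i$ be its rank.
\item Let the left singular vectors define an isometry $U_i:\mathbb C^{r_i}\to\mathbb C^{r_{i-1}}\otimes\mathcal H_i$. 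Read off $A_i^s$ from $U_i$ by fixing the physical index $s$.
\item Set $T_{i+1}:=U_i^\dagger T_i\in\mathbb C^{r_i}\otimes\mathcal H_{i+1:n}$. Because $U_iU_i^\dagger$ projects onto the range of $T_i$, we have $T_i=U_iT_{i+1}$, and the invariant persists with $W_i:=(W_{i-1}\otimes I_{\mathcal H_i})U_i$, which is again an isometry.
\end{itemize}
At the last site, define $A_n^{s}:=(T_n)_{\cdot,s}$, viewed as a map $\mathbb C^{r_{n-1}}\to\mathbb C$. This gives the product form \eqref{eq:mps-amplitudes} with $r_n=1$.

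The key point is to bound $r_i$. Since $W_{i-1}$ is an isometry, $u$ viewed across cut $i$ equals $(W_{i-1}\otimes I_{\mathcal H_i})$ composed with $T_i$ reshaped. Left multiplication by an injective map preserves rank, so $r_i=\operatorname{rank}T_i=\operatorname{rank}_{\mathcal H_{1:i}|\mathcal H_{i+1:n}}(u)\leq D$.

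The main point requiring care is exactly this rank-preservation step: verifying that $W_{i-1}\otimes I$ is injective, and that the reshaping of $T_i$ corresponds to the cut-$i$ matricization of $u$. If $u=0$, the construction is handled trivially by taking all $r_i=1$ and $A_1=0$. The case $n=1$ is vacuous.
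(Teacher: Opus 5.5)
Your proposal is correct: the forward direction factors the cut-$i$ matricization through $\mathbb C^{r_i}$, and the converse is the sequential SVD (TT-SVD) construction. Its rank-preservation step via the injective isometry $W_{i-1}\otimes I_{\mathcal H_i}$ is exactly right. The paper does not prove this lemma and simply records it as standard, and your argument is precisely the standard proof being invoked.
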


\begin{proof}
Standard.
\end{proof}

By Lemma~\ref{lem:mps-cut-rank-characterization}, $\mps_D$ is the intersection of the unit sphere with finitely many closed rank-constrained sets, and is therefore compact.

\subsection{Improper learning and explicit operations}

\begin{lemma}
\label{lem:mps-linear-combinations}
Let $u_a$ be an open-boundary MPS of bond dimension at most $R_a$ for $a=1,\ldots,q$. Then, for arbitrary coefficients $c_a\in\mathbb C$, the vector $u:=\sum_{a=1}^q c_a u_a$ has an explicit open-boundary MPS representation of bond dimension at most $\sum_{a=1}^q R_a$.
\end{lemma}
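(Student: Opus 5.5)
The proof is the standard direct-sum (block-diagonal) construction for sums of MPSs. Write each $u_a$ in the form \eqref{eq:mps-amplitudes} with local matrices $A^{(a),s}_i:\mathbb C^{r^{(a)}_{i-1}}\to\mathbb C^{r^{(a)}_i}$, where $r^{(a)}_0=r^{(a)}_n=1$ and $r^{(a)}_i\leq R_a$. We build a single MPS for $u$ whose bond spaces are the direct sums of the summands' bond spaces.

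\emph{Bulk sites.} For $2\leq i\leq n-1$, set
\begin{equation*}
B_i^s:=\bigoplus_{a=1}^q A^{(a),s}_i:\ \bigoplus_a\mathbb C^{r^{(a)}_{i-1}}\to\bigoplus_a\mathbb C^{r^{(a)}_i}.
\end{equation*}
This is block diagonal, so the bond dimension is $\sum_a r^{(a)}_i\leq\sum_a R_a$.

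\emph{Boundary sites.} The first site is where the coefficients enter. Let $B_1^s:\mathbb C\to\bigoplus_a\mathbb C^{r^{(a)}_1}$ be the column obtained by stacking $c_aA^{(a),s}_1$. Let $B_n^s:\bigoplus_a\mathbb C^{r^{(a)}_{n-1}}\to\mathbb C$ be the row obtained by concatenating the $A^{(a),s}_n$. Then $r_0=r_n=1$, and every intermediate bond dimension is at most $\sum_aR_a$.

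\emph{Correctness.} Multiply the product $B_n^{s_n}\cdots B_1^{s_1}$ block by block. The bulk matrices preserve the $a$-block decomposition, so the product splits as
\begin{equation*}
B_n^{s_n}\cdots B_1^{s_1}=\sum_a A^{(a),s_n}_n\cdots A^{(a),s_2}_2\,c_aA^{(a),s_1}_1=\sum_a c_a(u_a)_{s_1,\ldots,s_n}.
\end{equation*}
The representation is explicit: its tensors are written down directly from the input tensors, using $O(n d(\sum_aR_a)^2)$ entries.

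\emph{Edge cases.} If $n=1$ there are no bonds, and $u$ is just a vector in $\mathbb C^d$. If $n=2$, the construction uses only the two boundary tensors. In both cases the formulas above still apply.

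A second proof follows from Lemma~\ref{lem:mps-cut-rank-characterization}. At every cut, rank is subadditive: $\operatorname{rank}_{1:i|i+1:n}(u)\leq\sum_a\operatorname{rank}(u_a)\leq\sum_aR_a$, which gives existence. The explicit construction above is still needed for the algorithmic use of the lemma.

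There is no real obstacle here. The only point needing care is the boundary bookkeeping: putting the coefficients $c_a$ at a single site, and keeping the first and last bond dimensions equal to one rather than $q$.
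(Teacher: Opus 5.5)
Your block-diagonal direct-sum construction, with the coefficients $c_a$ absorbed at the first site, is correct and is the standard argument the paper invokes without detail ("Standard"); your rank-subadditivity alternative via Lemma~\ref{lem:mps-cut-rank-characterization} also correctly establishes existence. One small imprecision: for $n=1$ the single site is both first and last, so your boundary formulas do not apply literally (the stacked column would have output dimension $q$ rather than $1$), and there you should simply take $B_1^s:=\sum_a c_a A_1^{(a),s}$.
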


\begin{proof}
Standard.
\end{proof}

\begin{theorem}[Agnostic improper learning of MPSs]
\label{thm:mps-improper-learner}
Given copies of an arbitrary normalized $n$-qudit state $\omega$, a comparator bond $R$, an error $\gamma\in(0,1)$, and a failure probability $\beta\in(0,1)$, there is an algorithm which outputs a normalized explicit open-boundary MPS $b$ of bond dimension at most $d n^2 p(R,1/\gamma) $ for a polynomial $p$ such that, with probability at least $1-\beta$,
\begin{equation} \langle b|\omega|b\rangle\geq\opt_R(\omega)-\gamma. \label{eq:mps-improper-guarantee} \end{equation}
The runtime and copy complexity are bounded by $\poly\left(n,d,R,1/\gamma,\log(1/\beta)\right).$
\end{theorem}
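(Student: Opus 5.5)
This theorem is essentially the improper MPS learner of Bakshi et al.~\cite[Theorem~B.2]{BakshiEtAl2025}, so the plan is to invoke that result and check that its output format and parameter dependences match the statement. Their learner proceeds site by site from left to right. At step $i$ it maintains a low-dimensional subspace $W_i\subseteq\mathcal H_{1:i}$ of dimension $\poly(n,R,1/\gamma)$. The key invariant is that some near-optimal bond-$R$ comparator has its prefix, across cut $i$, approximately supported in $W_i\otimes\mathcal H_{i+1:n}$, with error budget $\gamma/n$ per site. To pass from $W_i$ to $W_{i+1}$, one estimates the reduced operator of $\omega$ on $(W_i\otimes\mathcal H_{i+1})$ together with the complement. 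One then keeps the span of its top eigenvectors above a threshold $\poly(\gamma/(nR))$, as in the spectral-cutoff argument of Lemma~\ref{lem:general-spectral-cutoff}. The resulting dimension of $W_{i+1}$ is at most $d\cdot\dim W_i$ before truncation and $\poly(n,R,1/\gamma)$ after truncation.

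The final output is an optimization over $W_{n}$, or equivalently a top eigenvector of the compressed operator. It is expressed as an MPS whose bond space at cut $i$ is $W_i$, giving bond dimension $\dim W_i$. I would carry through the bookkeeping so that the bond is bounded by $d n^2 p(R,1/\gamma)$. The factor $d$ comes from the local extension $W_i\otimes\mathcal H_{i+1}$ before truncation. The $n^2$ comes from the per-site error $\gamma/n$ entering squared into the eigenvalue threshold, since the retained rank is at most the inverse threshold by a trace argument.

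Copy complexity and runtime follow because each step only estimates an operator on a space of dimension $\poly(n,d,R,1/\gamma)$, which needs $\poly$ many copies. The failure probability is split as $\beta/n$ per step, costing a $\log(n/\beta)$ factor. Classical post-processing consists of SVDs of polynomial size.

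The main obstacle is verifying the error accumulation: the per-cut truncations must compose additively in score, giving total loss $\le\gamma$, rather than multiplicatively. That is where the analysis of Bakshi et al.\ (and its improvement in~\cite{LinChiaHung2025}) is used. I would cite it rather than reprove it, only checking that their explicit bond bound specializes to the stated form and that the output is normalized.
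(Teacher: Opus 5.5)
Your proposal matches the paper: its proof of this theorem consists solely of the citation to \cite[Theorem~B.2]{BakshiEtAl2025}, and invoking that result while checking that the output is a normalized explicit MPS with the stated bond and complexity bounds is all that is needed. Your sketch of the site-by-site subspace truncation is extra. Its heuristic account of where the factors $d$ and $n^2$ come from is plausible, but it plays no role beyond the citation.
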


\begin{proof}
See \cite[Theorem~B.2]{BakshiEtAl2025}.
\end{proof}
\begin{lemma}
\label{lem:mps-explicit-operations}
The explicit operations in item~5 of Definition~\ref{def:properization-data} can be implemented for MPSs with overhead polynomial in the system size, the local dimension, the number of input vectors, their bond dimensions, and the inverse accuracy.
\end{lemma}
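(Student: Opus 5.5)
The plan is to treat the five operations of item~5 of Definition~\ref{def:properization-data} one at a time. Each reduces to standard MPS primitives: contraction of transfer matrices, canonical forms obtained by sequential QR/SVD, and sequential preparation of MPS states by a staircase of isometries. Throughout, the explicit vectors are MPSs $u_1,\dots,u_q$ of bond dimensions $R_a$. By Lemma~\ref{lem:mps-linear-combinations}, any linear combination we form, including orthonormal directions written in terms of the original learner outputs, has bond dimension at most $\sum_a R_a$. So it suffices to show each operation costs $\poly(n,d,q,\max_a R_a,1/\text{accuracy})$.

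The classical operations come first. An overlap $\langle u|v\rangle$ is computed by the left-to-right transfer recursion $E_i=\sum_s (U_i^s)^\dagger E_{i-1} V_i^s$. This costs $O(n d R_u R_v(R_u+R_v))$ arithmetic operations, and norms are the special case $u=v$. Normalizing a nonzero vector means dividing one tensor by the computed norm. Gram–Schmidt on the $s_\ell$ is done on coefficient vectors: compute the Gram matrix of the original outputs exactly and then orthonormalize in $\mathbb C^q$. No MPS bond growth occurs beyond the bound from Lemma~\ref{lem:mps-linear-combinations}.

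For state preparation, bring the MPS $u$ into left-canonical form by QR sweeps, so that every tensor is an isometry $\mathbb C^{r_{i-1}}\to\mathbb C^{r_i}\otimes\mathbb C^d$. The standard sequential circuit then applies these isometries, embedded as unitaries on a bond register of $\lceil\log_2 R\rceil$ qubits together with site $i$, and decouples the final bond register. This prepares $|u\rangle/\|u\|$ exactly, or to any accuracy $\epsilon$ using $\poly(n,R,d,\log(1/\epsilon))$ gates via Solovay–Kitaev or exact unitary synthesis. Gates on $O(\log R+\log d)$ qubits compile with $\poly(R,d)$ overhead.

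The remaining two primitives are measuring $\{P,I-P\}$ for $P=\sum_{\ell\le j}|s_\ell\rangle\langle s_\ell|$, and estimating $\langle u|\rho|v\rangle$. The projective measurement is the step I expect to need the most care. The plan is to use a unitary $U_j$ mapping $|\ell\rangle\otimes|0^n\rangle\mapsto|s_\ell\rangle$, built as a controlled version of the preparation circuits followed by a disentangling of the index register. The disentangling uses orthonormality: a block-encoding of $\sum_\ell|\ell\rangle\langle s_\ell|$ is the isometry $S^\dagger$ restricted to $\operatorname{ran} P$. The fallback is a standard LCU/block-encoding construction of $P$ with exponent $\poly(j,R,n)$ gates, combined with amplitude amplification or phase estimation on the reflection $I-2P$. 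This gives a two-outcome measurement that implements $\{P,Q\}$ to diamond-norm error $\epsilon$ at cost $\poly(n,d,j,R,\log(1/\epsilon))$. That error suffices, since the outer analysis only needs approximate outcome probabilities and approximate postselected states; these errors enter additively and are absorbed into the failure budget.

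Finally, matrix elements are estimated by a Hadamard/swap-type test. Prepare $(|0\rangle|u\rangle+|1\rangle|v\rangle)/\sqrt2$ with controlled preparation circuits. Measuring the observable $\rho$ against it is replaced by the identity that $\mathrm{Re}\,\langle u|\rho|v\rangle$ and $\mathrm{Im}\,\langle u|\rho|v\rangle$ are linear combinations of $\langle w|\rho|w\rangle$ for $w\in\{u,v,u\pm v,u\pm iv\}$. Each $\langle w|\rho|w\rangle$, suitably rescaled by the classically computed $\|w\|^2$, is the acceptance probability of measuring $|\hat w\rangle\langle\hat w|$ on a copy of $\rho$. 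That measurement is implemented by applying the inverse preparation unitary and checking for $|0^n\rangle$. A Chernoff bound then gives additive accuracy $\epsilon$ with $O(\log(1/\beta)/\epsilon^2)$ copies, with all circuits of polynomial size.
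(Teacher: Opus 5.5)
Your treatment of norms, overlaps, normalization, orthonormalization in coefficient space, sequential isometric preparation, and the polarization-based estimate of $\langle u|\rho|v\rangle$ is essentially the paper's argument. For the last of these you use $u\pm v$ and $u\pm iv$ with classically known norms, measured by un-preparing and testing for all zeros.

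The one place you diverge is the measurement $\{P,I-P\}$. There your argument is not yet a construction.
\begin{itemize}
\item In your first route, the ``disentangling of the index register'' is asserted rather than built. Observing that the desired map is $S^\dagger$ restricted to $\operatorname{ran}P$ restates what must be implemented; it does not give a circuit for it.
\item In your fallback, ``phase estimation on the reflection $I-2P$'' presupposes a circuit for $I-2P$. LCU gives only a subnormalized block-encoding of $P/j$. Turning that into a reflection or projector needs a further step you do not specify, e.g.\ QSVT thresholding on the spectrum $\{0,1/j\}$, or simulating $e^{i\pi P}=I-2P$.
\item That step yields only an approximate measurement, which is why you need the error-absorption remark. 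The remark should also track the $1/\mu_j\le 16/\theta$ amplification of errors in the postselected state.
\end{itemize}

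The missing idea is a short identity that makes the measurement exact. Each reflection $R_a:=I-2|s_a\rangle\langle s_a|$ is implemented exactly as $U_a(I-2|0\rangle\langle 0|)U_a^\dagger$, where $U_a$ is the preparation unitary. Because the $s_a$ are orthonormal, the rank-one projectors are mutually orthogonal. Hence the $R_a$ commute and $\prod_a R_a=I-2P$ exactly. Now control this product; only the central reflections about $|0\rangle$ need the control, since $U_aU_a^\dagger=I$ when the control is off. Sandwiching the control between Hadamards gives outcome projectors $(I\pm(I-2P))/2$, i.e.\ exactly $\{I-P,P\}$, with the correct post-measurement states. The cost is $O(m)$ preparation circuits.

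With this identity, your ``phase estimation on $I-2P$'' (one-bit phase estimation of a $\pm1$-valued unitary) becomes exactly the paper's proof, and no approximation argument is needed. Without it, the only nontrivial item of the lemma remains a sketch, although one that could be completed with standard block-encoding tools.
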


\begin{proof}
The product state $|0\rangle^{\otimes n}$ supplies a fixed explicit element of $\mps_D$. Norms and overlaps are computed
by transfer matrix contractions, and
Lemma~\ref{lem:mps-linear-combinations} provides explicit linear combinations. Normalizing a nonzero vector does not
increase its bond dimension.

A normalized open-boundary MPS of bond $B$ has a sequential isometric preparation using an ancilla of dimension at most
$B$ and $n$ local isometries. Consequently, a preparation unitary and its inverse implement the reflection $I-2|s\rangle\langle s|$ about an explicit normalized MPS $s$, with polynomial overhead.

For mutually orthonormal MPSs $s_1,\ldots,s_m$, let
$P:=\sum_{a=1}^m|s_a\rangle\langle s_a|$. Their rank-one projectors are mutually orthogonal, so
\begin{equation*}
\prod_{a=1}^m \left(I-2|s_a\rangle\langle s_a|\right) =I-2P.
\end{equation*}
A controlled implementation of this reflection, with a
Hadamard gate before and after the control,  
implements the projective measurement $\{P,I-P\}$. Its cost is polynomial in $m$ and in the costs of the individual MPS preparations. The diagonal matrix elements
$\langle s_a|\rho|s_a\rangle$ are projective probabilities.
The real and imaginary parts of the off-diagonal elements
are obtained by polarization, using the normalized states $\frac{s_a\pm s_b}{\sqrt2}$, $\frac{s_a\pm i s_b}{\sqrt2}$. These states have bond dimension at most the sum of the two input bonds. For general explicit vectors, the same argument uses the normalized nonzero combinations and their known
norms. Repeated measurements estimate the required
probabilities with polynomial copy complexity and logarithmic dependence on the inverse failure probability.
\end{proof}

\subsection{Comparator dual compression}

We first define the norm in which the target will be compressed. Let $\mathcal K_0,\ldots,\mathcal K_m$ be finite-dimensional Hilbert spaces and write
\begin{equation} \mathcal K_{a:b}:=\bigotimes_{j=a}^b\mathcal K_j. \label{eq:mps-general-chain-notation} \end{equation}
For $T\in\mathcal K_{0:m}$, define
\begin{equation} \|T\|_{\mathrm{MPS}(D)}:=\sup_Z |\langle Z,T\rangle|, \label{eq:mps-comparator-dual-norm} \end{equation}
where the supremum is over all $Z\in\mathcal K_{0:m}$ satisfying
\begin{equation} \|Z\|_2\leq 1,\qquad\operatorname{rank}_{\mathcal K_{0:j}\,|\,\mathcal K_{j+1:m}}(Z)\leq D \quad \text{for every }0\leq j\leq m-1. \label{eq:mps-comparator-dual-admissibility} \end{equation}
If $\mathcal K_0\cong\mathbb C$ and $\mathcal K_i=\mathcal H_i$ for $1\leq i\leq n$, then Lemma~\ref{lem:mps-cut-rank-characterization} gives
\begin{equation} \|T\|_{\mathrm{MPS}(D)}=\sup_{\psi\in\mps_D}|\langle\psi,T\rangle|. \label{eq:mps-dual-norm-physical-interpretation} \end{equation}

\begin{lemma}
\label{lem:mps-prefix-contraction}
Let a contiguous prefix of a chain have Hilbert space $\mathcal U$ and let the remaining suffix have Hilbert space $\mathcal R$. Replace the prefix by one boundary space $\mathcal B$. If $C:\mathcal B\longrightarrow\mathcal U $
is a contraction, then every $X\in\mathcal B\otimes\mathcal R$ satisfies
\begin{equation} \|(C\otimes I_{\mathcal R})X\|_{\mathrm{MPS}(D)}\leq\|X\|_{\mathrm{MPS}(D)}, \label{eq:mps-prefix-contraction-bound} \end{equation}
where the norm on the right is taken on the quotient chain with boundary space $\mathcal B$.
\end{lemma}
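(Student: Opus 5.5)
The plan is to use duality and push the comparator back through the contraction. By definition,
\begin{equation*}
\|(C\otimes I_{\mathcal R})X\|_{\mathrm{MPS}(D)}=\sup_Z|\langle Z,(C\otimes I_{\mathcal R})X\rangle|=\sup_Z|\langle (C^\dagger\otimes I_{\mathcal R})Z,X\rangle|,
\end{equation*}
where $Z$ ranges over admissible tensors on the original chain $\mathcal U\otimes\mathcal R$. Here $\mathcal U=\mathcal K_{0:p}$ for some prefix and $\mathcal R=\mathcal K_{p+1:m}$. So it suffices to show that $Z':=(C^\dagger\otimes I_{\mathcal R})Z$ is admissible for the quotient chain $\mathcal B\otimes\mathcal K_{p+1}\otimes\cdots\otimes\mathcal K_m$. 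Once that holds, each term is bounded by $\|X\|_{\mathrm{MPS}(D)}$ and taking the supremum gives \eqref{eq:mps-prefix-contraction-bound}.

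Admissibility has two parts, and I would check them in order.

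\textbf{Norm.} Since $C$ is a contraction, so is $C^\dagger$, and hence so is $C^\dagger\otimes I_{\mathcal R}$. Therefore $\|Z'\|_2\leq\|Z\|_2\leq1$.

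\textbf{Cut ranks.} The cuts of the quotient chain are indexed by $j=p,\ldots,m-1$. The cut after $\mathcal B\otimes\mathcal K_{p+1:j}$ (with $j=p$ meaning the cut right after $\mathcal B$) corresponds to the cut of $Z$ between $\mathcal K_{0:j}$ and $\mathcal K_{j+1:m}$. The operator $C^\dagger\otimes I$ acts only on the left factor $\mathcal U\otimes\mathcal K_{p+1:j}\to\mathcal B\otimes\mathcal K_{p+1:j}$ of this bipartition. Write $Z=\sum_{k=1}^{r}x_k\otimes y_k$ with $r\leq D$. Then
\begin{equation*}
Z'=\sum_{k=1}^{r}\big((C^\dagger\otimes I)x_k\big)\otimes y_k,
\end{equation*}
so the rank across this cut is still at most $D$.

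The only point needing care is bookkeeping: the quotient chain has $\mathcal B$ in the role of $\mathcal K_0$, so the admissibility conditions \eqref{eq:mps-comparator-dual-admissibility} are exactly the cuts listed above, and each maps to a genuine cut of the original chain because the prefix is contiguous. I do not expect a real obstacle beyond this indexing. The result does not depend on the prefix being nontrivial, and if $\mathcal R$ is trivial there are no cuts to check, only the norm.
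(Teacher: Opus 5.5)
Your proposal is correct and follows essentially the same route as the paper: pull an admissible comparator $Z$ back to $Z'=(C^\dagger\otimes I_{\mathcal R})Z$, note that $\|Z'\|_2\leq 1$ because $C^\dagger$ is a contraction, and note that every quotient cut is an original cut with the map acting on one side, so Schmidt ranks do not increase. Your explicit matching of the quotient cuts $j=p,\ldots,m-1$ with the original cuts spells out the bookkeeping that the paper leaves implicit.
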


\begin{proof}
Let $Z$ be an admissible comparator for the expanded chain and define
\begin{equation} Z':=(C^\dagger\otimes I_{\mathcal R})Z. \label{eq:mps-pulled-back-comparator} \end{equation}
Since $C$ is a contraction, $\|Z'\|_2\leq 1$. Across every cut of the quotient chain, $Z'$ is obtained by applying a linear map on one side of the corresponding cut of $Z$ so its Schmidt rank cannot increase. Thus $Z'$ is an admissible comparator for the quotient chain, and
\begin{equation} |\langle Z,(C\otimes I_{\mathcal R})X\rangle|=|\langle Z',X\rangle|\leq\|X\|_{\mathrm{MPS}(D)}. \label{eq:mps-prefix-contraction-proof} \end{equation}
Taking the supremum over $Z$ proves the claim.
\end{proof}

\begin{lemma}
\label{lem:mps-boundary-direct-sum}
Suppose a boundary space decomposes as
$\mathcal B=\mathcal B_1\oplus\mathcal B_2. $ For $X_a\in\mathcal B_a\otimes\mathcal R$, one has
\begin{equation} \|X_1\oplus X_2\|_{\mathrm{MPS}(D)}^2\leq\|X_1\|_{\mathrm{MPS}(D)}^2+\|X_2\|_{\mathrm{MPS}(D)}^2, \label{eq:mps-boundary-direct-sum-bound} \end{equation}
where the two norms on the right use the corresponding boundary spaces $\mathcal B_1$ and $\mathcal B_2$.
\end{lemma}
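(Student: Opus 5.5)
To prove Lemma~\ref{lem:mps-boundary-direct-sum}, I would fix an admissible comparator $Z$ for the chain with boundary space $\mathcal B=\mathcal B_1\oplus\mathcal B_2$ and split it along the direct sum. Let $\Pi_a:\mathcal B\to\mathcal B_a$ denote the orthogonal projections (viewed as coisometries onto $\mathcal B_a$). Set $Z_a:=(\Pi_a\otimes I_{\mathcal R})Z\in\mathcal B_a\otimes\mathcal R$. Since $X_1\oplus X_2$ has its $\mathcal B_a$ components equal to $X_a$, orthogonality of the two sectors gives
\begin{equation*}
\langle Z,X_1\oplus X_2\rangle=\langle Z_1,X_1\rangle+\langle Z_2,X_2\rangle .
\end{equation*}
The point is then to bound each term by the quotient-chain norm, scaled by $\|Z_a\|_2$.

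\textbf{Step 1: the $Z_a$ satisfy the rank constraints.} $Z_a$ is obtained from $Z$ by applying the linear map $\Pi_a$ on the boundary leg, which lies on the left side of every cut. This is exactly the argument used in Lemma~\ref{lem:mps-prefix-contraction}: acting by a local linear map on one side of a cut cannot increase Schmidt rank. So every cut of $Z_a$ on the quotient chain has rank at most $D$. Moreover $\|Z_1\|_2^2+\|Z_2\|_2^2=\|Z\|_2^2\le1$.

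\textbf{Step 2: homogeneity.} If $Z_a\ne0$, then $Z_a/\|Z_a\|_2$ is admissible for the chain with boundary $\mathcal B_a$, so
\begin{equation*}
|\langle Z_a,X_a\rangle|\le\|Z_a\|_2\|X_a\|_{\mathrm{MPS}(D)} .
\end{equation*}
This bound holds trivially when $Z_a=0$.

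\textbf{Step 3: Cauchy--Schwarz.} Combining the two steps,
\begin{equation*}
|\langle Z,X_1\oplus X_2\rangle|\le\|Z_1\|_2\|X_1\|_{\mathrm{MPS}(D)}+\|Z_2\|_2\|X_2\|_{\mathrm{MPS}(D)}\le\big(\|X_1\|_{\mathrm{MPS}(D)}^2+\|X_2\|_{\mathrm{MPS}(D)}^2\big)^{1/2}.
\end{equation*}
Taking the supremum over $Z$ and squaring gives the claim.

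The only step needing care is Step 1. One must check that the cut between $\mathcal B$ and the first suffix site, that is the $j=0$ cut, is also preserved. It is, since $\Pi_a$ acts entirely on the left factor there as well. No other obstacle is expected.
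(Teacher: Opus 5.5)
Your proof is correct and is essentially the paper's argument: split an admissible comparator along $\mathcal B_1\oplus\mathcal B_2$, observe that projecting onto a sector cannot increase any cut rank and that $\|Z_1\|_2^2+\|Z_2\|_2^2\leq1$, then conclude by Cauchy--Schwarz. Your Step 2, which rescales $Z_a$ to unit norm to obtain $|\langle Z_a,X_a\rangle|\leq\|Z_a\|_2\|X_a\|_{\mathrm{MPS}(D)}$, spells out a homogeneity step that the paper uses without comment.
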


\begin{proof}
Let $Z=Z_1\oplus Z_2$ be an admissible comparator for the direct-sum boundary. Projection onto either boundary sector cannot increase any Schmidt rank, and
\begin{equation} \|Z_1\|_2^2+\|Z_2\|_2^2=\|Z\|_2^2\leq1. \label{eq:mps-projected-comparator-norms} \end{equation}
Therefore, by Cauchy--Schwarz,
\begin{align} |\langle Z,X_1\oplus X_2\rangle|&\leq\|Z_1\|_2\|X_1\|_{\mathrm{MPS}(D)}+\|Z_2\|_2\|X_2\|_{\mathrm{MPS}(D)} \nonumber\\ &\leq\left(\|X_1\|_{\mathrm{MPS}(D)}^2+\|X_2\|_{\mathrm{MPS}(D)}^2\right)^{1/2}. \label{eq:mps-boundary-direct-sum-proof} \end{align}
Taking the supremum and squaring proves \eqref{eq:mps-boundary-direct-sum-bound}.
\end{proof}

\begin{lemma}
\label{lem:mps-flat-loss-bound}
Let $L\in\mathcal B\otimes\mathcal R$ be seen as a matrix from $\mathcal R$ to the boundary space $\mathcal B$. If every singular value of $L$ is at most $\sqrt\Delta$, then
\begin{equation} \|L\|_{\mathrm{MPS}(D)}^2\leq D\Delta. \label{eq:mps-flat-loss-bound} \end{equation}
\end{lemma}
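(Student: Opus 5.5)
We bound the pairing $|\langle Z,L\rangle|$ for an arbitrary admissible comparator $Z\in\mathcal B\otimes\mathcal R$ on the quotient chain, using only the single cut between the boundary space $\mathcal B$ and the remainder $\mathcal R$. Take that cut to be the one separating $\mathcal K_0=\mathcal B$ from $\mathcal K_{1:m}=\mathcal R$. Admissibility at this cut says that $Z$, viewed as a matrix $\mathcal R\to\mathcal B$ in the same way as $L$, has rank at most $D$. It also satisfies $\|Z\|_2\le1$, so its singular values $\tau_1\ge\cdots\ge\tau_D\ge0$ (padded with zeros) obey $\sum_k\tau_k^2\le1$. All other cut constraints on $Z$ are simply discarded, since dropping them only enlarges the set of comparators and so can only increase the supremum.

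The key step is von Neumann's trace inequality. With $\langle Z,L\rangle=\tr(Z^\dagger L)$ and $\sigma_1\ge\sigma_2\ge\cdots$ the singular values of $L$, we get
\begin{equation*}
|\langle Z,L\rangle|\le\sum_{k}\tau_k\sigma_k=\sum_{k=1}^{D}\tau_k\sigma_k,
\end{equation*}
where the sum truncates at $D$ because $\tau_k=0$ for $k>D$. Each $\sigma_k\le\sqrt\Delta$, so the right side is at most $\sqrt\Delta\sum_{k=1}^D\tau_k$. By Cauchy--Schwarz, $\sum_{k=1}^D\tau_k\le\sqrt D\bigl(\sum_k\tau_k^2\bigr)^{1/2}\le\sqrt D$. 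Hence $|\langle Z,L\rangle|\le\sqrt{D\Delta}$.

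Taking the supremum over admissible $Z$ and squaring gives \eqref{eq:mps-flat-loss-bound}. There is no real obstacle here. The only point needing care is the identification of the singular values: the rank bound for $Z$ and the singular-value bound for $L$ must refer to the same bipartition $\mathcal B\,|\,\mathcal R$, which is the cut at index $j=0$ in \eqref{eq:mps-comparator-dual-admissibility}. Once that is fixed, the argument is immediate.
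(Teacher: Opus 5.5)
Your proof is correct and matches the paper's argument: restrict to the boundary cut, where the comparator has rank at most $D$ and Frobenius norm at most one, then apply von Neumann's trace inequality and Cauchy--Schwarz. The only difference is the order of the last two steps. You bound each $\sigma_k\le\sqrt\Delta$ first and then apply Cauchy--Schwarz to $\sum_{k\le D}\tau_k$. The paper applies Cauchy--Schwarz first and then uses $\sum_{j\le D}s_j(L)^2\le D\Delta$.
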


\begin{proof}
Let $Z$ be the matrix associated with an admissible comparator. Across the boundary cut, $\operatorname{rank}(Z)\leq D$ and $\|Z\|_2\leq1$. Writing $s_j(X)$ for the $j$-th singular value of $X$, von Neumann's trace inequality and Cauchy--Schwarz give
\begin{align} |\langle Z,L\rangle|&=|\operatorname{tr}(Z^\dagger L)| \nonumber\\ &\leq\sum_{j=1}^D s_j(Z)s_j(L) \nonumber\\ &\leq\left(\sum_{j=1}^D s_j(Z)^2\right)^{1/2}\left(\sum_{j=1}^D s_j(L)^2\right)^{1/2} \nonumber\\ &\leq\sqrt{D\Delta}. \label{eq:mps-flat-loss-proof} \end{align}
Taking the supremum proves \eqref{eq:mps-flat-loss-bound}.
\end{proof}

\begin{theorem}[Comparator-dual compression for MPSs]
\label{thm:mps-comparator-dual-compression}
\label{compression-thm}
For every vector $u\in\mathcal H$, every $D\geq 1$, and every $K\geq 1$, one can construct an MPS $\widetilde u_K$ of bond dimension at most $K$ satisfying
\begin{equation} \|u-\widetilde u_K\|_{\mathrm{MPS}(D)}\leq\|u\|_2\sqrt{\frac{D}{K+1}}. \label{eq:mps-compression-guarantee} \end{equation}
Moreover, $ \|\widetilde u_K\|_2\leq\|u\|_2 $. If $u$ is an explicit MPS of bond dimension $B$, the construction uses
$\poly(n,d,B,K) $ arithmetic operations.
\end{theorem}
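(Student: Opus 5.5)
We run a left-to-right sweep on the generalized chain of Lemmas~\ref{lem:mps-prefix-contraction}--\ref{lem:mps-flat-loss-bound}. Set $\mathcal B_0:=\mathbb C$ and $T_1:=u\in\mathcal B_0\otimes\mathcal H_{1:n}$. At step $i$ we hold $T_i\in\mathcal B_{i-1}\otimes\mathcal H_{i:n}$ with $\dim\mathcal B_{i-1}\le K$. We view $T_i$ as a matrix from $\mathcal H_{i+1:n}$ to $\mathcal B_{i-1}\otimes\mathcal H_i$ and write its SVD $T_i=\sum_j\sigma_j x_j y_j^\dagger$.

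If the rank $p$ satisfies $p\le K$, we take $\mathcal B_i=\operatorname{span}\{x_j\}$, $C_i$ the isometric inclusion, $T_{i+1}:=\sum_j\sigma_j e_j\otimes \bar y_j$, and $L_i=0$, $\Delta_i=0$. Otherwise we set $\Delta_i:=\sigma_{K+1}^2$ and $b_j:=\sqrt{\sigma_j^2-\Delta_i}$ for $j\le K$. The next remainder is $T_{i+1}:=\sum_{j\le K}b_j\,e_j\otimes\bar y_j\in\mathbb C^K\otimes\mathcal H_{i+1:n}$. The loss is $L_i:=\sum_{j\le K}\sqrt{\Delta_i}\,e'_j\otimes\bar y_j+\sum_{j>K}\sigma_j e''_j\otimes\bar y_j$, living on a boundary space $\mathcal B'_i$.

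The map $C_i$ sends $e_j\mapsto (b_j/\sigma_j)\,x_j$ on the retained sector, $e'_j\mapsto(\sqrt{\Delta_i}/\sigma_j)\,x_j$, and $e''_j\mapsto x_j$. We want to check that $C_i$ is a contraction $\mathbb C^K\oplus\mathcal B'_i\to\mathcal B_{i-1}\otimes\mathcal H_i$. The columns pair up so that each $x_j$ with $j\le K$ receives weights whose squares sum to $(b_j^2+\Delta_i)/\sigma_j^2=1$, and distinct $x_j$ are orthogonal, so $C_i^\dagger C_i$ is a projection-like operator with norm at most $1$. By construction $T_i=C_i(T_{i+1}\oplus L_i)$. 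Every singular value of $L_i$ is at most $\sqrt{\Delta_i}$, since $\sigma_j\le\sigma_{K+1}$ for $j>K$. Finally,
\begin{equation*}
\|T_i\|_2^2-\|T_{i+1}\|_2^2=K\Delta_i+\sum_{j>K}\sigma_j^2\ge(K+1)\Delta_i .
\end{equation*}
The last remainder $T_n\in\mathcal B_{n-1}\otimes\mathcal H_n$ is kept exactly. The product of the maps $C_1\cdots C_{n-1}$ (restricted to retained sectors), followed by $T_n$, defines $\widetilde u_K$. This is an MPS with bond at most $K$, and $\|\widetilde u_K\|_2\le\|u\|_2$ because the $C_i$ are contractions and $\|T_n\|_2\le\|u\|_2$.

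For the error, write $E_i$ for the difference between $T_i$ and the vector obtained by running the remaining sweep from $T_i$ (a tensor in $\mathcal B_{i-1}\otimes\mathcal H_{i:n}$). Then $E_n=0$ and $E_i=C_i(E_{i+1}\oplus L_i)$, with the retained part of the future sweep commuting with $C_i$ on the left factor. Lemma~\ref{lem:mps-prefix-contraction} followed by Lemma~\ref{lem:mps-boundary-direct-sum} and Lemma~\ref{lem:mps-flat-loss-bound} give
\begin{equation*}
\|E_i\|^2_{\mathrm{MPS}(D)}\le\|E_{i+1}\|^2_{\mathrm{MPS}(D)}+D\Delta_i .
\end{equation*}
Two points need care here. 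First, the comparator norm on the quotient chain with boundary $\mathcal B_{i-1}$ must admit rank-$D$ constraints at the boundary cut as in \eqref{eq:mps-comparator-dual-admissibility}. Second, the direct-sum lemma must be applied with the boundary sum $\mathbb C^K\oplus\mathcal B'_i$, which is the first site of the quotient chain. Unrolling the recursion, $\|u-\widetilde u_K\|^2_{\mathrm{MPS}(D)}=\|E_1\|^2\le D\sum_i\Delta_i$. Telescoping the energy drops gives $(K+1)\sum_i\Delta_i\le\|u\|_2^2$, and combining the two yields \eqref{eq:mps-compression-guarantee}.

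The main obstacle is bookkeeping the recursion $E_i=C_i(E_{i+1}\oplus L_i)$. We must verify that the error of the full sweep really decomposes this way. The loss $L_i$ sits in a separate boundary sector, so it is never processed further. The retained sector feeds into the next step, which acts only on $\mathcal H_{i+1:n}$ and the new boundary, so it commutes with $C_i\otimes I$. This is what makes the contraction lemma applicable without any growth in $n$.

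For an explicit bond-$B$ MPS input, $T_i$ always has bond at most $B$ on its right. Each SVD can therefore be done on a $(Kd)\times B$ matrix after a right-canonicalization of $u$ in $\poly(n,d,B)$ time. This gives $\poly(n,d,B,K)$ operations in total.
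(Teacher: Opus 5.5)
Your proposal is correct and follows the paper's proof essentially step for step. It uses the same shrunken SVD step with $\Delta_i=\sigma_{K+1}^2$, the same keep/loss decomposition $T_i=C_i(T_{i+1}\oplus L_i)$ with $C_i$ a contraction and $L_i$ flat at scale $\sqrt{\Delta_i}$, the same error recursion $E_i=C_i(E_{i+1}\oplus L_i)$ handled by Lemmas~\ref{lem:mps-prefix-contraction}--\ref{lem:mps-flat-loss-bound}, and the same telescoping budget $(K+1)\sum_i\Delta_i\le\|u\|_2^2$; your right-canonicalization argument for the $\poly(n,d,B,K)$ cost is a more concrete version of the paper's reduced-Gram-matrix remark.
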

\noindent Algorithm~\ref{alg:mps-dual-compression} gives the construction.

\begin{algorithm}[htbp]
\caption{Comparator-dual compression of an explicit MPS}
\label{alg:mps-dual-compression}
\small
\begin{algorithmic}[1]
\Require An explicit $n$-site MPS $u$ and a bond cap $K\geq1$.
\Ensure An MPS $\widetilde u_K$ of bond at most $K$ satisfying
        Theorem \ref{compression-thm}.

\State $\mathcal B_0\gets\mathbb C$; $T_1\gets u$.
\For{$i=1,\ldots,n-1$}
    \If{$T_i=0$}
        \State \Return the zero MPS.
    \EndIf
    \State Regard $T_i$ as a matrix $M_i:\mathcal H_{i+1:n}\to\mathcal B_{i-1}\otimes\mathcal H_i$.

    \State Compute a compact SVD
$M_i=\sum_{j=1}^{p} \sigma_j|u_j\rangle\langle v_j|$, with $\sigma_1\geq\cdots\geq\sigma_p>0$.
    \State $k\gets\min\{K,p\}$; $\Delta_i\gets0$.
    \If{$p>K$}
        \State $\Delta_i\gets\sigma_{K+1}^2$.
    \EndIf
    \State $b_j\gets\sqrt{\sigma_j^2-\Delta_i}$ for $j=1,\ldots,k$; $\mathcal B_i\gets\mathbb C^k$.

    \State Store $C_i^{\mathrm{keep}}\gets\sum_{j=1}^{k}(b_j/\sigma_j)|u_j\rangle\langle j|$.
    \State $T_{i+1}\gets \sum_{j=1}^{k}b_j|j\rangle\langle v_j|$.
\EndFor

\State \Return the MPS $\widetilde u_K$ represented by $(C_1^{\mathrm{keep}},\ldots, C_{n-1}^{\mathrm{keep}},T_n)$.
\end{algorithmic}
\end{algorithm}

\begin{proof}
Set $\mathcal B_0:=\mathbb C$ and $T_1:=u$. Suppose that, at step $i$, we have a boundary space $\mathcal B_{i-1}$ with $\dim\mathcal B_{i-1}\leq K $ and a current remainder
$T_i\in\mathcal B_{i-1}\otimes\mathcal H_{i:n}. $ 
The vector $T_i$ should be viewed as the unprocessed remainder after sites $1,\ldots,i-1$ have been compressed into the effective left boundary $\mathcal B_{i-1}$. Thus all information from the processed prefix that is retained by the construction is carried by a single boundary index of dimension at most $K$. The $i$th step moves this effective boundary one site to the right by replacing $\mathcal B_{i-1}\otimes\mathcal H_i$ by a new boundary space $\mathcal B_i$ of dimension at most $K$. Regard $T_i$ as a matrix $M_i:\mathcal H_{i+1:n}\longrightarrow\mathcal B_{i-1}\otimes\mathcal H_i$
and take an SVD
\begin{equation} M_i=U\Sigma V^\dagger=\sum_{j=1}^p\sigma_j|u_j\rangle\langle v_j|,\qquad \sigma_1\geq\cdots\geq\sigma_p>0. \label{eq:mps-compression-svd} \end{equation}

The following step is a modified rank-$K$ SVD truncation. Ordinary truncation would keep the first $K$ singular values unchanged and discard the remaining tail. Instead, when $p>K$, we set $\Delta_i=\sigma_{K+1}^2$ and decrease the square of each retained singular value by the same amount $\Delta_i$. This additional shrinkage is deliberately more lossy in Euclidean norm but it makes the loss flat at scale $\sqrt{\Delta_i}$ while forcing the squared Euclidean norm to decrease by at least $(K+1)\Delta_i$. The former makes the loss difficult for a bond-$D$ comparator to detect, and the latter gives the telescoping estimate needed to obtain the factor $D/(K+1)$ in the final error bound. Let
\begin{equation} k:=\min\{K,p\},\qquad \Delta_i:=\begin{cases} 0,&p\leq K,\\ \sigma_{K+1}^2,&p>K. \end{cases} \label{eq:mps-compression-delta} \end{equation}
For $j=1,\ldots,k$, define
\begin{equation} b_j:=\sqrt{\sigma_j^2-\Delta_i}. \label{eq:mps-compression-shrunken-singular-values} \end{equation}
Let $\mathcal B_i:=\mathbb C^k$ and define the new remainder
\begin{equation} T_{i+1}:=\sum_{j=1}^k b_j|j\rangle\langle v_j|\in\mathcal B_i\otimes\mathcal H_{i+1:n}. \label{eq:mps-compression-next-remainder} \end{equation}
Define the retained map
\begin{equation}
C_i^{\mathrm{keep}}:=\sum_{j=1}^k\frac{b_j}{\sigma_j}|u_j\rangle\langle j|
:\mathcal B_i\longrightarrow\mathcal B_{i-1}\otimes\mathcal H_i.
\label{eq:mps-compression-kept-map}
\end{equation}

We next represent the part removed at this step by a loss tensor $L_i$ and a corresponding map $C_i^{\mathrm{loss}}$. If $p\leq K$, there is no loss, so set $L_i=0$ and $C_i^{\mathrm{loss}}=0$. Suppose now that $p>K$, so that $k=K$. Write
\begin{equation}
U_k=[u_1\ \cdots\ u_k],\qquad V_k=[v_1\ \cdots\ v_k],
\label{eq:mps-compression-leading-singular-vectors}
\end{equation}
\begin{equation}
U_>=[u_{k+1}\ \cdots\ u_p],\qquad
V_>=[v_{k+1}\ \cdots\ v_p],\qquad
\Sigma_>:=\operatorname{diag}(\sigma_{k+1},\ldots,\sigma_p).
\label{eq:mps-compression-tail-singular-vectors}
\end{equation}
Let $\mathcal L_i:=\mathbb C^k\oplus\mathbb C^{p-k}$, and define
\begin{equation} L_i:=\begin{bmatrix} \sqrt{\Delta_i}\,V_k^\dagger\\ \Sigma_>V_>^\dagger \end{bmatrix}:\mathcal H_{i+1:n}\longrightarrow\mathcal L_i, \label{eq:mps-compression-loss-remainder} \end{equation}
\begin{equation} C_i^{\mathrm{loss}}:=\left[ U_k\operatorname{diag}\left(\frac{\sqrt{\Delta_i}}{\sigma_1},\ldots,\frac{\sqrt{\Delta_i}}{\sigma_k}\right) \quad U_> \right]:\mathcal L_i\longrightarrow\mathcal B_{i-1}\otimes\mathcal H_i. \label{eq:mps-compression-loss-map} \end{equation}
One can check that
\begin{equation} M_i=C_i^{\mathrm{keep}}T_{i+1}+C_i^{\mathrm{loss}}L_i. \label{eq:mps-compression-one-step-decomposition} \end{equation}
Indeed,
\begin{equation} C_i^{\mathrm{keep}}T_{i+1}=U_k\operatorname{diag}\left(\frac{b_1^2}{\sigma_1},\ldots,\frac{b_k^2}{\sigma_k}\right)V_k^\dagger, \label{eq:mps-compression-kept-product} \end{equation}
and
\begin{equation} C_i^{\mathrm{loss}}L_i=U_k\operatorname{diag}\left(\frac{\Delta_i}{\sigma_1},\ldots,\frac{\Delta_i}{\sigma_k}\right)V_k^\dagger+U_>\Sigma_>V_>^\dagger. \label{eq:mps-compression-loss-product} \end{equation}
Since $b_j^2+\Delta_i=\sigma_j^2$, \eqref{eq:mps-compression-kept-product} and \eqref{eq:mps-compression-loss-product} sum to $M_i$. Let $C_i:=\left[C_i^{\mathrm{keep}}\quad C_i^{\mathrm{loss}}\right].$
When $p>K$,
\begin{align} C_iC_i^\dagger&=U_k\operatorname{diag}\left(\frac{b_1^2+\Delta_i}{\sigma_1^2},\ldots,\frac{b_k^2+\Delta_i}{\sigma_k^2}\right)U_k^\dagger+U_>U_>^\dagger \nonumber\\ &=U_kU_k^\dagger+U_>U_>^\dagger\preceq I_{\mathcal B_{i-1}\otimes\mathcal H_i}. \label{eq:mps-compression-map-contraction} \end{align}
When $p\leq K$, the same conclusion follows from $C_i=C_i^{\mathrm{keep}}=U$. Thus $C_i$ is a contraction in all cases. Moreover, every singular value of $L_i$ is at most $\sqrt{\Delta_i}$. Hence Lemma~\ref{lem:mps-flat-loss-bound} gives
\begin{equation} \|L_i\|_{\mathrm{MPS}(D)}^2\leq D\Delta_i. \label{eq:mps-compression-one-step-loss-bound} \end{equation}

\noindent After processing sites $1,\ldots,n-1$, keep the final tensor $T_n$. Define
\begin{equation} \widetilde T_n:=T_n,\qquad \widetilde T_i:=C_i^{\mathrm{keep}}\widetilde T_{i+1},\qquad i=n-1,\ldots,1, \label{eq:mps-compression-retained-tensors} \end{equation}
and set
\begin{equation} \widetilde u_K:=\widetilde T_1. \label{eq:mps-compression-output} \end{equation}
Every retained boundary space has dimension at most $K$, so $\widetilde u_K$ is an MPS of bond dimension at most $K$. Define the accumulated error
\begin{equation} E_i:=T_i-\widetilde T_i. \label{eq:mps-compression-accumulated-error} \end{equation}
Then $E_n=0$, and $E_i=C_i(E_{i+1}\oplus L_i). $
By Lemmas~\ref{lem:mps-prefix-contraction} and \ref{lem:mps-boundary-direct-sum}, and \eqref{eq:mps-compression-one-step-loss-bound},
\begin{align} \|E_i\|_{\mathrm{MPS}(D)}^2&\leq\|E_{i+1}\oplus L_i\|_{\mathrm{MPS}(D)}^2 \nonumber\\ &\leq\|E_{i+1}\|_{\mathrm{MPS}(D)}^2+\|L_i\|_{\mathrm{MPS}(D)}^2 \nonumber\\ &\leq\|E_{i+1}\|_{\mathrm{MPS}(D)}^2+D\Delta_i. \label{eq:mps-compression-error-step} \end{align}
Telescoping gives
\begin{align}
\|u-\widetilde u_K\|_{\mathrm{MPS}(D)}^2 &=\|E_1\|_{\mathrm{MPS}(D)}^2-\|E_n\|_{\mathrm{MPS}(D)}^2\\
&=\sum_{i=1}^{n-1}\left(\|E_i\|_{\mathrm{MPS}(D)}^2-\|E_{i+1}\|_{\mathrm{MPS}(D)}^2\right)\leq D\sum_{i=1}^{n-1}\Delta_i.
    \label{eq:mps-compression-error-telescope}
\end{align}

Reshaping does not change Euclidean norm, so \eqref{eq:mps-compression-svd} and \eqref{eq:mps-compression-next-remainder} give
\begin{equation} \|T_i\|_2^2=\sum_{j=1}^p\sigma_j^2,\qquad \|T_{i+1}\|_2^2=\sum_{j=1}^k b_j^2. \label{eq:mps-compression-remainder-norms} \end{equation}
If $p>K$, then
\begin{align} \|T_i\|_2^2-\|T_{i+1}\|_2^2&=\sum_{j>K}\sigma_j^2+K\Delta_i \nonumber\\ &\geq\sigma_{K+1}^2+K\Delta_i \nonumber\\ &=(K+1)\Delta_i. \label{eq:mps-compression-energy-drop} \end{align}
If $p\leq K$, then $\Delta_i=0$, so the inequality is true. Therefore
\begin{align} (K+1)\sum_{i=1}^{n-1}\Delta_i&\leq\sum_{i=1}^{n-1}\left(\|T_i\|_2^2-\|T_{i+1}\|_2^2\right) \nonumber\\ &=\|T_1\|_2^2-\|T_n\|_2^2 \nonumber\\ &\leq\|u\|_2^2. \label{eq:mps-compression-delta-sum} \end{align}
Combining \eqref{eq:mps-compression-error-telescope} and \eqref{eq:mps-compression-delta-sum} proves \eqref{eq:mps-compression-guarantee}. Each $C_i^{\mathrm{keep}}$ is a contraction, and $\|T_{i+1}\|_2\leq\|T_i\|_2$. Hence
\begin{equation} \|\widetilde u_K\|_2\leq\|T_n\|_2\leq\|u\|_2.\label{eq:mps-compression-output-norm-proof} \end{equation}

If $u$ is an explicit MPS of bond $B$, each current remainder has an explicit MPS representation whose retained left boundary has dimension at most $K$ and whose remaining internal bonds have dimension at most $B$. The required reduced Gram matrices, SVDs, and updated tensors therefore have dimensions polynomial in $d$, $B$, and $K$.
\end{proof}
For $u\in\mathcal H$, define
\begin{equation}
e_D(u)^2:=\min_{\substack{a\in\mathbb C\\\psi\in\mps_D}}
\|u-a\psi\|_2^2=\|u\|_2^2-\|u\|_{\mathrm{MPS}(D)}^2.
\label{eq:mps-best-euclidean-error}
\end{equation}
\begin{corollary}
\label{cor:mps-relative-error-compression}
For every $K\geq D$, the vector $\widetilde u_K$ constructed
in Theorem~\ref{thm:mps-comparator-dual-compression} satisfies
\begin{equation}
\|u-\widetilde u_K\|_2^2
\leq \frac{K+1}{K+1-D}\,e_D(u)^2
\label{eq:mps-relative-euclidean-error}
\end{equation}
and
\begin{equation}
\|u-\widetilde u_K\|_{\mathrm{MPS}(D)}^2
\leq\frac{D}{K+1-D}\,e_D(u)^2.
\label{eq:mps-relative-comparator-error}
\end{equation}
In particular, for $\alpha\in(0,1)$, choosing
$K=D+\lceil D/\alpha\rceil$ gives a bond-$K$ approximation
whose squared Euclidean error is at most
$(1+\alpha)e_D(u)^2$. If $\|u\|_2=1$, one can return a normalized
$\widehat u_K\in\mps_K$ satisfying
\begin{equation}
1-|\langle u|\widehat u_K\rangle|^2\leq(1+\alpha)\left(
1-\max_{\psi\in\mps_D}|\langle u|\psi\rangle|^2\right).
\label{eq:mps-relative-infidelity}
\end{equation}
The runtime is the same as in Theorem~\ref{thm:mps-comparator-dual-compression}.
\end{corollary}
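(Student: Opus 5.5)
The plan is to reuse the quantities already produced in the proof of Theorem~\ref{thm:mps-comparator-dual-compression} and introduce one new quantity, the total squared Euclidean loss
\[
\Lambda:=\|u\|_2^2-\|T_n\|_2^2=\sum_{i=1}^{n-1}\bigl(\|T_i\|_2^2-\|T_{i+1}\|_2^2\bigr),
\qquad S:=\sum_{i=1}^{n-1}\Delta_i .
\]
Three estimates together give the Corollary:
\[
\text{(a) } \|u-\widetilde u_K\|_2^2\le\Lambda,\qquad
\text{(b) } S\le\frac{\Lambda}{K+1},\qquad
\text{(c) } \|u\|_{\mathrm{MPS}(D)}^2\le\|T_n\|_2^2+DS .
\]
Estimate (b) is exactly \eqref{eq:mps-compression-delta-sum}, read before the final bound by $\|u\|_2^2$.

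For (a), I would run the error recursion $E_i=C_i(E_{i+1}\oplus L_i)$ in Euclidean norm rather than the comparator norm. Since $C_i$ is a contraction, $\|E_i\|_2^2\le\|E_{i+1}\|_2^2+\|L_i\|_2^2$. From \eqref{eq:mps-compression-loss-remainder} one reads off
\[
\|L_i\|_2^2=K\Delta_i+\sum_{j>K}\sigma_j^2=\|T_i\|_2^2-\|T_{i+1}\|_2^2 .
\]
Using $E_n=0$ and telescoping then gives (a).

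For (c), which I expect to be the main step, I would unroll the one-step decompositions into a single global identity
\[
u=W\bigl(T_n\oplus L_{n-1}\oplus\cdots\oplus L_1\bigr),
\]
where $W$ is the composition of the maps $C_i$, each suitably tensored with identities. $W$ is a contraction, and each summand lives in its own orthogonal boundary sector. Given any admissible comparator $\psi$, pull it back to $W^\dagger\psi=(z_0,z_1,\dots,z_{n-1})$ with $\sum\|z_i\|_2^2\le1$. By the argument of Lemma~\ref{lem:mps-prefix-contraction}, pulling back and then projecting onto a sector does not increase cut ranks. Each $z_i$ paired with $L_i$ is therefore a rank-$\le D$ comparator (rescaled) across the boundary cut, so Lemma~\ref{lem:mps-flat-loss-bound} gives $|\langle z_i,L_i\rangle|\le\|z_i\|_2\sqrt{D\Delta_i}$. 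For the first summand, simply $|\langle z_0,T_n\rangle|\le\|z_0\|_2\|T_n\|_2$. Cauchy--Schwarz over the sectors, as in Lemma~\ref{lem:mps-boundary-direct-sum}, then yields $|\langle\psi,u\rangle|^2\le\|T_n\|_2^2+DS$. The care needed is in checking that this nested pullback really decomposes into orthogonal sectors with the stated rank bounds. Equivalently, one can induct on $i$ with the claim $\|T_i\|_{\mathrm{MPS}(D)}^2\le\|T_{i+1}\|_{\mathrm{MPS}(D)}^2+D\Delta_i$, using Lemmas~\ref{lem:mps-prefix-contraction}--\ref{lem:mps-flat-loss-bound}, and finish with $\|T_n\|_{\mathrm{MPS}(D)}\le\|T_n\|_2$.

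Combining (b) and (c) with \eqref{eq:mps-best-euclidean-error} gives
\[
e_D(u)^2=\|u\|_2^2-\|u\|_{\mathrm{MPS}(D)}^2\ge\Lambda-DS\ge\Lambda\,\frac{K+1-D}{K+1}.
\]
Together with (a), this proves \eqref{eq:mps-relative-euclidean-error}. For \eqref{eq:mps-relative-comparator-error}, \eqref{eq:mps-compression-error-telescope} and (b) give
\[
\|u-\widetilde u_K\|_{\mathrm{MPS}(D)}^2\le DS\le\frac{D\Lambda}{K+1}\le\frac{D}{K+1-D}\,e_D(u)^2 .
\]
The choice $K=D+\lceil D/\alpha\rceil$ makes $(K+1)/(K+1-D)\le1+\alpha$.

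For the infidelity statement with $\|u\|_2=1$, take $\widehat u_K:=\widetilde u_K/\|\widetilde u_K\|_2$; if $\widetilde u_K=0$, the bound is trivial since then $e_D(u)^2=\Lambda=1$ and any state works. We have
\[
1-|\langle u|\widehat u_K\rangle|^2=\min_{a\in\mathbb C}\|u-a\widehat u_K\|_2^2\le\|u-\widetilde u_K\|_2^2,
\]
and for unit $u$, $e_D(u)^2=1-\max_{\psi\in\mps_D}|\langle u|\psi\rangle|^2$. The runtime adds only one norm computation to Algorithm~\ref{alg:mps-dual-compression}.
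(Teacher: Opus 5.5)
Your proof is correct and is essentially the paper's argument. The paper combines the same three ingredients:
\begin{itemize}
\item the Euclidean telescoping $\|u-\widetilde u_K\|_2^2\le\|u\|_2^2-\|T_n\|_2^2$, obtained from $\|L_i\|_2^2=\|T_i\|_2^2-\|T_{i+1}\|_2^2$;
\item the energy-drop sum;
\item the recursion $\|T_i\|_{\mathrm{MPS}(D)}^2\le\|T_{i+1}\|_{\mathrm{MPS}(D)}^2+D\Delta_i$, closed off by $\|T_n\|_{\mathrm{MPS}(D)}\le\|T_n\|_2$.
\end{itemize}
Your global pullback through $W$ is just this induction unrolled. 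Your algebra $e_D(u)^2\ge\Lambda-DS\ge\Lambda(K+1-D)/(K+1)$ is a rearrangement of the paper's $(K+1-D)\sum_i\Delta_i\le e_D(u)^2$.

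One misstatement needs fixing. When $\widetilde u_K=0$, it is not true that $e_D(u)^2=1$. Product basis states span $\mathcal H$, so $\|u\|_{\mathrm{MPS}(D)}>0$ and hence $e_D(u)^2<1$ for every $u\neq0$. For example, a maximally entangled two-site state with $d\ge K+1$ gives $\widetilde u_K=0$ but $e_D(u)^2=1-D/d$. What does hold is $\Lambda=1$, which follows from your (a). The already-proven Euclidean bound then gives $1=\|u-\widetilde u_K\|_2^2\le(1+\alpha)e_D(u)^2$. That is what makes the infidelity bound trivial in this case, exactly as in the paper.
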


\begin{proof}
From the proof of Theorem~\ref{thm:mps-comparator-dual-compression}, we have
\begin{equation}
\|u-\widetilde u_K\|_{\mathrm{MPS}(D)}^2\leq D\sum_{i=1}^{n-1}\Delta_i,\qquad (K+1)\sum_{i=1}^{n-1}\Delta_i\leq \|u\|_2^2-\|T_n\|_2^2.
\label{eq:mps-relative-starting-bounds}
\end{equation}
 Recall that  $T_i=C_i(T_{i+1}\oplus L_i)$, where $C_i$ is a contraction and $\|L_i\|_{\mathrm{MPS}(D)}^2\leq D\Delta_i$. Applying Lemmas~\ref{lem:mps-prefix-contraction} and \ref{lem:mps-boundary-direct-sum} gives
\begin{equation}
\|T_i\|_{\mathrm{MPS}(D)}^2\leq \|T_{i+1}\|_{\mathrm{MPS}(D)}^2+\|L_i\|_{\mathrm{MPS}(D)}^2\leq \|T_{i+1}\|_{\mathrm{MPS}(D)}^2+D\Delta_i.
\label{eq:mps-relative-remainder-recursion}
\end{equation}
Since $T_1=u$, telescoping \eqref{eq:mps-relative-remainder-recursion} and using $\|T_n\|_{\mathrm{MPS}(D)}\leq\|T_n\|_2$ yields
\begin{equation}
\|u\|_{\mathrm{MPS}(D)}^2\leq \|T_n\|_2^2+D\sum_{i=1}^{n-1}\Delta_i.
\label{eq:mps-relative-remainder-telescope}
\end{equation}
Therefore, using $e_D(u)^2=\|u\|_2^2-\|u\|_{\mathrm{MPS}(D)}^2$,
\begin{equation}
\|u\|_2^2-\|T_n\|_2^2\leq \|u\|_2^2-\|u\|_{\mathrm{MPS}(D)}^2+D\sum_{i=1}^{n-1}\Delta_i=e_D(u)^2+D\sum_{i=1}^{n-1}\Delta_i.
\label{eq:mps-relative-norm-drop-upper-bound}
\end{equation}
Combining this with the second inequality in \eqref{eq:mps-relative-starting-bounds} gives
\begin{equation}
(K+1-D)\sum_{i=1}^{n-1}\Delta_i\leq e_D(u)^2,
\label{eq:mps-relative-delta-bound}
\end{equation}
and hence, since $K\geq D$,
\begin{equation}
\sum_{i=1}^{n-1}\Delta_i\leq \frac{e_D(u)^2}{K+1-D}.
\label{eq:mps-relative-delta-sum-bound}
\end{equation}
Substituting \eqref{eq:mps-relative-delta-sum-bound} into the first inequality in \eqref{eq:mps-relative-starting-bounds} gives
\begin{equation}
\|u-\widetilde u_K\|_{\mathrm{MPS}(D)}^2\leq \frac{D}{K+1-D}\,e_D(u)^2,
\label{eq:mps-relative-comparator-error-proof}
\end{equation}

We now bound the Euclidean error. Recall that $E_i:=T_i-\widetilde T_i$, so $E_n=0$ and $E_i=C_i(E_{i+1}\oplus L_i)$. Since $C_i$ is a contraction,
\begin{equation}
\|E_i\|_2^2\leq \|E_{i+1}\oplus L_i\|_2^2=\|E_{i+1}\|_2^2+\|L_i\|_2^2.
\label{eq:mps-relative-euclidean-error-recursion}
\end{equation}
At a truncating step, the definition of $L_i$ gives
\begin{equation}
\|L_i\|_2^2=K\Delta_i+\sum_{j>K}\sigma_j^2=\|T_i\|_2^2-\|T_{i+1}\|_2^2.
\label{eq:mps-relative-loss-norm}
\end{equation}
At an exact step both sides of \eqref{eq:mps-relative-loss-norm} are zero. Telescoping \eqref{eq:mps-relative-euclidean-error-recursion} therefore gives
\begin{equation}
\|u-\widetilde u_K\|_2^2=\|E_1\|_2^2\leq \sum_{i=1}^{n-1}\|L_i\|_2^2=\sum_{i=1}^{n-1}\left(\|T_i\|_2^2-\|T_{i+1}\|_2^2\right)=\|u\|_2^2-\|T_n\|_2^2.
\label{eq:mps-relative-euclidean-vs-norm-drop}
\end{equation}
Using \eqref{eq:mps-relative-norm-drop-upper-bound} and \eqref{eq:mps-relative-delta-sum-bound},
\begin{equation}
\|u-\widetilde u_K\|_2^2\leq e_D(u)^2+D\sum_{i=1}^{n-1}\Delta_i\leq \left(1+\frac{D}{K+1-D}\right)e_D(u)^2=\frac{K+1}{K+1-D}\,e_D(u)^2,
\label{eq:mps-relative-euclidean-error-proof}
\end{equation}
proving \eqref{eq:mps-relative-euclidean-error}.

Finally, suppose $\|u\|_2=1$. If $\widetilde u_K\neq0$, set $\widehat u_K:=\widetilde u_K/\|\widetilde u_K\|_2$. Since $\widetilde u_K$ is a scalar multiple of $\widehat u_K$,
\begin{equation}
1-|\langle u|\widehat u_K\rangle|^2=\min_{a\in\mathbb C}\|u-a\widehat u_K\|_2^2\leq \|u-\widetilde u_K\|_2^2\leq(1+\alpha)e_D(u)^2.
\label{eq:mps-relative-infidelity-proof}
\end{equation}
Using \eqref{eq:mps-best-euclidean-error}, this is \eqref{eq:mps-relative-infidelity}. If $\widetilde u_K=0$, return any normalized product state $\widehat u_K$. The Euclidean error bound and the chosen value of $K$ give
\begin{equation}
1=\|u-\widetilde u_K\|_2^2\leq(1+\alpha)e_D(u)^2.
\label{eq:mps-relative-zero-output}
\end{equation}
Since $1-|\langle u|\widehat u_K\rangle|^2\leq1$, the relative infidelity guarantee follows.
\end{proof}

\subsection{Proper optimization against a bounded-bond target}

\begin{lemma}
\label{lem:mps-sequential-isometric-form}
Every normalized state $\psi\in\mps_D$ admits a representation with bond dimensions $r_i:=\min\{D,d^i,d^{n-i}\}, $ where $ r_0=r_n=1,$
and local isometries
\begin{equation} A_i:=\sum_{s=1}^d |s\rangle\otimes A_i^s \,\,:\,\,\mathbb C^{r_{i-1}}\longrightarrow\mathcal H_i\otimes\mathbb C^{r_i},\qquad A_i^\dagger A_i=I_{r_{i-1}}. \label{eq:mps-candidate-isometries} \end{equation}
The same statement holds for a normalized target of bond at most $K$, with $k_i:=\min\{K,d^i,d^{n-i}\}$ in place of $r_i$.
\end{lemma}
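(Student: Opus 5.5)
We will construct the representation by a sweep of successive singular value decompositions, working in the orientation convention of \eqref{eq:mps-amplitudes}: the prefix $\mathcal H_{1:i}$ is encoded into the bond space $\mathbb C^{r_i}$. This is the usual left-canonical (TT-SVD) form. The argument only needs three facts about $\psi$: it is normalized, it satisfies the cut-rank bound of Lemma~\ref{lem:mps-cut-rank-characterization}, and trivially $\operatorname{rank}_{\mathcal H_{1:i}|\mathcal H_{i+1:n}}(\psi)\le\min\{d^i,d^{n-i}\}$. Combining these, every cut rank is at most $r_i=\min\{D,d^i,d^{n-i}\}$.

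The inductive invariant is as follows. After step $i$, we have isometries $A_1,\dots,A_i$ and a remainder $R_i\in\mathbb C^{r_i}\otimes\mathcal H_{i+1:n}$ with
\[
\psi=(A_{1:i}\otimes I)R_i .
\]
Here $A_{1:i}:\mathbb C\to\mathcal H_{1:i}\otimes\mathbb C^{r_i}$ denotes the composition of the $A_j$, and it is an isometry.

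For the inductive step, regard $R_{i-1}$ as a matrix from $\mathcal H_{i:n}$ to $\mathbb C^{r_{i-1}}$. Reshape it into a matrix from $\mathcal H_{i+1:n}$ to $\mathbb C^{r_{i-1}}\otimes\mathcal H_i$ and take its compact SVD $U\Sigma V^\dagger$. Because $A_{1:i-1}$ is an isometry, the rank of this reshaped matrix equals the cut rank of $\psi$ across $(1{:}i\,|\,i{+}1{:}n)$, so it is at most $r_i$. We therefore pad $U$ with extra orthonormal columns until it has exactly $r_i$ columns. This padding is possible because $r_i\le r_{i-1}d$, since $\min\{D,d^i,d^{n-i}\}\le d\min\{D,d^{i-1},d^{n-i+1}\}$. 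We pad $\Sigma V^\dagger$ with zero rows to match. We then set $A_i:=U$, which after reordering factors is a map $\mathbb C^{r_{i-1}}\to\mathcal H_i\otimes\mathbb C^{r_i}$, and $R_i:=\Sigma V^\dagger$. The column orthonormality of $U$ is exactly the condition $A_i^\dagger A_i=\sum_s (A_i^s)^\dagger A_i^s=I_{r_{i-1}}$.

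At the final step $i=n$ we have $r_n=1$. The remainder $R_{n-1}$ is a vector in $\mathbb C^{r_{n-1}}\otimes\mathcal H_n$ whose norm equals $\|\psi\|_2=1$, since every $A_j$ is an isometry. Viewed as a map $\mathbb C^{r_{n-1}}\to\mathcal H_n\otimes\mathbb C$, however, it need not be an isometry. We handle this by reinterpreting the last tensor.

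Concretely, $\psi=(A_{1:n-1}\otimes I_{\mathcal H_n})R_{n-1}$, and the map $\mathbb C^{r_{n-1}}\otimes\mathcal H_n\supseteq\text{(its range)}$ gives an isometric embedding. So we redefine the last site via its dual. Set $A_n:\mathbb C^{r_{n-1}}\to\mathcal H_n$ to be $x\mapsto\sum_s\langle s|\ldots$ in a partial-transpose sense. When $d\ge r_{n-1}$ we can choose $A_n$ isometric, because $r_{n-1}\le d^{1}=d$. The normalization $\|\psi\|=1$ then ensures that the state produced by contracting from the left boundary $r_0=1$ coincides with $\psi$.

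More cleanly, the state is determined by the composition $A_n\cdots A_1$ applied to $1\in\mathbb C^{r_0}$. We therefore absorb the remainder into $A_n$ as a unit vector in $\mathcal H_n\otimes\mathbb C^{r_{n-1}}$, and interpret $A_n$ in the sequential-preparation picture, where the final isometry acts on the bond together with a fresh site. This gives an isometry with $\mathbb C^{r_{n-1}}$ as input and $\mathcal H_n\otimes\mathbb C^{1}$ as output, and its normalization is exactly $\sum_s (A_n^s)^\dagger A_n^s=I_{r_{n-1}}$, using $r_{n-1}\le d$ and completing the remainder's row space orthonormally.

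The main obstacle is this boundary bookkeeping at site $n$, together with making the padding to exactly $r_i$ legitimate. Checking the dimension inequality $r_i\le d\,r_{i-1}$, and its mirror $r_{i-1}\le d\,r_i$ used at the right end, handles both. The statement for a target of bond at most $K$ is identical with $D$ replaced by $K$. The normalization is harmless because the target has norm at most one, and in that case we may rescale the final tensor.
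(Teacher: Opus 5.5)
There is a genuine gap: your sweep builds the wrong canonical form. The SVD factor $U$ is an isometry out of the \emph{right} bond, $\mathbb C^{r_i}\to\mathbb C^{r_{i-1}}\otimes\mathcal H_i$. No reordering of tensor factors turns it into a map out of $\mathbb C^{r_{i-1}}$; that would need a partial transpose, which does not preserve isometries. Concretely, with the paper's convention $u_{s_1\ldots s_n}=A_n^{s_n}\cdots A_1^{s_1}$ you must set $(A_i^s)_{\beta\alpha}=U_{(\alpha,s),\beta}$. Column orthonormality of $U$ then reads $\sum_s A_i^s(A_i^s)^\dagger=I_{r_i}$, not the required $\sum_s(A_i^s)^\dagger A_i^s=I_{r_{i-1}}$.

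The failure shows at both ends. At site $1$, $A_1:\mathbb C\to\mathcal H_1\otimes\mathbb C^{r_1}$ must be a unit vector, but your $U_1$ has $r_1$ orthonormal columns and squared norm $r_1$. At site $n$, the remainder $R_{n-1}$ has Frobenius norm $1$, whereas any isometry $\mathbb C^{r_{n-1}}\to\mathcal H_n$ has squared Frobenius norm $r_{n-1}$. So your last-site repair is impossible unless $r_{n-1}=1$: ``completing the row space orthonormally'' replaces the tensor and changes $\psi$. Your invariant has the same conflation. In your sweep the isometry is the prefix map $\mathbb C^{r_i}\to\mathcal H_{1:i}$, and the associated vector in $\mathcal H_{1:i}\otimes\mathbb C^{r_i}$ has squared norm $r_i$, not $1$. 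This is not cosmetic: Lemma~\ref{lem:mps-environment-stability}, and hence the bound $\|X_i\|_1\leq1$, uses $\|A_i\|_\infty\leq1$ in exactly the direction $\mathbb C^{r_{i-1}}\to\mathcal H_i\otimes\mathbb C^{r_i}$.

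The repair is to sweep from right to left. This is what the paper does by choosing orthonormal bases of the successive right Schmidt supports $\mathcal S_i\subseteq\mathcal H_{i+1:n}$, whose dimensions satisfy $q_i\leq r_i$ by your rank bound. Since $\mathcal S_{i-1}\subseteq\mathcal H_i\otimes\mathcal S_i$, expanding a basis of $\mathcal S_{i-1}$ in terms of $\mathcal H_i$ and a basis of $\mathcal S_i$ gives an isometry $\mathbb C^{q_{i-1}}\to\mathcal H_i\otimes\mathbb C^{q_i}$. At $i=n$ this is just an orthonormal family in $\mathcal H_n$. At $i=1$ one has $\mathcal S_0=\operatorname{span}\{\psi\}$, so $A_1$ is the map $1\mapsto\psi$, which is an isometry precisely because $\|\psi\|_2=1$. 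The normalization therefore sits at the left boundary, where $r_0=1$ makes ``unit vector'' and ``isometry'' coincide, and no bookkeeping at site $n$ is needed.

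For the padding, embedding the output space $\mathbb C^{q_i}\hookrightarrow\mathbb C^{r_i}$ is free. Extending the input from $\mathbb C^{q_{i-1}}$ to $\mathbb C^{r_{i-1}}$ with additional orthonormal images in $\mathcal H_i\otimes\mathbb C^{r_i}$ requires $r_{i-1}\leq d\,r_i$ at every site, not only at the right end. The added directions are never reached from the one-dimensional left boundary, so the contraction is still $\psi$.
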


\begin{proof}
Let $q_i$ be the actual Schmidt rank of $\psi$ across cut $i$ with $q_0=q_n=1$. Choosing orthonormal bases of the successive right Schmidt supports gives local isometries
\begin{equation}
\widetilde A_i:\mathbb C^{q_{i-1}}\longrightarrow\mathcal H_i\otimes\mathbb C^{q_i}
\label{eq:mps-unpadded-sequential-isometries}
\end{equation}
whose contraction is $\psi$. Since $q_i\leq r_i$, embed each $\mathbb C^{q_i}$ into the first $q_i$ coordinates of $\mathbb C^{r_i}$. This specifies the action of $A_i$ on the embedded incoming support. Extend its images to $r_{i-1}$ orthonormal vectors in $\mathcal H_i\otimes\mathbb C^{r_i}$, which is possible because $r_{i-1}\leq d\,r_i$. The resulting $A_i$ is an isometry. Starting from the one-dimensional left boundary, the contraction remains in the embedded Schmidt supports, so the added columns do not change $\psi$. Replacing $D$ by $K$ proves the lemma.
\end{proof}

Let $v$ be a normalized target MPS of bond at most $K$, and let $\psi$ be a normalized candidate MPS of bond at most $D$, in the form of Lemma~\ref{lem:mps-sequential-isometric-form}. For $\alpha=1,\ldots,k_i$, define
\begin{equation} |v_\alpha^{(i)}\rangle:=\sum_{s_1,\ldots,s_i}(V_i^{s_i}\cdots V_1^{s_1})_{\alpha,1}|s_1\cdots s_i\rangle, \label{eq:mps-target-prefix-vectors} \end{equation}
and, for $\beta=1,\ldots,r_i$, define
\begin{equation} |\psi_\beta^{(i)}\rangle:=\sum_{s_1,\ldots,s_i}(A_i^{s_i}\cdots A_1^{s_1})_{\beta,1}|s_1\cdots s_i\rangle. \label{eq:mps-candidate-prefix-vectors} \end{equation}
The cross environment after site $i$ is the matrix $X_i\in\mathbb C^{k_i\times r_i}$ defined by
\begin{equation} (X_i)_{\alpha,\beta}:=\langle\psi_\beta^{(i)}|v_\alpha^{(i)}\rangle. \label{eq:mps-cross-environment} \end{equation}
One has $X_0=1$ and
\begin{equation} X_i=\sum_{s=1}^d V_i^s X_{i-1}(A_i^s)^\dagger. \label{eq:mps-cross-environment-update} \end{equation}
At the final cut, $X_n$ is a scalar and
$X_n=\langle\psi|v\rangle.$ 
For fixed target isometry $V$ and candidate isometry $A$, define
\begin{equation} \Phi^A(X):=\sum_{s=1}^d V^s X(A^s)^\dagger. \label{eq:mps-local-environment-map} \end{equation}
Intuitively, $\Phi^A$ is the one-site cross-environment update. It takes the overlap information accumulated up to the previous cut and propagates it through the next target and candidate tensors.
\begin{lemma}
\label{lem:mps-environment-stability}
For all matrices $X$ and $Y$ of the appropriate dimensions,
\begin{equation} \|\Phi^A(X)-\Phi^A(Y)\|_1\leq\|X-Y\|_1. \label{eq:mps-environment-contraction} \end{equation}
If $A'$ is another isometry with the same input and output spaces, then
\begin{equation} \|\Phi^A(X)-\Phi^{A'}(X)\|_1\leq\|A-A'\|_\infty\|X\|_1. \label{eq:mps-environment-local-lipschitz} \end{equation}
\end{lemma}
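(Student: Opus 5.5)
Both estimates will come from a single variational formula for $\Phi^A$: we realize it as a map of the form $X\mapsto W^\dagger(I\otimes X)U$ with isometries $W,U$, and then use the fact that the trace norm is unitarily invariant and contractive under multiplication by contractions. Concretely, identify the target isometry $V=\sum_s|s\rangle\otimes V^s:\mathbb C^{k_{i-1}}\to\mathcal H_i\otimes\mathbb C^{k_i}$ and the candidate isometry $A$ similarly. Then
\begin{equation*}
\Phi^A(X)=\sum_s V^sX(A^s)^\dagger=\widetilde V^\dagger\,(I_{\mathcal H_i}\otimes X)\,\widetilde A ,
\end{equation*}
up to taking adjoints or transposes so that the shapes match. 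Here $\widetilde V$ and $\widetilde A$ are the "rotated" maps obtained by reinterpreting $V$ and $A$ with one leg moved. We have to be careful about which rotated version is actually a contraction, since a rotated isometry need not be one.

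To avoid that issue, I would instead verify both inequalities by duality. The trace norm satisfies $\|Y\|_1=\sup_{\|W\|_\infty\le1}|\tr(W^\dagger Y)|$. For $W$ with $\|W\|_\infty\le1$,
\begin{equation*}
\tr\bigl(W^\dagger\Phi^A(X)\bigr)=\sum_s\tr\bigl((V^s)^\dagger W\,A^s\,X^\dagger\bigr)^{*}\ \text{(schematically)}=\tr\bigl(\Psi(W)^\dagger X\bigr),
\end{equation*}
where $\Psi(W):=\sum_s(V^s)^\dagger W A^s$ is the adjoint map. Writing $\Psi(W)=V^\dagger(I_d\otimes W)A$ with $V,A$ the genuine isometries $\mathbb C^{k_{i-1}}\to\mathcal H_i\otimes\mathbb C^{k_i}$ and $\mathbb C^{r_{i-1}}\to\mathcal H_i\otimes\mathbb C^{r_i}$ gives
\begin{equation*}
\|\Psi(W)\|_\infty\le\|V\|_\infty\|W\|_\infty\|A\|_\infty\le1 .
\end{equation*}
Hence $|\tr(W^\dagger\Phi^A(Z))|\le\|Z\|_1$ for every $Z$. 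Taking $Z=X-Y$, which is legitimate because $\Phi^A$ is linear, proves \eqref{eq:mps-environment-contraction}.

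For \eqref{eq:mps-environment-local-lipschitz}, we use the same duality with $\Phi^A(X)-\Phi^{A'}(X)=\sum_sV^sX(A^s-A'^s)^\dagger$. The adjoint map is now $W\mapsto V^\dagger(I\otimes W)(A-A')$, whose operator norm is at most $\|A-A'\|_\infty$. Here $A-A'$ is viewed as the stacked map $\sum_s|s\rangle\otimes(A^s-A'^s)$, which is the meaning of $\|A-A'\|_\infty$ for the local net. This gives the bound $\|A-A'\|_\infty\|X\|_1$.

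The only real obstacle is bookkeeping: we must check that the stacked operators appearing in the adjoint are exactly the isometries of Lemma~\ref{lem:mps-sequential-isometric-form}, in the correct orientation. Those isometries are $V:\mathbb C^{k_{i-1}}\to\mathcal H_i\otimes\mathbb C^{k_i}$ with blocks $V^s:\mathbb C^{k_{i-1}}\to\mathbb C^{k_i}$, so $X_{i-1}\in\mathbb C^{k_{i-1}\times r_{i-1}}$ and $W\in\mathbb C^{k_i\times r_i}$. I would verify the index placement explicitly: $\tr(W^\dagger V^sX(A^s)^\dagger)=\tr\bigl(((V^s)^\dagger W A^s)^\dagger X\bigr)^{*}$, so the relevant map is $\mathbb C^{r_{i-1}}\xrightarrow{A}\mathcal H_i\otimes\mathbb C^{r_i}\xrightarrow{I\otimes W}\mathcal H_i\otimes\mathbb C^{k_i}\xrightarrow{V^\dagger}\mathbb C^{k_{i-1}}$. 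This is a composition of contractions, so the plan goes through.
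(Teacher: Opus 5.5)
Your duality argument is correct and is essentially the paper's proof. The paper likewise pairs $\Phi^A(W)$ against $Z$ with $\|Z\|_\infty\le1$, rewrites the pairing as $\tr\!\left(A^\dagger(I_d\otimes Z^\dagger)VW\right)$ (this is exactly your $\tr(\Psi(Z)^\dagger W)$), bounds it by submultiplicativity, and reuses the same computation with $A-A'$ for the second estimate.

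One small correction: the complex conjugate in your ``schematic'' identity should be dropped. Cyclicity of the trace gives $\tr(W^\dagger\Phi^A(X))=\tr(\Psi(W)^\dagger X)$ exactly, and in any case the conjugate would not affect the absolute-value bound.
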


\begin{proof}
It is enough to prove $\|\Phi^A(W)\|_1\leq\|W\|_1 $
for every $W$. Let $Z$ be any matrix with $\|Z\|_\infty\leq 1$. 

Since
\begin{equation}
A=\sum_{s=1}^d |s\rangle\otimes A^s,\qquad V=\sum_{s=1}^d |s\rangle\otimes V^s,
\end{equation}
we have
\begin{align}
\tr\!\left(Z^\dagger\Phi^A(W)\right) &=\sum_{s=1}^d\tr\!\left(Z^\dagger V^s W(A^s)^\dagger\right) \nonumber\\ &=\sum_{s=1}^d\tr\!\left((A^s)^\dagger Z^\dagger V^s W\right) \nonumber\\
&=\tr\!\left(A^\dagger(I_d\otimes Z^\dagger)VW\right).
\label{eq:mps-environment-duality-identity}
\end{align}

\noindent By Schatten norm duality and submultiplicativity,
\begin{align}
\left|\tr\!\left(A^\dagger(I_d\otimes Z^\dagger)VW\right) \right| &\leq \left\|A^\dagger(I_d\otimes Z^\dagger)V\right\|_\infty
\|W\|_1 \nonumber\\&\leq\|A\|_\infty \|I_d\otimes Z^\dagger\|_\infty\|V\|_\infty\|W\|_1 \nonumber\\ &\leq \|W\|_1,
\label{eq:mps-environment-duality-bound}
\end{align}
The same duality calculation then gives
\begin{equation} \left|\operatorname{tr}\!\left[Z^\dagger\left(\Phi^A(X)-\Phi^{A'}(X)\right)\right]\right|\leq\|A-A'\|_\infty\|X\|_1. \label{eq:mps-environment-local-duality-bound} \end{equation}
Taking the supremum over $Z$ proves \eqref{eq:mps-environment-local-lipschitz}.
\end{proof}

\begin{theorem}
\label{thm:mps-bounded-target-optimizer}
\label{bond-k-thm}
Let $v$ be an explicitly given MPS of bond dimension at most $K$ and Euclidean norm at most one. For every $D\geq 1$ and every $\eta\in(0,1)$, there is a deterministic algorithm which outputs a normalized state $\widehat\psi\in\mps_D$ such that
\begin{equation} |\langle v|\widehat\psi\rangle|\geq\max_{\psi\in\mps_D}|\langle v|\psi\rangle|-\eta. \label{eq:mps-bounded-target-guarantee} \end{equation}
The runtime is bounded by
\begin{equation} n\left(\frac{Cn\sqrt D}{\eta}\right)^{O(KD+dD^2)} \label{eq:mps-bounded-target-runtime} \end{equation}
for a universal constant $C$.
\end{theorem}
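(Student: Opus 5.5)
We follow the dynamic-programming outline of Section~\ref{sec:results-explicit-target}. First put the target $v$ in the sequential isometric form of Lemma~\ref{lem:mps-sequential-isometric-form}, with local isometries $V_i$ and bonds $k_i\leq K$. The norm of $v$ is absorbed into the final tensor, or equivalently we note that all environments are scaled by $\|v\|_2\leq1$. Every candidate $\psi\in\mps_D$ is likewise represented by isometries $A_i:\mathbb C^{r_{i-1}}\to\mathcal H_i\otimes\mathbb C^{r_i}$. By \eqref{eq:mps-cross-environment-update}, the overlap $\langle\psi|v\rangle=X_n$ depends on the prefix only through $X_i\in\mathbb C^{k_i\times r_i}$. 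By Lemma~\ref{lem:mps-environment-stability} applied inductively from $X_0=1$, we have $\|X_i\|_1\leq1$ for every candidate.

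\textbf{Discretization.} For each site, fix a finite operator-norm $h$-net $\mathcal A_i$ of the compact set of isometries $\mathbb C^{r_{i-1}}\to\mathbb C^d\otimes\mathbb C^{r_i}$. The net elements are chosen to be isometries themselves, by projecting net points of the ambient ball onto the isometry set via polar decomposition; this at most doubles the radius. Its size is $(C\sqrt D/h)^{O(dD^2)}$. Also fix a trace-norm $q$-net $\mathcal X_i$ of the trace-norm unit ball in $\mathbb C^{k_i\times r_i}$, of size $(C/q)^{O(KD)}$, where the real dimension is $2k_ir_i$.

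\textbf{The dynamic program.} It maintains a dictionary from cells of $\mathcal X_i$ to one exact prefix $(A_1,\ldots,A_i)$ together with its exact environment $X_i$. Initially the dictionary holds the trivial prefix with $X_0=1$. At step $i$, we extend every stored prefix by every $A\in\mathcal A_i$ and compute $\Phi^{A}(X_{i-1})$ exactly. We round the result to its nearest net point and keep one representative per occupied cell. At the end we output the prefix maximizing the exact $|X_n|$. The output is a contraction of isometries, hence normalized and in $\mps_D$.

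\textbf{Correctness.} Let $\psi^\star$ be an optimizer with isometries $A_i^\star$. We track a surviving prefix whose exact environment $Y_i$ satisfies $\|Y_i-X_i^\star\|_1\leq e_i$. Given $Y_{i-1}$, extend it by the net element $A\in\mathcal A_i$ closest to $A_i^\star$. By Lemma~\ref{lem:mps-environment-stability},
\[
\|\Phi^{A}(Y_{i-1})-\Phi^{A_i^\star}(X^\star_{i-1})\|_1\leq e_{i-1}+h\|X^\star_{i-1}\|_1\leq e_{i-1}+h .
\]
That extension lands in some cell. The representative kept there has an environment within $2q$ of it, since both points are within $q$ of the net point. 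This gives $e_i\leq e_{i-1}+h+2q$. With $h=\eta/(4n)$ and $q=\eta/(8n)$ we get $e_n\leq\eta/2$, so $|\langle v|\widehat\psi\rangle|\geq|X_n^\star|-\eta/2$.

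The key point, which I regard as the only delicate step, is that the retained representative is an \emph{exact} prefix whose exact environment is then propagated further. The rounding therefore never enters the state, only the merging decision. The contraction property keeps the errors additive rather than multiplicative.

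\textbf{Runtime.} Each step processes at most $|\mathcal X_{i-1}|\cdot|\mathcal A_i|$ extensions, each with $\poly(d,K,D)$ cost plus nearest-point lookup. Writing net sizes in terms of $Cn\sqrt D/\eta$ gives
\[
|\mathcal X|\cdot|\mathcal A|\leq\left(\frac{Cn\sqrt D}{\eta}\right)^{O(KD+dD^2)} .
\]
Summing over $n$ sites yields \eqref{eq:mps-bounded-target-runtime}. Nearest-point lookup can be done by using a grid net in which rounding is coordinatewise, after which the dimension factors are absorbed into $C$.

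The main obstacle I expect is the bookkeeping of the nets. The isometry net must consist of genuine isometries, which requires the polar-projection argument. The environment net must be searchable efficiently; a coordinate grid of the trace-norm ball, with coordinate spacing adjusted by $\sqrt{KD}$ factors absorbed into the exponent, handles this.
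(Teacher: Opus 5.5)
Your proposal is correct and follows essentially the same dynamic program as the paper. It keeps one exact prefix per environment cell, uses the contraction and local-Lipschitz estimates of Lemma~\ref{lem:mps-environment-stability} to get $e_i\leq e_{i-1}+h+2q$, and takes $h=\eta/(4n)$, $q=\eta/(8n)$. The remaining differences are cosmetic:
\begin{itemize}
\item The paper normalizes $v$ after disposing of the case $\|v\|_2\leq\eta$, whereas you absorb the norm into a contractive boundary tensor.
\item The paper builds the environment net as a Frobenius $q/\sqrt D$-net and converts it via $\|X\|_1\leq\sqrt D\|X\|_2$, whereas you use a trace-norm net directly.
\item You additionally make explicit the polar-factor construction of the isometry net.
\end{itemize}
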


Algorithm~\ref{alg:mps-bounded-target-dp} implements the
bounded-target optimizer.

\begin{algorithm}[htbp]
\caption{Proper MPS optimization against a bounded-bond target}
\label{alg:mps-bounded-target-dp}
\small
\begin{algorithmic}[1]
\Require An explicit MPS $v$ of bond at most $K$
         with $\|v\|_2\leq1$;
         a candidate bond $D\geq1$; $\eta\in(0,1)$.
\Ensure A normalized $\widehat\psi\in\mps_D$ satisfying Theorem \ref{bond-k-thm}.

\If{$\|v\|_2\leq\eta$}
    \State \Return a fixed normalized product state.
\EndIf

\State Put $w:=v/\|v\|_2$ in sequential isometric form with local tensors $V_1,\ldots,V_n$.

\State $h\gets\eta/(4n)$; $q\gets\eta/(8n)$.

\State For each site $i$, construct an operator-norm $h$-net $\mathcal N_i^{\mathrm{loc}}$ of the candidate isometries.

\State For each cut $i<n$, construct a Frobenius-norm $q/\sqrt D$-net $\mathcal N_i^{\mathrm{env}}$ of the Frobenius unit ball in $\mathbb C^{k_i\times r_i}$.

\State Initialize $\mathcal R_0$ with the empty tensor prefix and exact environment $X_0=1$.

\For{$i=1,\ldots,n-1$}
    \State Initialize an empty table $\mathcal R_i$, indexed by $\mathcal N_i^{\mathrm{env}}$.

    \ForAll{retained pairs $(\mathbf A,X)$ in $\mathcal R_{i-1}$ and tensors $A_i\in\mathcal N_i^{\mathrm{loc}}$}

        \State $Y\gets\sum_{s=1}^{d}V_i^sX(A_i^s)^\dagger$.

        \State Choose a nearest point
               $Z\in\mathcal N_i^{\mathrm{env}}$ to $Y$ in Frobenius norm.

        \If{cell $Z$ of $\mathcal R_i$ is empty}
            \State Store $((\mathbf A,A_i),Y)$ in cell $Z$.
        \EndIf
    \EndFor
\EndFor

\State Extend every pair $(\mathbf A,X)$ in $\mathcal R_{n-1}$ by every $A_n\in\mathcal N_n^{\mathrm{loc}}$.

\State For each extension, compute $X_n:=\sum_{s=1}^{d}V_n^sX(A_n^s)^\dagger$.

\State \Return the MPS defined by the extension with largest $|X_n|$.
\end{algorithmic}
\end{algorithm}

\begin{proof}
If $\|v\|_2\leq\eta$, return any fixed product state. Since $\max_{\psi\in\mps_D}|\langle v|\psi\rangle|\leq\|v\|_2\leq\eta,$
the statement is immediate. Otherwise, set $w:=\frac{v}{\|v\|_2}.$ An additive $\eta$ optimizer for $w$ is also an additive $\eta$ optimizer for $v$ since $\|v\|_2\leq 1$. We may therefore assume that $v$ is normalized. Put $v$ in the form of Lemma~\ref{lem:mps-sequential-isometric-form}. We search over candidate MPSs in the same padded form. For every site $i$, choose an operator-norm $h$-net $\mathcal N_i^{\mathrm{loc}}$ of the candidate isometries, where
\begin{equation} |\mathcal N_i^{\mathrm{loc}}|\leq\left(\frac{C\sqrt D}{h}\right)^{O(dD^2)}. \label{eq:mps-local-net-size} \end{equation}
For every cut $1\leq i\leq n-1$, choose a Frobenius-norm $q/\sqrt D$-net $\mathcal N_i^{\mathrm{env}}$ of the Frobenius unit ball in $\mathbb C^{k_i\times r_i}$. By Lemma~\ref{lem:mps-environment-stability}, $\|X_i\|_1\leq 1$ for every exact environment $X_i$. Hence $\|X_i\|_2\leq 1$, and since $X_i$ has at most $D$ columns,
\begin{equation} \|X-Y\|_1\leq\sqrt D\|X-Y\|_2. \label{eq:mps-trace-frobenius-comparison} \end{equation}
Thus $\mathcal N_i^{\mathrm{env}}$ is a trace-norm $q$-net for all possible environments and
\begin{equation} |\mathcal N_i^{\mathrm{env}}|\leq\left(\frac{C\sqrt D}{q}\right)^{O(KD)}. \label{eq:mps-environment-net-size} \end{equation}

The dynamic program begins at cut $0$ with the empty prefix and $X_0=1$. Suppose retained prefixes have been constructed through cut $i-1$, where $1\leq i\leq n-1$. For every retained prefix and every $A_i\in\mathcal N_i^{\mathrm{loc}}$, append $A_i$ and compute the exact next environment using \eqref{eq:mps-cross-environment-update}. Assign the resulting environment to a nearest point of $\mathcal N_i^{\mathrm{env}}$, and retain one exact prefix in every occupied cell. Two exact environments in the same cell differ by at most $2q$ in trace norm.

At the final site, for every retained prefix and every $A_n\in\mathcal N_n^{\mathrm{loc}}$, compute the exact scalar overlap $X_n$ and return the normalized candidate with largest absolute overlap. Every candidate is normalized because all local tensors are isometries and $r_0=r_n=1$.

Let $\psi^\star$ be an optimal bond-$D$ MPS, with local isometries $A_1^\star,\ldots,A_n^\star$ and exact environments $X_0^\star,\ldots,X_n^\star$. At every site choose a local net tensor $\widetilde A_i$ satisfying $\|\widetilde A_i-A_i^\star\|_\infty\leq h. $ Follow this path through the dynamic program. Suppose that after cut $i-1$ a retained prefix has exact environment $\widehat X_{i-1}$ and define
\begin{equation} e_{i-1}:=\|\widehat X_{i-1}-X_{i-1}^\star\|_1. \label{eq:mps-environment-error} \end{equation}
Before cell merging at cut $i$, the candidate environment $Y_i:=\Phi^{\widetilde A_i}(\widehat X_{i-1})$ satisfies, by Lemma~\ref{lem:mps-environment-stability},
\begin{equation} \|Y_i-X_i^\star\|_1\leq e_{i-1}+h. \label{eq:mps-environment-error-before-merging} \end{equation}
After cell merging, $e_i\leq e_{i-1}+h+2q.$ Since $e_0=0$, induction gives $e_{n-1}\leq(n-1)(h+2q).$ At the last site there is no merging, so the final overlap error is at most
\begin{equation} |\widehat X_n-X_n^\star|\leq nh+2(n-1)q. \label{eq:mps-final-overlap-error} \end{equation}
Choose
\begin{equation} h:=\frac{\eta}{4n},\qquad q:=\frac{\eta}{8n}. \label{eq:mps-net-parameters} \end{equation}
Then the right-hand side of \eqref{eq:mps-final-overlap-error} is at most $\eta/2$. By the reverse triangle inequality, the surviving path has absolute overlap at least
\begin{equation} |\langle v|\psi^\star\rangle|-\frac{\eta}{2}. \label{eq:mps-surviving-path-overlap} \end{equation}
The algorithm returns the candidate with largest absolute overlap, so \eqref{eq:mps-bounded-target-guarantee} follows. At every cut the number of retained prefixes is at most \eqref{eq:mps-environment-net-size}, and each is extended using at most \eqref{eq:mps-local-net-size} local tensors. Substituting \eqref{eq:mps-net-parameters} proves \eqref{eq:mps-bounded-target-runtime}.
\end{proof}

\subsection{Proper optimization against an arbitrary explicit target}

\begin{theorem}
\label{thm:mps-arbitrary-target-optimizer}
\label{proper-compress-thm-mps}
Let $u$ be a normalized explicit MPS of arbitrary bond dimension $B$. For every target bond dimension $D\geq 1$ and every $\eta\in(0,1)$, there is a deterministic algorithm which outputs a normalized state $\widehat\psi\in\mps_D$ such that
\begin{equation} |\langle u|\widehat\psi\rangle|^2\geq\max_{\psi\in\mps_D}|\langle u|\psi\rangle|^2-\eta. \label{eq:mps-arbitrary-target-guarantee} \end{equation}
Its runtime is bounded by
\begin{equation} n\left(\frac{Cn\sqrt D}{\eta}\right)^{O(D^2/\eta^2+dD^2)}\poly(d,B,D,1/\eta) \label{eq:mps-arbitrary-target-runtime} \end{equation}
for a universal constant $C$.
\end{theorem}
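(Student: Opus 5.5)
The plan is to compress the target with Theorem~\ref{thm:mps-comparator-dual-compression}, run the bounded-bond optimizer of Theorem~\ref{thm:mps-bounded-target-optimizer} on the compressed target, and then transfer the guarantee back to $u$.

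\textbf{Compression.} Choose the smallest integer $K$ with $K+1\geq 64D/\eta^2$, so $K=O(D/\eta^2)$. Applying Theorem~\ref{thm:mps-comparator-dual-compression} to $u$ (which has $\|u\|_2=1$) produces an explicit MPS $\widetilde u_K$ of bond at most $K$. It satisfies $\|\widetilde u_K\|_2\leq1$ and
\[
\sup_{\psi\in\mps_D}|\langle\psi|u-\widetilde u_K\rangle|\leq\sqrt{D/(K+1)}\leq\eta/8 .
\]
This costs $\poly(n,d,B,K)=\poly(n,d,B,D,1/\eta)$ arithmetic operations.

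\textbf{Optimization on the compressed target.} Since $\widetilde u_K$ has norm at most one, Theorem~\ref{thm:mps-bounded-target-optimizer} applies with accuracy $\eta/4$. It returns a normalized $\widehat\psi\in\mps_D$ with
\[
|\langle\widetilde u_K|\widehat\psi\rangle|\geq\max_{\psi\in\mps_D}|\langle\widetilde u_K|\psi\rangle|-\eta/4 .
\]
The running time is $n(Cn\sqrt D/\eta)^{O(KD+dD^2)}$, and substituting $K=O(D/\eta^2)$ gives the exponent $O(D^2/\eta^2+dD^2)$. Adding the compression cost yields \eqref{eq:mps-arbitrary-target-runtime}.

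\textbf{Transfer back to $u$.} Let $M:=\max_{\psi\in\mps_D}|\langle u|\psi\rangle|\leq1$, attained by some $\psi^\star$. The dual-norm bound controls the overlap change for $\widehat\psi$ and for $\psi^\star$ alike. Applying the triangle inequality twice gives
\[
|\langle u|\widehat\psi\rangle|\geq|\langle\widetilde u_K|\widehat\psi\rangle|-\eta/8\geq|\langle\widetilde u_K|\psi^\star\rangle|-\eta/4-\eta/8\geq M-\eta/2 .
\]
If $M\leq\eta/2$, then $M^2\leq\eta$ and the claim is trivial. Otherwise the difference of squares gives
\[
M^2-|\langle u|\widehat\psi\rangle|^2\leq(M-|\langle u|\widehat\psi\rangle|)\,(M+|\langle u|\widehat\psi\rangle|)\leq(\eta/2)\cdot2=\eta,
\]
which is \eqref{eq:mps-arbitrary-target-guarantee}.

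\textbf{Main obstacle.} The only delicate point is the interface between the two theorems. Theorem~\ref{thm:mps-bounded-target-optimizer} needs an explicit representation of $\widetilde u_K$, which is given by the stored maps $C_i^{\mathrm{keep}}$ and $T_n$ from Algorithm~\ref{alg:mps-dual-compression}. It also needs $\|\widetilde u_K\|_2\leq1$, which Theorem~\ref{thm:mps-comparator-dual-compression} supplies. Theorem~\ref{thm:mps-bounded-target-optimizer} already normalizes its input internally, or returns a product state when the target norm is at most $\eta$, so no further work is needed. If $\widetilde u_K=0$, returning any product state still meets the bound, because in that case $M\leq\eta/8$.
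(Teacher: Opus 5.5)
Your proposal is correct and follows essentially the same route as the paper. You compress with $K+1\geq 64D/\eta^2$ to get comparator-dual error $\eta/8$, run the bounded-target optimizer at amplitude accuracy $\eta/4$, transfer back by the triangle inequality with total amplitude loss $\eta/2$, and finish with the difference-of-squares bound, all exactly as in the paper. Your separate treatment of $M\leq\eta/2$ and of $\widetilde u_K=0$ is harmless but unnecessary, since $(a-b)(a+b)\leq\eta$ already holds for all $a,b\in[0,1]$ with $b\geq a-\eta/2$.
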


\begin{proof}
Choose $K+1\geq\frac{64D}{\eta^2}.$ By Theorem~\ref{thm:mps-comparator-dual-compression}, construct an MPS $\widetilde u_K$ of bond at most $K$ such that
\begin{equation} \sup_{\psi\in\mps_D}|\langle\psi|u-\widetilde u_K\rangle|\leq\sqrt{\frac{D}{K+1}}\leq\frac{\eta}{8}, \label{eq:mps-arbitrary-target-compression-error} \end{equation}
and $\|\widetilde u_K\|_2\leq 1$. Define $ \mathcal A_D(x):=\max_{\psi\in\mps_D}|\langle x|\psi\rangle|.$ Then
\begin{equation} |\mathcal A_D(u)-\mathcal A_D(\widetilde u_K)|\leq\frac{\eta}{8}. \label{eq:mps-overlap-functional-stability} \end{equation}
Apply Theorem~\ref{thm:mps-bounded-target-optimizer} to $\widetilde u_K$ with amplitude error $\eta/4$. This gives $\widehat\psi\in\mps_D$ satisfying
\begin{equation} |\langle\widetilde u_K|\widehat\psi\rangle|\geq\mathcal A_D(\widetilde u_K)-\frac{\eta}{4}. \label{eq:mps-compressed-target-optimization} \end{equation}
Therefore
\begin{align} |\langle u|\widehat\psi\rangle|&\geq|\langle\widetilde u_K|\widehat\psi\rangle|-|\langle u-\widetilde u_K|\widehat\psi\rangle| \nonumber\\ &\geq\mathcal A_D(u)-\frac{\eta}{2}. \label{eq:mps-arbitrary-target-amplitude-bound} \end{align}
If $a,b\in[0,1]$ and $b\geq a-\eta/2$, then $a^2-b^2=(a-b)(a+b)\leq\eta.$ Taking $a=\mathcal A_D(u)$ and $b=|\langle u|\widehat\psi\rangle|$ proves \eqref{eq:mps-arbitrary-target-guarantee}. The runtime follows from $K=O(D/\eta^2)$, Theorem~\ref{thm:mps-comparator-dual-compression}, and Theorem~\ref{thm:mps-bounded-target-optimizer}.
\end{proof}

\subsection{Proper agnostic learning of MPSs}

\begin{theorem}[Proper agnostic learning of MPSs]
\label{thm:proper-agnostic-mps}
\label{full-proper-agnostic-mps}
For every $n,d,D\in\mathbb N$ and every $\varepsilon,\delta\in(0,1/4)$, there is a randomized quantum algorithm which, given copies of an arbitrary density operator $\rho$ on $(\mathbb C^d)^{\otimes n}$, outputs a classical description of a normalized state $\widehat\psi\in\mps_D $ such that, with probability at least $1-\delta$,
\begin{equation} \langle\widehat\psi|\rho|\widehat\psi\rangle\geq\opt_D(\rho)-\varepsilon. \label{eq:mps-proper-agnostic-guarantee} \end{equation}
The copy complexity is bounded by $\poly\!\left(n,d,D,1/\varepsilon,\log(1/\delta)\right).$
The runtime is bounded by $\poly\!\left(n,d,D,1/\varepsilon,\log(1/\delta)\right)
\left(\frac{Cn\sqrt D}{\varepsilon}\right)^{
O(D^2/\varepsilon^2+dD^2)}.$
In particular, the runtime is polynomial in $n$ for
fixed $d,D,\varepsilon$.
\end{theorem}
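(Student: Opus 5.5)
The plan is to instantiate Theorem~\ref{thm:general-improper-to-proper} with $\mathcal C:=\mps_D$ and $\mathcal V_R$ the vectors of cut rank at most $R$ across every cut. We then bound the costs by tracking bond dimensions through Algorithm~\ref{alg:proper-agnostic-learning}.

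\textbf{Verifying the properization pair.} Check the five items of Definition~\ref{def:properization-data} in turn.
\begin{itemize}
\item Compactness of $\mps_D$ follows from Lemma~\ref{lem:mps-cut-rank-characterization}. Take $R'=D$ and $\phi_0=|0\rangle^{\otimes n}$.
\item Lemma~\ref{lem:mps-linear-combinations} gives $\Lambda(R_1,\ldots,R_q)=\sum_a R_a$.
\item Theorem~\ref{thm:mps-improper-learner} supplies the improper learner, with output bond $\widetilde R\le dn^2p(D,16/\theta)=\poly(n,d,D,1/\varepsilon)$ since $\theta=(\varepsilon/16)^2$.
\item Theorem~\ref{thm:mps-arbitrary-target-optimizer} is the pure-target optimizer.
\item Lemma~\ref{lem:mps-explicit-operations} handles the explicit operations.
\end{itemize}
Correctness, namely the guarantee \eqref{eq:mps-proper-agnostic-guarantee} with probability at least $1-\delta$, is then immediate from Theorem~\ref{thm:general-improper-to-proper}.

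\textbf{Bond-dimension bookkeeping.} This is the step I expect to be the main obstacle. Lemma~\ref{lem:mps-explicit-operations} is only polynomial in the bond dimensions of its inputs, so a naive recursive definition would let bonds grow.

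The danger comes from $q_j=Q_jb_j=b_j-\sum_{\ell\le j}\langle s_\ell|b_j\rangle s_\ell$. If each $s_\ell$ were represented through the previous $s$'s, bonds could compound over the $m=O(\varepsilon^{-2})$ rounds. To avoid this, I would maintain each $s_\ell$ as an explicit coefficient vector over the original learner outputs $b_0,\ldots,b_{\ell-1}$. This is Gram--Schmidt done in coefficient space, using the classically computable Gram matrix $\langle b_a|b_c\rangle$.

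With this representation, every $s_\ell$ has bond at most $m\widetilde R$. Every polarization state $(s_a\pm s_b)/\sqrt2$ and every pure target $w_c=\sum_j\sqrt{\lambda_j}c_jSg_j/\|v_c\|$ also has bond $B\le 2m\widetilde R=\poly(n,d,D,1/\varepsilon)$.

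\textbf{Copy complexity.} The copy count is then a sum of four contributions.
\begin{itemize}
\item $O(\varepsilon^{-2})$ learner calls, each on postselected residual copies costing a factor $O(1/\theta)$ in input copies.
\item Estimates of $\mu_j$ and $a_j$ to accuracy $\theta/16$.
\item The $m^2$ support-matrix entries to accuracy $O(\xi/m)$, each needing $\poly(1/\varepsilon)\log(m^2/\delta)$ copies.
\item Measurements of $P_j$ and of $|b_j\rangle\langle b_j|$.
\end{itemize}
Failure probabilities are split by a union bound as in the proof of Theorem~\ref{thm:general-improper-to-proper}. Each quantum operation costs $\poly(n,d,B,m)$, so both the copies and the quantum runtime are $\poly(n,d,D,1/\varepsilon,\log(1/\delta))$.

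\textbf{Classical runtime.} The classical part consists of three pieces.
\begin{itemize}
\item Diagonalizing $\widehat G$.
\item Building the net of size $(80/\zeta)^{2\ell}$ with $\ell\le 32/(3\varepsilon)$, which is $(C/\varepsilon)^{O(1/\varepsilon)}$.
\item Calling Theorem~\ref{thm:mps-arbitrary-target-optimizer} with $\eta=\zeta/4=\varepsilon/16$ once per net point.
\end{itemize}
Each call costs $n(Cn\sqrt D/\varepsilon)^{O(D^2/\varepsilon^2+dD^2)}\poly(d,B,D,1/\varepsilon)$. The net-size factor $(C/\varepsilon)^{O(1/\varepsilon)}$ is absorbed into the exponent $O(D^2/\varepsilon^2)$, since $D\ge1$ and $n\sqrt D\ge1$. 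Evaluating the scores $F_c$ needs only MPS overlaps. The product of these factors gives the stated runtime.
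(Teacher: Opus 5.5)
Your proposal is correct and follows essentially the same route as the paper. You instantiate Theorem~\ref{thm:general-improper-to-proper} with $\mathcal C=\mps_D$, $R'=D$, $\Lambda=\sum_a R_a$ and the same supporting results, and you control the bonds exactly as the paper does: every direction $s_\ell$ and every pure target is stored as a linear combination of the original learner outputs, so all bonds stay at most $m\widetilde R$ (your factor $2$ for the polarization states is harmless, since those states also lie in the span of the $b$'s). The runtime then follows, as in the paper, by multiplying the $(C/\varepsilon)^{O(1/\varepsilon)}$ net size by the cost of Theorem~\ref{thm:mps-arbitrary-target-optimizer} at $\eta=\varepsilon/16$ and absorbing the net factor into the exponent.
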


\begin{proof}
Take
\begin{equation}
\mathcal C:=\mps_D,\qquad \mathcal V_R:=\left\{u\in\mathcal H: \operatorname{rank}_{\mathcal H_{1:i}\,|\,\mathcal H_{i+1:n}}(u)\leq R \ \text{for every }1\leq i\leq n-1 \right\}.
\label{eq:mps-properization-pair}
\end{equation}
By Lemma~\ref{lem:mps-cut-rank-characterization}, $\mathcal V_R$ is precisely the class of possibly unnormalized open-boundary MPSs of bond dimension at most $R$. In particular, $\mathcal A_R=\mps_R$, and hence $\mathcal C=\mathcal A_D$, so the embedding condition holds with
$R'=D$. Lemma~\ref{lem:mps-linear-combinations} verifies the linear combination property in Definition~\ref{def:properization-data} with
\begin{equation} \Lambda(R_1,\ldots,R_q):=\sum_{a=1}^q R_a. \label{eq:mps-lambda-function} \end{equation}
Theorem~\ref{thm:mps-improper-learner} supplies the improper ambient learner, Theorem~\ref{thm:mps-arbitrary-target-optimizer} supplies the proper pure-target optimizer, and Lemma~\ref{lem:mps-explicit-operations} supplies the remaining explicit operations. $|0\rangle^{\otimes n}$ belongs to $\mathcal C$. Therefore every hypothesis of Theorem~\ref{thm:general-improper-to-proper} is verified, giving \eqref{eq:mps-proper-agnostic-guarantee}.

It remains to bound the complexity. Every improper-learning call uses the fixed comparator bond $D$ and error $\gamma=\theta/16=\Theta(\varepsilon^2)$. Thus every learner output has bond dimension at most
\begin{equation*}
B_0:=dn^2p(D,16/\theta) =\poly(n,d,D,1/\varepsilon).
\end{equation*}
There are $m=O(\varepsilon^{-2})$ accepted outputs $b_0,\ldots,b_{m-1}$, and $\operatorname{span}\{s_1,\ldots,s_m\}= \operatorname{span}\{b_0,\ldots,b_{m-1}\}.$ Store every orthonormal direction as a linear combination of these original outputs. This avoids repeatedly expanding linear combinations of previously expanded directions. Lemma~\ref{lem:mps-linear-combinations} then gives bond at most $mB_0$ for every $s_a$ and every pure target used in the coefficient search. The residual estimates and the improper-learning calls use polynomially many copies. Estimating the $m\times m$ support matrix also has polynomial cost since 
entrywise accuracy $O(\varepsilon/m)$ suffices for operator-norm accuracy $O(\varepsilon)$. Together with Lemma~\ref{lem:mps-explicit-operations} and the bounded postselection overhead, this proves the stated copy bound. All subsequent optimization is classical. The spectral cutoff leaves $O(\varepsilon^{-1})$ eigenvectors, so there are $(C/\varepsilon)^{O(1/\varepsilon)}$ pure-target calls, each with error $\Theta(\varepsilon)$ and input bond at most $mB_0$. Applying Theorem~\ref{thm:mps-arbitrary-target-optimizer} and absorbing the coefficient-net factor into its running-time bound proves the theorem.
\end{proof}
\begin{corollary}
\label{cor:mps-optimal-scores}
Given copies of $\rho$, a maximum bond dimension $D_{\max}\geq1$, and $\varepsilon,\delta\in(0,1/4)$, one can use $\poly\!\left(n,d,D_{\max},1/\varepsilon,\log(1/\delta)\right)$
copies to construct a single classical data set with the following property. With probability at least $1-\delta$, for every $1\leq D\leq D_{\max}$, classical processing of this same data set returns $\widehat\psi_D\in\mps_D$ and $\widehat v_D\in[0,1]$ such that
\begin{equation*}
\langle\widehat\psi_D|\rho|\widehat\psi_D\rangle
\geq\opt_D(\rho)-\varepsilon,\qquad|\widehat v_D-\opt_D(\rho)|\leq\varepsilon.
\end{equation*}
\end{corollary}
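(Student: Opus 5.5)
The plan is to run the quantum stage of Theorem~\ref{thm:general-improper-to-proper} once, using the largest comparator class $\mathcal C_{\max}:=\mps_{D_{\max}}$. We take the parameters \eqref{eq:general-main-parameters} with $\varepsilon$ replaced by $\varepsilon/2$. The key observation is that $\mps_D\subseteq\mps_{D_{\max}}$ for every $D\leq D_{\max}$.

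\emph{Quantum stage.} Theorem~\ref{thm:general-finite-relevant-subspace} applied with $\mathcal C=\mps_{D_{\max}}$ produces orthonormal explicit states $s_1,\ldots,s_m$ with $\langle Q\phi|\rho|Q\phi\rangle\leq\theta$ for all $\phi\in\mps_{D_{\max}}$. We then estimate $\widehat G$ with $\|\widehat A-A\|_\infty\leq\xi$. The classical data set is the list of explicit MPS descriptions of the $s_a$ together with $\widehat G$. By the proof of Theorem~\ref{thm:proper-agnostic-mps}, with $D_{\max}$ in place of $D$, the copy count is $\poly(n,d,D_{\max},1/\varepsilon,\log(1/\delta))$, and all steps succeed with probability at least $1-\delta$. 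The residual bound is inherited by every subclass, so on this single event we get, for all $D\leq D_{\max}$ simultaneously,
\begin{equation*}
\sup_{\phi\in\mps_D}|\langle\phi|\rho|\phi\rangle-\langle\phi|A|\phi\rangle|\leq2\sqrt\theta .
\end{equation*}
It follows that $|\opt_D(\rho)-\opt_D(A)|\leq2\sqrt\theta$ and $|\opt_D(A)-\opt_D(\widehat A_\tau)|\leq\xi+\tau$. No failure probability needs to be split across different $D$, because no further quantum data are used.

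\emph{Classical stage, for each $D$.} We apply Lemma~\ref{lem:general-spectral-cutoff} with $\mathcal C=\mps_D$, using Theorem~\ref{thm:mps-arbitrary-target-optimizer} at bond $D$ as the pure-target optimizer. This is purely classical, since it only needs $\widehat G$ and the explicit $s_a$. It returns $\widehat\psi_D\in\mps_D$ with $\langle\widehat\psi_D|\rho|\widehat\psi_D\rangle\geq\opt_D(\rho)-13\varepsilon/32$. For the value estimate we set
\begin{equation*}
\widehat v_D:=\min\{1,\langle\widehat\psi_D|\widehat A_\tau|\widehat\psi_D\rangle\},
\end{equation*}
which can be computed exactly from $\widehat G$ and overlaps of explicit MPSs.

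\emph{Two-sided bound on $\widehat v_D$.} For the lower bound, the proof of Lemma~\ref{lem:general-spectral-cutoff} gives $\langle\widehat\psi_D|\widehat A_\tau|\widehat\psi_D\rangle\geq\opt_D(\widehat A_\tau)-\zeta\geq\opt_D(\rho)-2\sqrt\theta-(\xi+\tau)-\zeta$. For the upper bound, $\langle\widehat\psi_D|\widehat A_\tau|\widehat\psi_D\rangle\leq\opt_D(\widehat A_\tau)\leq\opt_D(\rho)+2\sqrt\theta+\xi+\tau$. With the halved parameters, both errors are below $\varepsilon$. Clipping to $[0,1]$ does not increase the error, because $\opt_D(\rho)\in[0,1]$ and $\widehat A_\tau\succeq0$.

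The main thing to verify is that the relevant subspace built for $D_{\max}$ certifies every smaller class uniformly on one success event. This holds because \eqref{eq:general-residual-smallness} is a statement quantified over all comparators in $\mps_{D_{\max}}$, and $\mps_D\subseteq\mps_{D_{\max}}$.
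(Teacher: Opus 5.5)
Your proposal is correct and matches the paper's argument: build the core once against $\mps_{D_{\max}}$, inherit the uniform score approximation on every $\mps_D\subseteq\mps_{D_{\max}}$ on the single success event, and then run the classical spectral-cutoff and pure-target optimization separately for each $D$. The differences are cosmetic. The paper keeps the unhalved parameters and shifts the score estimate by $\zeta/2$, giving error $13\varepsilon/32$. Your halving of $\varepsilon$ and unshifted estimate give error $a+\zeta<\varepsilon$ instead, which is also fine.
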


\begin{proof}
Construct the finite core using the fixed comparator bond $D_{\max}$, and use the parameters in \eqref{eq:general-main-parameters}. The resulting truncated estimated core obeys
\begin{equation*}
\sup_{\psi\in\mps_{D_{\max}}}\left|\langle\psi|\rho|\psi\rangle
-\langle\psi|\widehat A_\tau|\psi\rangle\right|\leq a,
\qquad a:=2\sqrt\theta+\xi+\tau.
\end{equation*}
This same bound holds for every smaller bond dimension. For each $D$, optimize the explicit objective $\widehat A_\tau$ over $\mps_D$ to additive error $\zeta$. Then
\begin{equation*}
\langle\widehat\psi_D|\widehat A_\tau|\widehat\psi_D\rangle-a \leq\opt_D(\rho) \leq \langle\widehat\psi_D|\widehat A_\tau|\widehat\psi_D\rangle+a+\zeta.
\end{equation*}
Let $\widehat v_D$ be $\langle\widehat\psi_D|\widehat A_\tau|\widehat\psi_D\rangle+\zeta/2$ clipped to $[0,1]$. Its error is at most $a+\zeta/2=13\varepsilon/32$. The returned state's score is within $2a+\zeta=13\varepsilon/16$ of $\opt_D(\rho)$. All statements hold on the same successful event.
\end{proof}

\noindent The score estimate also gives a tolerant test for the desired thresholds $t_-<t_+$: choosing the estimation error below $(t_+-t_-)/2$ distinguishes $\opt_D(\rho)\geq t_+$ from $\opt_D(\rho)\leq t_-$ under this promise.

\begin{corollary}[Improper learning of periodic-boundary MPSs]
\label{cor:periodic-mps-improper-learning}
Let $\mps_D^{\mathrm{per}}$ denote the normalized periodic-boundary MPSs of bond dimension at most $D$. Given copies of an arbitrary density operator $\rho$ and $\varepsilon,\delta\in(0,1/4)$, one can output a normalized open-boundary MPS $\widehat\psi\in\mps_{D^2}$ such that,
with probability at least $1-\delta$,
\begin{equation*}
\langle\widehat\psi|\rho|\widehat\psi\rangle\geq\sup_{\phi\in\mps_D^{\mathrm{per}}}\langle\phi|\rho|\phi\rangle-\varepsilon.
\end{equation*}
The complexity is bounded by Theorem~\ref{thm:proper-agnostic-mps} with $D$ replaced by $D^2$.
\end{corollary}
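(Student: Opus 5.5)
The plan is to reduce the periodic class to the open-boundary class by an embedding at the level of cut ranks, and then invoke Theorem~\ref{thm:proper-agnostic-mps} with bond $D^2$. The key observation is that a periodic-boundary MPS $\phi=\sum_{s}\operatorname{tr}(A_n^{s_n}\cdots A_1^{s_1})|s_1\cdots s_n\rangle$ with $D\times D$ matrices has Schmidt rank at most $D^2$ across every cut $\mathcal H_{1:i}\,|\,\mathcal H_{i+1:n}$. To see this, write the trace as $\sum_{\alpha,\beta=1}^D (A_n^{s_n}\cdots A_{i+1}^{s_{i+1}})_{\alpha\beta}(A_i^{s_i}\cdots A_1^{s_1})_{\beta\alpha}$. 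This exhibits $\phi$ as a sum of at most $D^2$ product vectors across the cut, since the periodic bond and the bond at cut $i$ are both severed.

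By Lemma~\ref{lem:mps-cut-rank-characterization}, every normalized $\phi\in\mps_D^{\mathrm{per}}$ therefore lies in $\mps_{D^2}$. Consequently
\begin{equation*}
\sup_{\phi\in\mps_D^{\mathrm{per}}}\langle\phi|\rho|\phi\rangle\leq\opt_{D^2}(\rho).
\end{equation*}
I then run the algorithm of Theorem~\ref{thm:proper-agnostic-mps} with bond dimension $D^2$, accuracy $\varepsilon$ and failure probability $\delta$. With probability at least $1-\delta$ it returns $\widehat\psi\in\mps_{D^2}$ satisfying $\langle\widehat\psi|\rho|\widehat\psi\rangle\geq\opt_{D^2}(\rho)-\varepsilon$. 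Chaining this with the inequality above gives the claim. The complexity statement is then literally that of Theorem~\ref{thm:proper-agnostic-mps} with $D\mapsto D^2$.

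No step is a serious obstacle. The only point that needs care is that the supremum over $\mps_D^{\mathrm{per}}$ is handled correctly even if that set is not closed. This is automatic, because the inequality is proved pointwise for each $\phi$ and $\opt_{D^2}$ is a maximum over the compact set $\mps_{D^2}$.

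I also need to check that the rank bound holds at the boundary cuts $i=1$ and $i=n-1$. It does, because the same decomposition applies there; there it even gives the smaller bound $\min\{D^2,d\}$. Finally, the output is proper only for $\mps_{D^2}$, not for the periodic class. This is why the corollary is stated as improper learning.
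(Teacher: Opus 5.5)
Your proposal is correct and follows essentially the same route as the paper: you bound every prefix-cut Schmidt rank of a periodic MPS by $D^2$ (since the cut severs the periodic bond and the bond at cut $i$), use Lemma~\ref{lem:mps-cut-rank-characterization} to get $\mps_D^{\mathrm{per}}\subseteq\mps_{D^2}$, and then apply Theorem~\ref{thm:proper-agnostic-mps} with $D\mapsto D^2$. Your explicit trace decomposition spells out the paper's one-line "cuts at most two virtual bonds" argument.
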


\begin{proof}
Separating a contiguous prefix from a periodic MPS cuts at most two virtual bonds. Its Schmidt rank across every prefix cut is therefore at most $D^2$. Lemma~\ref{lem:mps-cut-rank-characterization} gives $\mps_D^{\mathrm{per}}\subseteq\mps_{D^2}$. Apply Theorem~\ref{thm:proper-agnostic-mps} to the open-boundary comparator class $\mps_{D^2}$.
\end{proof}
\section{Matrix Product Operators}
\label{sec:matrix-product-operators}

Matrix product operators are matrix product states on the local operator space. Fix orthonormal bases of $\mathcal H_i\cong\mathbb C^d$, set $\mathcal K_i:=\mathcal H_i\otimes\overline{\mathcal H_i}$ and $\mathcal K:=\bigotimes_{i=1}^n\mathcal K_i$, and define
\begin{equation} \operatorname{vec}\left(|s_1\cdots s_n\rangle\langle t_1\cdots t_n|\right):=\bigotimes_{i=1}^n\left(|s_i\rangle\otimes|\overline{t_i}\rangle\right). \label{eq:mpo-vectorization} \end{equation}
Then
\begin{equation} \langle\operatorname{vec}(X),\operatorname{vec}(Y)\rangle=\operatorname{tr}(X^\dagger Y),\qquad \|\operatorname{vec}(X)\|_2=\|X\|_{\mathrm{HS}}, \label{eq:mpo-vectorization-isometry} \end{equation}
and $X$ is an MPO of bond dimension at most $D$ if and only if $\operatorname{vec}(X)$ is an MPS of bond dimension at most $D$ on the chain $\mathcal K_1\otimes\cdots\otimes\mathcal K_n$. Thus the Hilbert--Schmidt-normalized MPO class $\mpo_D$ is identified with $\mps_D$ at local dimension $d^2$. For $X\in\mathrm L(\mathcal H)$, write
\begin{equation} \|X\|_{\mpo(D)}:=\sup_{Y\in\mpo_D}|\operatorname{tr}(Y^\dagger X)|. \label{eq:mpo-comparator-dual-norm} \end{equation}

\begin{corollary}[Proper agnostic learning of MPOs]
\label{cor:mpo-consequences}
The results of Appendix~\ref{sec:matrix-product-states} hold for MPOs with $d\mapsto d^2$. In particular,

\emph{(i) Comparator-dual compression.} For every $X\in\mathrm L(\mathcal H)$ and $D,K\geq1$, there is an MPO $\widetilde X_K$ of bond dimension at most $K$ such that
\begin{equation} \|X-\widetilde X_K\|_{\mpo(D)}\leq\|X\|_{\mathrm{HS}}\sqrt{\frac{D}{K+1}},\qquad \|\widetilde X_K\|_{\mathrm{HS}}\leq\|X\|_{\mathrm{HS}}. \label{eq:mpo-compression-guarantee} \end{equation}
If $X$ is an explicit MPO of bond $B$, the construction uses $\poly(n,d^2,B,K)$ arithmetic operations.

\emph{(ii) Proper pure-target optimization.} If $X$ is a Hilbert--Schmidt-normalized explicit MPO of bond $B$, then for every $D\geq1$ and $\eta\in(0,1)$ one can output $\widehat Y\in\mpo_D$ such that
\begin{equation} |\operatorname{tr}(X^\dagger\widehat Y)|^2\geq\max_{Y\in\mpo_D}|\operatorname{tr}(X^\dagger Y)|^2-\eta, \label{eq:mpo-arbitrary-target-guarantee} \end{equation}
in time
\begin{equation} n\left(\frac{Cn\sqrt D}{\eta}\right)^{O(D^2/\eta^2+d^2D^2)}\poly(d,B,D,1/\eta). \label{eq:mpo-arbitrary-target-runtime} \end{equation}

\emph{(iii) Proper agnostic learning.} For a density operator $\Omega$ on $\mathcal K$, define
\begin{equation} \opt_D^{\mpo}(\Omega):=\max_{X\in\mpo_D}\langle\operatorname{vec}(X),\Omega\operatorname{vec}(X)\rangle. \label{eq:mpo-agnostic-optimum} \end{equation}
Given copies of $\Omega$ and $\varepsilon,\delta\in(0,1/4)$, there is a randomized quantum algorithm which outputs $\widehat X\in\mpo_D$ such that, with probability at least $1-\delta$,
\begin{equation} \langle\operatorname{vec}(\widehat X),\Omega\operatorname{vec}(\widehat X)\rangle\geq\opt_D^{\mpo}(\Omega)-\varepsilon. \label{eq:mpo-proper-agnostic-guarantee} \end{equation}
The copy complexity is $\poly(n,d^2,D,1/\varepsilon,\log(1/\delta))$.
The runtime is bounded by \\$\poly\!\left(n,d^2,D,1/\varepsilon,\log(1/\delta)\right)\left(\frac{Cn\sqrt D}{\varepsilon}\right)^{O(D^2/\varepsilon^2+d^2D^2)}.$
\end{corollary}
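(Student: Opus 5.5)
The plan is to transport every MPS statement of Appendix~\ref{sec:matrix-product-states} through the vectorization map $\operatorname{vec}$. First record the basic facts that make this work. By \eqref{eq:mpo-vectorization-isometry}, $\operatorname{vec}$ is a unitary from $\mathrm L(\mathcal H)$ with the Hilbert--Schmidt inner product onto $\mathcal K=\bigotimes_i\mathcal K_i$, where $\mathcal K_i\cong\mathbb C^{d^2}$. Moreover, $\operatorname{vec}$ maps bond-$D$ MPOs bijectively onto bond-$D$ MPSs of local dimension $d^2$. To see this, note that reordering the tensor factors as $\bigotimes_i(\mathcal H_i\otimes\overline{\mathcal H_i})$ groups each site's row and column index into one physical index, and the local MPO tensors $W_i^{s,t}$ become MPS tensors $A_i^{(s,t)}$ with unchanged virtual dimensions. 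Equivalently, the operator Schmidt rank across cut $i$ equals the cut rank of $\operatorname{vec}(X)$ in the sense of Lemma~\ref{lem:mps-cut-rank-characterization}. Hence $\operatorname{vec}(\mpo_D)=\mps_D$ on the chain $\mathcal K$, and $\|X\|_{\mpo(D)}=\|\operatorname{vec}(X)\|_{\mathrm{MPS}(D)}$, because $\operatorname{tr}(Y^\dagger X)=\langle\operatorname{vec}Y,\operatorname{vec}X\rangle$.

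For (i), apply Theorem~\ref{thm:mps-comparator-dual-compression} to $u=\operatorname{vec}(X)$ with local dimension $d^2$ and set $\widetilde X_K:=\operatorname{vec}^{-1}(\widetilde u_K)$. The dual-norm and HS-norm bounds transfer by the identities above. An explicit MPO of bond $B$ vectorizes explicitly, by reshaping its local tensors, to an MPS of bond $B$, so the cost is $\poly(n,d^2,B,K)$.

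For (ii), apply Theorem~\ref{thm:mps-arbitrary-target-optimizer} to the normalized target $u=\operatorname{vec}(X)$ at local dimension $d^2$. Since $|\langle u|\psi\rangle|=|\operatorname{tr}(X^\dagger\operatorname{vec}^{-1}\psi)|$ and the maxima range over corresponding sets, the guarantee and the runtime with $d\mapsto d^2$ follow immediately.

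For (iii), $\Omega$ is a density operator on the qudit chain $\mathcal K$ of local dimension $d^2$. Hence $\opt_D^{\mpo}(\Omega)=\opt_D(\Omega)$ in the sense of \eqref{eq:mps-optimum}. Applying Theorem~\ref{thm:proper-agnostic-mps} with $d^2$ in place of $d$ gives $\widehat\psi\in\mps_D$, and we output $\widehat X:=\operatorname{vec}^{-1}(\widehat\psi)$.

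The only step needing genuine care is the rank correspondence. One must check that the factor reordering from $\mathcal H\otimes\overline{\mathcal H}$ to $\bigotimes_i\mathcal K_i$ puts, for every cut $i$, exactly the sites $1,\ldots,i$ (rows and columns together) on one side. This holds because $\operatorname{vec}$ is defined sitewise in \eqref{eq:mpo-vectorization}. Everything else is pure substitution.
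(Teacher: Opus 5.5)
Your proposal is correct and follows the paper's proof, which simply applies Theorems~\ref{thm:mps-comparator-dual-compression}, \ref{thm:mps-arbitrary-target-optimizer}, and \ref{thm:proper-agnostic-mps} to the vectorized chain of local dimension $d^2$. Your additional checks are the facts that make this substitution valid: $\operatorname{vec}$ is sitewise, so cut ranks are preserved; $\operatorname{vec}$ is an isometry; and consequently the two dual norms and the two optima agree.
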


\begin{proof}
Apply Theorems~\ref{thm:mps-comparator-dual-compression}, \ref{thm:mps-arbitrary-target-optimizer}, and \ref{thm:proper-agnostic-mps} to the vectorized chain whose local dimension is $d^2$.
\end{proof}
\noindent Corollary~\ref{cor:mps-relative-error-compression} also transfers under vectorization. In particular, for $K\geq D$,
\begin{equation*}
\|X-\widetilde X_K\|_{\mathrm{HS}}^2\leq\frac{K+1}{K+1-D}\min_{\substack{a\in\mathbb C\\Y\in\mpo_D}}\|X-aY\|_{\mathrm{HS}}^2.
\end{equation*}
These statements concern Hilbert--Schmidt approximation. They do not impose positivity, trace normalization, or channel constraints on the output.
\section{Tree Tensor Networks}
\label{sec:tree-tensor-networks}

Let $T=(V,E)$ be a tree with $|V|=n$. Fix a root $r\in V$, let $\operatorname{ch}(v)$ denote the children of $v$, and define
\begin{equation}
\Delta_T:=\max_{v\in V}\deg_T(v).
\label{eq:ttn-maximum-degree}
\end{equation}
At each vertex $v$, let $\mathcal H_v\cong\mathbb C^d$, and write
\begin{equation*}
\mathcal H_T:=\bigotimes_{v\in V}\mathcal H_v.
\end{equation*}
A tree tensor network on $T$ assigns a virtual space $\mathcal B_e$ to each edge $e\in E$ and a local tensor to each vertex; contracting all virtual indices produces a vector in $\mathcal H_T$. Let $\ttn_{T,D}$ denote the normalized vectors admitting such a representation with $\dim\mathcal B_e\leq D$ for every $e\in E$. For $u\in\mathcal H_T$ and $e\in E$, let $\sr_e(u)$ denote the Schmidt rank of $u$ across the bipartition obtained by deleting $e$.

\begin{lemma}
\label{lem:ttn-edge-rank-characterization}
A nonzero vector $u\in\mathcal H_T$ admits a TTN representation of bond dimension at most $D$ if and only if $\sr_e(u)\leq D$ for all $e\in E$.
\end{lemma}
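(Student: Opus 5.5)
The plan is to prove the two implications separately. The forward direction is a direct rank count, and the converse is a leaves-to-root construction by successive Schmidt decompositions, in the spirit of the cut-rank characterization for chains (Lemma~\ref{lem:mps-cut-rank-characterization}).

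\emph{Necessity.} Suppose $u$ is the contraction of a TTN with $\dim\mathcal B_e\leq D$, and fix an edge $e=\{x,y\}$. Deleting $e$ splits $V$ into two subtrees $T_x$ and $T_y$. Contract every virtual index internal to $T_x$ to get a tensor $F_x\in\mathcal H_{T_x}\otimes\mathcal B_e$, and define $F_y$ in the same way. Then $u=\sum_{\alpha=1}^{\dim\mathcal B_e}F_x^{\alpha}\otimes F_y^{\alpha}$. This is a sum of at most $D$ product terms across the bipartition, so $\sr_e(u)\leq D$.

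\emph{Sufficiency.} Assume $u\neq0$ and $\sr_e(u)\leq D$ for every edge, and root the tree at $r$. For a non-root vertex $v$ with parent edge $e_v$, let $T_v$ be the rooted subtree at $v$. Define $W_v\subseteq\mathcal H_{T_v}$ to be the support of the reduced density operator of $u$ on $T_v$, that is, the span of the left Schmidt vectors across $e_v$. Then $\dim W_v=\sr_{e_v}(u)\leq D$. Choose an orthonormal basis of $W_v$, which gives an isometry $J_v:\mathbb C^{\dim W_v}\to\mathcal H_{T_v}$ whose range is $W_v$. The key claim is the nesting
\begin{equation*}
W_v\subseteq\mathcal H_v\otimes\bigotimes_{w\in\operatorname{ch}(v)}W_w .
\end{equation*}
Granting it, each basis vector of $W_v$ can be written in coordinates of $\mathcal H_v\otimes\bigotimes_w\operatorname{ran}J_w$. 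This defines a local tensor $A_v$ with virtual legs of dimension $\dim W_w\leq D$ toward the children and $\dim W_v\leq D$ toward the parent, satisfying $J_v=(I_{\mathcal H_v}\otimes\bigotimes_w J_w)A_v$. At the root, $u$ lies in $\mathcal H_r\otimes\bigotimes_{w\in\operatorname{ch}(r)}W_w$ by the same nesting argument applied to the whole space, so $u=(I\otimes\bigotimes_w J_w)a_r$ for some vector $a_r$. Unwinding these identities from the root down to the leaves expresses $u$ as the contraction of the tensors $\{A_v\}$ and $a_r$ with all bond dimensions at most $D$. At a leaf, $\operatorname{ch}(v)=\varnothing$ and $A_v=J_v$, so the base case is trivial.

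\emph{Main obstacle: the nesting claim.} I would prove it as follows. Since $T_v=\{v\}\sqcup\bigsqcup_{w\in\operatorname{ch}(v)}T_w$, the subspace $W_v$ is the range of the partial trace $\tr_{\overline{T_v}}|u\rangle\langle u|$. Each child subtree $T_w$ is one side of the bipartition at $e_w$, so $u\in W_w\otimes\mathcal H_{\overline{T_w}}$. Therefore $(\Pi_{W_w}\otimes I)u=u$ for every child $w$. These projectors act on disjoint tensor factors, so they commute, and their product, tensored with the identity on $\mathcal H_v\otimes\mathcal H_{\overline{T_v}}$, still fixes $u$. Hence $u\in\bigl(\mathcal H_v\otimes\bigotimes_w W_w\bigr)\otimes\mathcal H_{\overline{T_v}}$. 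Every left Schmidt vector of $u$ across $e_v$ then lies in the first factor, which is exactly the claim. One detail needs care: the definition allows zero-dimensional bonds only in degenerate cases. Because $u\neq0$, every Schmidt rank is at least $1$, so all bond dimensions are positive and the contraction is well defined.
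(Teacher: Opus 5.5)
Your proof is correct and is the standard argument the paper has in mind (the paper's proof just reads ``Standard''). Necessity is the rank count across the single bond, and sufficiency is the leaves-to-root construction from nested Schmidt supports, which is the same construction the paper uses later in the proof of Lemma~\ref{lem:ttn-rooted-isometric-form}; your projector argument for the nesting $W_v\subseteq\mathcal H_v\otimes\bigotimes_{w}W_w$ proves a step that the paper states there without proof.
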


\begin{proof}
Standard.
\end{proof}

\noindent In particular, $\ttn_{T,D}$ is compact.

\subsection{An MPS ordering adapted to the tree}

The ambient improper learner is the MPS learner from Appendix~\ref{sec:matrix-product-states}. We first choose an ordering in which bounded-degree TTNs have polynomial MPS bond.

\begin{lemma}
\label{lem:ttn-heavy-last-order}
There is an ordering $\pi=(v_1,\ldots,v_n)$ of the vertices such that every rooted subtree is a contiguous interval and every prefix cut crosses at most $w_T:=\Delta_T\left(1+\lceil\log_2 n\rceil\right)$ tree edges.
\end{lemma}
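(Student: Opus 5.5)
We construct $\pi$ as the heavy-child-last preorder described in Section~\ref{sec:results-ttn}. For every vertex $v$ we fix a child $h(v)\in\operatorname{ch}(v)$ of maximum subtree size $|T_{h(v)}|$ (if $v$ is not a leaf). The traversal then lists $v$, next recursively lists the subtrees of the non-heavy children in an arbitrary order, and finally lists the subtree of $h(v)$.

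\emph{Contiguity.} Because this is a preorder traversal, we check by induction on subtree height that the recursive call at $v$ outputs exactly the vertices of $T_v$ and outputs them consecutively. The induction step holds because the calls on the children produce disjoint contiguous blocks, and these blocks are concatenated directly after $v$.

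\emph{Cut bound.} Fix a prefix $\{v_1,\ldots,v_t\}$ with $t<n$, and let $x:=v_t$ be its last vertex. Consider a tree edge $\{p,c\}$ that crosses the cut, where $p$ is the parent of $c$. Since $T_c$ is contiguous and starts at $c$, the two cases are as follows. If $c$ lies in the prefix while $p$ does not, this is impossible, since $p$ precedes $c$. Otherwise $p$ lies in the prefix and $c$ does not. Then $T_p$ begins inside the prefix. If $T_p$ also ended inside the prefix, $c$ would be in the prefix, so $T_p$ contains $v_{t+1}$. The interval of $T_p$ therefore contains positions $t$ and $t+1$, which means $p$ is an ancestor of $x$ or equal to $x$. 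All crossing edges are thus incident to vertices $p$ on the root-to-$x$ path. Each such $p$ contributes at most $\deg(p)\le\Delta_T$ edges.

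It remains to bound the number of \emph{proper} ancestors $p$ of $x$ that contribute. Let $c'$ be the child of $p$ on the path to $x$. If $c'=h(p)$, then every other child subtree was listed before $T_{c'}$ and lies entirely in the prefix, while $c'$ itself is in the prefix. Hence $p$ contributes no crossing edge. So every contributing proper ancestor is one where the path descends into a non-heavy child $c'$. In that case $|T_{h(p)}|\ge|T_{c'}|$ and $|T_{c'}|+|T_{h(p)}|\le|T_p|-1$, so $|T_{c'}|<|T_p|/2$. Along the path, subtree sizes are nonincreasing and start at most $n$. Each such light step at least halves the size, and sizes stay $\ge1$. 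So there are at most $\lfloor\log_2 n\rfloor\le\lceil\log_2n\rceil$ light steps. Adding the vertex $x$ itself gives at most $1+\lceil\log_2 n\rceil$ contributing vertices, and the total is at most $\Delta_T(1+\lceil\log_2 n\rceil)=w_T$.

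The main point requiring care is the claim that a heavy-descending ancestor contributes nothing. It relies on the heavy subtree being listed last, so that all siblings' subtrees are finished before $x$ is reached. It also relies on $p$'s own parent edge not being counted at $p$: the parent $q$ of $p$ is an ancestor and is counted at $q$. The counting above charges each crossing edge to its parent endpoint only, which avoids double counting.
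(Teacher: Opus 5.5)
Your proof is correct and takes the same route as the paper. Both use the heavy-child-last preorder for contiguity, charge crossing edges to vertices on the root-to-current path, note that only ancestors descending through a non-heavy child (where subtree size more than halves) can contribute, and bound each contributor by $\Delta_T$. Beyond that, you spell out two points the paper only asserts: why every crossing edge has its parent endpoint on that path, and why heavy-descending ancestors contribute nothing.
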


\begin{proof}
For every vertex $v$, let $n_v:=|T_v|$ and choose a child of maximum subtree size as the ``heavy child.'' Traverse the tree by visiting $v$ first, then each non-heavy child subtree, and finally the heavy-child subtree. This is a preorder traversal, so every rooted subtree is contiguous.

Fix a prefix. A crossing edge originates at an active vertex on the current recursion stack. The current vertex contributes at most $\Delta_T$ edges. An active proper ancestor can contribute an unvisited child only when the recursion descended through a non-heavy child. If $u$ is a non-heavy child of $v$, then
\begin{equation} n_u\leq\frac{n_v-1}{2}<\frac{n_v}{2}. \label{eq:ttn-light-child-halving} \end{equation}
Since each subtree must have size at least $1$, there are at most $\lceil\log_2 n\rceil$ contributing proper ancestors, each contributing at most $\Delta_T$ crossing edges. The current vertex $v$ also contributes $\Delta_T$ itself.
\end{proof}

\begin{definition}
    Let $\mps_R^\pi$ denote open-boundary MPSs whose sites are ordered by $\pi$.
\end{definition}

\begin{lemma}
\label{lem:ttn-mps-comparison}
Let $\pi$ be the ordering of Lemma~\ref{lem:ttn-heavy-last-order}.
\begin{enumerate}
    \item Every state in $\ttn_{T,D}$ belongs to $\mps_{D^{w_T}}^\pi$.
    \item Every state in $\mps_B^\pi$ has a TTN representation on $T$ of bond dimension at most $B^2$.
    \item Both conversions can be carried out in time polynomial in the input and output representation size.
\end{enumerate}
\end{lemma}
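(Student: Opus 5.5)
The plan is to reduce all three items to the two rank characterizations already available, Lemma~\ref{lem:mps-cut-rank-characterization} for MPSs and Lemma~\ref{lem:ttn-edge-rank-characterization} for TTNs, combined with the combinatorial properties of $\pi$ from Lemma~\ref{lem:ttn-heavy-last-order}.

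For item 1, fix $\psi\in\ttn_{T,D}$ with a TTN representation of bond at most $D$. Take a prefix cut of $\pi$, and let $k\le w_T$ be the number of tree edges joining the prefix $S$ to its complement. Contract all local tensors at vertices in $S$, together with the virtual indices on internal edges. The result is a tensor carrying the physical legs of $S$ and $k$ open virtual legs, each of dimension at most $D$. Doing the same on the complement exhibits $\psi$ as a sum over at most $D^k$ values of the joint cut index, with each term a product across the cut. Hence the Schmidt rank across every prefix cut is at most $D^{w_T}$, and Lemma~\ref{lem:mps-cut-rank-characterization} gives $\psi\in\mps^\pi_{D^{w_T}}$.

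For item 2, fix an edge $e=(v,\operatorname{parent}(v))$. Deleting $e$ separates $T_v$ from its complement, and $T_v$ is a contiguous interval $[a,b]$ in $\pi$. In an open-boundary MPS of bond $B$ ordered by $\pi$, contract the sites of $[a,b]$ into one block with two open bonds, at positions $a-1$ and $b$. Contract the remaining sites into a block with the same two bonds; one of these bonds is trivial if the interval touches an end of the chain. The state is then a sum over at most $B^2$ bond-index pairs of product terms, so $\sr_e\le B^2$. Lemma~\ref{lem:ttn-edge-rank-characterization} then provides a TTN of bond at most $B^2$ on $T$. The zero vector is a trivial case.

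Item 3 is where the real work is, because the characterization lemmas are only existential ("Standard") and the target dimensions $D^{w_T}$ and $B^2$ are polynomial. I would make both directions constructive through sequential orthogonalizing sweeps that never form exponential objects.

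\textbf{TTN to MPS.} Process vertices in the order $\pi$. The MPS bond at a cut is the product space of the at most $w_T$ crossing virtual edges. The MPS tensor at $v_i$ is the TTN tensor at $v_i$, tensored with identities on the crossing edges that pass through unchanged. Its left and right bonds are the sets of crossing edges before and after $v_i$, and incoming or outgoing edges of $v_i$ are routed accordingly. This is explicit and has size $\poly(d,D^{w_T})$ per site.

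\textbf{MPS to TTN.} Work leaves-to-root. For each subtree $T_v$, view its interval block as a map from the two boundary bonds ($B^2$-dimensional) into the subtree's physical space. Orthonormalize its column space recursively: the block for $T_v$ factors through the site $v$ tensor and the children's blocks, each of which has already been reduced to a $B^2$-dimensional isometric image. Reduced Gram matrices of size $\poly(B,d)$ compute each local TTN tensor, in the spirit of the proof of Theorem~\ref{thm:mps-comparator-dual-compression}. The per-vertex tensor then has size about $d\cdot B^{2(\Delta_T+1)}$, which is polynomial in the input and output representation sizes.

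I expect the main obstacle to be the bookkeeping of this second conversion: checking that a child interval's block is correctly expressed in terms of the reduced spaces of the grandchildren and the bonds between sibling intervals. Siblings are concatenated contiguously, so the bond between adjacent sibling intervals must be absorbed into the parent's tensor. I would handle this by treating the concatenation of the child intervals and $v$ as a short MPS of boundary-reduced blocks.
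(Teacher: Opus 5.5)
Your proposal is correct and follows the paper's proof. Items 1 and 2 are exactly the paper's cut-counting arguments: at most $w_T$ virtual legs of dimension $D$ cross a prefix cut, and at most two MPS bonds border a contiguous subtree interval, each combined with the corresponding rank characterization. For item 3 the paper only remarks that tensor contractions, QR factorizations and Schmidt decompositions suffice, so your explicit constructions are a valid, more detailed version of the same step: identity routing of crossing edges for TTN$\to$MPS, and leaves-to-root orthogonalization of interval blocks for MPS$\to$TTN.
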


\begin{proof}
Across a prefix cut of $\pi$, at most $w_T$ tree edges are cut. A bond-$D$ TTN therefore has Schmidt rank at most $D^{w_T}$ across every prefix cut, and so Lemma~\ref{lem:mps-cut-rank-characterization} gives the first claim. Conversely, every rooted subtree is an interval in $\pi$. Separating an interval from an open-boundary MPS cuts at most two MPS virtual bonds, so every tree-edge Schmidt rank is at most $B^2$. The second claim follows from Lemma~\ref{lem:ttn-edge-rank-characterization}. The conversions use tensor contractions, QR factorizations, and Schmidt decompositions with dimensions controlled by the input and output virtual spaces.
\end{proof}

\subsection{Comparator-dual compression on a tree}

For any tree $\mathcal T$ whose vertices may include auxiliary boundary spaces, define
\begin{equation} \|x\|_{\mathcal T,D}:=\sup\left\{|\langle z,x\rangle|:\|z\|_2\leq 1,\ \sr_e(z)\leq D\ \forall e\in E(\mathcal T)\right\}. \label{eq:ttn-comparator-dual-norm} \end{equation}
For the physical tree $T$, Lemma~\ref{lem:ttn-edge-rank-characterization} gives
\begin{equation} \|x\|_{T,D}=\sup_{\psi\in\ttn_{T,D}}|\langle\psi,x\rangle|. \label{eq:ttn-dual-norm-physical-interpretation} \end{equation}

\begin{lemma}
\label{lem:ttn-comparator-norm-properties}
The tree comparator norm satisfies the following properties.
\begin{enumerate}[label=(\roman*)]
    \item Let $\mathcal U$ be a connected pendant part of $\mathcal T$, attached through one edge, and let $\mathcal T'$ be obtained by replacing $\mathcal U$ by a boundary space $\mathcal B$. If $C:\mathcal B\to\mathcal H_{\mathcal U}$ is a contraction, then
    \begin{equation}
    \|(C\otimes I)x\|_{\mathcal T,D}\leq\|x\|_{\mathcal T',D}.
    \label{eq:ttn-pendant-contraction-bound}
    \end{equation}

    \item Suppose a boundary space decomposes as $\mathcal B=\mathcal B_1\oplus\mathcal B_2$, with $x_a\in\mathcal B_a\otimes\mathcal R$. Then
    \begin{equation}
    \|x_1\oplus x_2\|_{\mathcal T,D}^2
    \leq
    \|x_1\|_{\mathcal T_1,D}^2
    +
    \|x_2\|_{\mathcal T_2,D}^2,
    \label{eq:ttn-boundary-direct-sum-bound}
    \end{equation}
    where $\mathcal T_a$ denotes the corresponding tree with boundary space $\mathcal B_a$.

    \item Let $\mathcal B$ be a leaf boundary space of $\mathcal T$ joined to the rest of the tree by a single edge, and let $\mathcal R$ be the Hilbert space of the remaining vertices. Regard $L\in\mathcal B\otimes\mathcal R$ as a matrix from $\mathcal R$ to $\mathcal B$. If every singular value of $L$ is at most $\sqrt{\Delta}$, then
    \begin{equation}
    \|L\|_{\mathcal T,D}^2\leq D\Delta.
    \label{eq:ttn-flat-loss-bound}
    \end{equation}
\end{enumerate}
\end{lemma}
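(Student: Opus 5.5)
The plan is to prove the three properties exactly as in the chain case (Lemmas~\ref{lem:mps-prefix-contraction}, \ref{lem:mps-boundary-direct-sum}, \ref{lem:mps-flat-loss-bound}), working through the dual characterization \eqref{eq:ttn-comparator-dual-norm}. Throughout, the only structural fact needed is how the edge Schmidt ranks of an admissible comparator $z$ behave when a linear map or a projection acts on a single vertex's space. Specifically, if an operator acts only on the Hilbert space of one vertex (or of one pendant part), then for every edge $e$ it acts on one side of the bipartition determined by $e$. Consequently it cannot increase $\sr_e$.

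For (i), I take an admissible $z$ for $\mathcal T$ and set $z':=(C^\dagger\otimes I)z$, where $C^\dagger:\mathcal H_{\mathcal U}\to\mathcal B$. Since $C$ is a contraction, $\|z'\|_2\le1$. Every edge of $\mathcal T'$ is either the attaching edge or an edge of $\mathcal T$ outside $\mathcal U$.

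\begin{itemize}
\item For the attaching edge, the bipartition of $\mathcal T'$ is $\mathcal B\mid$rest. This is the image of the bipartition $\mathcal U\mid$rest in $\mathcal T$ under a map applied to one side, so $\sr_e(z')\le\sr_e(z)\le D$.
\item For any other edge $e$, $\mathcal U$ lies entirely on one side of $e$, because $\mathcal U$ is connected and attached through a single edge. Hence $C^\dagger$ acts on one side, and $\sr_e(z')\le\sr_e(z)\le D$.
\end{itemize}

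Therefore $z'$ is admissible for $\mathcal T'$, and $|\langle z,(C\otimes I)x\rangle|=|\langle z',x\rangle|\le\|x\|_{\mathcal T',D}$. Taking the supremum over $z$ gives \eqref{eq:ttn-pendant-contraction-bound}.

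For (ii), I decompose an admissible $z$ as $z=z_1\oplus z_2$ according to $\mathcal B=\mathcal B_1\oplus\mathcal B_2$, where $z_a=(\Pi_a\otimes I)z$. The projection $\Pi_a$ acts on the single vertex space $\mathcal B$, so by the observation above every $\sr_e(z_a)\le D$. The pieces are orthogonal, so $\|z_1\|_2^2+\|z_2\|_2^2\le1$. Each $z_a/\|z_a\|_2$ is admissible for $\mathcal T_a$, which gives $|\langle z_a,x_a\rangle|\le\|z_a\|_2\|x_a\|_{\mathcal T_a,D}$. Applying Cauchy--Schwarz to the sum of the two terms yields \eqref{eq:ttn-boundary-direct-sum-bound}.

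For (iii), the single edge joining $\mathcal B$ to the rest of the tree induces the bipartition $\mathcal B\mid\mathcal R$. Any admissible $z$, viewed as a matrix from $\mathcal R$ to $\mathcal B$, therefore has rank at most $D$ and Frobenius norm at most $1$. Von Neumann's trace inequality gives $|\operatorname{tr}(z^\dagger L)|\le\sum_{j\le D}s_j(z)s_j(L)$. Cauchy--Schwarz then bounds this by $\big(\sum_{j\le D} s_j(z)^2\big)^{1/2}\big(\sum_{j\le D} s_j(L)^2\big)^{1/2}\le\sqrt{D\Delta}$, exactly as in Lemma~\ref{lem:mps-flat-loss-bound}.

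The only step needing real care is (i): one must check that every edge of the quotient tree corresponds to a bipartition of the original tree with $\mathcal U$ entirely on one side. This uses exactly the hypotheses that $\mathcal U$ is connected and pendant through a single edge. If $\mathcal U$ were attached through several edges, $\mathcal U$ could straddle a cut, and the rank bound would fail.
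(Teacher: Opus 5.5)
Your proof is correct and follows the paper's argument essentially verbatim: pulling back admissible comparators via $C^\dagger$ for (i), splitting $z=z_1\oplus z_2$ and applying Cauchy--Schwarz for (ii), and von Neumann's trace inequality across the boundary edge for (iii), with your edge-by-edge check in (i) spelling out what the paper asserts in one line. One small correction to your closing remark: connectivity of $\mathcal U$ alone already places $\mathcal U$ on one side of every edge outside it, so it is not the single-attaching-edge hypothesis that prevents $\mathcal U$ from straddling a cut.
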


\begin{proof}
For the first claim, pull an admissible comparator $z$ back to $z':=(C^\dagger\otimes I)z$. Its norm and every edge Schmidt rank can only decrease, so
\begin{equation} |\langle z,(C\otimes I)x\rangle|=|\langle z',x\rangle|\leq\|x\|_{\mathcal T',D}. \label{eq:ttn-pendant-contraction-proof} \end{equation}

For the second claim, write an admissible comparator as $z=z_1\oplus z_2$. Projection onto either sector cannot increase edge ranks and
\begin{equation} \|z_1\|_2^2+\|z_2\|_2^2\leq 1. \label{eq:ttn-projected-comparator-norms} \end{equation}
Cauchy--Schwarz then gives
\begin{equation} |\langle z,x_1\oplus x_2\rangle|\leq\left(\|x_1\|_{\mathcal T_1,D}^2+\|x_2\|_{\mathcal T_2,D}^2\right)^{1/2}. \label{eq:ttn-boundary-direct-sum-proof} \end{equation}

For the third claim, an admissible comparator is a matrix $Z$ of rank at most $D$ and Frobenius norm at most one across the boundary edge. Von Neumann's trace inequality and Cauchy--Schwarz give
\begin{equation} |\operatorname{tr}(Z^\dagger L)|\leq\left(\sum_{j=1}^D s_j(Z)^2\right)^{1/2}\left(\sum_{j=1}^D s_j(L)^2\right)^{1/2}\leq\sqrt{D\Delta}. \label{eq:ttn-flat-loss-proof} \end{equation}
Taking the appropriate suprema proves all three claims.
\end{proof}

\begin{theorem}
\label{thm:ttn-comparator-dual-compression}
For every $u\in\mathcal H_T$ and every $D,K\geq 1$, one can construct a TTN $\widetilde u_K$ on $T$ of bond dimension at most $K$ satisfying
\begin{equation} \|u-\widetilde u_K\|_{T,D}\leq\|u\|_2\sqrt{\frac{D}{K+1}}, \label{eq:ttn-compression-guarantee} \end{equation}
and $\|\widetilde u_K\|_2\leq\|u\|_2.$ If $u$ is an explicit TTN of bond dimension $B$, the construction uses $\poly(n,d,B^{\Delta_T},K^{\Delta_T})$ arithmetic operations.
\end{theorem}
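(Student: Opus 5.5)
I would mirror the proof of Theorem~\ref{thm:mps-comparator-dual-compression}, replacing the left-to-right sweep by a leaves-to-root sweep over the rooted tree $T$. Fix a post-order $x_1,\ldots,x_{n}$ of the vertices, so that every child is processed before its parent and the root $r=x_n$ comes last. The sweep maintains a remainder tensor $T_t$ living on an auxiliary tree $\mathcal T_t$. This tree is obtained from $T$ by replacing each maximal processed subtree $T_w$ (with $w$ an unprocessed vertex's child) by a single leaf boundary space $\mathcal B_w$ of dimension at most $K$, attached to the parent of $w$. Initially $T_1=u$ on $\mathcal T_1=T$.

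\textbf{One step at a non-root vertex $x$.} When $x=x_t$ is processed, all its children have already been replaced by boundary leaves $\mathcal B_w$, $w\in\operatorname{ch}(x)$. I group these together with $\mathcal H_x$ into
\[
\mathcal U_x:=\mathcal H_x\otimes\bigotimes_{w\in\operatorname{ch}(x)}\mathcal B_w,
\]
which is a connected pendant part attached to the rest of the tree through the parent edge of $x$. I view $T_t$ as a matrix $M_t$ from the complement $\mathcal R$ to $\mathcal U_x$ and take its SVD. Exactly as in \eqref{eq:mps-compression-delta}--\eqref{eq:mps-compression-loss-map}, I set $\Delta_t:=\sigma_{K+1}^2$ (or $0$ if no truncation is needed), shrink the retained weights to $b_j=\sqrt{\sigma_j^2-\Delta_t}$, and define $T_{t+1}$ on the new tree where $\mathcal U_x$ is replaced by $\mathcal B_x:=\mathbb C^k$. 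The maps $C_t^{\mathrm{keep}}$, $C_t^{\mathrm{loss}}$ and the loss tensor $L_t$ are defined by the same formulas.

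These formulas are pure matrix identities and do not depend on the graph. So verbatim we get:
\begin{itemize}
\item $M_t=C_t(T_{t+1}\oplus L_t)$ with $C_t$ a contraction;
\item every singular value of $L_t$ is at most $\sqrt{\Delta_t}$;
\item $(K+1)\Delta_t\le\|T_t\|_2^2-\|T_{t+1}\|_2^2$.
\end{itemize}

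\textbf{Root step.} At the root there is no parent edge, so I keep $T_n$ exactly; it is a tensor on $\mathcal H_r\otimes\bigotimes_{w\in\operatorname{ch}(r)}\mathcal B_w$. The output is obtained by the backward reconstruction
\[
\widetilde u_K:=(\text{contract all }C^{\mathrm{keep}}_t)\,T_n .
\]
This is a TTN on $T$ in which the parent edge of each vertex $x$ carries $\mathcal B_x$, so $\dim\mathcal B_x\le K$ and the bond dimension is at most $K$.

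\textbf{Error recursion.} Define $E_t:=T_t-\widetilde T_t$, where $\widetilde T_t$ is the reconstruction from stage $t$ on. Then $E_n=0$ and $E_t=C_t(E_{t+1}\oplus L_t)$. I apply Lemma~\ref{lem:ttn-comparator-norm-properties} in three ways:
\begin{itemize}
\item part~(i), with pendant part $\mathcal U_x$ and contraction $C_t$, removes $C_t$;
\item part~(ii) splits the direct sum $\mathcal B_x\oplus\mathcal L_t$ at the new boundary leaf;
\item part~(iii) gives $\|L_t\|^2\le D\Delta_t$, since $L_t$ sits on a leaf boundary space attached by a single edge.
\end{itemize}
This yields $\|E_t\|^2\le\|E_{t+1}\|^2+D\Delta_t$. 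Telescoping as in \eqref{eq:mps-compression-error-telescope} and \eqref{eq:mps-compression-delta-sum} gives
\[
\|u-\widetilde u_K\|_{T,D}^2\le D\sum_t\Delta_t\le \frac{D}{K+1}\|u\|_2^2 .
\]
The norm bound $\|\widetilde u_K\|_2\le\|T_n\|_2\le\|u\|_2$ follows because the $C^{\mathrm{keep}}_t$ are contractions.

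\textbf{Main obstacle: bookkeeping of the auxiliary trees.} The comparator class changes from $\mathcal T_t$ to $\mathcal T_{t+1}$, and the recursion needs the admissible comparators to be compatible across the replacement. Two points need care.
\begin{itemize}
\item \emph{Part~(i) needs rank preservation.} Pulling back an admissible $z$ on $\mathcal T_t$ by $C_t^\dagger$ must give a vector admissible on $\mathcal T_{t+1}$, i.e.\ rank at most $D$ across every edge. Each edge of $\mathcal T_{t+1}$ corresponds to an edge of $\mathcal T_t$ whose bipartition differs only by grouping $\mathcal U_x$ on one side. The new parent edge of $\mathcal B_x$ corresponds to the old parent edge of $x$, and the map acts on one side only, so ranks do not increase.
\item \emph{Direct-sum sectors must stay separate.} The pieces $E_{t+1}$ and $L_t$ live on different boundary spaces, $\mathcal B_x$ versus $\mathcal L_t$. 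In later steps the error $E_{t+1}$ is itself further decomposed, but $L_t$ is never touched again. I handle this the same way the MPS proof implicitly does: apply the recursion to $E_{t+1}$ as a tensor on $\mathcal T_{t+1}$, where it equals $C_{t+1}(\cdots)$ at the next vertex.
\end{itemize}

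\textbf{Runtime.} For an explicit TTN of bond $B$, each step's matrix $M_t$ has left dimension at most $dK^{\Delta_T}$, and its Gram matrix is computable by contracting the remaining TTN with bonds $B$. This gives $\poly(n,d,B^{\Delta_T},K^{\Delta_T})$ arithmetic operations.
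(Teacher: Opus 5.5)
Your proposal is correct and follows essentially the same route as the paper. It uses a children-before-parent sweep that groups $\mathcal H_x$ with the child boundary spaces and applies the same one-step SVD shrinkage; the three parts of Lemma~\ref{lem:ttn-comparator-norm-properties} then give $\|E_t\|^2\le\|E_{t+1}\|^2+D\Delta_t$, which telescopes against $(K+1)\sum_t\Delta_t\le\|u\|_2^2$. Your bookkeeping remarks (rank preservation under the pullback by $C_t^\dagger$, and identifying the split tree with $\mathcal T_{t+1}$) make explicit what the paper leaves implicit.
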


\begin{proof}
Order the vertices from the leaves toward the root, so that every child appears before its parent, and write this order as $v_1,\ldots,v_n=r$. During the construction, each processed child subtree is replaced by one boundary space of dimension at most $K$. Immediately before processing $v_i$, let
\begin{equation} \mathcal U_i:=\mathcal H_{v_i}\otimes\bigotimes_{w\in\operatorname{ch}(v_i)}\mathcal B_w,\qquad T_i\in\mathcal U_i\otimes\mathcal R_i, \label{eq:ttn-compression-current-remainder} \end{equation}
where $\mathcal R_i$ is the rest of the current quotient tree and $T_1:=u$.

For $i<n$, regard $T_i$ as a matrix from $\mathcal R_i$ to $\mathcal U_i$ and apply the one-step SVD shrinkage from the proof of Theorem~\ref{thm:mps-comparator-dual-compression}. With the same notation, it produces a boundary space $\mathcal B_i$ of dimension at most $K$, a remainder $T_{i+1}$, a loss tensor $L_i$, maps $C_i^{\mathrm{keep}}$ and $C_i^{\mathrm{loss}}$, and a number $\Delta_i\geq 0$ such that
\begin{equation} T_i=C_i^{\mathrm{keep}}T_{i+1}+C_i^{\mathrm{loss}}L_i,\qquad C_iC_i^\dagger\preceq I, \label{eq:ttn-compression-one-step-properties} \end{equation}
where $C_i=[C_i^{\mathrm{keep}}\ C_i^{\mathrm{loss}}]$, every singular value of $L_i$ is at most $\sqrt{\Delta_i}$, and
\begin{equation} (K+1)\Delta_i\leq\|T_i\|_2^2-\|T_{i+1}\|_2^2. \label{eq:ttn-compression-energy-drop} \end{equation}
These are exactly the matrix identities used in \eqref{eq:mps-compression-one-step-decomposition}, \eqref{eq:mps-compression-map-contraction}, and \eqref{eq:mps-compression-energy-drop}; they do not depend on the underlying graph.

Keep the final root tensor $T_n$ and define the retained contraction recursively by
\begin{equation} \widetilde T_n:=T_n,\qquad \widetilde T_i:=(C_i^{\mathrm{keep}}\otimes I_{\mathcal R_i})\widetilde T_{i+1},\qquad \widetilde u_K:=\widetilde T_1. \label{eq:ttn-compression-retained-tensors} \end{equation}
Every retained boundary has dimension at most $K$, so $\widetilde u_K$ is a bond-$K$ TTN. Set
\begin{equation} E_i:=T_i-\widetilde T_i. \label{eq:ttn-compression-accumulated-error} \end{equation}
Then $E_n=0$ and
\begin{equation} E_i=(C_i\otimes I_{\mathcal R_i})(E_{i+1}\oplus L_i). \label{eq:ttn-compression-error-recursion} \end{equation}
Lemma~\ref{lem:ttn-comparator-norm-properties} gives
\begin{equation} \|E_i\|_{\mathcal T_i,D}^2\leq\|E_{i+1}\|_{\mathcal T_{i+1},D}^2+D\Delta_i. \label{eq:ttn-compression-error-step} \end{equation}
Therefore
\begin{equation} \|u-\widetilde u_K\|_{T,D}^2\leq D\sum_{i=1}^{n-1}\Delta_i. \label{eq:ttn-compression-error-telescope} \end{equation}
On the other hand, summing \eqref{eq:ttn-compression-energy-drop} yields
\begin{equation} (K+1)\sum_{i=1}^{n-1}\Delta_i\leq\|T_1\|_2^2-\|T_n\|_2^2\leq\|u\|_2^2. \label{eq:ttn-compression-delta-sum} \end{equation}
Combining the last two inequalities proves \eqref{eq:ttn-compression-guarantee}. The one-step construction also has $\|T_{i+1}\|_2\leq\|T_i\|_2$ and $\|C_i^{\mathrm{keep}}\|_\infty\leq 1$, which gives $\|\widetilde u_K\|_2\le \|u\|_2$. If $u$ has TTN bond $B$, its supplied local tensors contain at most $ndB^{\Delta_T}$ entries. During the sweep, the remaining virtual indices have dimensions at most $B$, while the retained boundary spaces have dimensions at most $K$. The required Gram matrices can be computed by contracting the corresponding doubled tree networks, without expanding the vector in the physical basis. These contractions and the resulting SVDs use $\poly(n,d,B^{\Delta_T},K^{\Delta_T})$ arithmetic operations.
\end{proof}
Define
\begin{equation*}
e_{T,D}(u)^2:=\min_{\substack{a\in\mathbb C\\\psi\in\ttn_{T,D}}}
\|u-a\psi\|_2^2=\|u\|_2^2-\|u\|_{T,D}^2.
\end{equation*}
\begin{corollary}
\label{cor:ttn-relative-error-compression}
For $K\geq D$, the construction of
Theorem~\ref{thm:ttn-comparator-dual-compression} satisfies
\begin{equation*}
\|u-\widetilde u_K\|_2^2\leq\frac{K+1}{K+1-D}\,e_{T,D}(u)^2,\qquad\|u-\widetilde u_K\|_{T,D}^2\leq\frac{D}{K+1-D}\,e_{T,D}(u)^2.
\end{equation*}
For $\alpha\in(0,1)$, choosing $K=D+\lceil D/\alpha\rceil$ gives squared Euclidean error at most $(1+\alpha)e_{T,D}(u)^2$. If $\|u\|_2=1$, return $\widehat u_K:=\widetilde u_K/\|\widetilde u_K\|_2$ when $\widetilde u_K\neq0$, and any normalized product state otherwise. Then
\begin{equation}
1-|\langle u|\widehat u_K\rangle|^2\leq(1+\alpha)\left(1-\max_{\psi\in\ttn_{T,D}}|\langle u|\psi\rangle|^2\right).
\label{eq:ttn-relative-infidelity}
\end{equation}
\end{corollary}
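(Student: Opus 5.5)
The plan is to repeat the proof of Corollary~\ref{cor:mps-relative-error-compression} verbatim on the tree. That argument used only four ingredients: the one-step decomposition $T_i=C_i(T_{i+1}\oplus L_i)$ with $C_i$ a contraction, the energy drop $(K+1)\Delta_i\leq\|T_i\|_2^2-\|T_{i+1}\|_2^2$, the loss identity $\|L_i\|_2^2=\|T_i\|_2^2-\|T_{i+1}\|_2^2$, and the three comparator-norm properties. The proof of Theorem~\ref{thm:ttn-comparator-dual-compression} supplies the first two, since they are the same matrix identities. The loss identity comes from the explicit form of $L_i$: its squared Frobenius norm is $K\Delta_i+\sum_{j>K}\sigma_j^2$ at a truncating step and zero at an exact step. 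Lemma~\ref{lem:ttn-comparator-norm-properties} supplies items (i)--(iii).

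\textbf{Step 1: bound $\sum_i\Delta_i$.} I would apply Lemma~\ref{lem:ttn-comparator-norm-properties}(i),(ii) to $T_i=(C_i\otimes I)(T_{i+1}\oplus L_i)$ itself, not to the error. Together with (iii), this gives
\[
\|T_i\|_{\mathcal T_i,D}^2\leq\|T_{i+1}\|_{\mathcal T_{i+1},D}^2+D\Delta_i .
\]
Telescoping from $T_1=u$ and using $\|T_n\|_{\mathcal T_n,D}\leq\|T_n\|_2$ yields $\|u\|_{T,D}^2\leq\|T_n\|_2^2+D\sum_i\Delta_i$. Now use the stated identity $e_{T,D}(u)^2=\|u\|_2^2-\|u\|_{T,D}^2$. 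This identity is the usual fact that the best scalar multiple of a normalized comparator achieves $\|u\|_2^2-|\langle\psi|u\rangle|^2$, with the maximum attained by compactness of $\ttn_{T,D}$. It gives
\[
\|u\|_2^2-\|T_n\|_2^2\leq e_{T,D}(u)^2+D\sum_i\Delta_i .
\]
Comparing with the summed energy drop $(K+1)\sum_i\Delta_i\leq\|u\|_2^2-\|T_n\|_2^2$ gives $(K+1-D)\sum_i\Delta_i\leq e_{T,D}(u)^2$.

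\textbf{Step 2: the comparator-norm bound.} Substitute the Step 1 estimate into \eqref{eq:ttn-compression-error-telescope}. This gives the second displayed inequality.

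\textbf{Step 3: the Euclidean bound.} The error satisfies $E_i=(C_i\otimes I)(E_{i+1}\oplus L_i)$ and the map is a contraction, so
\[
\|E_i\|_2^2\leq\|E_{i+1}\|_2^2+\|L_i\|_2^2 .
\]
Telescoping with the loss identity gives $\|u-\widetilde u_K\|_2^2\leq\|u\|_2^2-\|T_n\|_2^2\leq e_{T,D}(u)^2+D\sum_i\Delta_i$. By Step 1 the right side is at most $\frac{K+1}{K+1-D}e_{T,D}(u)^2$.

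\textbf{Step 4: the choice of $K$ and the infidelity bound.} Taking $K=D+\lceil D/\alpha\rceil$ makes $\frac{D}{K+1-D}\leq\alpha$, so the squared Euclidean error is at most $(1+\alpha)e_{T,D}(u)^2$. For the infidelity statement with $\widetilde u_K\neq0$, the output $\widehat u_K$ is a bond-$K$ TTN. We have $1-|\langle u|\widehat u_K\rangle|^2=\min_a\|u-a\widehat u_K\|_2^2\leq\|u-\widetilde u_K\|_2^2$, and $e_{T,D}(u)^2=1-\max_{\psi\in\ttn_{T,D}}|\langle u|\psi\rangle|^2$. If $\widetilde u_K=0$, the Euclidean bound forces $(1+\alpha)e_{T,D}(u)^2\geq1$, so any product state works.

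\textbf{Main obstacle.} The delicate point is Step 1. The comparator-norm recursion must be applied on the sequence of quotient trees $\mathcal T_i$, and each application of property (i) must contract a genuinely pendant part. At vertex $v_i$, the space $\mathcal U_i$ contains $\mathcal H_{v_i}$ and the children's boundaries. It is attached to $\mathcal R_i$ only through the parent edge, because the order processes children before parents. Property (iii) applies to $L_i$ since its boundary $\mathcal L_i$ is a leaf attached by that single edge. I would state this explicitly, as in the theorem's proof.
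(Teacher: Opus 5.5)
Your proposal is correct and follows the paper's approach: the paper proves this corollary by rerunning the proof of Corollary~\ref{cor:mps-relative-error-compression} on the quotient trees, with Lemma~\ref{lem:ttn-comparator-norm-properties} replacing the chain lemmas. Your explicit checks are exactly the points the paper leaves implicit, namely that each $\mathcal U_i$ is a pendant part attached only through its parent edge, that the loss boundary is a leaf, and that the loss identity $\|L_i\|_2^2=\|T_i\|_2^2-\|T_{i+1}\|_2^2$ holds.
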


\begin{proof}
The proof is the same as that of Corollary \ref{cor:mps-relative-error-compression} and applies Lemma~\ref{lem:ttn-comparator-norm-properties}.
\end{proof}

\subsection{Proper optimization against a bounded-bond target}

For every non-root vertex $v$, let $T_v$ be its rooted subtree and set
\begin{equation} h_v:=\dim\mathcal H_{T_v},\qquad r_v:=\min\{D,h_v\},\qquad k_v:=\min\{K,h_v\}. \label{eq:ttn-padded-edge-dimensions} \end{equation}

\begin{lemma}
\label{lem:ttn-rooted-isometric-form}
Every normalized $\psi\in\ttn_{T,D}$ can be represented by a unit root tensor and, for every non-root vertex $v$, an isometry
\begin{equation} A_v:\mathbb C^{r_v}\longrightarrow\mathcal H_v\otimes\bigotimes_{w\in\operatorname{ch}(v)}\mathbb C^{r_w}. \label{eq:ttn-candidate-isometric-tensor} \end{equation}
Every vector $u$ of TTN bond at most $K$ and norm at most one has the same form with $K$ in place of $D$, non-root isometries $V_v$, and a root tensor of norm at most one.
\end{lemma}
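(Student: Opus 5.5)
I would follow the proof of Lemma~\ref{lem:mps-sequential-isometric-form}, replacing the left-to-right sweep with a leaves-to-root sweep.

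\emph{Step 1 (edge supports).} For each non-root vertex $v$, let $q_v:=\sr_{e_v}(\psi)$, where $e_v$ is the edge joining $v$ to its parent. By Lemma~\ref{lem:ttn-edge-rank-characterization}, $q_v\leq D$. The Schmidt rank across $e_v$ is also bounded by the dimension of either side, so $q_v\leq h_v$, and hence $q_v\leq r_v$. Let $\mathcal S_v\subseteq\mathcal H_{T_v}$ be the span of the left Schmidt vectors of $\psi$ across $e_v$, that is, the range of the reduced operator on $T_v$. Fix an orthonormal basis $\{\phi^v_\alpha\}_{\alpha=1}^{q_v}$ of $\mathcal S_v$.

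\emph{Step 2 (nesting and the unpadded isometries).} The key claim is the nesting
\[
\mathcal S_v\subseteq\mathcal H_v\otimes\bigotimes_{w\in\operatorname{ch}(v)}\mathcal S_w .
\]
The subtrees $T_w$, $w\in\operatorname{ch}(v)$, are disjoint subsets of the complement of $T_w$'s own complement. Any vector in $\mathcal S_v$ is obtained by contracting $\psi$ with a vector on the complement of $T_v$. Viewed across each edge $e_w$, that vector therefore lies in $\mathcal S_w\otimes\mathcal H_{\overline{T_w}}$. Intersecting these conditions over the children, which act on disjoint tensor factors, gives the displayed inclusion.

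With the nesting in hand, define the unpadded isometry $\widetilde A_v:\mathbb C^{q_v}\to\mathcal H_v\otimes\bigotimes_{w}\mathbb C^{q_w}$ by writing $\phi^v_\alpha$ in the orthonormal product basis $e_s\otimes\bigotimes_w\phi^w_{\beta_w}$. This map is isometric because that product basis is orthonormal. At the root, $\psi$ lies in $\mathcal H_r\otimes\bigotimes_{w\in\operatorname{ch}(r)}\mathcal S_w$, since this is the same argument with an empty complement. Its coefficient tensor is the root tensor, and it has unit norm because $\psi$ is normalized. Contracting from the leaves up then reproduces $\psi$.

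\emph{Step 3 (padding).} Embed each $\mathbb C^{q_w}$ into the first coordinates of $\mathbb C^{r_w}$. The map $\widetilde A_v$ already specifies $A_v$ on the embedded $q_v$-dimensional subspace. We must extend it to $r_v$ orthonormal image vectors in $\mathcal H_v\otimes\bigotimes_w\mathbb C^{r_w}$. This requires
\[
r_v\leq d\prod_{w} r_w .
\]
The inequality holds whenever every $r_w=h_w$, because then $d\prod_w h_w=h_v\geq r_v$. If instead some $r_w=D$, the right-hand side is at least $dD\geq D\geq r_v$.

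The extra columns are never reached by the contraction. Starting from the root tensor, which is supported on the embedded subspaces, the contraction stays inside the embedded supports at every level. So the padding does not change $\psi$.

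\emph{Step 4 (target version).} The same argument applies with $K$ in place of $D$ for a vector $u$ with $\|u\|_2\leq1$, using the isometries $V_v$. The only difference is that the root tensor now has norm $\|u\|_2\leq1$. The zero vector can be handled with arbitrary isometries and a zero root tensor.

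I expect the nesting claim in Step 2 to be the only delicate point. It is where the tree structure enters, and it requires the subtrees of distinct children to be disjoint. Everything else, including the dimension inequality needed for padding, is routine bookkeeping.
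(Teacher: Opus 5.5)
Your proposal is correct and follows essentially the same leaves-to-root construction as the paper: orthonormal bases of the Schmidt supports of the rooted subtrees, nesting of each support inside $\mathcal H_v\otimes\bigotimes_{w\in\operatorname{ch}(v)}\mathcal S_w$, padding justified by $r_v\leq d\prod_w r_w$, and a root coefficient tensor of norm $\|\psi\|_2$ (respectively $\|u\|_2$). Your version justifies the nesting and the padding inequality, which the paper only asserts; you should also rewrite the garbled first sentence of the nesting argument in Step~2 to state $\overline{T_v}\subseteq\overline{T_w}$, and conclude membership in $\mathcal S_w\otimes\mathcal H_{T_v\setminus T_w}$ rather than $\mathcal S_w\otimes\mathcal H_{\overline{T_w}}$.
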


\begin{proof}
For each non-root vertex $v$, let $q_v$ be the Schmidt rank across the edge separating the rooted subtree $T_v$ from its complement, and choose an orthonormal basis for the corresponding Schmidt support in $\mathcal H_{T_v}$. Proceeding from the leaves toward the root, the Schmidt-support vectors for $T_v$ lie in
\begin{equation}
\mathcal H_v\otimes\bigotimes_{w\in\operatorname{ch}(v)}\mathbb C^{q_w},
\end{equation}
where $\mathbb C^{q_w}$ represents the Schmidt support of the child subtree $T_w$. Expanding an orthonormal basis of the Schmidt support of $T_v$ in this tensor-product space therefore defines an isometry
\begin{equation}
\widetilde A_v:\mathbb C^{q_v}\longrightarrow
\mathcal H_v\otimes\bigotimes_{w\in\operatorname{ch}(v)}\mathbb C^{q_w}.
\end{equation}
Since $q_v\leq r_v$, embed each $\mathbb C^{q_v}$ into $\mathbb C^{r_v}$. The isometry $\widetilde A_v$ can then be extended to an isometry
\begin{equation}
A_v:\mathbb C^{r_v}\longrightarrow
\mathcal H_v\otimes\bigotimes_{w\in\operatorname{ch}(v)}\mathbb C^{r_w},
\end{equation}
because
\begin{equation}
r_v\leq d\prod_{w\in\operatorname{ch}(v)}r_w.
\label{eq:ttn-isometry-dimension-inequality}
\end{equation}
The added dimensions do not change the represented state, since the contraction remains within the embedded Schmidt supports. At the root, the remaining tensor is the expansion of the full state in $\mathcal H_r$ and the Schmidt-support bases of its children, and hence has unit norm.

The same construction applies to a target of TTN bond dimension at most $K$ and norm at most one, with $K$ in place of $D$. In this case the root tensor has norm equal to the norm of the target, and therefore at most one.
\end{proof}

Let $v$ be a target of bond at most $K$ and let $\psi$ be a candidate of bond at most $D$, both in rooted isometric form. Contracting the two subtrees rooted at a non-root vertex $x$ and leaving the parent virtual legs open defines $X_x:\mathbb C^{r_x}\longrightarrow\mathbb C^{k_x}.$ These environments obey
\begin{equation} X_x=V_x^\dagger\left(I_{\mathcal H_x}\otimes\bigotimes_{w\in\operatorname{ch}(x)}X_w\right)A_x. \label{eq:ttn-cross-environment-update} \end{equation}
At the root, the same contraction is $\langle v|\psi\rangle$.

\begin{lemma}
\label{lem:ttn-environment-stability}
If $V,A,A'$ are contractions, then
\begin{align}
\|V^\dagger(I\otimes Z)A-V^\dagger(I\otimes Z')A\|_\infty &\leq\|Z-Z'\|_\infty, \label{eq:ttn-environment-contraction} \\
\|V^\dagger(I\otimes Z)A-V^\dagger(I\otimes Z)A'\|_\infty &\leq\|Z\|_\infty\|A-A'\|_\infty. \label{eq:ttn-environment-local-lipschitz}
\end{align}
Moreover, if $\|X_j\|_\infty,\|Y_j\|_\infty\leq 1$, then
\begin{equation} \left\|\bigotimes_{j=1}^t X_j-\bigotimes_{j=1}^t Y_j\right\|_\infty\leq\sum_{j=1}^t\|X_j-Y_j\|_\infty. \label{eq:ttn-tensor-product-lipschitz} \end{equation}
\end{lemma}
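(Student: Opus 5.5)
The three bounds in Lemma~\ref{lem:ttn-environment-stability} are elementary operator-norm estimates, and I would prove them in order using only submultiplicativity of $\|\cdot\|_\infty$ and the fact that $\|I\otimes W\|_\infty=\|W\|_\infty$ for any matrix $W$.

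For \eqref{eq:ttn-environment-contraction}, I would first use linearity in the middle factor:
\begin{equation*}
V^\dagger(I\otimes Z)A-V^\dagger(I\otimes Z')A=V^\dagger\bigl(I\otimes(Z-Z')\bigr)A .
\end{equation*}
Bounding the operator norm of this product by the product of norms gives
\begin{equation*}
\|V^\dagger\|_\infty\,\|I\otimes(Z-Z')\|_\infty\,\|A\|_\infty\leq\|Z-Z'\|_\infty,
\end{equation*}
since $V$ and $A$ are contractions and so is $V^\dagger$. For \eqref{eq:ttn-environment-local-lipschitz}, I would write the difference in the same way as $V^\dagger(I\otimes Z)(A-A')$. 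Submultiplicativity then bounds it by $\|V\|_\infty\|Z\|_\infty\|A-A'\|_\infty\leq\|Z\|_\infty\|A-A'\|_\infty$.

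For the tensor-product bound \eqref{eq:ttn-tensor-product-lipschitz}, I would telescope by replacing one factor at a time:
\begin{equation*}
\bigotimes_{j=1}^t X_j-\bigotimes_{j=1}^t Y_j=\sum_{k=1}^t Y_1\otimes\cdots\otimes Y_{k-1}\otimes(X_k-Y_k)\otimes X_{k+1}\otimes\cdots\otimes X_t .
\end{equation*}
The operator norm is multiplicative on tensor products, $\|W_1\otimes W_2\|_\infty=\|W_1\|_\infty\|W_2\|_\infty$, because the singular values of a tensor product are the products of the singular values of the factors. Hence the $k$-th term has norm at most
\begin{equation*}
\prod_{j<k}\|Y_j\|_\infty\cdot\|X_k-Y_k\|_\infty\cdot\prod_{j>k}\|X_j\|_\infty\leq\|X_k-Y_k\|_\infty,
\end{equation*}
using the hypothesis that every $X_j$ and $Y_j$ has operator norm at most one. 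The triangle inequality over $k$ then gives the claim.

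I do not expect a genuine obstacle. The only points to check are the telescoping identity, which follows by induction on $t$ or by observing that consecutive terms cancel, and the multiplicativity of the operator norm under $\otimes$. The rectangular shapes (the $X_j$ map $\mathbb C^{r_w}\to\mathbb C^{k_w}$) do not affect either argument.
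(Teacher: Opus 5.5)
Your proposal is correct and follows essentially the same route as the paper: submultiplicativity (with $\|I\otimes W\|_\infty=\|W\|_\infty$) for the first two bounds, and the same one-factor-at-a-time telescoping sum combined with multiplicativity of the operator norm under tensor products for the third. Your remark that the argument is insensitive to the rectangular shapes of the $X_j$ is also correct.
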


\begin{proof}
The first two inequalities are immediate from submultiplicativity. For the last inequality, telescope the difference one tensor factor at a time
\begin{equation*}
\bigotimes_{j=1}^t X_j-\bigotimes_{j=1}^t Y_j=\sum_{\ell=1}^t
\left(\bigotimes_{j<\ell}Y_j\right)
\otimes (X_\ell-Y_\ell)\otimes
\left(\bigotimes_{j>\ell}X_j\right).
\end{equation*}
Indeed, consecutive terms in this sum cancel, leaving only
$\bigotimes_{j=1}^t X_j-\bigotimes_{j=1}^t Y_j$.
Using the triangle inequality and multiplicativity of the operator norm
under tensor products,
\begin{align*}
\left\| \bigotimes_{j=1}^t X_j-\bigotimes_{j=1}^t Y_j\right\|_\infty &\leq \sum_{\ell=1}^t \left(\prod_{j<\ell}\|Y_j\|_\infty\right) \|X_\ell-Y_\ell\|_\infty
\left(\prod_{j>\ell}\|X_j\|_\infty\right)\\ &\leq \sum_{\ell=1}^t\|X_\ell-Y_\ell\|_\infty,
\end{align*}
where the last inequality uses
$\|X_j\|_\infty,\|Y_j\|_\infty\leq 1$.
\end{proof}

\begin{theorem}
\label{thm:ttn-bounded-target-optimizer}
Let $v$ be an explicitly given TTN on $T$ of bond dimension at most $K$ and norm at most one. For every $D\geq 1$ and every $\eta\in(0,1)$, there is a deterministic algorithm which outputs a normalized $\widehat\psi\in\ttn_{T,D}$ such that
\begin{equation} |\langle v|\widehat\psi\rangle|\geq\max_{\psi\in\ttn_{T,D}}|\langle v|\psi\rangle|-\eta. \label{eq:ttn-bounded-target-guarantee} \end{equation}
Its runtime is bounded by
\begin{equation} n\left(\frac{Cn\sqrt{KD}}{\eta}\right)^{O(\Delta_TKD+dD^{\Delta_T+1})}\poly(K,D,d) \label{eq:ttn-bounded-target-runtime} \end{equation}
for a universal constant $C$.
\end{theorem}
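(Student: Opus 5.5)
The plan is to mirror the proof of Theorem~\ref{thm:mps-bounded-target-optimizer}, replacing the left-to-right chain sweep with a bottom-up sweep over the rooted tree.

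\textbf{Setup.} If $\|v\|_2\leq\eta$, return a fixed product state. Otherwise normalize the target as before. Put $v$ in the rooted isometric form of Lemma~\ref{lem:ttn-rooted-isometric-form}, with isometries $V_x$ and a root tensor. Candidates are searched in the same padded form with isometries $A_x$ and a unit root tensor. Every candidate is then automatically a normalized element of $\ttn_{T,D}$, and no rounding step is needed.

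\textbf{Nets.} For each vertex, build an operator-norm $h$-net of the isometries $A_x$, and at the root a net of unit root tensors. The local tensor has at most $dD^{\Delta_T}\cdot D$ entries, so this net has size $(C/h)^{O(dD^{\Delta_T+1})}$. For each non-root $x$, build an operator-norm $q$-net of the unit operator-norm ball of $k_x\times r_x$ matrices. By Lemma~\ref{lem:ttn-environment-stability} and $X_{\text{leaf}}$ trivially bounded, induction gives $\|X_x\|_\infty\leq1$. Realize this net via a Frobenius net at scale $q$, using $\|\cdot\|_\infty\leq\|\cdot\|_2$. It has size $(C\sqrt{KD}/q)^{O(KD)}$, since the Frobenius norm of the ball is at most $\sqrt{\min(K,D)}$.

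\textbf{Dynamic program.} Process vertices in leaves-to-root order. At vertex $x$, take every combination of one retained cell per child together with every local net tensor $A_x$. Compute the exact $X_x$ via \eqref{eq:ttn-cross-environment-update} and keep one exact subtree per occupied environment cell. The number of combinations per vertex is at most
\[
|\mathcal N^{\mathrm{env}}|^{\Delta_T}\,|\mathcal N^{\mathrm{loc}}| = \left(\frac{C\sqrt{KD}}{q}\right)^{O(\Delta_TKD)}\left(\frac{C}{h}\right)^{O(dD^{\Delta_T+1})}.
\]
Each update costs $\poly(K,D,d)$ times the $K^{\Delta_T}$-sized contraction, which I fold into the exponent. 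At the root, evaluate the exact overlap for each combination and return the candidate of largest modulus.

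\textbf{Error analysis.} This is the main step. Follow an optimal $\psi^\star$ and define $e_x:=\|\widehat X_x-X_x^\star\|_\infty$ for the surviving representative. By \eqref{eq:ttn-tensor-product-lipschitz}, \eqref{eq:ttn-environment-contraction} and \eqref{eq:ttn-environment-local-lipschitz}, together with merging,
\[
e_x\leq\sum_{w\in\operatorname{ch}(x)}e_w+h+2q.
\]
Unrolling over the subtree gives $e_x\leq|T_x|(h+2q)$. At the root, the same bounds (with the root target tensor of norm at most one) give a total error of at most $n(h+2q)$ in the scalar overlap.

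The subtle point is the merging step. Two subtrees in the same cell differ in environment by at most $2q$. A common continuation transmits this difference without amplification, because each continuation step is an operator-norm contraction and the sibling environments have norm at most one. This justifies replacing the optimal path's environment by the retained representative. Choosing $h=\eta/(4n)$ and $q=\eta/(8n)$ yields overlap loss at most $\eta/2$, so the reverse triangle inequality gives \eqref{eq:ttn-bounded-target-guarantee}. Multiplying the per-vertex count by $n$ yields \eqref{eq:ttn-bounded-target-runtime}.
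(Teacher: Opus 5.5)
Your proposal is correct and follows essentially the same route as the paper: a leaves-to-root dynamic program over operator-norm nets of the local isometries and of the cross environments, retaining one exact subtree per occupied cell, with the recursion $e_x\leq\sum_{w\in\operatorname{ch}(x)}e_w+h+2q$ unrolled to $e_x\leq|T_x|(h+2q)$ and the choice $h=\eta/(4n)$, $q=\eta/(8n)$. Your extra steps are harmless refinements of the same argument: you dispose of the case $\|v\|_2\leq\eta$ and then normalize the target, and you explicitly absorb the $K^{\Delta_T}$-sized contraction cost into the exponent.
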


\begin{proof}
Put the target in rooted isometric form. By Lemma~\ref{lem:ttn-environment-stability} and induction from the leaves, every exact cross environment has operator norm at most one. For every non-root vertex $x$, choose an operator-norm $q$-net $\mathcal N_x^{\mathrm{env}}$ for the operator-norm unit ball in $\mathbb C^{k_x\times r_x}$ and an operator-norm $h$-net $\mathcal N_x^{\mathrm{loc}}$ for the candidate isometries. The nets satisfy
\begin{equation} |\mathcal N_x^{\mathrm{env}}|\leq\left(\frac{C\sqrt{KD}}{q}\right)^{2KD},\qquad |\mathcal N_x^{\mathrm{loc}}|\leq\left(\frac{C\sqrt D}{h}\right)^{O(dD^{\Delta_T+1})}. \label{eq:ttn-net-sizes} \end{equation}
At the root, use an $h$-net for the unit root tensors.

The dynamic program proceeds from the leaves toward the root. For each choice of one retained subtree from every child and each local tensor in $\mathcal N_x^{\mathrm{loc}}$, compute the exact environment using \eqref{eq:ttn-cross-environment-update}. Assign it to a nearest point of $\mathcal N_x^{\mathrm{env}}$ and retain one exact subtree in every occupied cell. At the root, enumerate the retained child choices and root tensors and return the candidate with largest exact absolute overlap.

Let $\psi^\star$ be an optimal bond-$D$ TTN. Choose local net tensors within operator norm $h$ of its tensors and follow this path through the dynamic program. If $e_x$ is the operator-norm error of the surviving environment at $x$, then Lemma~\ref{lem:ttn-environment-stability} and the cell-merging error give
\begin{equation} e_x\leq\sum_{w\in\operatorname{ch}(x)}e_w+h+2q. \label{eq:ttn-environment-error-recursion} \end{equation}
If $N_x:=|T_x|$, induction gives $e_x\leq N_x(h+2q).$ There is no merging at the root, so the final overlap error is at most $nh+2(n-1)q.$ Choose
\begin{equation} h:=\frac{\eta}{4n},\qquad q:=\frac{\eta}{8n}. \label{eq:ttn-net-parameters} \end{equation}
The surviving path then has absolute overlap at least the optimum minus $\eta/2$, and the returned candidate satisfies \eqref{eq:ttn-bounded-target-guarantee}.

At each vertex one enumerates at most $\Delta_T$ child environments and one local tensor. Combining the net sizes in \eqref{eq:ttn-net-sizes} with \eqref{eq:ttn-net-parameters} proves \eqref{eq:ttn-bounded-target-runtime}.
\end{proof}

\subsection{Proper optimization against arbitrary explicit targets}

\begin{theorem}
\label{thm:ttn-arbitrary-target-optimizer}
Let $u$ be a normalized explicit TTN on $T$ of arbitrary bond dimension $B$. For every $D\geq 1$ and every $\eta\in(0,1)$, there is a deterministic algorithm which outputs a normalized $\widehat\psi\in\ttn_{T,D}$ such that
\begin{equation} |\langle u|\widehat\psi\rangle|^2\geq\max_{\psi\in\ttn_{T,D}}|\langle u|\psi\rangle|^2-\eta. \label{eq:ttn-arbitrary-target-guarantee} \end{equation}
For fixed $d,D,\Delta_T,\eta$, its runtime is polynomial in $n$ and $B$.
\end{theorem}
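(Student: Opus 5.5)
The proof will follow Theorem~\ref{thm:mps-arbitrary-target-optimizer}, with the tree results replacing the chain results. First, choose $K$ with $K+1\geq 64D/\eta^2$. Theorem~\ref{thm:ttn-comparator-dual-compression} then produces a TTN $\widetilde u_K$ on $T$ of bond dimension at most $K$ with $\|\widetilde u_K\|_2\leq\|u\|_2=1$ and
\begin{equation*}
\sup_{\psi\in\ttn_{T,D}}|\langle\psi|u-\widetilde u_K\rangle|=\|u-\widetilde u_K\|_{T,D}\leq\sqrt{\frac{D}{K+1}}\leq\frac{\eta}{8},
\end{equation*}
where the equality is \eqref{eq:ttn-dual-norm-physical-interpretation}. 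Writing $\mathcal A_{T,D}(x):=\max_{\psi\in\ttn_{T,D}}|\langle x|\psi\rangle|$, the triangle inequality gives $|\mathcal A_{T,D}(u)-\mathcal A_{T,D}(\widetilde u_K)|\leq\eta/8$.

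Next, apply Theorem~\ref{thm:ttn-bounded-target-optimizer} to the explicit target $\widetilde u_K$, which has bond at most $K$ and norm at most one, with amplitude error $\eta/4$. This returns a normalized $\widehat\psi\in\ttn_{T,D}$ satisfying $|\langle\widetilde u_K|\widehat\psi\rangle|\geq\mathcal A_{T,D}(\widetilde u_K)-\eta/4$. Transferring back to $u$ costs the compression error $\eta/8$ twice:
\begin{equation*}
|\langle u|\widehat\psi\rangle|\geq|\langle\widetilde u_K|\widehat\psi\rangle|-\frac{\eta}{8}\geq\mathcal A_{T,D}(u)-\frac{\eta}{2}.
\end{equation*}
Both quantities lie in $[0,1]$, so the difference of squares bound $a^2-b^2=(a-b)(a+b)\leq 2\cdot\eta/2$ yields \eqref{eq:ttn-arbitrary-target-guarantee}. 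Properness and normalization of the output come directly from Theorem~\ref{thm:ttn-bounded-target-optimizer}.

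The runtime is where the proof needs care. The compression step costs $\poly(n,d,B^{\Delta_T},K^{\Delta_T})$ arithmetic operations. This is polynomial in $n$ and $B$ for fixed $\Delta_T$, because $K=O(D/\eta^2)$ is a constant once $D$ and $\eta$ are fixed. The compressed output must also be usable by the dynamic program, which requires it in rooted isometric form (Lemma~\ref{lem:ttn-rooted-isometric-form}). I would compute this form by a leaves-to-root QR sweep over the bond-$K$ tensors, at cost $\poly(n,d,K^{\Delta_T})$. Alternatively, the shrinkage sweep already yields it: its maps $C^{\mathrm{keep}}$ are orthonormal columns times diagonal scalings, so the sweep can be reorganized directly into isometries plus a root tensor.

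The optimizer costs $n\,(Cn\sqrt{KD}/\eta)^{O(\Delta_TKD+dD^{\Delta_T+1})}\poly(K,D,d)$. With $K=O(D/\eta^2)$ and $d,D,\Delta_T,\eta$ fixed, the exponent is a constant, so this is polynomial in $n$ and independent of $B$. The main point to verify is that no hidden factor $B^{\Delta_T}$ enters the exponent. This holds because $B$ appears only in the compression stage, and there only polynomially.
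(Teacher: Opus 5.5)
Your proposal is correct and follows the paper's proof: compress with $K+1\geq 64D/\eta^2$ via Theorem~\ref{thm:ttn-comparator-dual-compression}, run Theorem~\ref{thm:ttn-bounded-target-optimizer} at amplitude error $\eta/4$, and transfer back by the triangle inequality and the difference-of-squares bound, with your runtime accounting spelling out what the paper leaves implicit. One small caveat on an aside: the rooted isometric form cannot be read off the $C^{\mathrm{keep}}$ maps \emph{directly}, because each carries a diagonal factor $b_j/\sigma_j$ that cannot be absorbed into the neighbouring non-root tensor without breaking its isometry, so the leaves-to-root QR sweep you propose first is the one to use.
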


\begin{proof}
Choose $K+1\geq64D/\eta^2$. By Theorem~\ref{thm:ttn-comparator-dual-compression}, construct a TTN $\widetilde u_K$ of bond at most $K$ satisfying
\begin{equation} \sup_{\psi\in\ttn_{T,D}}|\langle\psi|u-\widetilde u_K\rangle|\leq\frac{\eta}{8},\qquad \|\widetilde u_K\|_2\leq 1. \label{eq:ttn-arbitrary-target-compression-error} \end{equation}
Apply Theorem~\ref{thm:ttn-bounded-target-optimizer} to $\widetilde u_K$ with amplitude error $\eta/4$. The same triangle inequality and difference-of-squares argument as in the proof of Theorem~\ref{thm:mps-arbitrary-target-optimizer} gives \eqref{eq:ttn-arbitrary-target-guarantee}.
\end{proof}

\begin{corollary}[Proper TTN optimization against an ordered MPS target]
\label{cor:ttn-mps-target-optimizer}
Let $u$ be a normalized explicit MPS of bond $B$ in the ordering $\pi$. For every $D\geq 1$ and every $\eta\in(0,1)$, one can output a normalized $\widehat\psi\in\ttn_{T,D}$ satisfying
\begin{equation} |\langle u|\widehat\psi\rangle|^2\geq\max_{\psi\in\ttn_{T,D}}|\langle u|\psi\rangle|^2-\eta. \label{eq:ttn-mps-target-guarantee} \end{equation}
For fixed $d,D,\Delta_T,\eta$, the runtime is polynomial in $n$ and $B$.
\end{corollary}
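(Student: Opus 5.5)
The plan is to convert the ordered MPS target into an explicit TTN on $T$ and then apply Theorem~\ref{thm:ttn-arbitrary-target-optimizer}. By Lemma~\ref{lem:ttn-mps-comparison}(2), the MPS $u\in\mps_B^\pi$ admits a TTN representation on $T$ of bond dimension at most $B^2$. The reason is that every rooted subtree is a contiguous interval in $\pi$, so each tree edge cuts at most two MPS virtual bonds, and Lemma~\ref{lem:ttn-edge-rank-characterization} then applies. Part (3) of the same lemma says the conversion is explicit and runs in time polynomial in the input and output sizes.

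For the runtime of the conversion, I would build the TTN from the leaves toward the root. For each non-root vertex $x$, the subtree $T_x$ is an interval $[a_x,b_x]$ of $\pi$, and I take the Schmidt support of $u$ across the edge above $x$ to lie inside the span of the interval vectors with two open MPS bonds. I compute it from the interval transfer matrices together with the left and right environments of the MPS; these are Gram matrices of size at most $B^2\times B^2$. The local tensor at $x$ is obtained by expressing these support vectors in terms of $\mathcal H_x$ and the already computed child supports. This is a contraction over at most $\Delta_T$ child legs, each of dimension at most $B^2$, so it costs $\poly(n,d,B^{2\Delta_T})$. I expect this accounting to be the only point needing care: the local TTN tensors have $dB^{2\Delta_T}$ entries, which is polynomial in $B$ for fixed $\Delta_T$. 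The statement allows this, since it only asks for polynomial runtime in $n$ and $B$ for fixed $d,D,\Delta_T,\eta$.

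Once the TTN $u'$ of bond $B^2$ representing the same vector $u$ is available, it is still normalized. Applying Theorem~\ref{thm:ttn-arbitrary-target-optimizer} to $u'$ with error $\eta$ gives $\widehat\psi\in\ttn_{T,D}$ satisfying \eqref{eq:ttn-mps-target-guarantee}, because the overlaps $\langle u|\psi\rangle$ do not depend on which representation of $u$ is used.

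For the runtime of the optimization step, I would unpack the bound behind Theorem~\ref{thm:ttn-arbitrary-target-optimizer}. Compression via Theorem~\ref{thm:ttn-comparator-dual-compression} costs $\poly(n,d,(B^2)^{\Delta_T},K^{\Delta_T})$ with $K=O(D/\eta^2)$. The bounded-target optimizer of Theorem~\ref{thm:ttn-bounded-target-optimizer} costs $n(Cn\sqrt{KD}/\eta)^{O(\Delta_TKD+dD^{\Delta_T+1})}$, which does not depend on $B$. Both are polynomial in $n$ and $B$ for fixed $d,D,\Delta_T,\eta$, which completes the claim.
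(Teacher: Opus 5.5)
Your proposal is correct and matches the paper's proof: convert $u$ into a TTN on $T$ of bond at most $B^2$ using Lemma~\ref{lem:ttn-mps-comparison}, then apply Theorem~\ref{thm:ttn-arbitrary-target-optimizer}. Your extra runtime accounting only spells out what the paper leaves implicit: local tensors of size $dB^{2\Delta_T}$, compression cost polynomial in $B^{2\Delta_T}$ and $K^{\Delta_T}$, and a bounded-target search that does not depend on $B$.
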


\begin{proof}
Convert $u$ to a TTN of bond at most $B^2$ using Lemma~\ref{lem:ttn-mps-comparison}, and apply Theorem~\ref{thm:ttn-arbitrary-target-optimizer}.
\end{proof}

\subsection{Proper agnostic learning of bounded-degree TTNs}

\begin{theorem}[Proper agnostic learning of bounded-degree TTNs]
\label{thm:proper-agnostic-ttn}
For every rooted tree $T$ with $n$ vertices and maximum degree at most $\Delta_T$, every $d,D\in\mathbb N$, and every $\varepsilon,\delta\in(0,1/4)$, there is a randomized quantum algorithm which, given copies of an arbitrary density operator $\rho$ on $\mathcal H_T$, outputs a classical description of a normalized state $\widehat\psi\in\ttn_{T,D} $ such that, with probability at least $1-\delta$,
\begin{equation} \langle\widehat\psi|\rho|\widehat\psi\rangle\geq\max_{\psi\in\ttn_{T,D}}\langle\psi|\rho|\psi\rangle-\varepsilon. \label{eq:ttn-proper-agnostic-guarantee} \end{equation}
With $w_T:=\Delta_T(1+\lceil\log_2n\rceil)$, the copy complexity is
bounded by $\poly\!\left(n,d,D^{w_T},1/\varepsilon,\log(1/\delta)\right).$
For fixed $d,D,\Delta_T,\varepsilon$, the runtime is
bounded by $n^{F(d,D,\Delta_T,1/\varepsilon)}\poly(\log(1/\delta))$
for an explicit computable function $F$ independent of $n$.
\end{theorem}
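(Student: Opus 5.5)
The plan is to apply Theorem~\ref{thm:general-improper-to-proper} with $\mathcal C:=\ttn_{T,D}$ and the ambient classes
\begin{equation*}
\mathcal V_R:=\left\{u\in\mathcal H_T:\operatorname{rank}_{\text{prefix }i\,|\,\text{suffix}}(u)\leq R \text{ for every prefix cut of }\pi\right\},
\end{equation*}
that is, possibly unnormalized MPSs in the ordering $\pi$ of Lemma~\ref{lem:ttn-heavy-last-order}. Then $\mathcal A_R=\mps_R^\pi$, and the verification proceeds item by item through Definition~\ref{def:properization-data}.

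First, Lemma~\ref{lem:ttn-mps-comparison}(1) gives $\mathcal C\subseteq\mathcal A_{R'}$ with $R'=D^{w_T}$. The fixed state $\phi_0=|0\rangle^{\otimes n}$ lies in $\ttn_{T,D}$, and $\mathcal C$ is compact by Lemma~\ref{lem:ttn-edge-rank-characterization}. Lemma~\ref{lem:mps-linear-combinations}, applied in the ordering $\pi$, gives $\Lambda(R_1,\ldots,R_q)=\sum_a R_a$. The ambient improper learner is Theorem~\ref{thm:mps-improper-learner} with sites relabelled by $\pi$; it is called with comparator bond $R'=D^{w_T}$ and error $\gamma=\theta/16=\Theta(\varepsilon^2)$, so its outputs have bond $B_0=\poly(n,d,D^{w_T},1/\varepsilon)$. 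The explicit operations are supplied by Lemma~\ref{lem:mps-explicit-operations} in the ordering $\pi$. The proper pure-target optimizer is Corollary~\ref{cor:ttn-mps-target-optimizer}. Its inputs are normalized explicit MPSs in the ordering $\pi$, namely the $w_c$, and it returns a state in $\ttn_{T,D}$ with the squared-overlap guarantee required there. With every hypothesis verified, Theorem~\ref{thm:general-improper-to-proper} yields \eqref{eq:ttn-proper-agnostic-guarantee} with probability at least $1-\delta$.

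For the complexity, I would follow the MPS proof. Each orthonormal direction $s_a$, and each target $v_c$, is stored as a linear combination of the $m=O(\varepsilon^{-2})$ original learner outputs, so every intermediate bond is at most $mB_0=\poly(n,d,D^{w_T},1/\varepsilon)$. Each part of the quantum cost is then polynomial in that bond and in $m$: the residual-mass estimates, the postselection overhead of $O(1/\theta)$ per residual copy, the improper-learning calls, and the support-matrix estimation at entrywise accuracy $O(\varepsilon/m)$. This gives a copy complexity of $\poly(n,d,D^{w_T},1/\varepsilon,\log(1/\delta))$.

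The runtime is the step I expect to need the most care, since $D^{w_T}=n^{O(\Delta_T\log D)}$ is only quasi-polynomially benign in general. For fixed $D,\Delta_T$, though, it is $n^{O(1)}$, so $B_0$ and $mB_0$ are polynomial in $n$. There are $(C/\varepsilon)^{O(1/\varepsilon)}$ pure-target calls. Each one first converts the MPS to a TTN of bond at most $(mB_0)^2$, in polynomial time by Lemma~\ref{lem:ttn-mps-comparison}(3). It then compresses that TTN by Theorem~\ref{thm:ttn-comparator-dual-compression}, at cost $\poly(n,d,(mB_0)^{2\Delta_T},K^{\Delta_T})$ with $K=O(D/\varepsilon^2)$; this is polynomial in $n$ for fixed $\Delta_T$. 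Finally it runs Theorem~\ref{thm:ttn-bounded-target-optimizer} with runtime $n\cdot(Cn\sqrt{KD}/\varepsilon)^{O(\Delta_T KD+dD^{\Delta_T+1})}$. Collecting all exponents of $n$ into a computable $F(d,D,\Delta_T,1/\varepsilon)$ gives the stated bound. The residual $\poly(\log(1/\delta))$ factor comes from the repetitions needed for the union bound in the quantum stage.
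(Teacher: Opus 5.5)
Your proposal is correct and follows the paper's proof essentially step for step. Both use the properization pair consisting of $\ttn_{T,D}$ and MPSs in the heavy-child-last ordering $\pi$, with $R'=D^{w_T}$. Both take the MPS improper learner and the MPS explicit operations, use Corollary~\ref{cor:ttn-mps-target-optimizer} as the pure-target optimizer, and keep every bond at most $mB_0$ by writing directions and targets as combinations of the original learner outputs. The only difference is that you unpack Theorem~\ref{thm:ttn-arbitrary-target-optimizer} into its compression and bounded-target stages to make the exponent of $n$ explicit, whereas the paper cites that theorem directly.
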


\begin{proof}
Fix the ordering $\pi=(v_1,\ldots,v_n)$ from Lemma~\ref{lem:ttn-heavy-last-order}. For $1\leq i\leq n-1$, write
$\mathcal H_{\pi,1:i}:=\bigotimes_{j=1}^i\mathcal H_{v_j}$ and
$\mathcal H_{\pi,i+1:n}:=\bigotimes_{j=i+1}^n\mathcal H_{v_j}$.
Apply Definition~\ref{def:properization-data} with
\begin{equation}
\mathcal C:=\ttn_{T,D},\qquad\mathcal V_R:= \left\{u\in\mathcal H_T:\operatorname{rank}_{\mathcal H_{\pi,1:i}\,|\,\mathcal H_{\pi,i+1:n}}(u)\leq R \text{ for every }1\leq i\leq n-1 \right\}.
\label{eq:ttn-properization-pair}
\end{equation}
By Lemma~\ref{lem:mps-cut-rank-characterization}, $\mathcal V_R$ is precisely the class of possibly unnormalized open-boundary MPSs of bond dimension at most $R$ in the ordering $\pi$, and hence $\mathcal A_R=\mps_R^\pi$. Lemma~\ref{lem:ttn-mps-comparison} therefore gives $\mathcal C\subseteq\mathcal A_{R'}$ with $R':=D^{w_T}$. Lemma~\ref{lem:mps-linear-combinations} supplies ambient linear combination closure with
\begin{equation}
\Lambda(R_1,\ldots,R_q):=\sum_{a=1}^q R_a.
\label{eq:ttn-ambient-lambda}
\end{equation}
Theorem~\ref{thm:mps-improper-learner} supplies the improper ambient learner, Corollary~\ref{cor:ttn-mps-target-optimizer} supplies the proper pure-target optimizer, and Lemma~\ref{lem:mps-explicit-operations} supplies the remaining explicit operations. A fixed product state belongs to $\mathcal C$. Therefore Theorem~\ref{thm:general-improper-to-proper} gives \eqref{eq:ttn-proper-agnostic-guarantee}.

The comparator parameter remains $R':=D^{w_T} =D^{\Delta_T(1+\lceil\log_2n\rceil)}$ in every improper-learning call.
Each learner output therefore has MPS bond $B_0=\poly(n,d,R',1/\varepsilon)$. There are $m=O(\varepsilon^{-2})$ accepted outputs. Representing all orthonormal directions and pure targets directly as linear combinations of these original outputs
keeps their MPS bonds at most $mB_0$. The quantum stage uses only the MPS explicit operations and the fixed-comparator improper learner. Its copy complexity is consequently $\poly(n,d,R',1/\varepsilon,\log(1/\delta))$. For fixed $D$ and $\Delta_T$, the parameter $R'$ is polynomial in $n$.

The subsequent computation is classical. Each pure target
is converted to a TTN of bond at most $(mB_0)^2$ and
optimized using Theorem~\ref{thm:ttn-arbitrary-target-optimizer}.
For fixed tree degree, this conversion has size polynomial in $n$ and $mB_0$. The coefficient search uses $(C/\varepsilon)^{O(1/\varepsilon)}$ targets.
\end{proof}
\section{Learning with Branch Structure}
\label{sec:learning-with-branching-structure}

We consider coherent superpositions of $r$ normalized bond-$D$ MPS branches. A single MPS representation need not reveal this decomposition, even when the components represent distinct physical alternatives, as in the two product branches of a GHZ state. Our goal is to return the branches and their coefficients. We require small interference under operators supported on at most $k$ sites so that few-site observables have little sensitivity to coherence between the branches. In the zero-interference case, such observables have identical expectations in the superposition and the corresponding incoherent mixture, although the branches themselves may be locally distinguishable. The algorithm preserves the branch count and bond bounds, while allowing the interference to increase by at most a prescribed amount.

\subsection{Locally noninterfering superpositions}

For $A\subseteq[n]$, write $\overline A:=[n]\setminus A$. Given normalized states $\psi_a,\psi_b$, define
\begin{equation}
\rho_{a,\overline A}:=\tr_A(|\psi_a\rangle\langle\psi_a|),\qquad\Gamma_{ab}^{A}:=\tr_{\overline A}(|\psi_b\rangle\langle\psi_a|).
\label{eq:branching-reduced-and-transition-operators}
\end{equation}
Expanding in product bases gives
\begin{equation}
\|\Gamma_{ab}^{A}\|_2^2=\tr(\rho_{a,\overline A}\rho_{b,\overline A}),\qquad\langle\psi_a|O_A|\psi_b\rangle=\tr(O_A\Gamma_{ab}^{A}).
\label{eq:branching-local-interference-identities}
\end{equation}
Thus $\Gamma_{ab}^{A}=0$ precisely when no operator on $A$ connects the branches. Equivalently, their reduced states on $\overline A$ have orthogonal supports, so the branches remain perfectly distinguishable after discarding $A$. For example, $|0\rangle^{\otimes n}$ and $|1\rangle^{\otimes n}$ have this property whenever $A\neq[n]$. Approximately, Hilbert--Schmidt duality gives
\begin{equation}
|\langle\psi_a|O_A|\psi_b\rangle|\leq\|O_A\|_2\|\Gamma_{ab}^{A}\|_2\leq d^{|A|/2}\|O_A\|_\infty\|\Gamma_{ab}^{A}\|_2. 
\label{eq:branching-approximate-local-interference}
\end{equation}

For $k\in\{0,\ldots,n\}$ and an ordered branch tuple $\boldsymbol\psi=(\psi_1,\ldots,\psi_r)$, define the interference
\begin{equation}
\mathcal I_k(\boldsymbol\psi):=\sum_{a\neq b}\sum_{\substack{A\subseteq[n]\\|A|\leq k}}\|\Gamma_{ab}^{A}\|_2^2,\qquad G(\boldsymbol\psi)_{ab}:=\langle\psi_a|\psi_b\rangle.
\label{eq:branching-total-interference}
\end{equation}
The sum is unnormalized and includes $A=\varnothing$. Hence,
\begin{equation}
\|G-I_r\|_\infty\leq\|G-I_r\|_{\mathrm F}\leq\sqrt{\mathcal I_k(\boldsymbol\psi)}.
\label{eq:branching-gram-perturbation}
\end{equation}
Fix $\mu_0:=1/64$. Whenever $\mathcal I_k(\boldsymbol\psi)\leq\mu_0$, we therefore have
\begin{equation}
\frac78I_r\preceq G\preceq\frac98I_r.
\label{eq:branching-gram-conditioning}
\end{equation}

\begin{definition}[Locally noninterfering superpositions]
\label{def:branching-superposition-class}
For $0\leq\mu\leq\mu_0$, let $\mathcal L_{r,D,k}(\mu)$ consist of states
\begin{equation}
|\Phi\rangle=\sum_{a=1}^r c_a|\psi_a\rangle,\qquad \psi_a\in\mps_D,\qquad \mathcal I_k(\boldsymbol\psi)\leq\mu,\qquad c^\dagger Gc=1.
\label{eq:branching-class-constraints}
\end{equation}
For $X\succeq0$, write
\begin{equation}
\opt_{\mathcal L_{r,D,k}(\mu)}(X):=
\max_{\Phi\in\mathcal L_{r,D,k}(\mu)}
\langle\Phi|X|\Phi\rangle.
\label{eq:branching-optimum}
\end{equation}
\end{definition}

This class is compact since $\mps_D$ is compact, the constraints are closed, and \eqref{eq:branching-gram-conditioning} gives $\|c\|_2^2\leq8/7$. By Lemma~\ref{lem:mps-linear-combinations}, it is contained in $\mps_{rD}$.

\subsection{Recursive evaluation of the interference}

The quantities $\|\Gamma_{ab}^{A}\|_2^2$ can be evaluated recursively as the branches are contracted from left to right. Since the total interference sums these quantities over all subsets $A$ of size at most $k$, we group subsets by their cardinality and track one averaged contraction for each $|A|=0,\ldots,k$.

Put each branch in a padded form as in Lemma~\ref{lem:mps-sequential-isometric-form} with local isometries
\begin{equation}
A_{a,i}:\mathbb C^{\chi_{i-1}}\longrightarrow\mathbb C^d\otimes\mathbb C^{\chi_i},\qquad \chi_i:=\min\{D,d^i,d^{n-i}\}.
\label{eq:branching-padded-branch-dimensions}
\end{equation}
Let $S_i$ swap the two physical copies of site $i$ and let $B_{ab,i}:=A_{a,i}\otimes A_{b,i}$ with the physical factors placed first. For $O\in\{I,S_i\}$, define
\begin{equation}
\mathcal F_{ab,i}^{O}(X):=\tr_{\mathrm{phys}}\!\left[(O\otimes I)B_{ab,i}X B_{ab,i}^\dagger\right].
\label{eq:branching-swap-transfer}
\end{equation}
This map updates the contraction of the two branches while leaving the outgoing virtual indices open. In the reduced-state overlap, a site in $A$ is discarded and contributes an identity while a site outside $A$ contributes a swap. Let $H_{i,\ell}^{ab}$ be the average partial contraction over all $\ell$-element subsets $A\subseteq[i]$, with identities on $A$ and swaps elsewhere. Initialize $H_{0,0}^{ab}:=1$ and set invalid-index messages to zero.

To update this average, split the subsets according to whether they contain site $i$. If $i\notin A$, the first $i-1$ sites contain all $\ell$ discarded sites, and we apply $\mathcal F_{ab,i}^{S_i}$ to $H_{i-1,\ell}^{ab}$. If $i\in A$, the first $i-1$ sites contain $\ell-1$ discarded sites and we apply $\mathcal F_{ab,i}^{I}$ to $H_{i-1,\ell-1}^{ab}$. Among uniformly chosen $\ell$-element subsets, these cases occur with probabilities $(i-\ell)/i$ and $\ell/i$. Thus, for $1\leq i\leq n$ and $0\leq\ell\leq\min\{k,i\}$,
\begin{equation}
H_{i,\ell}^{ab}:=\frac{i-\ell}{i}\mathcal F_{ab,i}^{S_i}(H_{i-1,\ell}^{ab})+\frac{\ell}{i}\mathcal F_{ab,i}^{I}(H_{i-1,\ell-1}^{ab}).
\label{eq:branching-h-recursion}
\end{equation}
Averaging rather than summing makes the update a convex combination of contractions, keeping every message bounded independently of $i$. The factors $\binom n\ell$ are restored only when recovering the original interference sum at the final cut.

\begin{lemma}
\label{lem:branching-finite-memory}
For $O\in\{I,S_i\}$, $\mathcal{F}_{ab,i}^{O}(X)$ is trace-norm contractive. If each local branch isometry changes by at most $h$ in operator norm, then
\begin{equation}
\|\mathcal F_{ab,i}^{O}-\widetilde{\mathcal F}_{ab,i}^{O}\|_{1\to1}\leq4h.
\label{eq:branching-swap-transfer-stability}
\end{equation}
Moreover, $\|H_{i,\ell}^{ab}\|_1\leq1$, and the final scalar messages satisfy
\begin{equation}
\begin{aligned}
H_{n,\ell}^{ab}&=\frac{1}{\binom n\ell}\sum_{\substack{A\subseteq[n]\\|A|=\ell}}\|\Gamma_{ab}^{A}\|_2^2,\\
\mathcal I_k(\boldsymbol\psi)&=\sum_{a\neq b}\sum_{\ell=0}^k\binom n\ell H_{n,\ell}^{ab}.
\end{aligned}
\label{eq:branching-interference-from-h}
\end{equation}
\end{lemma}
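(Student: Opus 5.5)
The plan is to follow the template of Lemma~\ref{lem:mps-environment-stability} and treat $\mathcal F^{O}_{ab,i}$ as a map of the form $X\mapsto \tr_{\mathrm{phys}}[(O\otimes I)BXB^\dagger]$, where $B=B_{ab,i}$ is an isometry and $O$ is a unitary on the doubled physical space.

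\textbf{Trace-norm contraction.} I would prove $\|\mathcal F^{O}_{ab,i}(W)\|_1\le\|W\|_1$ by duality. For $\|Z\|_\infty\le1$,
\begin{equation*}
\tr\bigl(Z^\dagger\mathcal F^O(W)\bigr)=\tr\bigl(B^\dagger(O\otimes Z^\dagger)BW\bigr),
\end{equation*}
and the operator in front of $W$ has operator norm at most $\|B\|_\infty^2\|O\|_\infty\|Z\|_\infty\le1$. Here $B=A_{a,i}\otimes A_{b,i}$ is an isometry (after reordering factors, which is unitary) and $O\in\{I,S_i\}$ is unitary. Hölder's inequality then gives the bound.

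\textbf{Perturbation bound.} If $\widetilde B=\widetilde A_a\otimes\widetilde A_b$ with $\|\widetilde A_a-A_a\|_\infty,\|\widetilde A_b-A_b\|_\infty\le h$, telescoping gives $\|B-\widetilde B\|_\infty\le 2h$, since all factors are contractions (net tensors may be taken as isometries, or else have norm $\le1+h$; I would assume isometric net elements as in the DP). Writing $BXB^\dagger-\widetilde BX\widetilde B^\dagger=(B-\widetilde B)XB^\dagger+\widetilde BX(B-\widetilde B)^\dagger$, the same duality argument bounds each term by $2h\|X\|_1$. This gives $4h$. The only care needed is confirming that partial trace with $O$ inserted is dual to $O\otimes Z^\dagger$, which is the same computation as above.

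\textbf{Message bound.} This follows by induction on $i$. $H^{ab}_{0,0}=1$, invalid messages are zero, and \eqref{eq:branching-h-recursion} is a convex combination with weights $(i-\ell)/i+\ell/i=1$ of contractions applied to messages of trace norm at most one.

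\textbf{Final identity.} This is the main step, and I would split it into two parts.

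First, fix $A$ and define the unaveraged contraction $M_i^{A}$, which applies $\mathcal F^{I}$ at sites in $A\cap[i]$ and $\mathcal F^{S}$ at the other sites. I would show $M_n^A=\tr(\rho_{a,\overline A}\rho_{b,\overline A})$. Contracting two copies of $\psi_a\otimes\psi_b$ (ket of $a$ and $b$ against their bras), the identity on $A$ traces out $A$ in each copy, giving $\rho_{a,\overline A}\otimes\rho_{b,\overline A}$. The swap on $\overline A$ then yields $\tr[S(\rho_{a,\overline A}\otimes\rho_{b,\overline A})]=\tr(\rho_{a,\overline A}\rho_{b,\overline A})$, which equals $\|\Gamma^A_{ab}\|_2^2$ by \eqref{eq:branching-local-interference-identities}. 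Padding in Lemma~\ref{lem:mps-sequential-isometric-form} does not change the state, and $\chi_n=1$ makes the output scalar.

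Second, I would prove by induction on $i$ that $H^{ab}_{i,\ell}=\binom{i}{\ell}^{-1}\sum_{A\subseteq[i],|A|=\ell}M_i^A$. Splitting by $i\in A$, linearity of $\mathcal F$ gives
\begin{equation*}
\sum_{|A|=\ell}M_i^A=\mathcal F^{S}\Bigl(\sum_{|A'|=\ell,A'\subseteq[i-1]}M^{A'}_{i-1}\Bigr)+\mathcal F^{I}\Bigl(\sum_{|A'|=\ell-1}M^{A'}_{i-1}\Bigr).
\end{equation*}
The identities $\binom{i-1}{\ell}/\binom{i}{\ell}=(i-\ell)/i$ and $\binom{i-1}{\ell-1}/\binom i\ell=\ell/i$ then match \eqref{eq:branching-h-recursion}. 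Boundary cases ($\ell=0$ or $\ell=i$) are handled by the zero convention.

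Setting $i=n$ and substituting into \eqref{eq:branching-total-interference} gives the formula for $\mathcal I_k$. I expect the main obstacle to be the bookkeeping in the first part, namely the index placement showing that the swap reproduces the product of reduced states rather than a product of traces.
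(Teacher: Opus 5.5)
Your proposal is correct and follows the paper's proof step for step. It uses the same Schatten-duality bound $\|B^\dagger(O\otimes Z^\dagger)B\|_\infty\le1$ for contractivity, the same split of $BXB^\dagger-\widetilde BX\widetilde B^\dagger$ giving $2\|B-\widetilde B\|_\infty\le4h$, and the same unaveraged subset contractions $M_i^{A}$ with the binomial-ratio induction and the swap identity $\tr(S(R\otimes T))=\tr(RT)$. The only cosmetic difference is that you get $\|H^{ab}_{i,\ell}\|_1\le1$ directly from the convex-combination recursion, whereas the paper reads it off the explicit subset-average formula; both are immediate.
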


\begin{proof}
Since $\|B_{ab,i}\|_\infty=\|O\|_\infty=1$, Schatten duality gives
\begin{equation}
\begin{aligned}
\|\mathcal F_{ab,i}^{O}(X)\|_1&=\sup_{\|W\|_\infty\leq1}\left|\tr\!\left[B_{ab,i}^\dagger(O\otimes W^\dagger)B_{ab,i}X\right]\right|\\
&\leq\sup_{\|W\|_\infty\leq1}\left\|B_{ab,i}^\dagger(O\otimes W^\dagger)B_{ab,i}\right\|_\infty\|X\|_1\\
&\leq\|X\|_1.
\end{aligned}
\label{eq:branching-swap-duality}
\end{equation}
For perturbed branch isometries,
\begin{equation}
\|B_{ab,i}-\widetilde B_{ab,i}\|_\infty\leq\|A_{a,i}-\widetilde A_{a,i}\|_\infty+\|A_{b,i}-\widetilde A_{b,i}\|_\infty\leq2h.
\label{eq:branching-paired-isometry-perturbation}
\end{equation}
Applying the same duality argument,
\begin{equation}
\begin{aligned}
\|\mathcal F_{ab,i}^{O}-\widetilde{\mathcal F}_{ab,i}^{O}\|_{1\to1}
&\leq\sup_{\|W\|_\infty\leq1}\Big(
\left\|(B_{ab,i}-\widetilde B_{ab,i})^\dagger(O\otimes W^\dagger)B_{ab,i}\right\|_\infty\\
&\qquad+\left\|\widetilde B_{ab,i}^\dagger(O\otimes W^\dagger)(B_{ab,i}-\widetilde B_{ab,i})\right\|_\infty\Big)\\
&\leq2\|B_{ab,i}-\widetilde B_{ab,i}\|_\infty\leq4h.
\end{aligned}
\label{eq:branching-swap-perturbation-proof}
\end{equation}

For $A\subseteq[i]$, let
\begin{equation}
M_i^{ab}(A):=\left(\mathcal F_{ab,i}^{O_i(A)}\circ\cdots\circ\mathcal F_{ab,1}^{O_1(A)}\right)(1),\qquad O_j(A):=\begin{cases}I,&j\in A,\\S_j,&j\notin A.\end{cases}
\label{eq:branching-subset-contraction}
\end{equation}
Contraction gives $\|M_i^{ab}(A)\|_1\leq1$. For $0\leq\ell\leq\min\{k,i\}$, induction in \eqref{eq:branching-h-recursion} gives
\begin{equation}
\begin{aligned}
H_{i,\ell}^{ab}&=\frac{1}{\binom i\ell}\sum_{\substack{A\subseteq[i]\\|A|=\ell}}M_i^{ab}(A),\\
\|H_{i,\ell}^{ab}\|_1&\leq\frac{1}{\binom i\ell}\sum_{\substack{A\subseteq[i]\\|A|=\ell}}\|M_i^{ab}(A)\|_1\leq1.
\end{aligned}
\label{eq:branching-subset-average}
\end{equation}
At $i=n$, using $\tr(S(R\otimes T))=\tr(RT)$,
\begin{equation}
\begin{aligned}
M_n^{ab}(A)&=\langle\psi_a\otimes\psi_b|\prod_{j\notin A}S_j|\psi_a\otimes\psi_b\rangle\\
&=\tr(\rho_{a,\overline A}\rho_{b,\overline A})=\|\Gamma_{ab}^{A}\|_2^2.
\end{aligned}
\label{eq:branching-swap-identity}
\end{equation}
Substituting into \eqref{eq:branching-subset-average} and summing over $a\neq b$ and $\ell\leq k$ proves \eqref{eq:branching-interference-from-h}.
\end{proof}

\subsection{Optimization with a relaxed interference bound}

For fixed feasible branches and a target $v$, optimizing the coefficients reduces to projection onto the branch span. With $z_a:=\langle\psi_a|v\rangle$, we have
\begin{equation}
\max_{c^\dagger Gc=1}\left|\left\langle v\middle|\sum_a c_a\psi_a\right\rangle\right|^2=z^\dagger G^{-1}z.
\label{eq:branching-generalized-rayleigh}
\end{equation}
Indeed, substitute $y=G^{1/2}c$. If $z\neq0$, a maximizing choice of coefficients is
\begin{equation}
c=\frac{G^{-1}z}{\sqrt{z^\dagger G^{-1}z}};
\label{eq:branching-optimal-coefficients}
\end{equation}
if $z=0$, take $c=e_1$.

The comparison class has interference at most $\mu$; the returned decomposition may have interference up to $\mu+\sigma$.

\begin{theorem}
\label{thm:branching-pure-target-optimizer}
Let $\mathcal L_{r,D,k}(\mu)\neq\varnothing$, $\eta\in(0,1)$, and $0<\sigma\leq\mu_0-\mu$. Given a normalized explicit MPS $u$ of bond dimension $B$, a deterministic algorithm returns normalized branches $\widehat\psi_a\in\mps_D$ and coefficients $\widehat c$ such that
\begin{equation}
\widehat\Phi:=\sum_{a=1}^r\widehat c_a\widehat\psi_a\in\mathcal L_{r,D,k}(\mu+\sigma)
\label{eq:branching-pure-target-output}
\end{equation}
and
\begin{equation}
|\langle u|\widehat\Phi\rangle|^2\geq\max_{\Phi\in\mathcal L_{r,D,k}(\mu)}|\langle u|\Phi\rangle|^2-\eta.
\label{eq:branching-pure-target-guarantee}
\end{equation}
For fixed $d,r,D,k,\eta,\sigma$, its arithmetic complexity is polynomial in $n$ and $B$.
\end{theorem}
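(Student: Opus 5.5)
Compress the target, then run a joint dynamic program over the $r$ branches. By Theorem~\ref{thm:mps-comparator-dual-compression} with comparator bond $rD$ (every $\Phi\in\mathcal L_{r,D,k}(\mu+\sigma)$ has bond at most $rD$ and $\|\Phi\|_2=1$), take $K+1\geq C rD/\eta^2$ and obtain $\widetilde u_K$ with $\sup_{\Phi}|\langle\Phi|u-\widetilde u_K\rangle|\leq\eta/16$ uniformly over both the comparison and output classes. Optimizing against $\widetilde u_K$ to amplitude error $\eta/4$ then transfers back to $u$ by the triangle inequality and difference of squares, exactly as in Theorem~\ref{thm:mps-arbitrary-target-optimizer}. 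It therefore suffices to treat a target of bond $K$ independent of $n$ and $B$.

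The dynamic program runs left to right over tuples of local isometries $(A_{1,i},\ldots,A_{r,i})$ drawn from an operator-norm $h$-net, as in Algorithm~\ref{alg:mps-bounded-target-dp}. Each partial tuple carries a state consisting of:
\begin{itemize}
\item the $r$ cross environments $X_i^{(a)}$ with the target, updated by \eqref{eq:mps-cross-environment-update};
\item the pairwise Gram environments between branches, an analogous trace-norm-contractive transfer;
\item the averaged interference messages $H_{i,\ell}^{ab}$ for $0\leq\ell\leq k$ and $a\neq b$, updated by \eqref{eq:branching-h-recursion}.
\end{itemize}
All these objects have dimension bounded in $K,D,d$ and trace norm at most one (Lemma~\ref{lem:mps-environment-stability} and Lemma~\ref{lem:branching-finite-memory}). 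We discretize this joint state with a trace-norm $q$-net and keep one exact prefix per occupied cell. At the final site we compute exactly the quantities $z_a$, $G$ and $\mathcal I_k$ (via \eqref{eq:branching-interference-from-h}). Among candidates with $\mathcal I_k\leq\mu+\sigma$, we pick the one maximizing $z^\dagger G^{-1}z$ and set the coefficients by \eqref{eq:branching-optimal-coefficients}. This choice makes the output feasible and normalized by construction: $\widehat c^\dagger G\widehat c=1$, the branches are exact isometric MPSs, and $\mu+\sigma\leq\mu_0$ keeps $G$ well conditioned.

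For the error analysis, follow the path of an optimal $\Phi^\star=\sum c_a^\star\psi_a^\star$. Propagating errors as in Theorem~\ref{thm:mps-bounded-target-optimizer}, each environment and Gram entry of the surviving prefix is within $n(4h+2q)$ of the optimum's, and each $H_{n,\ell}^{ab}$ is within the same bound. The \emph{main obstacle} is that the interference is recovered with multipliers $\binom n\ell\leq n^k$, so the merged candidate's interference may exceed $\mu$ by $r^2(k+1)n^k\cdot n(4h+2q)$. I will choose $h,q=\Theta(\min\{\sigma,\eta\}/(r^2(k+1)n^{k+1}))$ so that this excess is at most $\sigma$. That guarantees the surviving path is feasible for $\mu+\sigma$, which is exactly why the interference bound must be relaxed.

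This choice of $h$ and $q$ makes the net sizes $(n^{k+1}/\min\{\sigma,\eta\})^{O(r^2D^2+rKD+k r^2D^2+rdD^2)}$, which is polynomial in $n$ for fixed parameters. The value $z^\dagger G^{-1}z$ is Lipschitz in $(z,G)$ because $G\succeq\frac78 I-O(\sigma)$, so the optimum's score is retained up to $O(n(h+q))\leq\eta/4$. The compression sweep costs $\poly(n,d,B,K)$.
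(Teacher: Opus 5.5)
Your proposal is correct and follows essentially the same route as the paper. Both compress the target with comparator bond $rD$ and run a joint dynamic program over the $r$ branches that carries the target cross environments, the Gram environments $Y_i^{ab}$, and the averaged interference messages $H_{i,\ell}^{ab}$. Both choose the nets fine enough (the paper takes $h=q=t/(6n)$ with $t\le\sigma/(2r^2N_{n,k})$) that the surviving copy of an optimal tuple stays within interference $\mu+\sigma$ and within $\eta/2$ of the optimal $z^\dagger G^{-1}z$.

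There is one small slip. The messages $H_{i,\ell}^{ab}$ act on two copies of the virtual space, so they have $O(D^4)$ entries rather than $O(D^2)$. This changes only the exponent in your net-size count, not the polynomial-in-$n$ conclusion.
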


\begin{proof}
First consider a target $v$ of bond at most $K$ and norm at most one. Represent it by sequential tensors $V_i$ that are isometries except for a boundary tensor absorbing its norm. For $v=0$, set the tensor to zero. The proof of Lemma~\ref{lem:mps-environment-stability} also applies to these contractive target tensors.

Search jointly over the $r$ branches as in Theorem~\ref{thm:mps-bounded-target-optimizer}. Besides the interference messages $H_{i,\ell}^{ab}$, retain the cross environments
\begin{equation}
\begin{aligned}
X_i^a&:=\sum_s V_i^sX_{i-1}^a(A_{a,i}^s)^\dagger,\qquad X_0^a:=1,\\
Y_i^{ab}&:=\sum_s A_{b,i}^sY_{i-1}^{ab}(A_{a,i}^s)^\dagger,\qquad Y_0^{ab}:=1.
\end{aligned}
\label{eq:branching-cross-environments}
\end{equation}
At the final cut these give $z_a=X_n^a$ and $G_{ab}=Y_n^{ab}$. All messages have trace norm at most one, and their total real dimension is
\begin{equation}
M=O\!\left(rKD+r^2(k+1)D^4\right).
\label{eq:branching-joint-environment-dimension}
\end{equation}

Use operator-norm $h$-nets of the local branch isometries and a product grid of trace-norm radius $q$ for the messages. These are constructive finite-dimensional discretizations: rounding a local isometry to Frobenius accuracy $h/2$ and taking the polar factor gives an isometry within operator norm $h$. For the environments, Frobenius grid radius $q/D$ suffices since every matrix difference has rank at most $D^2$. At each nonfinal cut, extend every retained prefix tuple by every local tensor tuple, compute its messages, and keep one exact prefix tuple per occupied grid cell. At the final cut, discard tuples with $\mathcal I_k>\mu+\sigma$ and return one maximizing $z^\dagger G^{-1}z$ with coefficients from \eqref{eq:branching-optimal-coefficients}.

Follow an optimal feasible branch tuple through this search. Let $e_i$ be the maximum trace-norm error among its messages and those of the surviving representative at cut $i$. The local perturbation costs are at most $h$ for $X$, $2h$ for $Y$, and $4h$ for $H$. Since \eqref{eq:branching-h-recursion} is a convex combination of contractions, Lemmas~\ref{lem:mps-environment-stability} and \ref{lem:branching-finite-memory} give
\begin{equation}
e_i\leq e_{i-1}+4h+2q,\qquad e_n\leq n(4h+2q).
\label{eq:branching-environment-error}
\end{equation}
Set
\begin{equation}
N_{n,k}:=\sum_{\ell=0}^k\binom n\ell,\qquad t:=\min\left\{\frac{\eta}{12r},\frac{\sigma}{2r^2N_{n,k}}\right\},\qquad h=q:=\frac{t}{6n}.
\label{eq:branching-net-parameters}
\end{equation}
Then $e_n\leq t$. Writing stars for the optimal tuple and hats for the surviving tuple, \eqref{eq:branching-interference-from-h} yields
\begin{equation}
\mathcal I_k(\widehat{\boldsymbol\psi})\leq\mu+r(r-1)N_{n,k}e_n\leq\mu+\frac{\sigma}{2}.
\label{eq:branching-output-interference-bound}
\end{equation}
In particular, this tuple is not discarded. Moreover, $\|\widehat z-z^\star\|_2\leq\sqrt r\,e_n$ and $\|\widehat G-G^\star\|_\infty\leq r e_n$. Both Gram matrices satisfy \eqref{eq:branching-gram-conditioning}. Using $\|z\|_2\leq\sqrt{\|G\|_\infty}$ and $G^{-1}-(G')^{-1}=G^{-1}(G'-G)(G')^{-1}$ gives
\begin{equation}
\begin{aligned}
\left|\widehat z^\dagger\widehat G^{-1}\widehat z-(z^\star)^\dagger(G^\star)^{-1}z^\star\right|&\leq3\left(\|\widehat z-z^\star\|_2+\|\widehat G-G^\star\|_\infty\right)\\
&\leq6r e_n\leq\frac{\eta}{2}.
\end{aligned}
\label{eq:branching-objective-stability}
\end{equation}
Thus the returned tuple has squared overlap within $\eta$ of optimal. For fixed $d,r,D,k,K,\eta,\sigma$, the message and local-tensor dimensions are fixed, while $N_{n,k}=O_k(n^k)$ and $h^{-1},q^{-1}$ are polynomial in $n$. So the search has polynomial size.

For an arbitrary target $u$, apply Theorem~\ref{thm:mps-comparator-dual-compression} with comparator bond $rD$ and $K=\lceil64rD/\eta^2\rceil$. The compressed target $\widetilde u_K$ has norm at most one and satisfies
\begin{equation}
\sup_{\Phi\in\mps_{rD}}|\langle\Phi|u-\widetilde u_K\rangle|\leq\frac{\eta}{8}.
\label{eq:branching-compression-error}
\end{equation}
Hence replacing $u$ by $\widetilde u_K$ changes each squared overlap by at most $\eta/4$. Apply the bounded-target search with error $\eta/2$, allowing the interference bound to increase by $\sigma$. Both the comparator and returned state belong to $\mps_{rD}$, so transferring their scores back to $u$ costs at most another $\eta/2$. Compression costs $\poly(n,d,B,K)$ arithmetic operations, and $K$ is independent of $n$ and $B$, proving the theorem.
\end{proof}

\subsection{Agnostic learning}

\begin{theorem}[Agnostic learning with a relaxed interference bound]
\label{thm:proper-agnostic-branching}
Let $\mathcal L_{r,D,k}(\mu)\neq\varnothing$ and $0<\sigma\leq\mu_0-\mu$. Given copies of an arbitrary density operator $\rho$ and $\varepsilon,\delta\in(0,1/4)$, a randomized quantum algorithm returns normalized branches $\widehat\psi_a\in\mps_D$ and coefficients $\widehat c$ defining $\widehat\Phi\in\mathcal L_{r,D,k}(\mu+\sigma)$ such that, with probability at least $1-\delta$,
\begin{equation}
\langle\widehat\Phi|\rho|\widehat\Phi\rangle\geq\opt_{\mathcal L_{r,D,k}(\mu)}(\rho)-\varepsilon.
\label{eq:branching-proper-agnostic-guarantee}
\end{equation}
The copy complexity is $\poly(n,d,rD,1/\varepsilon,\log(1/\delta))$. For fixed $d,r,D,k,\varepsilon,\sigma$, the runtime is polynomial in $n$ and $\log(1/\delta)$.
\end{theorem}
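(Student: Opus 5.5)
The plan is to run the improper-to-proper reduction with the comparator class $\mathcal C:=\mathcal L_{r,D,k}(\mu)$. The one twist is that the proper pure-target optimizer of Theorem~\ref{thm:branching-pure-target-optimizer} returns states in the slightly larger class $\mathcal L_{r,D,k}(\mu+\sigma)$ rather than in $\mathcal C$ itself. So I would not cite Theorem~\ref{thm:general-improper-to-proper} as a black box. Instead I would check that its proof only ever compares the output's score against $\opt_{\mathcal C}$, and never uses membership of the output in $\mathcal C$ except to bound $\|\widehat\phi\|_2=1$.

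The ambient side is set up as in Theorem~\ref{thm:proper-agnostic-mps}, taking $\mathcal V_R$ to be the MPS cut-rank classes. Since $\mathcal L_{r,D,k}(\mu)\subseteq\mps_{rD}$ by Lemma~\ref{lem:mps-linear-combinations}, the embedding condition holds with $R'=rD$. Compactness of $\mathcal C$ is noted after Definition~\ref{def:branching-superposition-class}. The improper learner (Theorem~\ref{thm:mps-improper-learner}) and the explicit operations (Lemma~\ref{lem:mps-explicit-operations}) carry over unchanged.

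Theorem~\ref{thm:general-finite-relevant-subspace} uses only $\mathcal C\subseteq\mathcal A_{R'}$, so it yields $P$ with $\sup_{\phi\in\mathcal C}|\langle\phi|\rho|\phi\rangle-\langle\phi|P\rho P|\phi\rangle|\le2\sqrt\theta$. This control holds only over $\mathcal C$, but the output lies in $\mathcal L(\mu+\sigma)$. To handle this, I would run the relevant-subspace construction against the enlarged class $\mathcal C^+:=\mathcal L_{r,D,k}(\mu+\sigma)$, which is still inside $\mps_{rD}$ and still compact because $\mu+\sigma\le\mu_0$. Then the approximation bound holds for both the comparator and the output. The fallback state $\phi_0$ is taken to be any element of the nonempty class $\mathcal L(\mu)$.

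Next comes the spectral cutoff and the coefficient net of Lemma~\ref{lem:general-spectral-cutoff}. For each normalized target $w_c$, I call Theorem~\ref{thm:branching-pure-target-optimizer} with error $\zeta/4$, which gives $\phi_c\in\mathcal C^+$ with $|\langle w_c|\phi_c\rangle|^2\ge\max_{\Phi\in\mathcal C}|\langle w_c|\Phi\rangle|^2-\zeta/4$. I then rerun the chain of inequalities \eqref{eq:general-net-candidate-score}. The step $\langle\phi_c|\widehat A_\tau|\phi_c\rangle\ge|\langle v_c|\phi_c\rangle|^2$ needs only that $\phi_c$ is normalized, and the competitor $\phi^\star$ ranges over $\mathcal C$. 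This gives $\langle\widehat\phi|\widehat A_\tau|\widehat\phi\rangle\ge\opt_{\mathcal C}(\widehat A_\tau)-\zeta$. Transferring to $A$ and then to $\rho$ costs $2(\xi+\tau)$ and $4\sqrt\theta$: one $2\sqrt\theta$ from $\mathcal C$ for the optimum and one from $\mathcal C^+$ for the output. With the parameters \eqref{eq:general-main-parameters} the total loss is $13\varepsilon/16$.

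For complexity, the copy count is as in Theorem~\ref{thm:proper-agnostic-mps} with $D\mapsto rD$. Classically there are $(C/\varepsilon)^{O(1/\varepsilon)}$ calls to the optimizer of Theorem~\ref{thm:branching-pure-target-optimizer}, each polynomial in $n$ and in the target bond $mB_0=\poly(n)$.

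The main obstacle is the mismatch between $\mathcal C$ and the output class. The fix above needs the improper learner guarantee to dominate $\max_{\phi\in\mathcal C^+}\langle Q\phi|\rho|Q\phi\rangle$, which holds because $\mathcal C^+\subseteq\mps_{rD}$. It also needs the proof of Theorem~\ref{thm:general-finite-relevant-subspace} to be insensitive to which compact subclass of $\mathcal A_{R'}$ is used. Both points should go through.
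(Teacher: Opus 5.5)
Your proposal is essentially the paper's proof. Like the paper, you rerun the coefficient-net argument with the branching optimizer and apply the core approximation to both the comparator and the relaxed output, for a total error of $13\varepsilon/16$. The one difference in the core step is cosmetic. The paper states the core guarantee for all of $\mps_{rD}$, while you state it for $\mathcal C^+$. The relevant-subspace procedure depends only on $R'=rD$, so your $\mathcal C^+$ runs the identical algorithm with a narrower stated conclusion.

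The one point to tighten is the fallback state. Nonemptiness of $\mathcal L_{r,D,k}(\mu)$ guarantees that an element exists, but it does not give the algorithm an explicit representative (branches plus coefficients). The paper closes this by computing $\Phi_0\in\mathcal L_{r,D,k}(\mu+\sigma)$ with one call to Theorem~\ref{thm:branching-pure-target-optimizer} on a product target. That suffices, because outputs only need to lie in the relaxed class.
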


\begin{proof}
First obtain a fixed feasible decomposition $\Phi_0\in\mathcal L_{r,D,k}(\mu+\sigma)$ by applying Theorem~\ref{thm:branching-pure-target-optimizer} to a product target with accuracy $1/2$. Use this output on any failure.

Construct the finite core for the full class $\mps_{rD}$, not merely $\mathcal L_{r,D,k}(\mu)$. Using the parameters in \eqref{eq:general-main-parameters}, Theorem~\ref{thm:general-finite-relevant-subspace} and Lemma~\ref{lem:general-spectral-cutoff} give a truncated estimated core $\widehat A_\tau$ of rank $\ell=O(\varepsilon^{-1})$ satisfying
\begin{equation}
\sup_{\Phi\in\mps_{rD}}\left|\langle\Phi|\rho|\Phi\rangle-\langle\Phi|\widehat A_\tau|\Phi\rangle\right|\leq2\sqrt\theta+\xi+\tau.
\label{eq:branching-core-approximation}
\end{equation}
If $\ell=0$, return $\Phi_0$. Otherwise, repeat the coefficient-net construction in Lemma~\ref{lem:general-spectral-cutoff}, calling Theorem~\ref{thm:branching-pure-target-optimizer} on each normalized pure target with error $\zeta/4$, allowing the interference bound to increase by $\sigma$. The targets have squared norm at most two, so rescaling costs at most $\zeta/2$, and the coefficient net costs at most $\zeta/4$. Choosing the output with largest $\widehat A_\tau$-score therefore gives
\begin{equation}
\langle\widehat\Phi|\widehat A_\tau|\widehat\Phi\rangle
\geq
\opt_{\mathcal L_{r,D,k}(\mu)}(\widehat A_\tau)-\zeta.
\label{eq:branching-estimated-core-guarantee}
\end{equation}
Both the comparator class and the relaxed output class lie in $\mps_{rD}$. Applying \eqref{eq:branching-core-approximation} to both gives total error
\begin{equation}
4\sqrt\theta+2(\xi+\tau)+\zeta=\frac{13\varepsilon}{16}<\varepsilon.
\label{eq:branching-total-learning-error}
\end{equation}
The quantum stage is the MPS core construction at fixed comparator bond $rD$. The capped postselection and failure allocation from Theorem~\ref{thm:general-improper-to-proper} give success probability at least $1-\delta$ and the stated copy complexity. As in the proof of Theorem~\ref{thm:proper-agnostic-mps}, expressing all core directions and pure targets directly in the original learner outputs keeps their MPS bonds polynomial in $n,d,rD,1/\varepsilon$. The remaining $(C/\varepsilon)^{O(1/\varepsilon)}$ pure-target calls are classical and have polynomial runtime in $n$ for fixed $d,r,D,k,\varepsilon,\sigma$ by Theorem~\ref{thm:branching-pure-target-optimizer}.
\end{proof}

\subsection{Recovery under identifiability}

The algorithm finds a good decomposition, but need not recover a particular original decomposition. To state a conditional recovery guarantee, write $\mathcal D=((c_a,\psi_a))_{a=1}^r$, retaining the normalized branches even when a coefficient vanishes, and set $\Phi(\mathcal D):=\sum_a c_a\psi_a$ and $\mathcal I_k(\mathcal D):=\mathcal I_k(\boldsymbol\psi)$. Define
\begin{equation}
d_{\mathrm{dec}}(\mathcal D,\mathcal D'):=\min_{\substack{\pi\in S_r\\|\omega|=1}}\left(\sum_{a=1}^r\|c_a\psi_a-\omega c'_{\pi(a)}\psi'_{\pi(a)}\|_2^2\right)^{1/2}.
\label{eq:branching-decomposition-distance}
\end{equation}
This compares weighted components so zero-weight branches are not distinguished. A decomposition $\mathcal D_\star$ of a normalized state $\Phi_\star$ is identifiable with parameters $\tau,\gamma,\nu$ if every decomposition $\mathcal D$ into $r$ normalized bond-$D$ branches with normalized $\Phi(\mathcal D)$ satisfies
\begin{equation}
\mathcal I_k(\mathcal D)\leq\nu,\quad |\langle\Phi(\mathcal D)|\Phi_\star\rangle|^2\geq1-\gamma\quad\Longrightarrow\quad d_{\mathrm{dec}}(\mathcal D,\mathcal D_\star)\leq\tau.
\label{eq:branching-identifiability}
\end{equation}
Identifiability is an additional assumption, not a consequence of the interference bound.

\begin{corollary}[Recovery under identifiability]
\label{cor:branching-decomposition-recovery}
Suppose $\mathcal D_\star=((c_a,\psi_a))_{a=1}^r$ represents a state $\Phi_\star\in\mathcal L_{r,D,k}(\mu)$ and is identifiable with parameters $\tau,\gamma,\nu$. Apply Theorem~\ref{thm:proper-agnostic-branching} to $\rho=|\Phi_\star\rangle\langle\Phi_\star|$ with $\varepsilon\leq\gamma$ and $\mu+\sigma\leq\nu$. With probability at least $1-\delta$, its output satisfies $d_{\mathrm{dec}}(\widehat{\mathcal D},\mathcal D_\star)\leq\tau$. If also $|c_a|^2\geq p_\star>0$ for every $a$ and $\tau\leq\sqrt{p_\star}/2$, then, after one permutation,
\begin{equation}
\big||\widehat c_a|-|c_a|\big|\leq\tau,\qquad \frac12\big\||\widehat\psi_a\rangle\langle\widehat\psi_a|-|\psi_a\rangle\langle\psi_a|\big\|_1\leq\frac{2\tau}{\sqrt{p_\star}}.
\label{eq:branching-component-recovery}
\end{equation}
\end{corollary}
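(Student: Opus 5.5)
The plan has two parts: first establish decomposition closeness from identifiability, then extract componentwise bounds from that closeness.

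\emph{Identifiability.} Take $\rho=|\Phi_\star\rangle\langle\Phi_\star|$. Since $\Phi_\star\in\mathcal L_{r,D,k}(\mu)$, we have $\opt_{\mathcal L_{r,D,k}(\mu)}(\rho)=1$. On the success event of Theorem~\ref{thm:proper-agnostic-branching}, the output $\widehat\Phi=\Phi(\widehat{\mathcal D})$ is normalized and satisfies
\begin{equation*}
|\langle\widehat\Phi|\Phi_\star\rangle|^2=\langle\widehat\Phi|\rho|\widehat\Phi\rangle\geq 1-\varepsilon\geq1-\gamma .
\end{equation*}
Its branches are normalized bond-$D$ MPSs, and $\mathcal I_k(\widehat{\mathcal D})\leq\mu+\sigma\leq\nu$. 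The identifiability hypothesis \eqref{eq:branching-identifiability} then gives $d_{\mathrm{dec}}(\widehat{\mathcal D},\mathcal D_\star)\leq\tau$ directly.

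\emph{Componentwise bounds.} Fix the minimizing permutation and phase $\omega$, and relabel so that $\pi$ is the identity. Each summand of \eqref{eq:branching-decomposition-distance} is then at most $\tau^2$. Set $x:=\widehat c_a\widehat\psi_a$ and $y:=\omega c_a\psi_a$, so that $\|x-y\|_2\leq\tau$. Because the branches are normalized, $\|x\|_2=|\widehat c_a|$ and $\|y\|_2=|c_a|$. The reverse triangle inequality therefore gives $\big||\widehat c_a|-|c_a|\big|\leq\tau$.

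For the states, use that for unit vectors the trace distance equals $\sqrt{1-|\langle\widehat\psi_a|\psi_a\rangle|^2}\leq\min_{\phi}\|\widehat\psi_a-e^{i\phi}\psi_a\|_2$. Note $|c_a|\geq\sqrt{p_\star}$, and $|\widehat c_a|\geq\sqrt{p_\star}-\tau\geq\sqrt{p_\star}/2>0$, so $\widehat c_a\neq0$. Write $\widehat c_a=|\widehat c_a|e^{i\alpha}$ and $c_a=|c_a|e^{i\beta}$, and set $e^{i\phi}:=\omega e^{i(\beta-\alpha)}$. Then
\begin{equation*}
\|\widehat\psi_a-e^{i\phi}\psi_a\|_2\leq\frac{1}{|c_a|}\Big(\|x-y\|_2+\big||\widehat c_a|-|c_a|\big|\Big)\leq\frac{2\tau}{\sqrt{p_\star}} .
\end{equation*}
To get this, write $|c_a|(\widehat\psi_a-e^{i\phi}\psi_a)=e^{-i\alpha}(x-y)+(|c_a|-|\widehat c_a|)\widehat\psi_a$, using the identity $e^{-i\alpha}y=|c_a|e^{i\phi}\psi_a$. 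Combined with the trace-distance estimate, this yields the second bound in \eqref{eq:branching-component-recovery}.

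\emph{Main obstacle.} There is no deep step here. The only delicate points are the phase bookkeeping in the last identity and checking that the conditions feeding into \eqref{eq:branching-identifiability} match exactly: the output is normalized, the branch count is $r$, the interference is at most $\nu$, and the fidelity is at least $1-\gamma$. All of these are supplied by Theorem~\ref{thm:proper-agnostic-branching}.
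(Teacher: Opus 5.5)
Your proposal is correct and follows the paper's argument. The comparator optimum is one, so the learning guarantee and the relaxed interference bound feed directly into identifiability, and the componentwise bounds come from the reverse triangle inequality applied to $x=\widehat c_a\widehat\psi_a$ and $y=\omega c_a\psi_a$. The only cosmetic difference is that you verify the normalized-vector estimate through an explicit phase identity, whereas the paper cites $\bigl\|x/\|x\|_2-y/\|y\|_2\bigr\|_2\leq 2\|x-y\|_2/\|y\|_2$; both then bound trace distance by Euclidean distance in the same way.
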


\begin{proof}
The comparator optimum is one so the learning guarantee and identifiability imply the bound on $d_{\mathrm{dec}}$. After matching labels and a common phase, the weighted components $x=\widehat c_a\widehat\psi_a$ and $y=\omega c_a\psi_a$ satisfy $\|x-y\|_2\leq\tau$. The reverse triangle inequality gives the coefficient bound, and $\|y\|_2\geq\sqrt{p_\star}$ ensures $x\neq0$. Finally,
\begin{equation}
\left\|\frac{x}{\|x\|_2}-\frac{y}{\|y\|_2}\right\|_2\leq\frac{2\|x-y\|_2}{\|y\|_2}\leq\frac{2\tau}{\sqrt{p_\star}}.
\label{eq:branching-normalized-component-error}
\end{equation}
The trace distance between pure states is at most the Euclidean distance between any phase choices of their unit vectors, proving \eqref{eq:branching-component-recovery}.
\end{proof}

\bibliographystyle{alpha}
\bibliography{references}
\end{document}